\newtheorem{theorem}{Theorem}
\newtheorem{remark}{Remark}
\newproof{proof}{Proof}
\definecolor{newcolor}{rgb}{.8,.349,.1}
\journal{Journal of Computational Physics}
\begin{document}

\verso{C. Wang, M.-C. Lai and Z. Zhang}

\begin{frontmatter}

\title{Energy stable methods for phase-field simulation of droplet impact with surfactants}%

\author[1,2]{Chenxi Wang}
\ead{wangcx2017@mail.sustech.edu.cn}
\author[3]{Ming-Chih Lai}
\ead{mclai@math.nctu.edu.tw}
\author[2,4]{Zhen Zhang\corref{cor1}}
\cortext[cor1]{Corresponding author.}
\ead{zhangz@sustech.edu.cn}

\address[1]{Department of Mathematics,
          Harbin Institute of Technology,
          Harbin 150001, P.R. China}
\address[2]{Department of Mathematics,
            Southern University of Science and Technology (SUSTech), Shenzhen 518055, P.R. China}
\address[3]{Department of Applied Mathematics, National Yang Ming Chiao Tung University, 1001, Ta Hsueh Road, Hsinchu 300, Taiwan}
\address[4]{National Center for Applied Mathematics (Shenzhen), Guangdong Provincial Key Laboratory of Computational Science and Material Design, SUSTech International Center for Mathematics, Southern University of Science and Technology (SUSTech), Shenzhen 518055, P.R. China}


\begin{abstract}
This paper is devoted to the numerical study of droplet impact on solid substrates in presence of surfactants. We formulate the problem in an energetically variational framework and introduce an incompressible Cahn-Hilliard-Navier-Stokes system for the phase-field modeling of two-phase flows. Flory-Huggins potential and generalized Navier boundary condition are used to account for soluble surfactants and moving contact lines. 
Based on the convex splitting and pressure
stabilization technique, we develop unconditionally energy stable schemes for this model. The discrete energy dissipation law for the original energy is rigorously proved for the first-order scheme. The numerical methods are implemented using finite difference method in three-dimensional cylindrical coordinates with axisymmetry.
Using the proposed methods for this model, we systematically study the impact dynamics of both clean and contaminated droplets (with surfactants) in a series of numerical experiments. In general, the dissipation in the impact dynamics of a contaminated drop is smaller than that in the clean case, and topological changes are more likely to occur for contaminated drops. Adding surfactants can significantly influence the impact phenomena, leading to the enhancement of adherence effect on hydrophilic surfaces and splashing on hydrophobic surfaces. Some quantitative agreements with experiments are also obtained.
\end{abstract}


\begin{keyword}
\KWD
\newline
Phase-field
\newline
Surfactants
\newline
Energy stability
\newline
Droplet impact
\newline
Moving contact line
\end{keyword}

\end{frontmatter}

\section{Introduction}\label{sec_Introduction}
The effect of surfactants, namely surface-active substances, has been the subject of intense study for long time and is of great interest in many industrial applications. Surfactants can be used as emulsifiers, cleaning detergents, foaming agents, wetting agents, and dispersants \citep{probstein2005physicochemical, Eggleton2001tip, branger2002accelerated}. Recently, surfactants have been widely used in microsystems with the presence of interfaces, where the capillary effect dominates the inertia of fluids \citep{Baret2012}. An important application is surfactant-based inkjet printing, in which surfactants can help the absorbance of ink onto hydrophobic surfaces \citep{Kommeren2018}. The presence of surfactants in multiphase flow has great effects on the dynamics of interface by altering the interfacial tension. In addition, when surfactants are not uniformly distributed on the interface, the dynamical behavior of interface would also be significantly affected due to Marangoni effect \citep{garcke2014diffuse}. These complex dynamics bring difficulty in experimental study of multiphase flow with surfactants. As an alternative, modelling and numerical simulations have played an increasingly significant role, although there remain challenging tasks.

For decades, modeling and numerical simulations of surfactant-driven hydrodynamics have attracted much attention. The surfactant transport model was first presented by \cite{S90} and then revised in \cite{wong1996surfactant} by introducing convection-diffusion system on surface. Based on this model, many numerical methods have been developed to simulate multiphase flow with surfactants. These include front tracking method \citep{Muradoglu2008}, level-set method \citep{de2017level}, volume-of-fluid method \citep{James2004}, immersed boundary/immersed interface method \citep{chen2014conservative}, phase-field method \citep{liu2010phase,zhu2019numerical}, and lattice Boltzmann method \citep{liu2020modelling}. Among these methods, the phase-field method is of particular interests due to its versatility in modeling as well as simulations. In phase-field models, an order parameter is introduced to label the two different phases with different values, and the sharp interface is diffused and represented by a smooth function. The phase-field method was first used to study interfacial dynamics with surfactants in \cite{Laradji1992}, resulting in the phase-field surfactant (PFS) model. Numerous works have shown great performance of the PFS model in simulations of multiphase systems with surfactants \citep{teng2012simulating, gu2014energy, yang2021improved}. Recently, the PFS models were extensive investigated in the presence of hydrodynamics \citep{zhu2019numerical, yang2021novel}. In these models, two order parameters are introduced to represent local volume fraction of one phase (usually called phase-field variable) and surfactant concentration respectively. Besides the classical Ginzburg-Landau free energy, the Flory-Huggins potential with logarithmic terms is typically inserted into the total free energy in order to model mixing entropy due to the addition of surfactants. Moreover, a nonlinearly coupled term in both phase-field variable and surfactant concentration is present to account for the adsorption of surfactants on the fluid interface \citep{van2006diffuse, engblom2013diffuse}.

As long as a total free energy is given, a phase-field model can be derived as a gradient flow system which satisfies energy dissipation law. For numerical methods of such systems, energy stability is one of the most important properties that need to be take care of. The stiffness arising from nonlinear terms in the free energy also makes it difficult to design efficient numerical schemes with large time steps. A number of techniques have been developed for the construction of energy stable schemes. Using convex splitting approach, unconditionally energy stable methods for the Cahn-Hilliard equation were developed by \cite{eyre1998unconditionally}. The major idea is to implicitly treat the convex terms of energy functional and explicitly treat the concave terms. This technique was successfully applied to numerically solve the PFS model \citep{gu2014energy}. In a similar manner, the so-called stabilization method was proposed by introducing a stabilization term to avoid solving nonlinear system \citep{chen1998applications}. Despite high efficiency and easy-to-implement, these two methods are not easily employed in developing high-order schemes with unconditional energy stability. Recently, by generalizing the Lagrange multiplier method \citep{GuillenGonzalez2013}. \cite{MR3633666} proposed an invariant energy quadratization (IEQ) approach with an auxiliary function representing the square root of the nonlinear part in the free energy. The original system is then transformed into an equivalent one with quadratized free energy, for which unconditionally energy stable and high-order linear decoupled schemes can be easily obtained. When a scalar auxiliary variable (SAV) is used instead of an auxiliary function, one can immediately obtain SAV approach which was first introduced by \cite{shen2018scalar} and is more efficient. Since their births, the IEQ and SAV approaches have attracted great interest and have been applied in constructing high-order schemes for many dissipative systems, including the Cahn-Hilliard-Navier-Stokes (CHNS) model \citep{MR4131825}, the PFS model \citep{yang2017linear, yang2021improved,qin2020fully} and other systems \citep{MR3815548, zhu2019numerical,yang2021novel,huang2020highly,Cheng2020}. It is worth mentioning that the IEQ/SAV approach is provable to be energy stable with respect to the modified energy instead of the original energy, which limits its use in physical systems sensitive to energy changes. Some other widely used energy stable methods also exist, for instance, fully implicit schemes and exponential time-differencing schemes \citep{xu2019stability,MR4253790}.

When two immiscible fluids interacts with a solid substrate, a contact line forms as the intersection of fluid-fluid interface with the solid boundary. In equilibrium, the angle between the two interfaces, also known as contact angle, is related to the surface tensions of the three interfaces and determined by Young-Dupr\'{e} equation \citep{Young1805} (Fig.~\ref{MCL}). If one fluid is displaced by the other along the substrate, moving contact line (MCL) must be considered. As is well known, the incompatibility of the moving contact line with the conventional no-slip boundary condition emerges in continuum hydrodynamics, since non-integrable stress singularity is generated at the contact line due to the use of no-slip boundary condition. In the past decades, a number of approaches were proposed to resolve MCL problems by using hydrodynamic models, molecular dynamic models, and diffuse-interface models \citep{Jacqmin00, qian2003molecular, ren2007boundary, Eggers04a, DeConinck08, Yue10, sui2014numerical}. Recent modeling and simulation studies showed that the presence of surfactants can further influence contact line dynamics by changing fluid-fluid interfacial tension, leading to more complicated multiphase hydrodynamics \citep{lai2010numerical, xu2014level, zhang2014derivation, zhao2021thermodynamically, zhu2019thermodynamically, zhu2020phase, Wang2022}.
This makes surfactants appealing in the processes of spray coating, wetting, and many biological applications \citep{de2004capillarity}.
\begin{figure}[t!]
\centering
\includegraphics[scale=0.69]{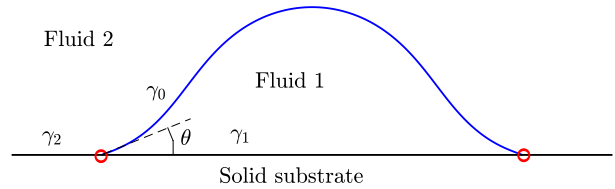}
\caption{\small A fluid-fluid interface intersects a solid substrate. The contact line (points) is represented by the red points; $\gamma_0$, $\gamma_1$ and $\gamma_2$ are respectively the surface tensions of the fluid-fluid interface and the two fluid-solid interfaces; $\theta$ is the contact angle. In equilibrium, the static contact angle $\theta_s$ satisfies the Young-Dupr\'{e} equation $\gamma_0\cos\theta_s=\gamma_2-\gamma_1$.}
\label{MCL}
\end{figure}

Recent experiments indicated that surfactants play a very important role in controlling droplet impact dynamics \citep{gatne2009surfactant, pan2016controlling}, which is worth extensive study due to its potential applications in inkjet printing. The purpose of the present work is to develop a computational model that can be applied to systematically investigate surfactant-induced droplet impact on solid substrates. We aim to develop efficient numerical methods which are not only stable when using large time steps but also thermodynamically consistent with the physical model. Among those models for multiphase flows with moving contact lines, diffuse interface models with generalized Navier boundary condition (GNBC) are of particular interests since they can easily handle topological changes during impact processes while being energetically variational \citep{qian2003molecular, qian2006variational}. By taking into accounts the surfactant transport, the incompressible two-phase flow and the GNBC, we model the droplet impact process by the binary fluid phase-field surfactant system with moving contact lines (NS-PFS-MCL). This model allows for variable density and viscosity and was shown to yield energy dissipation law in a variational framework \citep{zhu2019thermodynamically}. By using the convex splitting technique and the pressure stabilization method \citep{Yu2017}, we propose unconditionally energy stable schemes for the numerical discretization of the NS-PFS-MCL system.  Different from previous works \citep{zhu2019numerical,yang2021improved,yang2021novel,yang2017linear,qin2020fully,zhu2020phase}, we show that the proposed scheme satisfies the discrete energy law without modifying the original free energy. The proposed schemes are efficiently implemented in three-dimensional axisymmetric cylindrical coordinates and applied to the numerical study of droplet impact on solid substrates with satisfactory accuracy.

Another contribution of the paper is to systematically investigate the effect of surfactants on droplet impact dynamics on solid substrates, which is highly demanded in numerical works. Specifically, different impact phenomena, such as adherence, bouncing/partial bouncing and splashing, are numerically compared and analyzed from the viewpoint of dissipations for both clean and contaminated (by surfactants) droplets. Following the experimental and numerical works on impact dynamics of clean droplets \citep{Richard2002,Rioboo2002,Renardy2003,Rioboo2002, Zhang2016}, we choose typical dimensionless parameters (the Reynolds number, the Weber number and the wettability of solid surface) in our numerical simulations, giving rise to seven representative examples for demonstration of the surfactant effect on modifying impact dynamics. The overall effect of surfactants is to make the droplet more hydrophilic\slash hydrophobic on hydrophilic\slash hydrophobic surfaces: adding surfactants into an impact droplet can strengthen the adherence effect on hydrophilic surfaces (demonstrated in Example 7); the presence of surfactants makes a splashing droplet break up into more small drops on hydrophobic surfaces (shown in Example 5). In general, the dissipation in the impact dynamics of a contaminated drop is smaller than that in the clean case, and topological changes are more likely to occur for contaminated drops. Moreover, adding surfactants may induce some qualitative changes in droplet impact phenomena. A clean adherent droplet could experience complete bouncing when surfactants are added (Example 2); a clean droplet which completely rebounds becomes partially bounced in the presence of surfactants (Example 3). Last but not least, we obtain quantitative agreements with experimental results for impact dynamics in the case without surfactants \citep{Rioboo2002}, and simulate impact processes for contaminated drops as comparisons.


The rest of this paper is organized as follows. In Sect.~\ref{sec_Model}, we introduce the binary fluid PFS
model with MCL and describe its energy dissipation law. In Sect.~\ref{sec_Method}, the semi-discrete schemes are proposed, and the energy stability property is proved for the first-order scheme. An efficient numerical implementation  are also discussed. The accuracy test and discrete energy law are numerically validated in Sect.~\ref{sec_Simulation}. We also present extensive numerical experiments to demonstrate the effect of surfactants on hydrodynamics with moving contact lines. In particular, droplet impact dynamics are systematically studied. The paper is concluded in Sect.~\ref{sec_Conclusion} with discussions.

\section{Mathematical model}\label{sec_Model}
In this section, we present the binary fluid PFS model with MCL. In particular, The GNBC for the dynamics of moving contact lines is imposed at solid wall. The energy dissipation law for this system is obtained.

\subsection{Governing equations}\label{sec_GE}
We introduce two conserved order parameters $\phi$ and $\psi$ to represent local volume fraction of one phase and surfactant concentration respectively. $\phi$ is usually called phase-field variable which labels the two different phases with label $-1$ and $1$. $\psi$ is a percentage ranging from 0 to 1.

We first give the total free energy functional except for kinetic energy as follows \citep{zhu2019thermodynamically}:
\begin{equation} \label{total_F}
 E_{f}(\phi,\psi)=E_{GL}(\phi)+E_{sur}(\psi)+ E_{ad}(\phi,\psi)+E_{wf}(\phi).
\end{equation}

In this expression, $E_{GL}$ is given by the Ginzburg-Landau free energy with double-well potential in the bulk,
\begin{equation}\label{E_GL}
E_{GL}(\phi)= \int_{\Omega}\frac{ \mathrm{C n}^2}{2} |\nabla\phi|^{2} +\frac{1}{4}(\phi^2-1)^2\mathrm{d} x,
\end{equation}
where $\mathrm{Cn}$ is the Cahn number, which is determined by interface thickness and characteristic length.
The dimensionless interfacial tension between two fluids is given by $\gamma=2\sqrt{2}/3$. The quartic function $(\phi^2-1)^2/4$ is ``double well'' potential which expresses a preference at pure phases $\phi=\pm1$.

$E_{sur}$ is the energy representing mixing entropy in binary surfactant-fluid system:
\begin{equation}\label{E_sur}
E_{sur}(\psi)=\int_\Omega \mathrm{Pi}\big[\psi \ln \psi+(1-\psi) \ln (1-\psi)\big]\mathrm{d}x,
\end{equation}
where $\mathrm{Pi}$ is the temperature-dependent constant. Here, a Flory-Huggins free energy is rather preferred as it possesses Langmuir isotherm relation \citep{van2006diffuse}.

The interfacial energy accounting for the adsorption of surfactant is given by
\begin{equation}\label{E_AD}
E_{ad}(\phi, \psi)=\int_\Omega \frac{1}{2\mathrm{Ex}}\psi \phi^{2} -\frac{1}{4}\psi(1-\phi^{2})^{2} \mathrm{d}x,
\end{equation}
where $\mathrm{Ex}$ determines the bulk solubility.
Hereby, $\psi \phi^{2}/(2\mathrm{Ex})$ is an enthalpic term penalizing free surfactant in the respective phases, whereas the term $-\psi(1-\phi^{2})^{2}/4$ represents local attraction of surfactant to an existing interface.
Other approximations for $E_{ad}$ are also available, for instance,
\begin{equation*}
E_{ad}(\phi, \psi)=\int_\Omega \frac{1}{2\mathrm{Ex}} \psi \phi^{2}-\frac{1}{2}\psi|\nabla \phi|^{2}\mathrm{d}x,\quad\quad\text{or}\quad\quad
E_{ad}(\phi, \psi)=\int_\Omega \frac{1}{2\mathrm{Ex}} \psi \phi^{2}-\frac{1}{4} \psi(1-\phi^{2})\mathrm{d}x.
\end{equation*}
According to \cite{engblom2013diffuse}, \eqref{E_AD} performs best in numerical experiments in comparison with the other two formulae. Hence, we will use \eqref{E_AD} in our study. It is worth noting that the interplay of $\mathrm{Pi}$ and $\mathrm{Ex}$ yields the Langmuir adsorption isotherm \citep{engblom2013diffuse,zhu2019thermodynamically}.

Finally, we choose the wall energy $E_{wf}$ as
\begin{equation}\label{E_wf}
  E_{wf}(\phi)= \mathrm{Cn}\int_\Gamma\gamma_{wf}(\phi)\mathrm{d}S,  \quad\quad\gamma_{wf}(\phi)=-\frac{\sqrt{2}}{3}\cos \theta_{s} \sin \big(\frac{\pi \phi }{2} \big)+\frac{\gamma_1+\gamma_2}{2},
\end{equation}
where $\gamma_1$ and $\gamma_2$ are two fluid-solid surface tensions, $\theta_{s}$ is Young's angle determined through Young-Dupr\'{e} equation $2\sqrt{2}/3\cos\theta_s=\gamma_2-\gamma_1$, and $\sin (\frac{\pi \phi }{2})$ is an interpolation function between $\pm1$. In consideration of asymptotic properties, another interpolation function  $\frac12\phi(3-\phi^2)$ can also be used \citep{Xu2018}. In this work, we prefer to use $\sin (\frac{\pi \phi }{2})$ as its all derivatives are smooth and bounded.

After taking the variational derivatives of the free energy $E_f$ with respect to $\psi$ and $\phi$, one obtains the chemical potentials
\begin{align}
&\mu_{\psi}=\frac{\delta E_f}{\delta \psi}=\mathrm{P i} \ln \Big(\frac{\psi}{1-\psi}\Big)+\frac{1}{2 \mathrm{E x}} \phi^{2}-\frac{1}{4}(\phi^{2}-1)^{2},\label{ch_psi}\\
&\mu_{\phi}=\frac{\delta E_f}{\delta \phi}=-\mathrm{C n}^{2} \Delta \phi+\phi^{3}-\phi+\frac{1}{\mathrm{E x}} \psi \phi-\psi (\phi^{3}-\phi).\label{ch_phi}
\end{align}
Then the Cahn–Hilliard type governing equations are given by
\begin{align}
&\psi_{t}+\nabla\cdot(\mathbf{u}\psi)=\frac{1}{\mathrm{Pe}_{\psi}} \nabla \cdot M_{\psi} \nabla \mu_{\psi},\label{psi}\\
&\phi_{t}+\nabla\cdot(\mathbf{u}\phi)=\frac{1}{\mathrm{P e}_{\phi}} \Delta \mu_{\phi},\label{phi}
\end{align}
where $\mathrm{Pe}_\psi$ and $\mathrm{Pe}_\phi$ are the P\'{e}clet numbers, and $\mathbf{u}$ is the velocity field. We choose the degenerate mobility $M_{\psi}=\psi(1-\psi)$ which leads to the Fickian diffusion \citep{van2006diffuse}.

In addition, to take variable density and viscosity into account, we interpolate between two fluid densities and viscosities, i.e.,
\begin{equation*}
\rho(\phi)=\frac{1-\phi}{2}+\lambda_\rho \frac{1+\phi}{2}, \quad\quad \eta(\phi)=\frac{1-\phi}{2}+\lambda_\eta \frac{1+\phi}{2},
\end{equation*}
where $\lambda_{\rho}$ and $\lambda_{\eta}$ are respectively the density ratio and viscosity ratio.
Then using \eqref{phi}, we can modify the continuity equation as
\begin{equation}\label{continuity-eq}
\rho_t+\nabla\cdot(\mathbf{u}\rho)+\nabla\cdot \mathbf{J}_\rho=0,
\end{equation}
where the diffusive flux $\mathbf{J}_\rho=\frac{1-\lambda_\rho} {2\mathrm{Pe}_\phi}\nabla\mu_\phi$ is included.

Following \cite{Abels2012}, we can derive the momentum equation for hydrodynamics as
\begin{equation}\label{NS}
\rho (\mathbf{u}_{t}+ \mathbf{u} \cdot \nabla \mathbf{u})+\mathbf{J}_\rho\cdot\nabla\mathbf{u}+\nabla p=\frac{1}{\mathrm{R e}} \nabla \cdot \big(\eta D(\mathbf{u})\big)-\frac{1}{\mathrm{We}\mathrm{Cn}}( \phi\nabla\mu_{\phi} + \psi \nabla\mu_{\psi}),
\end{equation}
where $p$ is the pressure, and $D(\mathbf{u})=\nabla \mathbf{u}+(\nabla \mathbf{u})^{\top}$ represents twice the strain rate. Here, Reynolds number $\mathrm{Re}$ and Weber number $\mathrm{We}$ are the dimensionless parameters, and $\phi\nabla\mu_{\phi} + \psi \nabla\mu_{\psi}$ is the capillary force. Moreover, \eqref{NS} is supplemented with the incompressible condition
\begin{equation}\label{incompressible}
\nabla \cdot \mathbf{u}=0.
\end{equation}

\subsection{Boundary conditions}\label{sec_BC}
On the solid wall $\Gamma$, the relaxation boundary condition associated with the $\phi$-equation \eqref{phi} reads
\begin{equation}\label{DBC}
\phi_{t}+\mathbf{u}_{\mathbf{\tau}} \cdot\nabla_{\mathbf{\tau}} \phi=-\frac{1}{\mathrm{P e}_{s}} L(\phi),
\end{equation}
where $\mathrm{Pe}_s$ is the P$\acute{\text{e}}$clet number, $\nabla_\tau=(\mathbf{I}-\mathbf{n}\otimes\mathbf{n})\nabla$ is the surface gradient, and $\mathbf{u}_{\tau}=(\mathbf{I}-\mathbf{n}\otimes\mathbf{n})\mathbf{u}$ is
the tangential velocity on $\Gamma$, with $\mathbf{n}$ being the outward normal vector to $\Gamma$. In addition, $L(\phi)$ is the chemical potential on $\Gamma$ defined by
\begin{equation}\label{DBC_L}
L(\phi)=\mathrm{Cn} \partial_{n} \phi+\gamma_{w f}'(\phi).
\end{equation}

In consideration of contact line dynamics, GNBC is imposed on $\Gamma$:
\begin{equation}\label{GNBC}
\frac{ \mathbf{u}_{s}}{\mathrm{L}_{s} l_{s}}=\frac{ L(\phi) \nabla_{\tau} \phi }{\mathrm{Ca}\eta} -\partial_{n} \mathbf{u}_{\tau},
\end{equation}
where $l_s(\phi)=(1-\phi)/2+\lambda_{l_s}(1+\phi)/2$ is the dimensionless interpolating slip length, $\mathrm{L}_{s}$ and $\lambda_{l_s}$ are the slip length of fluid 1 and the slip length ratio respectively, $\mathrm{Ca}=\mathrm{We}/\mathrm{Re}$ is the Capillary number. Given the wall velocity $\mathbf{u}_w$, $\mathbf{u}_s=\mathbf{u}_\tau-\mathbf{u}_w$ is the slip velocity on $\Gamma$. GNBC establishes a linear response relation between the slip velocity $\mathbf{u}_s$ and the sum of the viscous stress $\partial_{n} \mathbf{u}_{\tau}$ and the uncompensated Young stress $L(\phi) \nabla_{\tau} \phi$. Moreover, we also need the following impermeability conditions
\begin{equation}\label{BC1}
\partial_{n} \mu_\phi=0, \quad \quad \partial_{n} \mu_\psi=0, \quad \quad \mathbf{n} \cdot \mathbf{u}=0 \quad \quad\text { on } ~ \Gamma.
 \end{equation}

On the rest of boundaries, we impose the following natural boundary conditions, i.e.,
\begin{equation}\label{BC2}
 \partial_{n} \phi=0, \quad\quad \partial_{n} \mu_\phi=0, \quad \quad
 \partial_{n} \mu_\psi=0, \quad \quad \partial_{n} \mathbf{u}_{\tau}=\mathbf{0}, \quad\quad   \mathbf{n}\cdot \mathbf{u} =0 \quad \quad \text { on } ~ \partial\Omega/\Gamma.
\end{equation}

\subsection{Energy law of the model}\label{sec_EL}
The system of equations and the associated boundary conditions presented in Sections \ref{sec_GE} and \ref{sec_BC} constitutes a dissipative system, whose energy dissipation law will be shown in this subsection. To this end, we first define the total energy $E_{tot}$ of the system as the sum of the kinetic energy $E_k$ and the free energy $E_f$:
\begin{equation}\label{E_tot}
E_{tot}(\mathbf{u},\phi,\psi) = E_k(\mathbf{u},\rho)+E_f(\phi,\psi),
\end{equation}
where $E_k(\mathbf{u},\rho) = \mathrm{We} \mathrm{C n}\int_{\Omega}\rho|\mathbf{u}|^{2}\mathrm{d} x/2 $ and $E_f(\phi,\psi)$ is given in \eqref{total_F}.
Five types of dissipations contribute to the dissipative mechanism: the viscous dissipation ($R_v$), the slip dissipation ($R_s$), the diffusion dissipation in phase-field dynamics ($R_d$), the diffusion dissipation in surfactant dynamics ($R_e$), and the dissipation due to phase-field relaxation at the solid surface ($R_r$), namely,
\begin{equation}
\begin{aligned}
&R_v = \frac{\mathrm{C a} \mathrm{C n}}{2}\big\|\sqrt{\eta} D(\mathbf{u})\big\|^{2}:=\frac{\mathrm{C a} \mathrm{C n}}{2}\int_{\Omega}\eta \big\|D(\mathbf{u})\big\|_F^{2}\mathrm{d}\mathbf{x}, \qquad && R_s= \frac{\mathrm{C a} \mathrm{C n}}{\mathrm{L}_{s}}\big\|\sqrt{\eta/l_s} \mathbf{u}_{s}\big\|_{\Gamma}^{2},\\
&R_{d} = \frac{1}{\mathrm{P e}_{\phi}}\big\|\nabla \mu_{\phi}\big\|^{2},\qquad\quad R_{e} = \frac{1}{\mathrm{P e}_{\psi}}\big\|\sqrt{M_{\psi}} \nabla \mu_{\psi}\big\|^{2},
&& R_r = \frac{\mathrm{C n}}{\mathrm{P e}_{s}}\big\|L(\phi)\big\|_\Gamma^{2},
\end{aligned}
\end{equation}
where $\|\cdot\|$ and $\|\cdot\|_\Gamma$ are the $L^2$ norm in $\Omega$ and on $\Gamma$ respectively, and $\|\cdot\|_F$ is the Frobenius norm for the matrix.
With these concepts in hand, we can prove the following energy law.
\begin{theorem}\label{thm1}
 Assume $\mathbf{u}_w =\mathbf{0}$, then the system \eqref{ch_psi}--\eqref{BC2} is a dissipative system satisfying the following energy dissipation law
\begin{equation}\label{continue_law}
\frac{\mathrm{d} }{\mathrm{d} t} E_{tot}= -(R_v+R_s+R_d+R_e+R_r)\leq0.
\end{equation}
\end{theorem}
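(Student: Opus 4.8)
The plan is to differentiate the total energy $E_{tot}=E_k+E_f$ in time, use each governing equation to eliminate the time derivatives, and show that all work-exchange terms coupling the hydrodynamics to the phase-field/surfactant dynamics cancel in pairs, leaving exactly the five nonnegative dissipation rates. The whole computation is a sequence of integrations by parts in which every boundary term is disposed of by one of the conditions \eqref{BC1}--\eqref{GNBC}.

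First I would handle $\tfrac{\mathrm{d}}{\mathrm{d}t}E_f$. Differentiating $E_{GL}$, $E_{sur}$, $E_{ad}$ term by term and integrating the gradient contribution $\mathrm{Cn}^2\nabla\phi\cdot\nabla\phi_t$ by parts reproduces the bulk potentials $\mu_\phi,\mu_\psi$ of \eqref{ch_psi}--\eqref{ch_phi} and leaves a boundary integral $\int_{\partial\Omega}\mathrm{Cn}^2\partial_n\phi\,\phi_t\,\mathrm{d}S$. On $\partial\Omega/\Gamma$ this vanishes by \eqref{BC2}, while on $\Gamma$ it combines with $\tfrac{\mathrm{d}}{\mathrm{d}t}E_{wf}=\mathrm{Cn}\int_\Gamma\gamma_{wf}'(\phi)\phi_t\,\mathrm{d}S$ into $\mathrm{Cn}\int_\Gamma L(\phi)\phi_t\,\mathrm{d}S$ by the definition \eqref{DBC_L}. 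Substituting \eqref{psi}--\eqref{phi} for the bulk time derivatives and integrating by parts, the conditions $\partial_n\mu_\phi=\partial_n\mu_\psi=0$ and $\mathbf{n}\cdot\mathbf{u}=0$ annihilate every bulk boundary term and yield $-R_d-R_e$ together with a convective term $\int_\Omega(\phi\nabla\mu_\phi+\psi\nabla\mu_\psi)\cdot\mathbf{u}\,\mathrm{d}x$; inserting the relaxation condition \eqref{DBC} into the $\Gamma$-integral produces $-R_r$ plus a residual $-\mathrm{Cn}\int_\Gamma L(\phi)\,\mathbf{u}_\tau\cdot\nabla_\tau\phi\,\mathrm{d}S$.

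Next I would treat $\tfrac{\mathrm{d}}{\mathrm{d}t}E_k=\tfrac{\mathrm{We}\mathrm{Cn}}{2}\int_\Omega\rho_t|\mathbf{u}|^2\,\mathrm{d}x+\mathrm{We}\mathrm{Cn}\int_\Omega\rho\,\mathbf{u}\cdot\mathbf{u}_t\,\mathrm{d}x$, replacing $\rho\mathbf{u}_t$ by \eqref{NS} and $\rho_t$ by the modified continuity equation \eqref{continuity-eq}. The key structural point is that the inertial term $\rho\mathbf{u}\cdot\nabla\mathbf{u}$ and the flux term $\mathbf{J}_\rho\cdot\nabla\mathbf{u}$ in \eqref{NS} pair off against the two pieces of $\rho_t=-\nabla\cdot(\mathbf{u}\rho)-\nabla\cdot\mathbf{J}_\rho$ after integration by parts and use of $\mathbf{u}\cdot\nabla|\mathbf{u}|^2=2\mathbf{u}\cdot(\mathbf{u}\cdot\nabla\mathbf{u})$, with all boundary terms vanishing since $\mathbf{u}\cdot\mathbf{n}=0$ and $\mathbf{J}_\rho\cdot\mathbf{n}\propto\partial_n\mu_\phi=0$. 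The pressure term equals $\mathrm{We}\mathrm{Cn}\int_\Omega p\,\nabla\cdot\mathbf{u}\,\mathrm{d}x=0$ by \eqref{incompressible}, and the capillary force contributes $-\int_\Omega\mathbf{u}\cdot(\phi\nabla\mu_\phi+\psi\nabla\mu_\psi)\,\mathrm{d}x$, cancelling exactly the convective term left over from $\tfrac{\mathrm{d}}{\mathrm{d}t}E_f$.

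The viscous term is the \emph{crux}. Integrating $\tfrac{\mathrm{We}\mathrm{Cn}}{\mathrm{Re}}\int_\Omega\mathbf{u}\cdot\nabla\cdot(\eta D(\mathbf{u}))\,\mathrm{d}x$ by parts and using $\nabla\mathbf{u}:D(\mathbf{u})=\tfrac12\|D(\mathbf{u})\|_F^2$ with $\mathrm{Ca}=\mathrm{We}/\mathrm{Re}$ gives the bulk dissipation $-R_v$ plus a stress-work integral $\mathrm{Ca}\mathrm{Cn}\int_{\partial\Omega}\mathbf{u}\cdot(\eta D(\mathbf{u})\cdot\mathbf{n})\,\mathrm{d}S$. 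The main obstacle is reducing this to the tangential form $\mathrm{Ca}\mathrm{Cn}\int_{\partial\Omega}\eta\,\mathbf{u}_\tau\cdot\partial_n\mathbf{u}_\tau\,\mathrm{d}S$: here $\mathbf{n}\cdot\mathbf{u}=0$ forces $\mathbf{u}=\mathbf{u}_\tau$ on the boundary so that only the tangential part of $D(\mathbf{u})\cdot\mathbf{n}$ survives, which on the flat boundary pieces equals $\partial_n\mathbf{u}_\tau$. This vanishes on $\partial\Omega/\Gamma$ by $\partial_n\mathbf{u}_\tau=\mathbf{0}$, and on $\Gamma$ I would insert the GNBC \eqref{GNBC} with $\mathbf{u}_w=\mathbf{0}$ (so $\mathbf{u}_s=\mathbf{u}_\tau$): the uncompensated-Young-stress part reproduces $+\mathrm{Cn}\int_\Gamma L(\phi)\,\mathbf{u}_\tau\cdot\nabla_\tau\phi\,\mathrm{d}S$, cancelling the residual from $\tfrac{\mathrm{d}}{\mathrm{d}t}E_f$, while the slip part gives precisely $-R_s$. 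Summing $\tfrac{\mathrm{d}}{\mathrm{d}t}E_f$ and $\tfrac{\mathrm{d}}{\mathrm{d}t}E_k$, every coupling term cancels and only $-(R_v+R_s+R_d+R_e+R_r)\le0$ survives, which is \eqref{continue_law}.
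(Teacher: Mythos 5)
Your proposal is correct and follows essentially the same argument as the paper's proof: both differentiate $E_{tot}$, test the governing equations \eqref{psi}--\eqref{phi}, \eqref{NS}, \eqref{DBC} and \eqref{GNBC} against the natural multipliers ($\mu_\psi$, $\mu_\phi$, $\mathbf{u}$, $L(\phi)$, $\mathbf{u}_\tau$), and rely on the same two key cancellations — the capillary work $\int_\Omega(\phi\nabla\mu_\phi+\psi\nabla\mu_\psi)\cdot\mathbf{u}\,\mathrm{d}x$ against the convective terms, and the uncompensated Young stress $\mathrm{Cn}\int_\Gamma L(\phi)\,\mathbf{u}_\tau\cdot\nabla_\tau\phi\,\mathrm{d}S$ between the GNBC and the relaxation boundary condition. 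The only difference is organizational (you differentiate the energy and substitute, the paper tests each equation and sums), which amounts to the same computation.
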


\begin{proof}
We first use the modified continuity equation $\rho_t+\nabla\cdot(\mathbf{u}\rho)+\nabla\cdot \mathbf{J}_\rho=0$ and integration by parts to obtain
\begin{equation}\label{thm1_1}
\begin{aligned}
\frac{\mathrm{d}}{\mathrm{d}t}\int_\Omega\frac{1}{2}\rho|\mathbf{u}|^{2}\mathrm{d} x= \int_\Omega\rho\mathbf{u}\cdot\mathbf{u}_t
+\frac{1}{2}\rho_t|\mathbf{u}|^{2}\mathrm{d} x=\int_\Omega\mathbf{u}\cdot(\rho\mathbf{u}_t+\rho\mathbf{u}\cdot\nabla\mathbf{u}+\mathbf{J}_\rho\cdot\nabla\mathbf{u})\mathrm{d}x.
\end{aligned}
\end{equation}
Then by taking the inner product of \eqref{NS} with $\mathrm{We} \mathrm{C n}\mathbf{u}$ in $\Omega$ and using \eqref{thm1_1}, we have
\begin{equation}\label{thm1_2}
\begin{aligned}
\frac{\mathrm{d}}{\mathrm{d}t}E_k+R_v-\mathrm{C a} \mathrm{C n}\int_\Gamma\eta \mathbf{u}_\tau\cdot\partial_n \mathbf{u}_\tau \mathrm{d}S
=-\int_\Omega \phi \mathbf{u}\cdot\nabla\mu_\phi+\psi\mathbf{u}\cdot\nabla\mu_\psi\mathrm{d} x,
\end{aligned}
\end{equation}
where we have used the following integration by parts,
\begin{equation*}
-\int_\Omega\mathbf{u}\cdot\big[\nabla\cdot(\eta D(\mathbf{u}))\big]\mathrm{d} x
=\frac{1}{2}\big\|\sqrt{\eta} D(\mathbf{u})\big\|^{2}-\int_\Gamma\eta \mathbf{u}_\tau\cdot\partial_n \mathbf{u}_\tau \mathrm{d}S.
\end{equation*}
By taking the inner product of \eqref{GNBC} with $\mathrm{Ca} \mathrm{C n}\eta \mathbf{u}_\tau$ on $\Gamma$ and using $\mathbf{u}_w=\mathbf{0}$, we arrive at
\begin{equation}\label{thm1_3}
R_s=
-\mathrm{C a} \mathrm{C n}\int_\Gamma\eta \mathbf{u}_\tau\cdot\partial_n \mathbf{u}_\tau \mathrm{d}S+\mathrm{C n}\int_\Gamma L(\phi)\mathbf{u}_\tau\cdot \nabla_\tau \phi \mathrm{d}S.
\end{equation}
Taking the inner products of \eqref{psi} and \eqref{ch_psi} with $\mu_\psi$ and $\psi_t$ respectively in Ω, and collecting the results together with the homogeneous Neumann boundary conditions in \eqref{BC1} and \eqref{BC2}, we have
\begin{equation}\label{thm1_4}
-\frac{1}{\mathrm{P e}_{\psi}}\big\|\sqrt{M_{\psi}} \nabla \mu_{\psi}\big\|^{2}
=-\int_\Omega\psi\mathbf{u}\cdot\nabla\mu_\psi\mathrm{d} x + \int_\Omega \bigg(\mathrm{P i} \ln \Big(\frac{\psi}{1-\psi}\Big)+\frac{1}{2 \mathrm{E x}} \phi^{2}-\frac{1}{4}(\phi^{2}-1)^{2}\bigg)\psi_t \mathrm{d} x.
\end{equation}
Similar manipulations for \eqref{phi} and \eqref{ch_phi} lead to
\begin{equation}\label{thm1_5}
-\frac{1}{\mathrm{P e}_{\phi}}\big\|\nabla \mu_{\phi}\big\|^{2}
=-\int_\Omega\phi\mathbf{u}\cdot\nabla\mu_\phi\mathrm{d} x + \int_\Omega \big(-\mathrm{C n}^{2} \Delta \phi+\phi^{3}-\phi+\frac{1}{\mathrm{E x}} \psi \phi-\psi (\phi^{3}-\phi)\big)\phi_t \mathrm{d} x.
\end{equation}
Moreover, taking the inner products of \eqref{DBC} and \eqref{DBC_L} with $\mathrm{Cn} L(\phi)$ and $\mathrm{Cn}\phi_t$ respectively on $\Gamma$, we have
\begin{equation}\label{thm1_6}
-R_r=\mathrm{Cn} \int_\Gamma L(\phi)\mathbf{u}_\tau \cdot\nabla_\tau\phi\mathrm{d} S+\mathrm{Cn}\int_\Gamma \big(\mathrm{Cn} \partial_{n} \phi+\gamma_{w f}'(\phi)\big)\phi_t \mathrm{d} S.
\end{equation}
To conclude, we combine \eqref{thm1_2}--\eqref{thm1_6} and obtain
\begin{equation*}
\begin{aligned}
\frac{\mathrm{d} }{\mathrm{d} t} E_{tot}=&~\frac{\mathrm{d} }{\mathrm{d} t}E_k+\frac{\mathrm{d} }{\mathrm{d} t}E_f\\
    =&-(R_v+R_s+R_d+R_e+R_r)\leq0.\quad\Box
    \end{aligned}
\end{equation*}

\end{proof}

\begin{remark}
It is worth noting that we only consider the fixed solid wall in our work, i.e., $\mathbf{u}_w =\mathbf{0}$, in the assumption of Theorem \ref{thm1}. Under this condition, the governing system satisfies an energy dissipation law. This assumption will continue to be considered in the discrete energy law in Theorem \ref{thm2}.
In the case where $\mathbf{u}_w\neq\mathbf{0}$, external work power $-\mathrm{L}_{s}^{-1}\int_\Gamma
 l_{s}^{-1} \eta \mathbf{u}_{s}\cdot\mathbf{u}_w\mathrm{d}S$ done by the wall to the flow will be included in the energy law \eqref{continue_law}.
\end{remark}

\section{Numerical method}\label{sec_Method}
In this section, we construct a set of energy stable schemes. We start with
a first-order time-discrete scheme, whose energy stability can be rigorously proved. Then we extend it to a second-order version.

Since we are only interested in $\phi\in [-1, 1]$, we can modify $F(\phi)$ to have a quadratic growth rate for $|\phi|>1$ \citep{shen2010numerical}:
\begin{equation*}
\widehat{F}(\phi)= \begin{cases}\big(\phi+1\big)^{2}, &\quad \text { if }~ \phi<-1, \\ \big(\phi^{2}-1\big)^{2}/4, &\quad \text { if }~-1 \leq \phi \leq 1, \\ \big(\phi-1\big)^{2}, & \quad\text { if }~ \phi>1.\end{cases}
\end{equation*}
It is proved that this truncated $\hat{F}(\phi)$ with quadratic growth at infinity can guarantee the boundedness of $\phi$ in the Ginzburg-Landau energy $E_{GL}$ .
Correspondingly, we define $\hat{f}(\phi)=\hat{F}'(\phi)$. Hereafter, the notation $~\hat{}~$ is omitted for convenience.

By using the convex splitting technique for the terms in \eqref{ch_psi} and using the stabilization method \citep{Gao2014} for \eqref{ch_phi} and \eqref{phi} with relaxation boundary condition \eqref{DBC}-\eqref{DBC_L}, we can construct a linear scheme. In addition, a splitting method based on pressure stabilization \citep{Yu2017} is implemented to decouple the computation of velocity from pressure for the variable density Navier–Stokes equations with the GNBC.

\subsection{First-order scheme}
We give a first-order temporal discretized scheme for the system \eqref{ch_psi}--\eqref{BC2}:
\vspace{2mm}

\emph{Step 1}: We compute $\psi^{n+1}$, $\mu_\psi^{n+1}$, $\phi^{n+1}$, $\mu_\phi^{n+1}$, $\mathbf{u}^{n+1}$ by

\begin{align}
&\frac{\psi^{n+1}-\psi^n}{\Delta t}+\nabla\cdot(\mathbf{u}^{n+1}\psi^n)=\frac{1}{\mathrm{Pe}_{\psi}} \nabla \cdot M_{\psi}^{n+1} \nabla \mu_{\psi}^{n+1},\label{scheme_psi}\\
&\mu_{\psi}^{n+1}=\mathrm{P i} \ln \Big(\frac{\psi^{n+1}}{1-\psi^{n+1}}\Big)+\frac{1}{2 \mathrm{E x}} (\phi^n)^{2}-\frac{1}{4}\big((\phi^n)^{2}-1\big)^{2},\label{scheme_ch_psi}\\
&\frac{\phi^{n+1}-\phi^n}{\Delta t}+\nabla\cdot(\mathbf{u}^{n+1}\phi^n)=\frac{1}{\mathrm{P e}_{\phi}} \Delta \mu_{\phi}^{n+1},\label{scheme_phi}\\
&\mu_{\phi}^{n+1}=-\mathrm{C n}^{2} \Delta \phi^{n+1}+s_1(\phi^{n+1}-\phi^n)+f(\phi^n)+\frac{1}{\mathrm{E x}} \psi^{n+1} \phi^{n+1} -\psi^{n+1}\big((\phi^n)^{3}-\phi^{n+1}\big),\label{scheme_ch_phi}\\
&\begin{aligned}
&\rho^{n} \frac{\mathbf{u}^{n+1}-\mathbf{u}^{n} }{\Delta t}  + \rho^{n+1} \mathbf{u}^n \cdot \nabla \mathbf{u}^{n+1} +\mathbf{J}_\rho^{n+1}\cdot\nabla\mathbf{u}^{n+1}+\nabla (2p^n-p^{n-1})\\
&\quad\quad\quad=\frac{1}{\mathrm{R e}} \nabla \cdot \big(\eta^{n+1} D(\mathbf{u}^{n+1})\big)
-\frac{1}{\mathrm{We}\mathrm{Cn}}(\phi^n\nabla\mu_{\phi}^{n+1}  + \psi^n\nabla\mu_{\psi}^{n+1} )\\
&\quad\quad\quad\quad-\frac{1}{2} \frac{\rho^{n+1}-\rho^{n}}{\Delta t} \mathbf{u}^{n+1}-\frac{1}{2} \nabla \cdot(\rho^{n+1} \mathbf{u}^{n}) \mathbf{u}^{n+1}-\frac{1}{2}( \nabla \cdot \mathbf{J}_\rho^{n+1}) \mathbf{u}^{n+1},
\end{aligned}\label{scheme_NS}
\end{align}
with boundary conditions
\begin{align}
&\frac{\phi^{n+1}-\phi^{n}}{\Delta t}+\mathbf{u}_{\tau}^{n+1}\cdot \nabla_{\tau} \phi^{n}=-\frac{1}{\mathrm{Pe}_s}L_\phi^{n+1}  &&\text { on }~ \Gamma, \label{scheme_DBC}\\
&\frac{ \mathbf{u}_{s}^{n+1}}{\mathrm{L}_{s} l_{s}^{n+1}}=\frac{ L_\phi^{n+1} \nabla_{\tau} \phi^n }{\mathrm{Ca}\eta^{n+1}} -\partial_{n} \mathbf{u}_{\tau}^{n+1} &&\text { on } ~ \Gamma,\label{scheme_GNBC}\\
&\partial_{n} \mu_\phi^{n+1}=0, \quad\quad\partial_{n} \mu_\psi^{n+1}=0,\quad\quad  \mathbf{n}\cdot \mathbf{u}^{n+1}=0  &&\text { on } ~\Gamma, \label{scheme_BC}\\
& \partial_{n} \phi^{n+1}=0,\quad\quad \partial_{n} \mu_\phi^{n+1}=0, \quad\quad \partial_{n} \mu_\psi^{n+1}=0,\quad\quad
 \partial_{n} \mathbf{u}_{\tau}^{n+1}=\mathbf{0}, \quad\quad \mathbf{n}\cdot \mathbf{u}^{n+1}=0 &&\text { on }~ \partial\Omega/\Gamma,
\end{align}
where
\begin{align}
&L_\phi^{n+1} =\mathrm{Cn}\partial_{n} \phi^{n+1}+s_{2}(\phi^{n+1}-\phi^{n})+\gamma_{wf}^{\prime}(\phi^{n}), \label{scheme_DBC_L}\\
&\rho^{n+1}=(1-\phi^{n+1}) / 2+\lambda_{\rho}(1+\phi^{n+1}) / 2 , \quad\quad\eta^{n+1}=(1-\phi^{n+1}) / 2+\lambda_{\eta}(1+\phi^{n+1}) / 2, \\
&l_s^{n+1}=(1-\phi^{n+1})/2+\lambda_{l_s}(1+\phi^{n+1})/2,\quad\quad M_\psi^{n+1}=\psi^{n+1}(1-\psi^{n+1}),\\
&f(\phi^n)=\begin{cases}2(\phi^n+1), & \text { if } ~\phi^n<-1, \\ (\phi^n)^{3}-\phi^n, & \text { if }~-1 \leq \phi^n \leq 1, \\ 2(\phi^n-1), & \text { if } ~\phi^n>1,\end{cases}\quad\quad\mathbf{J}_\rho^{n+1}=\frac{1-\lambda_\rho}{2\mathrm{Pe}_\phi}\nabla\mu_\phi^{n+1}, \quad\quad \mathbf{u}_s^{n+1}=\mathbf{u}_\tau^{n+1}-\mathbf{u}_w.
\end{align}

\vspace{2mm}

\emph{Step 2}: Update $p^{n+1}$ by solving
\begin{equation}\label{scheme_pressure}
  \Delta(p^{n+1}-p^n)=\frac{\bar{\rho}}{\Delta t}\nabla\cdot\mathbf{u}^{n+1},
\end{equation}
with $\bar{\rho}=\min(1,\lambda_\rho)$ and boundary condition
\begin{equation*}
 \mathbf{n}\cdot \nabla p^{n+1}=0 \quad\quad\text { on } ~\partial\Omega.
\end{equation*}

It should be noted that when discretizing the momentum equation \eqref{NS}, we include three more terms in \eqref{scheme_NS}, which serves as a first-order approximation of $\frac{1}{2}(\rho_t+\nabla\cdot(\mathbf{u}\rho)+\nabla\cdot \mathbf{J}_\rho)\mathbf{u}$ at $t^{n+1}$. These additional terms are approximately zero due to the modified continuity equation \eqref{continuity-eq}. Hence, \eqref{scheme_NS} is indeed a consistent first-order approximation to \eqref{NS} in time.

For the proposed scheme \eqref{scheme_psi}--\eqref{scheme_pressure}, we have the following discrete energy law.
\begin{theorem}\label{thm2}
Assume $\mathbf{u}_w =\mathbf{0}$, $0<\psi^{n+1}<1$, $s_1\geq1$, and $s_2\geq\lvert \frac{\sqrt{2}\pi^2}{24}\cos \theta_{s}\rvert$,  then the scheme \eqref{scheme_psi}--\eqref{scheme_pressure} is unconditionally energy stable, and satisfies the following discrete dissipation law:
\begin{equation}\label{discrete_law}
\begin{aligned}
E_{tot}^{n+1}-E_{tot}^n\leq &-\frac{\Delta t\mathrm{CaCn}}{2}\big\|\sqrt{\eta^{n+1}} D(\mathbf{u}^{n+1})\big\|^{2} -\frac{\Delta t}{\mathrm{Pe}_{\psi}}\big\|\sqrt{M_{\psi}^{n+1}} \nabla \mu_{\psi}^{n+1}\big\|^{2}-\frac{\Delta t}{\mathrm{Pe}_{\phi}}\big\|\nabla \mu_{\phi}^{n+1}\big\|^{2}\\
&-\frac{\Delta t\mathrm{CaCn}}{\mathrm{L}_{s}}\big\|\sqrt{\eta^{n+1}/l_s^{n+1}}\mathbf{u}_{s}^{n+1}\big\|_{\Gamma}^{2}-\frac{\Delta t\mathrm{Cn}}{\mathrm{Pe}_{s}}\big\|L_\phi^{n+1}\big\|_{\Gamma}^{2} \leq 0,
\end{aligned}
\end{equation}
where
\begin{equation*}
E^n_{tot}= E_k(\mathbf{u}^n,\rho^n)+E_{GL}(\phi^n)+E_{sur}(\psi^n)+E_{ad}(\phi^n,\psi^n)+E_{wf}(\phi^n)+\frac{\Delta t^2 \mathrm{WeCn}}{2\bar{\rho}}\big\|\nabla p^n\big\|^2,
\end{equation*}
where $E_{GL}$, $E_{sur}$, $E_{ad}$, $E_{wf}$ and $E_k$ are defined in \eqref{E_GL}--\eqref{E_wf} and \eqref{E_tot} respectively.
\end{theorem}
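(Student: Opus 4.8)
The plan is to mirror the continuous proof of Theorem~\ref{thm1} line by line, replacing the time derivatives and exact chain-rule identities by discrete differences and convexity/Taylor estimates, with the stabilization constants $s_1,s_2$ arising exactly as the bounds on the second derivatives of the explicitly-treated potentials. Concretely, I would form the discrete analogues of \eqref{thm1_1}--\eqref{thm1_6} by pairing \eqref{scheme_NS} with $\mathrm{WeCn}\,\mathbf{u}^{n+1}$ over $\Omega$, \eqref{scheme_GNBC} with $\mathrm{CaCn}\,\eta^{n+1}\mathbf{u}_\tau^{n+1}$ over $\Gamma$, \eqref{scheme_psi} and \eqref{scheme_ch_psi} with $\mu_\psi^{n+1}$ and $(\psi^{n+1}-\psi^n)$, \eqref{scheme_phi} and \eqref{scheme_ch_phi} with $\mu_\phi^{n+1}$ and $(\phi^{n+1}-\phi^n)$, and \eqref{scheme_DBC}, \eqref{scheme_DBC_L} with $\mathrm{Cn}\,L_\phi^{n+1}$ and $\mathrm{Cn}(\phi^{n+1}-\phi^n)$ over $\Gamma$. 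Summing these reproduces the five dissipation terms on the right of \eqref{discrete_law}, while the capillary forcing $-\tfrac{1}{\mathrm{WeCn}}(\phi^n\nabla\mu_\phi^{n+1}+\psi^n\nabla\mu_\psi^{n+1})$ and the uncompensated Young stress on $\Gamma$ cancel against the transport terms $\nabla\!\cdot(\mathbf{u}^{n+1}\phi^n)$ and $\nabla\!\cdot(\mathbf{u}^{n+1}\psi^n)$ just as in the continuous case.

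For the kinetic energy, the three extra terms appended to \eqref{scheme_NS} are designed so that pairing with $\mathbf{u}^{n+1}$ makes $E_k$ telescope. Using the identity $\rho^n\mathbf{u}^{n+1}\!\cdot\!(\mathbf{u}^{n+1}-\mathbf{u}^n)=\tfrac12\rho^n|\mathbf{u}^{n+1}|^2-\tfrac12\rho^n|\mathbf{u}^n|^2+\tfrac12\rho^n|\mathbf{u}^{n+1}-\mathbf{u}^n|^2$, combined with the $-\tfrac12\tfrac{\rho^{n+1}-\rho^n}{\Delta t}\mathbf{u}^{n+1}$ term and the two convective correction terms (integrated by parts), one recovers $E_k^{n+1}-E_k^n$ plus the nonnegative remainder $\tfrac{\mathrm{WeCn}}{2\Delta t}\|\sqrt{\rho^n}(\mathbf{u}^{n+1}-\mathbf{u}^n)\|^2$, which is discarded to obtain an inequality. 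The pressure-stabilization step \eqref{scheme_pressure} supplies the $\tfrac{\Delta t^2\mathrm{WeCn}}{2\bar\rho}\|\nabla p\|^2$ contribution in $E_{tot}^n$: the gradient term $\nabla(2p^n-p^{n-1})$ paired with $\mathbf{u}^{n+1}$, rewritten through \eqref{scheme_pressure}, telescopes into the difference of these pressure energies up to a sign-definite leftover, a standard manipulation for this splitting.

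The heart of the estimate is bounding each free-energy increment by the corresponding inner product. The strictly convex pieces treated implicitly at level $n+1$ — the gradient energy $\tfrac{\mathrm{Cn}^2}{2}\|\nabla\phi\|^2$, the Flory--Huggins entropy $E_{sur}$, and the implicit coupling factors in \eqref{scheme_ch_psi}--\eqref{scheme_ch_phi} — satisfy $E(\cdot^{n+1})-E(\cdot^n)\le\langle E'(\cdot^{n+1}),\cdot^{n+1}-\cdot^n\rangle$ directly by convexity, where $0<\psi^{n+1}<1$ keeps the logarithm and $M_\psi^{n+1}\ge0$ well-defined. The explicitly-treated pieces are controlled by second-order Taylor expansion: for the truncated double well, $F(\phi^{n+1})-F(\phi^n)\le f(\phi^n)(\phi^{n+1}-\phi^n)+\tfrac12\max|F''|\,(\phi^{n+1}-\phi^n)^2$ with $\max|F''|\le2$, so the stabilization $s_1(\phi^{n+1}-\phi^n)$ dominates the remainder precisely when $s_1\ge1$; for the wall term, $\max|\gamma_{wf}''|=\tfrac{\sqrt2\pi^2}{12}|\cos\theta_s|$, giving the Taylor coefficient $\tfrac{\sqrt2\pi^2}{24}|\cos\theta_s|$ and hence the condition $s_2\ge\lvert\tfrac{\sqrt2\pi^2}{24}\cos\theta_s\rvert$.

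The main obstacle will be the coupled adsorption energy $E_{ad}$, whose two contributions are split across \eqref{scheme_ch_psi} and \eqref{scheme_ch_phi} with a deliberate mixture of $\phi^n$, $\phi^{n+1}$ and $\psi^{n+1}$ evaluations. I would check term by term that pairing the $\psi$-potential with $(\psi^{n+1}-\psi^n)$ and the $\phi$-potential with $(\phi^{n+1}-\phi^n)$ reconstructs $E_{ad}(\phi^{n+1},\psi^{n+1})-E_{ad}(\phi^n,\psi^n)$ up to remainders of definite sign, verifying that the particular explicit/implicit pattern chosen for the cross terms is exactly what cancels the cross-derivative errors. Establishing this clean cancellation — together with confirming that the $\phi^n$ appearing in the convective fluxes and in the capillary forcing cancels against the corresponding transport contributions — is the delicate bookkeeping on which the whole estimate hinges; once it is in place, collecting \eqref{thm1_2}--\eqref{thm1_6}-analogues yields \eqref{discrete_law} directly.
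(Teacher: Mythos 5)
Your overall architecture matches the paper's proof: the same pairings (momentum equation with $\mathrm{WeCn}\,\mathbf{u}^{n+1}$, GNBC with $\mathrm{CaCn}\,\eta^{n+1}\mathbf{u}_\tau^{n+1}$, the Cahn--Hilliard pairs with the chemical potentials and the increments, the wall relaxation pair on $\Gamma$), the same convexity/Taylor mechanism producing $s_1\geq 1$ and $s_2\geq\lvert\tfrac{\sqrt2\pi^2}{24}\cos\theta_s\rvert$, and the same splitting of the adsorption increment into $E_{ad}(\phi^{n+1},\psi^{n+1})-E_{ad}(\phi^n,\psi^{n+1})$ (from the $\phi$-equations, via repeated use of $a(a-b)\geq\tfrac12(|a|^2-|b|^2)$) plus $E_{ad}(\phi^n,\psi^{n+1})-E_{ad}(\phi^n,\psi^n)$ (exact, since $E_{ad}$ is linear in $\psi$). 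That part of your plan would go through.

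However, there is a genuine gap in your treatment of the kinetic energy and the pressure term, and the two errors are linked. You propose to \emph{discard} the nonnegative remainder $\tfrac{\mathrm{WeCn}}{2}\|\sqrt{\rho^n}(\mathbf{u}^{n+1}-\mathbf{u}^{n})\|^2$ produced by the kinetic telescoping, and you assert that the pairing of $\nabla(2p^n-p^{n-1})$ with $\mathbf{u}^{n+1}$ telescopes into the pressure energies ``up to a sign-definite leftover.'' In fact the manipulation via \eqref{scheme_pressure} yields
\begin{equation*}
\int_\Omega(2p^n-p^{n-1})\nabla\cdot\mathbf{u}^{n+1}\mathrm{d}x
=\frac{\Delta t}{2\bar{\rho}}\big\|\nabla(p^{n+1}-2p^n+p^{n-1})\big\|^2
-\frac{\Delta t}{2\bar{\rho}}\big(\big\|\nabla p^{n+1}\big\|^2-\big\|\nabla p^{n}\big\|^2\big)
-\frac{\Delta t}{2\bar{\rho}}\big\|\nabla(p^n-p^{n-1})\big\|^2,
\end{equation*}
and only the last term has the favorable sign; the second-difference term $\tfrac{\Delta t}{2\bar\rho}\|\nabla(p^{n+1}-2p^n+p^{n-1})\|^2$ enters the energy balance with the \emph{unfavorable} sign and cannot be dropped. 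The paper controls it by observing that, by the Helmholtz--Hodge decomposition, $\nabla(p^{n+1}-p^n)-\nabla(p^n-p^{n-1})$ is the projection of $\tfrac{\bar\rho}{\Delta t}(\mathbf{u}^{n+1}-\mathbf{u}^n)$ onto gradient fields, hence $\|\nabla(p^{n+1}-2p^n+p^{n-1})\|\leq\tfrac{\bar\rho}{\Delta t}\|\mathbf{u}^{n+1}-\mathbf{u}^n\|$, and then, using $\bar\rho=\min(1,\lambda_\rho)\leq\rho^n$ pointwise, the offending term is bounded exactly by the kinetic remainder $\tfrac{\mathrm{WeCn}}{2}\int_\Omega\rho^n|\mathbf{u}^{n+1}-\mathbf{u}^n|^2\mathrm{d}x$ that you threw away. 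So that remainder must be retained: it is the sole mechanism that absorbs the pressure-extrapolation error, and this step is also precisely where the particular choice of $\bar\rho$ matters. Without this projection estimate and absorption argument your energy inequality does not close.
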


\begin{proof}
We first take the inner products of \eqref{scheme_psi} and \eqref{scheme_ch_psi} with $\Delta t\mu_{\psi}^{n+1}$ and $\psi^{n+1}-\psi^n$ in $\Omega$ respectively. Summing the results up and using the integration by parts and the homogeneous
Neumann boundary condition, we obtain
\begin{equation}\label{thm2_psi1}
\begin{aligned}
    &\int_\Omega (\psi^{n+1}-\psi^n)\bigg(\mathrm{P i} \ln \Big(\frac{\psi^{n+1}}{1-\psi^{n+1}}\Big)+\frac{1}{2 \mathrm{E x}} (\phi^n)^{2}-\frac{1}{4}\big((\phi^n)^{2}-1\big)^{2}\bigg)\mathrm{d} x-\Delta t\int_\Omega\psi^{n}\mathbf{u}^{n+1}\cdot\nabla\mu_\psi^{n+1}\mathrm{d} x \\ =&-\frac{\Delta t}{\mathrm{Pe}_{\psi}}\big\|\sqrt{M_{\psi}^{n+1}} \nabla \mu_{\psi}^{n+1}\big\|^{2}.
    \end{aligned}
\end{equation}
Using the inequality
\begin{equation*}
 (a-b) \ln a\geq(a \ln a-a)-(b \ln b-b),
\end{equation*}
we can estimate the left side of \eqref{thm2_psi1} and recast it into an inequality
\begin{equation}\label{thm2_psi2}
\begin{aligned}
  &E_{sur}(\psi^{n+1})-E_{sur}(\psi^{n})+E_{ad}(\phi^n,\psi^{n+1})-E_{ad}(\phi^n,\psi^{n})
  -\Delta t\int_\Omega\psi^{n}\mathbf{u}^{n+1}\cdot\nabla\mu_\psi^{n+1}\mathrm{d} x  \\
  \leqslant& -\frac{\Delta t}{\mathrm{Pe}_{\psi}}\big\|\sqrt{M_{\psi}^{n+1}} \nabla \mu_{\psi}^{n+1}\big\|^{2}.
\end{aligned}
\end{equation}

By taking the inner products of \eqref{scheme_phi} and \eqref{scheme_ch_phi} with $\Delta t\mu_\phi^{n+1}$ and $\phi^{n+1}-\phi^n$ respectively in $\Omega$ and combining them, we have
\begin{equation}\label{thm2_phi1}
\begin{aligned}
  &\int_\Omega(\phi^{n+1}-\phi^n)\Big(-\mathrm{C n}^{2} \Delta \phi^{n+1}+s_1(\phi^{n+1}-\phi^n)+f(\phi^n) -\psi^{n+1}\big((\phi^n)^{3}-\phi^{n+1}\big)+\frac{1}{\mathrm{E x}} \psi^{n+1} \phi^{n+1}\Big)\mathrm{d} x\\
  =&\Delta t\int_\Omega\phi^{n}\mathbf{u}^{n+1}\cdot\nabla\mu_\phi^{n+1}\mathrm{d} x -\frac{\Delta t}{\mathrm{Pe}_{\phi}}\big\|\nabla \mu_{\phi}^{n+1}\big\|^{2}.
\end{aligned}
\end{equation}
By using integration by parts and the inequality
\begin{equation}\label{thm2_formula1}
  a(a-b)=\frac12|a|^2-\frac12|b|^2+\frac12|a-b|^2\geq\frac{1}{2}(\lvert a\rvert^2-\lvert b\rvert^2)
\end{equation}
recursively, we can recast \eqref{thm2_phi1} into
\begin{equation}\label{thm2_phi2}
\begin{aligned}
&\Delta t\int_\Omega\phi^{n}\mathbf{u}^{n+1}\cdot\nabla\mu_\phi^{n+1}\mathrm{d} x -\frac{\Delta t}{\mathrm{Pe}_{\phi}}\big\|\nabla \mu_{\phi}^{n+1}\big\|^{2}\\
\geq&~\int_\Omega\mathrm{C n}^{2}(\nabla\phi^{n+1}-\nabla\phi^n)\nabla\phi^{n+1} \mathrm{d} x-\mathrm{C n}^{2}\int_\Gamma(\phi^{n+1}-\phi^n) \partial_{n} \phi^{n+1}\mathrm{d} S\\
&+\int_\Omega\bigg\{ \big(s_1-\frac{1}{2}f'(\sigma^n)\big)\lvert\phi^{n+1}
-\phi^n\rvert^2+F(\phi^{n+1})-F(\phi^n)+ \frac{1}{2\mathrm{E x}}\psi^{n+1}\big(\lvert\phi^{n+1}\rvert^{2}-
\lvert\phi^{n}\rvert^{2}\big)\\
&+\frac{1}{2}\psi^{n+1}\Big[\big(\lvert\phi^{n}\rvert^{2}-\lvert\phi^{n+1}\rvert^{2}\big)
\lvert\phi^{n}\rvert^{2}+\big(\lvert\phi^{n+1}\rvert^{2}-\lvert\phi^{n}\rvert^{2}\big)\Big]\bigg\}\mathrm{d} x\\
  \geq&~E_{GL}(\phi^{n+1})-E_{GL}(\phi^{n})+E_{ad}(\phi^{n+1},\psi^{n+1})-E_{ad}(\phi^n,\psi^{n+1})\\
  &-\mathrm{C n}^{2}\int_\Gamma(\phi^{n+1}-\phi^n) \partial_{n} \phi^{n+1}\mathrm{d} S,
\end{aligned}
\end{equation}
where the first inequality is obtained by the Taylor expansion of $F(\phi)$ at $\phi^n$,
\begin{equation*}
\begin{aligned}
f(\phi^n)(\phi^{n+1}-\phi^n)
=F(\phi^{n+1})-F(\phi^n)-\frac{1}{2}f'(\sigma^n)\lvert\phi^{n+1}-\phi^n\rvert^2,
\end{aligned}
\end{equation*}
for some $\sigma^n\in\big[\min \{\phi^{n+1}, \phi^n\}, \max \{\phi^{n+1}, \phi^n\}\big]$, and the last inequality is due to the condition that $s_1\geq1$.
Then, by taking the inner products of the
equation \eqref{scheme_DBC} and \eqref{scheme_DBC_L} with $\Delta t\mathrm{Cn} L_\phi^{n+1}$ and $\mathrm{Cn}(\phi^{n+1}-\phi^n)$ respectively on $\Gamma$, we have
\begin{equation}\label{thm2_DBC1}
\begin{aligned}
  &\mathrm{Cn}\int_\Gamma\big(\phi^{n+1}-\phi^n\big)\big(\mathrm{Cn} \partial_{n} \phi^{n+1}+s_2(\phi^{n+1}-\phi^n)+\gamma'_{wf}(\phi^n) \big) \mathrm{d}S\\
  =&-\Delta t\mathrm{Cn}\int_\Gamma L_\phi^{n+1}\mathbf{u}_{\tau}^{n+1}\cdot \nabla_{\tau} \phi^{n} \mathrm{d}S-\frac{\Delta t\mathrm{Cn}}{\mathrm{Pe}_{s}}\big\|L_\phi^{n+1}\big\|_{\Gamma}^{2}.
  \end{aligned}
\end{equation}
A Taylor expansion of $\gamma_{wf}$ at $\phi^n$ leads to
\begin{equation*}
\begin{aligned}
\gamma'_{wf}(\phi^n)(\phi^{n+1}-\phi^n)
=\gamma_{wf}(\phi^{n+1})-\gamma_{wf}
(\phi^n)-\frac{1}{2}\gamma''_{wf}(\zeta^n)\lvert\phi^{n+1}-\phi^n\rvert^2,
\end{aligned}
\end{equation*}
for some $\zeta^n\in\big[\min \{\phi^{n+1}, \phi^n\}, \max \{\phi^{n+1}, \phi^n\}\big]$.
Then \eqref{thm2_DBC1} can be recast as
\begin{equation}\label{thm2_DBC2}
\begin{aligned}
&-\Delta t\mathrm{Cn}\int_\Gamma L_\phi^{n+1}\mathbf{u}_{\tau}^{n+1}\cdot \nabla_{\tau} \phi^{n}-\frac{\Delta t\mathrm{Cn}}{\mathrm{Pe}_{s}}\big\|L_\phi^{n+1}\big\|_{\Gamma}^{2}\\
=&~\mathrm{Cn}\int_\Gamma\big(s_2-\frac{1}{2}\gamma''_{wf}(\zeta^n)\big)\lvert\phi^{n+1}
-\phi^n\rvert^2+\gamma_{wf}(\phi^{n+1})-\gamma_{wf}(\phi^n) \mathrm{d}S+\mathrm{Cn}^2\int_\Gamma(\phi^{n+1}-\phi^n) \partial_{n} \phi^{n+1}\mathrm{d}S\\
\geq &~ E_{wf}(\phi^{n+1})-E_{wf}(\phi^{n})+\mathrm{Cn}^2\int_\Gamma(\phi^{n+1}-\phi^n) \partial_{n} \phi^{n+1}\mathrm{d}S,
\end{aligned}
\end{equation}
where the last inequality is due to the condition that $s_2\geq\lvert \frac{\sqrt{2}\pi^2}{24}\cos \theta_{s}\rvert$.

We now consider the Navier-Stokes equations with the GNBC.
Using integration by parts and boundary conditions $\partial_{n} \mu_\phi^{n+1}=0$ and $\mathbf{n}\cdot\mathbf{u}^{n}=0$, we can derive
\begin{align}
&\int_\Omega \big((\rho^{n+1} \mathbf{u}^{n} \cdot \nabla) \mathbf{u}^{n+1}+\frac{1}{2} (\nabla \cdot(\rho^{n+1} \mathbf{u}^{n})) \mathbf{u}^{n+1}\big)\cdot\mathbf{u}^{n+1}\mathrm{d}x=\frac{1}{2}\int_{\partial\Omega}\mathbf{n} \cdot(\rho^{n+1} \mathbf{u}^{n}) \lvert\mathbf{u}^{n+1}\rvert^2\mathrm{d}x=0, \label{thm2_NS1}\\
&\int_\Omega\big(\mathbf{J}_\rho^{n+1} \cdot \nabla \mathbf{u}^{n+1}+\frac{1}{2}(\nabla \cdot \mathbf{J}_\rho^{n+1}) \mathbf{u}^{n+1}\big)\cdot\mathbf{u}^{n+1}\mathrm{d}x=\frac{1}{2}\int_{\partial\Omega}\mathbf{n}\cdot \mathbf{J}_\rho^{n+1} \lvert\mathbf{u}^{n+1}\rvert^2\mathrm{d}x=0.\label{thm2_NS2}
\end{align}
Taking the inner product of \eqref{scheme_NS} with $\Delta t\mathrm{WeCn} \mathbf{u}^{n+1}$ and using the equalities \eqref{thm2_NS1} and \eqref{thm2_NS2} and integration by parts, we obtain
\begin{equation}\label{thm2_NS3}
\begin{aligned}
&\mathrm{WeCn}\int_\Omega\rho^n(\mathbf{u}^{n+1}-\mathbf{u}^{n})\cdot\mathbf{u}^{n+1}\mathrm{d}x
-\Delta t\mathrm{WeCn}\int_\Omega(2p^n-p^{n-1})\nabla\cdot\mathbf{u}^{n+1}\mathrm{d}x\\
=&-\frac{\Delta t\mathrm{CaCn}}{2}\big\|\sqrt{\eta^{n+1}} D(\mathbf{u}^{n+1})\big\|^{2}+\Delta t\mathrm{CaCn}\int_\Gamma\eta^{n+1}\mathbf{u}_\tau^{n+1}\cdot\partial_n\mathbf{u}_\tau^{n+1}\mathrm{d}S\\
&-\Delta t\int_\Omega( \phi^n \nabla\mu_{\phi}^{n+1}+ \psi^n\nabla\mu_{\psi}^{n+1})\cdot\mathbf{u}^{n+1}\mathrm{d}x-\frac{\mathrm{WeCn}}{2}\int_\Omega(\rho^{n+1}-\rho^n)|\mathbf{u}^{n+1}|^2\mathrm{d}x,
\end{aligned}
\end{equation}
where we have used the equality
\begin{equation*}
\int_\Omega\nabla(2p^n-p^{n-1})\cdot\mathbf{u}^{n+1}\mathrm{d}x=-\int_\Omega(2p^n-p^{n-1})\nabla\cdot\mathbf{u}^{n+1}\mathrm{d}x,
\end{equation*}
due to $\mathbf{n}\cdot\mathbf{u}^{n+1}=0$ on $\partial\Omega$.

We now consider each term in \eqref{thm2_NS3} separately. Direct calculations lead to
\begin{equation}\label{thm2_NS4}
\begin{aligned}
&\mathrm{WeCn}\int_\Omega\rho^n(\mathbf{u}^{n+1}-\mathbf{u}^{n})\cdot\mathbf{u}^{n+1}\mathrm{d}x+\frac{\mathrm{WeCn}}{2}\int_\Omega(\rho^{n+1}-\rho^n)|\mathbf{u}^{n+1}|^2\mathrm{d}x\\
=&~\frac{\mathrm{WeCn}}{2}\int_\Omega\rho^{n+1}|\mathbf{u}^{n+1}|^2\mathrm{d}x-\frac{\mathrm{WeCn}}{2}\int_\Omega\rho^n|\mathbf{u}^{n}|^2\mathrm{d}x
+\frac{\mathrm{WeCn}}{2}\int_\Omega\rho^n|\mathbf{u}^{n+1}-\mathbf{u}^{n}|^2\mathrm{d}x\\
=&~E_k(\mathbf{u}^{n+1},\rho^{n+1})-E_k(\mathbf{u}^{n},\rho^{n})+\frac{\mathrm{WeCn}}{2}\int_\Omega\rho^n|\mathbf{u}^{n+1}-\mathbf{u}^{n}|^2\mathrm{d}x.
\end{aligned}
\end{equation}
According to \eqref{scheme_pressure} and the Helmholtz-Hodge decomposition, $\nabla(p^{n+1}-p^n)$ is the unique projection of $\frac{\bar{\rho}}{\Delta t}\mathbf{u}^{n+1}$ on the linear subspace of gradient fields with homogeneous Neumann boundary condition. Thus, $\nabla(p^{n+1}-p^n)-\nabla(p^{n}-p^{n-1})$ is the unique projection of $\frac{\bar{\rho}}{\Delta t}(\mathbf{u}^{n+1}-\mathbf{u}^{n})$. Moreover,
$$\big\|\nabla(p^{n+1}-p^n)-\nabla(p^{n}-p^{n-1})\big\|\leq\frac{\bar{\rho}}{\Delta t}\big\|\mathbf{u}^{n+1}-\mathbf{u}^{n}\big\|.$$
Then using \eqref{thm2_formula1}, we have
\begin{equation}\label{thm2_NS5}
\begin{aligned}
&\int_\Omega(2p^n-p^{n-1})\nabla\cdot\mathbf{u}^{n+1}\mathrm{d}x\\
=&-\frac{\Delta t}{\bar{\rho}}\int_\Omega(p^{n+1}-2p^n+p^{n-1})\Delta(p^{n+1}-p^n)\mathrm{d}x
+\frac{\Delta t}{\bar{\rho}}\int_\Omega p^{n+1}\Delta(p^{n+1}-p^n)\mathrm{d}x\\
=&~\frac{\Delta t}{2\bar{\rho}}\big\|\nabla(p^{n+1}-2p^n+p^{n-1})\big\|^2-\frac{\Delta t}{2\bar{\rho}}\big(\big\|\nabla p^{n+1}\big\|^2-\big\|\nabla p^{n}\big\|^2\big)-\frac{\Delta t}{2\bar{\rho}}\big\|\nabla(p^n-p^{n-1})\big\|^2\\
\leq&~\frac{\bar{\rho}}{2\Delta t}\big\|\mathbf{u}^{n+1}-\mathbf{u}^{n} \big\|^2-\frac{\Delta t}{2\bar{\rho}}\big(\big\|\nabla p^{n+1}\big\|^2-\big\|\nabla p^{n}\big\|^2\big)\\
\leq&~\frac{1}{2\Delta t}\int_\Omega\rho^{n} |\mathbf{u}^{n+1}-\mathbf{u}^{n} |^2\mathrm{d}x-\frac{\Delta t}{2\bar{\rho}}\big(\big\|\nabla p^{n+1}\big\|^2-\big\|\nabla p^{n}\big\|^2\big)
\end{aligned}
\end{equation}
where the last inequality is due to the definition
of $\bar{\rho}$. According to \eqref{scheme_GNBC} and $\mathbf{u}_w=\mathbf{0}$, the boundary term in \eqref{thm2_NS3} is rewritten as
\begin{equation}\label{thm2_NS6}
\begin{aligned}
\int_\Gamma\eta^{n+1}\mathbf{u}_\tau^{n+1}\cdot\partial_n\mathbf{u}_\tau^{n+1}\mathrm{d}S=&~\int_\Gamma \mathbf{u}_\tau^{n+1}\cdot\Big(\frac{ L_\phi^{n+1} \nabla_{\tau} \phi^n }{\mathrm{Ca}}-
\frac{ \eta^{n+1}\mathbf{u}_{s}^{n+1}}{\mathrm{L}_{s} l_{s}^{n+1}}\Big)\mathrm{d}S\\
=&~\frac{1}{\mathrm{Ca}}\int_\Gamma L_\phi^{n+1}  \mathbf{u}_\tau^{n+1}\cdot \nabla_{\tau} \phi^n \mathrm{d}S-\frac{1}{\mathrm{L}_{s}}\big\|\sqrt{\eta^{n+1}/l_s^{n+1}}\mathbf{u}_{s}^{n+1}\big\|_{\Gamma}^{2}.
\end{aligned}
\end{equation}

Combing \eqref{thm2_NS4}--\eqref{thm2_NS6}, we can recast \eqref{thm2_NS3} into
\begin{equation}\label{thm2_NS7}
\begin{aligned}
&E_k(\mathbf{u}^{n+1},\rho^{n+1})-E_k(\mathbf{u}^{n},\rho^{n})+\frac{\Delta t^2\mathrm{WeCn}}{2\bar{\rho}}\big(\big\|\nabla p^{n+1}\|^2-\|\nabla p^{n}\big\|^2\big)\\
\leq&-\frac{\Delta t\mathrm{CaCn}}{2}\big\|\sqrt{\eta^{n+1}} D(\mathbf{u}^{n+1})\big\|^{2}
-\frac{\Delta t\mathrm{CaCn}}{\mathrm{L}_{s}}\big\|\sqrt{\eta^{n+1}/l_s^{n+1}}\mathbf{u}_{s}^{n+1}\big\|_{\Gamma}^{2}\\
&-\Delta t\int_\Omega(\phi^n\nabla\mu_{\phi}^{n+1}  + \psi^n\nabla\mu_{\psi}^{n+1} )\cdot\mathbf{u}^{n+1}\mathrm{d}x+\Delta t\mathrm{Cn}\int_\Gamma L_\phi^{n+1} \mathbf{u}_\tau^{n+1}\cdot \nabla_{\tau} \phi^n \mathrm{d}S,
\end{aligned}
\end{equation}

Finally, collecting \eqref{thm2_psi2}, \eqref{thm2_phi2}, \eqref{thm2_DBC2} and \eqref{thm2_NS7}, we arrive at the desired energy inequality.\quad$\Box$
\end{proof}

\begin{remark}\label{RK_psi}
At a first glance, the proposed scheme \eqref{scheme_psi}-\eqref{scheme_ch_psi} presents a difficulty in solving a nonlinear system in $\psi^{n+1}$ due to the implicit treatment of the mobility $M_\psi^{n+1}$. However, this nonlinearity is weak since the second order derivative term is linear in $\psi^{n+1}$ and nonlinearity only enters lower order terms, i.e., \eqref{scheme_psi}-\eqref{scheme_ch_psi} is equivalent to
\begin{equation}\label{scheme_psi0}
\frac{\psi^{n+1}-\psi^{n}}{\Delta t}+\nabla \cdot (\mathbf{u}^{n+1}\psi^{n})=\frac{\mathrm{Pi}}{\mathrm{Pe}_{\psi}} \Delta \psi^{n+1} + \frac{1}{\mathrm{Pe}_{\psi}} \nabla \cdot M_{\psi}^{n+1}\nabla\Big(\frac{1}{2 \mathrm{E x}} (\phi^{n})^{2}-\frac{1}{4}\big((\phi^{n})^{2}-1\big)^{2}\Big).
\end{equation}
The numerical stiffness of this equation mainly comes from the second order term $\Delta \psi$ which is already treated implicitly. To further remove the nonlinearity, we can replace $M_\psi^{n+1}$ in \eqref{scheme_psi0} by $M_\psi^n$, leading to a linear system in $\psi^{n+1}$,
\begin{equation}\label{scheme_psi1}
\frac{\psi^{n+1}-\psi^{n}}{\Delta t}+\nabla \cdot (\mathbf{u}^{n+1}\psi^{n})=\frac{\mathrm{Pi}}{\mathrm{Pe}_{\psi}} \Delta \psi^{n+1} + \frac{1}{\mathrm{Pe}_{\psi}} \nabla \cdot M_{\psi}^{n}\nabla\Big(\frac{1}{2 \mathrm{E x}} (\phi^{n})^{2}-\frac{1}{4}\big((\phi^{n})^{2}-1\big)^{2}\Big),
\end{equation}
associated with the boundary conditions  $\mathrm{Pi}\partial_n\psi^{n+1}+
M_{\psi}^{n}\partial_n\big((\phi^{n})^{2}/(2\mathrm{E x})-((\phi^{n})^{2}-1)^{2}/4\big)=0$ on $\Gamma$ and $\partial_n\psi^{n+1}=0$ on $\partial\Omega/\Gamma$. This recovers the scheme in \cite{gu2014energy} in the absence of velocity field $\mathbf{u}^{n+1}$. However, in comparison to \eqref{scheme_psi0}, the scheme \eqref{scheme_psi1} does not yields the discrete energy stability rigorously, since the following inequality does not hold:
\begin{equation*}
  \int_\Omega\bigg(M_\psi^{n+1}\nabla\Big(\mathrm{Pi}\ln \frac{\psi^{n+1}}{1-\psi^{n+1}}\Big) + M_\psi^{n}\nabla\Big(\frac{1}{2 \mathrm{E x}} (\phi^{n})^{2}-\frac{1}{4}\big((\phi^{n})^{2}-1\big)^{2}\Big)\bigg)\cdot\nabla\mu_\psi^{n+1}\mathrm{d}x\geq0.
\end{equation*}
It should be also noted that if the mobility in \eqref{scheme_psi}-\eqref{scheme_ch_psi} is treated explicitly as $M_\psi^n$, we recover the scheme in \cite{Wang2022}, which has been proved to be energy stable and bound-preserving in the absence of velocity field.

\end{remark}

\subsection{Second-order scheme}
Following the discussion of updating $\psi^{n+1}$ in Remark \ref{RK_psi}, we propose the linear and decoupled second-order scheme \eqref{scheme_psi2}--\eqref{scheme_pressure2} using second-order BDF discretization:

\vspace{2mm}

\emph{Step 1}: We first compute $\psi^{n+1}$, $\mu_\phi^{n+1}$, $\phi^{n+1}$, $\mu_\phi^{n+1}$, $\mathbf{u}^{n+1}$ by

\begin{align}
&\frac{3\psi^{n+1}-4\psi^n+\psi^{n-1}}{2\Delta t}+\nabla\cdot(\mathbf{u}^{*}\psi^*)=\frac{\mathrm{Pi}}{\mathrm{Pe}_{\psi}} \Delta \psi^{n+1} + \frac{1}{\mathrm{Pe}_{\psi}} \nabla \cdot M_{\psi}^{*}\nabla\Big(\frac{1}{2 \mathrm{E x}} (\phi^{*})^{2}-\frac{1}{4}\big((\phi^{*})^{2}-1\big)^{2}\Big),\label{scheme_psi2}\\
&\mu_{\psi}^{n+1}=\mathrm{P i} \ln \Big(\frac{\psi^{n+1}}{1-\psi^{n+1}}\Big)+\frac{1}{2 \mathrm{E x}} (\phi^*)^{2}-\frac{1}{4}\big((\phi^*)^{2}-1\big)^{2},\label{scheme_ch_psi2}\\
&\frac{3\phi^{n+1}-4\phi^n+\phi^{n-1}}{2\Delta t}+\nabla\cdot(\mathbf{u}^{*}\psi^*)=\frac{1}{\mathrm{P e}_{\phi}} \Delta \mu_{\phi}^{n+1},\label{scheme_phi2}\\
&\mu_{\phi}^{n+1}=-\mathrm{C n}^{2} \Delta \phi^{n+1}+s_1(\phi^{n+1}-\phi^*)+f(\phi^*)+\frac{1}{\mathrm{E x}} \psi^{n+1} \phi^{n+1} -\psi^{n+1}\big((\phi^*)^{3}-\phi^{n+1}\big),\label{scheme_ch_phi2}\\
&\begin{aligned}
&\rho^{n+1} \frac{3\mathbf{u}^{n+1}-4\mathbf{u}^{n} +\mathbf{u}^{n-1} }{2\Delta t}  + \rho^{n+1} \mathbf{u}^* \cdot \nabla \mathbf{u}^{*} +\mathbf{J}_\rho^{n+1}\cdot\nabla\mathbf{u}^{*}+\nabla (p^n+\frac{4}{3}\varphi^{n}-\frac{1}{3}\varphi^{n-1})\\
&\quad\quad\quad=\frac{1}{\mathrm{R e}} \big(\eta^{n+1}\Delta\mathbf{u}^{n+1} + \nabla\eta^{n+1}\cdot D(\mathbf{u}^{*})\big)
-\frac{1}{\mathrm{We}\mathrm{Cn}}(\phi^{n+1} \nabla \mu_{\phi}^{n+1}+ \psi^{n+1} \nabla\mu_{\psi}^{n+1}),
\end{aligned}\label{scheme_NS2}
\end{align}
with boundary conditions
\begin{align}
&\frac{3\phi^{n+1}-4\phi^{n}+\phi^{n-1}}{2\Delta t}+\mathbf{u}_{\tau}^{*}\cdot \nabla_{\tau} \phi^{*}=-\frac{1}{\mathrm{Pe}_s}L_\phi^{n+1}  &&\text { on } ~\Gamma, \label{scheme_DBC2}\\
&\frac{ \mathbf{u}_{s}^{n+1}}{\mathrm{L}_{s} l_{s}^{n+1}}=\frac{ L_\phi^{n+1} \nabla_{\tau} \phi^{n+1} }{\mathrm{Ca}\eta^{n+1}} -\partial_{n} \mathbf{u}_{\tau}^{n+1} &&\text { on }~ \Gamma,\label{scheme_GNBC2}\\
&\partial_{n} \mu_\phi^{n+1}=0,\quad\quad \mathrm{Pi}\partial_{n} \psi^{n+1}+M_{\psi}^{*}\partial_{n}\Big(\frac{1}{2 \mathrm{E x}} (\phi^{*})^{2}-\frac{1}{4}\big((\phi^{*})^{2}-1\big)^{2}\Big)=0,\quad\quad \mathbf{n}\cdot\mathbf{u}^{n+1} =0  &&\text { on } ~\Gamma, \label{scheme_BC2_1}\\
& \partial_{n} \phi^{n+1}=0,\quad\quad
\partial_{n} \mu_\phi^{n+1}=0, \quad\quad  \partial_{n} \psi^{n+1}=0, \quad\quad\partial_{n} \mathbf{u}_{\tau}^{n+1}=\mathbf{0},\quad\quad \mathbf{n}\cdot\mathbf{u}^{n+1}=0 &&\text { on }~ \partial\Omega/\Gamma, \label{scheme_BC2_2}
\end{align}
where
\begin{align}
&\psi^*=2\psi^{n}-\psi^{n-1}, \quad\quad \phi^*=2\phi^{n}-\phi^{n-1}, \quad\quad \mathbf{u}^{*}=2\mathbf{u}^{n}-\mathbf{u}^{n-1},\\
&L_\phi^{n+1} =\mathrm{Cn}\partial_{n} \phi^{n+1}+s_{2}(\phi^{n+1}-\phi^{*})+\gamma_{wf}^{\prime}(\phi^{*}), \label{scheme_DBC_L2}\\
&\rho^{n+1}=(1-\phi^{n+1}) / 2+\lambda_{\rho}(1+\phi^{n+1}) / 2 , \quad\quad\eta^{n+1}=(1-\phi^{n+1}) / 2+\lambda_{\eta}(1+\phi^{n+1}) / 2, \\
&l_s^{n+1}=(1-\phi^{n+1})/2+\lambda_{l_s}(1+\phi^{n+1})/2,\quad\quad M_\psi^*=\psi^*(1-\psi^*),\\
&
f(\phi^*)=\begin{cases}2(\phi^*+1), & \text { if }~ \phi^*<-1, \\ (\phi^*)^{3}-\phi^*, & \text { if }~-1 \leq \phi^* \leq 1, \\ 2(\phi^*-1), & \text { if } ~\phi^*>1,\end{cases}\quad\quad\mathbf{J}_\rho^{n+1}=\frac{1-\lambda_\rho}{2\mathrm{Pe}_\phi}\nabla\mu_\phi^{n+1}, \quad\quad \mathbf{u}_s^{n+1}=\mathbf{u}_\tau^{n+1}-\mathbf{u}_w.
\end{align}

\vspace{2mm}

\emph{Step 2}: Update $p^{n+1}$ by solving
\begin{equation}\label{scheme_pressure2}
\begin{aligned}
  &\Delta\varphi^{n+1}=\frac{3\bar{\rho}}{2\Delta t}\nabla\cdot\mathbf{u}^{n+1},\\
  &p^{n+1}=p^n+\varphi^{n+1}-\eta^{n+1}\nabla\cdot\mathbf{u}^{n+1},
\end{aligned}
\end{equation}
with $\bar{\rho}=\min(1,\lambda_\rho)$ and boundary condition
\begin{equation*}
  \mathbf{n}\cdot\nabla \varphi^{n+1}=0 \quad\quad\text { on } ~\partial\Omega.
\end{equation*}

The proposed scheme \eqref{scheme_psi2}--\eqref{scheme_pressure2} is decoupled for $\psi^{n+1}$, $\phi^{n+1}$, $\mathbf{u}^{n+1}$ and $p^{n+1}$, and constitutes linear systems for $\psi^{n+1}$, $\phi^{n+1}$,  $\mathbf{u}^{n+1}$ and $p^{n+1}$. Thus, they can be solved efficiently by using linear solvers.

\begin{remark}
The discrete energy law of the second-order schemes for the NS-PFS-MCL model is still open. The main difficulties arise from the singularity in Flory-Huggins potential, the nonlinear coupling $\psi\phi^2$ and $\psi(\phi^2-1)^2$ between phase-field variable and surfactant concentration, the nonlinear coupling between the variables $\psi$, $\phi$ and the velocity through convection terms and stress terms, and the coupling in the variable density and viscosity. The nonlinear coupling between the velocity
and the phase-field variable on the MCL boundary poses another challenge for proving the energy stability.
Even though the energy stability for the second-order scheme is not easy to be proved analytically, the numerical result in Sect.~\ref{sec_energy} has shown that it is quite stable with the property of energy decay.
\end{remark}

\subsection{Numerical implementation}\label{implementation}
We first regularize the logarithmic potential $G(\psi)=\psi \ln \psi+(1-\psi) \ln (1-\psi)$ in \eqref{E_sur} from the domain $(0,1)$ to $(-\infty,+\infty)$ to avoid the singularity. For any $\xi>0$, the regularized logarithmic potential \citep{MR1182511} can be written as
\begin{equation*}
\hat{G}(\psi)= \begin{cases}\psi \ln \psi+\frac{(1-\psi)^{2}}{2 \xi}+(1-\psi) \ln \xi-\frac{\xi}{2}, & \quad \text { if } ~ \psi \geq 1-\xi, \\ \psi \ln \psi+(1-\psi) \ln (1-\psi), & \quad \text { if } ~ \xi < \psi < 1-\xi, \\ (1-\psi) \ln (1-\psi)+\frac{\psi^{2}}{2 \xi}+\psi \ln \xi-\frac{\xi}{2}, & \quad \text { if } ~\psi \leq \xi,\end{cases}
\end{equation*}
and when $\xi\rightarrow0$, $\hat{G}(\psi)\rightarrow G(\psi)$. In the numerical experiments, we focus on the NS-PFS-MCL model formulated with the regularized function $\hat{G}(\psi)$. This modification makes the computation of $\mu_\psi^{n+1}$ in \eqref{scheme_ch_psi} and \eqref{scheme_ch_psi2} free from singularities. The $~\hat{}~$ is omitted in the notation for convenience.

\vspace{0.5cm}

The finite difference method is adopted for space discretization in our numerical implements.
The first-order scheme \eqref{scheme_psi}--\eqref{scheme_pressure} form a nonlinear coupled system.
Nevertheless, the coupling system can be solved by either using a simple sub-iteration process or decoupling the system with a lagged velocity for the convection
terms. In our simulations, instead of using sub-iterations,
we treat the nonlinear convection terms explicitly in \eqref{scheme_psi}--\eqref{scheme_DBC}, i.e., $\nabla\cdot(\mathbf{u}^{n+1}\psi^n)$ as $\nabla\cdot(\mathbf{u}^{n}\psi^n)$ in \eqref{scheme_psi}, $\nabla\cdot(\mathbf{u}^{n+1}\phi^n)$ as $\nabla\cdot(\mathbf{u}^{n}\phi^n)$ in \eqref{scheme_phi},  $\rho^{n+1} \mathbf{u}^n \cdot \nabla \mathbf{u}^{n+1} +\mathbf{J}_\rho^{n+1}\cdot\nabla\mathbf{u}^{n+1}$ as $\rho^{n+1} \mathbf{u}^n \cdot \nabla \mathbf{u}^{n} +\mathbf{J}_\rho^{n+1}\cdot\nabla\mathbf{u}^{n}$ in \eqref{scheme_NS}, $\mathbf{u}_{\tau}^{n+1}\cdot \nabla_{\tau} \phi^{n}$ as $\mathbf{u}_{\tau}^{n} \cdot \nabla_{\tau} \phi^{n}$ in \eqref{scheme_DBC}, which can decouple Cahn-Hilliard type equations \eqref{scheme_psi}--\eqref{scheme_ch_phi} from Navier-Stokes equation \eqref{scheme_NS}.
Then $\psi^{n+1}$, $\phi^{n+1}$, $\mathbf{u}^{n+1}$ and $p^{n+1}$ can be solved in order. We also decouple velocity components in \eqref{scheme_NS} by writing the viscous stress term as
\begin{equation*}
\nabla \cdot\big(\eta D(\mathbf{u})\big)=\eta \Delta \mathbf{u}+\nabla \eta \cdot D(\mathbf{u}),
\end{equation*}
where the first term is treated implicitly and the second term is treated explicitly.
This could greatly remove stiffness of viscous term, even though the second term would introduce some mild CFL constraint. The resulting system has only variable coefficients in diagonal positions, which can be solved efficiently by preconditioned conjugate gradient method.

\vspace{0.5cm}

We solve the system under the assumption of three-dimensional axisymmetry with respect to the $z$-axis. We have the following cylindrical domain
\begin{equation}\label{domain}
\Omega=\{ ~(r, z) ~|  ~ 0 \leq r \leq R, ~  0 \leq z \leq L~\},
\end{equation}
where $R$, $L$ are dimensionless parameters. Here, $r=0$ corresponds to the z-axis (the centerline), $z=0$ is the solid wall $\Gamma$, where the GNBC is imposed.
Correspondingly, $\mathbf{u} = (u_r, u_z)$, where $u_r$, $u_z$ denote velocities along the $r$, $z$ directions respectively.
Now, we introduce the mathematical expressions for the gradient,
divergence and Laplace operators in axisymmetric cylindrical coordinates:
\begin{equation*}
\begin{aligned}
&\nabla=\Big(\frac{\partial}{\partial r}, \frac{\partial}{\partial z}\Big), \quad\quad  \nabla \cdot\mathbf{u}=\frac{1}{r} \frac{\partial(r u_r)}{\partial r}+ \frac{\partial u_z}{\partial z},\quad \quad \Delta=\nabla \cdot \nabla=\frac{1}{r} \frac{\partial}{\partial r}\Big(r \frac{\partial}{\partial r}\Big)+\frac{\partial^{2}}{\partial z^{2}},\\
&\nabla \cdot\big[\eta D(\mathbf{u})\big]=\nabla \cdot\big[\eta\big(\nabla \mathbf{u}+(\nabla \mathbf{u})^{\top}\big)\big]=\left(\begin{array}{c}
\frac{2}{r} \frac{\partial}{\partial r}\big(r \eta \frac{\partial u_{r}}{\partial r}\big)-\frac{\eta}{r^{2}} u_{r}+\frac{\partial}{\partial z}\big(\eta \frac{\partial u_{r}}{\partial z}\big)+\frac{\partial}{\partial z}\big(\eta \frac{\partial u_{z}}{\partial r}\big) \\
\frac{1}{r} \frac{\partial}{\partial r}\big(r \eta \frac{\partial u_{r}}{\partial z}\big)+\frac{1}{r} \frac{\partial}{\partial r}\big(r \eta \frac{\partial u_{z}}{\partial r}\big)+2 \frac{\partial}{\partial z}\big(\eta \frac{\partial u_{z}}{\partial z}\big)
\end{array}\right).
\end{aligned}
\end{equation*}
Moreover,  the GNBC \eqref{scheme_DBC}--\eqref{scheme_BC} on $\Gamma$ ($z=0$) becomes
\begin{align}
&\frac{\phi^{n+1}-\phi^{n}}{\Delta t}+\partial_{r} (u_{r}^{n+1} \phi^{n})=-\frac{1}{\mathrm{Pe}_s}L_\phi^{n+1}, \\
&\frac{ u_{s}^{n+1}}{\mathrm{L}_{s} l_{s}^{n+1}}=\frac{ L_\phi^{n+1} \partial_{r} \phi^n }{\mathrm{Ca}\eta^{n+1}} -\partial_{z} u_{r}^{n+1}, \\
&\partial_{z} \mu_\phi^{n+1}=0,\quad\quad \mathrm{Pi}\partial_{z} \psi^{n+1}+M_{\psi}^{n}\partial_{z}\Big(\frac{1}{2 \mathrm{E x}} (\phi^{n})^{2}-\frac{1}{4}\big((\phi^{n})^{2}-1\big)^{2}\Big)=0,\quad\quad u_z^{n+1} =0,
\end{align}
with
\begin{equation*}
  L_\phi^{n+1} =\mathrm{Cn}\partial_{z} \phi^{n+1}+s_{2}(\phi^{n+1}-\phi^{n})+\gamma_{wf}^{\prime}\left(\phi^{n}\right),
  \quad\quad  u_{s}^{n+1}=u_{r}^{n+1}-u_w.
\end{equation*}
The remaining boundary conditions are also modified as follows:
\begin{equation*}
  \begin{aligned}
&\partial_z\phi^{n+1}=0,\quad\quad \partial_z\mu_\phi^{n+1}=0,\quad\quad \partial_z\psi^{n+1}=0,\quad\quad \partial_zu_r^{n+1}=0,\quad\quad u_z^{n+1}=0  &\quad\quad\quad\text{ on }~ z=L,\\
& \partial_r\phi^{n+1}=0,\quad\quad \partial_r\mu_\phi^{n+1}=0,\quad\quad \partial_r\psi^{n+1}=0,\quad\quad u_r^{n+1}=0,\quad\quad \partial_ru_z^{n+1}=0   &\quad\quad\quad\text{ on }~ r=0,\\
& \partial_r\phi^{n+1}=0,\quad\quad \partial_r\mu_\phi^{n+1}=0,\quad\quad \partial_r\psi^{n+1}=0,\quad\quad u_r^{n+1}=0,\quad \quad\partial_ru_z^{n+1}=0   &\quad\quad\quad\text{ on }~ r=R.
  \end{aligned}
\end{equation*}
Similarly, the second-order BDF scheme \eqref{scheme_psi2}--\eqref{scheme_pressure2} can be rewritten accordingly. Furthermore, we will introduce the spatial discretization for these schemes in \ref{appx}.

We use efficient iterative methods to solve the resulting schemes, e.g., generalized minimal residual method (GMRES) with the FFT preconditioner under the periodic boundary conditions. It should be noted that due to Neumann boundary conditions for $\psi^{n+1}$ and $p^{n+1}$, we can accelerate the computation using discrete cosine transform in $z$-direction. Then their $z$-components can be decoupled in spectral space, and the resulting linear system in $r$-direction can be efficiently solved by direct elimination. This acceleration can be further improved when the system is solved in Cartesian coordinates, where $\phi^{n+1}$ can also be solved using FFT efficiently.

\begin{remark}
The presented schemes are applicable to the cases of high density ratios and viscosity ratios due to the use of the pressure stabilization method in Navier-Stokes equations.
It is necessary to pay special attention to the discretization of convection terms for the high density ratio case.
Third order weighted essentially non-oscillatory (WENO) schemes \citep{Liu1994} could be used to reduce the undershoot and overshoot around the interface. Moreover, for the purpose of volume conservation in phase-field variable, the numerical truncation of $\phi$ will be also adopted, in which $\phi$ can be redistributed as follows \citep{Gao2014}:
\begin{enumerate}
\item Truncate the undershoot/overshoot values:
\begin{equation*}
\hat{\phi}= \begin{cases}1, & \quad \text{if}~\phi>1+10^{-7},
 \\ -1, &\quad \text{if}~\phi<-(1+10^{-7}).\end{cases}
\end{equation*}

\item Compute the difference $D$ of the two different total masses:
\begin{equation*}
D = \int_{\Omega} \phi \mathrm{d}x-\int_{\Omega} \hat{\phi} \mathrm{d}x.
\end{equation*}

\item Uniformly distribute $D$ at $N_{D}$ points as $D / N_{D}$, where $N_{D}$ is the total number of cells that fall into the interface region $(-0.999<\phi<0.999)$. 
\end{enumerate}
However, in our simulations, this redistribution is not used due to low density ratio except for the last two examples in Sect.~\ref{sec_Reality}.
\end{remark}

\section{Numerical simulations}\label{sec_Simulation}
In this section, we will show several numerical experiments to validate the accuracy and stability of the proposed schemes, and study the influence of surfactants on the droplet impact dynamics. We first obtain the convergence rates of the proposed first-order and second-order schemes in Sect.~\ref{sec_accuracy}. The property of energy decay of these schemes is numerically validated and the validity of the model is confirmed in Sect.~\ref{sec_energy}.
Then, in Sect.~\ref{sec_impact}, we simulate droplets impacting on the solid surfaces with different values of Reynolds number, Weber number and Young's angle in two cases: one with a clean droplet and the other with a contaminated droplet by surfactants. Moreover, we compare our numerical results to the experimental results and observe almost quantitative agreement in Sect.~\ref{sec_Reality}, which gives a prediction of droplet impact dynamics with surfactants in real experiments.

Unless otherwise specified, the default values of some parameters are chosen as follows,
\begin{align*}
& \mathrm{Cn}=0.01, \quad\quad\mathrm{Pe}_\phi=100, \quad\quad \mathrm{Pe}_s=0.002,\quad\quad \mathrm{Pi}=0.1841, \\
&  \mathrm{Ex}=1, \quad\quad  \lambda_\rho=0.1, \quad\quad \lambda_\eta=0.5, \quad\quad \lambda_{l_s}=1.
\end{align*}
The computational domain $\Omega$ is determined by \eqref{domain} with $R=L=1$, i.e., $(r,z)\in[0,1]\times[0,1]$, except for Sect.~\ref{sec_Reality}, in which the domain is selected according to the real experiments. The grids is $(n_r,n_z)=(N,N)$ and $h=1/N$ is the spatial resolution, where $N$ is chosen appropriately for different sections.
Different values of $N$ and $\Delta t$ are chosen for accuracy test in Sect.~\ref{sec_accuracy}, whereas we use a uniform mesh with $h=0.005$ and $\Delta t=10^{-4}$ in Sect.~\ref{sec_energy}--\ref{sec_impact}.
Moreover, we assume droplet impacts occur under the following conditions:
(i) in axisymmetric motion;
(ii) with non-ionic surfactant;
(iii) with low surfactant concentration.

\subsection{Accuracy test}\label{sec_accuracy}
In this section, we choose that
\begin{equation*}
\mathrm{Re}=20,\quad\quad  \mathrm{We}=2, \quad\quad \mathrm{L}_{s}=0.1, \quad\quad \theta_s=60^\circ,\quad\quad \mathrm{Pe}_\psi=10,
\end{equation*}
and set the following initial values:
\begin{equation*}
\begin{aligned}
  &\psi(r, z, t=0)=0.01,\\
&\phi(r, z, t=0)=\tanh \big(50\sqrt{2}\times(\sqrt{r^{2}+z^{2}}-0.5)\big),\\
&\mathbf{u}(r, z, t=0)=\mathbf{0}.
\end{aligned}
\end{equation*}

For the spatial convergence test,  $N=50$, 100, 200, 400, 800 are chosen and the numerical solutions are obtained by applying the first-order and second-order (in time) schemes at $t=0.4$. We choose the solution on the finest mesh as the reference, and use $\Delta t=10^{-5}$ for the temporal resolution so that the errors arising from the time discretization are negligible compared with the spatial errors.
As shown in Table \ref{T4.1}, for both schemes, the results demonstrate second order accuracy in space for all quantities $\psi$, $\phi$ and $\mathbf{u}$. 
\begin{table}[ht!]
\centering
\caption{Spatial errors and convergence rates for $\psi$, $\phi$ and $\mathbf{u}$ at $t=0.4$ with fixed $\Delta t=10^{-5}$ and different grids $N$. The errors are computed by comparing with the reference solution on the finest mesh with $N=800$. $e(\cdot)$ is the error in $l_2$ norm.}
	\begin{tabular}{ | c | c | c | c | c | c |  c | c | c |}
	\hline
  & \multicolumn{8}{|c|}{1st-order scheme}  \\
		\hline
		$N$ & $e(\psi)$	& order	 & $e(\phi)$	& order &  $e(u_r)$	& order & $e(u_z)$	& order\\
		\hline
		50 	    & 5.10E-2		&  -	 & 5.21E-2  	&  - 	 & 3.58E-2  	&  - 	    &  3.61E-2 	&  - \\
		\hline
		100     & 1.38E-2 		& 1.89	 & 1.27E-2 	    &  2.04  & 9.01E-3 	    &  1.99 	&  9.28E-3 	&  1.96 \\
		\hline
		200 	& 3.32E-3		& 2.05	 & 3.04E-3  	&  2.06	 & 2.03E-3  	&  2.15 	&  2.23E-3 	&  2.06\\
		\hline
		400 	& 7.70E-4		& 2.11	 & 7.09E-4  	&  2.10	 & 4.54E-4  	&  2.16 	&  4.58E-4 	&  2.28\\
		\hline
& \multicolumn{8}{|c|}{2nd-order scheme}  \\
		\hline
		$N$ & $e(\psi)$	& order	 & $e(\phi)$	& order &  $e(u_r)$	& order & $e(u_z)$	& order\\
		\hline
		50 	    & 4.10E-2		&  -	 & 4.61E-2  	&  - 	 & 3.01E-2  	&  - 	    &  3.04E-2 	&  - \\
		\hline
		100     & 1.02E-2 		& 2.01	 & 1.11E-2 	    &  2.05  & 7.68E-3   	&  1.97 	&  7.65E-3 	&  1.99 \\
		\hline
		200 	& 2.48E-3		& 2.04	 & 2.53E-3  	&  2.14	 & 1.80E-3  	&  2.09 	&  1.81E-3 	&  2.08\\
		\hline
		400 	& 5.35E-4		& 2.21	 & 6.27E-4  	&  2.01	 & 4.15E-4  	&  2.12 	&  3.97E-4 	&  2.19\\
		\hline
	\end{tabular}\label{T4.1}
\end{table}

In the convergence test for temporal discretization, the numerical solutions are obtained by applying the first-order and second-order schemes with different time steps $\Delta t = 1/200, 1/400, 1/800, 1/1600$ at $t=0.4$. For different $\Delta t$, the corresponding spatial resolution is $h=2\Delta t$.
The reference solution is obtained by applying the second-order scheme with $\Delta t=10^{-4}$ and $h=5\times10^{-4}$. As shown in Table \ref{T4.2}, the schemes  achieve first order and second order accuracy in time respectively.
\begin{table}[ht!]
\centering
\caption{Temporal errors and convergence rates for $\psi$, $\phi$ and $\mathbf{u}$ at $t=0.4$ with different step sizes $\Delta t$. Corresponding spatial resolution $h=2\Delta t$ with $h=1/N$. The errors are computed by comparing with the reference solution obtained by the second-order scheme with $\Delta t=10^{-4}$ and $h=5\times10^{-4}$. $e(\cdot)$ is the error in $l_2$ norm.}
	\begin{tabular}{ | c | c | c | c | c | c |  c | c | c |}
	\hline
  & \multicolumn{8}{|c|}{1st-order scheme}  \\
		\hline
		$\Delta t$ & $e(\psi)$	& order	 & $e(\phi)$	& order &  $e(u_r)$	& order & $e(u_z)$	& order\\
		\hline
		1/200   & 3.56E-2       &  -	 & 3.98E-2      &  -        &  2.51E-2 	&  - 	 & 2.66E-2  	&  - 	\\
		\hline
		1/400   & 1.84E-2		&  0.95  & 2.05E-2		&  0.96	 	&  1.26E-2 	&  0.99  & 1.40E-2  	&  0.93 	\\
		\hline
		1/800   & 7.96E-3 		&  1.21	 & 1.03E-2 		&  0.99		&  5.94E-3 	&  1.09  & 6.42E-3  	&  1.12 	\\
		\hline
		1/1600  & 3.51E-3		&  1.18	 & 4.84E-3		&  1.09     &  2.55E-3 	&  1.22	 & 2.68E-3  	&  1.26 	\\
		\hline
& \multicolumn{8}{|c|}{2nd-order scheme}  \\
		\hline
		$\Delta t$ & $e(\psi)$	& order	 & $e(\phi)$	& order &  $e(u_r)$	& order & $e(u_z)$	& order\\
		\hline		
        1/200    &	1.19E-2  	&  -     & 1.26E-2      &  -        &  9.37E-3 	&  -    & 9.89E-3       & -	 \\
		\hline
		1/400    &	3.21E-3  	&  1.89  & 3.47E-3		& 1.86  	&  2.49E-3 	&  1.91 & 2.58E-3		& 1.94	 	\\
		\hline
		1/800    &	7.86E-4   	&  2.03  & 8.92E-4 		& 1.96 	    &  6.19E-4 	&  2.01 & 6.81E-4 		& 1.92	 	 \\
		\hline
		1/1600   &	1.82E-4  	&  2.11  & 2.14E-4		& 2.06		&  1.58E-4 	&  1.97 & 1.73E-4		& 1.98	\\
				\hline
	\end{tabular}\label{T4.2}
\end{table}

\subsection{Energy stability and model validation}\label{sec_energy}
In this section, the values of the parameters and the initial conditions are the same as Sect.~\ref{sec_accuracy}, except that $\psi(r, z, t=0)=0.02 +0.01\chi$, where $\chi$ is a random variable uniformly distributed in $[0,1]$.
We perform the computation until $t=15$ so that the system almost reaches steady state.

Fig.~\ref{Total_Energy} reveals that the total energies always decay with time for both first-order and second-order schemes, and the difference between them is almost negligible. In addition, the evolution curves of different parts in total energies are shown in Fig.~\ref{Energy}. It can be observed that the adsorption energy $E_{ad}$ decreases with time, whereas the mixing entropy $E_{sur}$ increases with time. This reveals the adsorption of surfactant onto the two-phase interface, which can also be observed in Fig.~\ref{2D_final}.
The Ginzburg-Landau energy $E_{GL}$ increases and the wall energy $E_{wf}$ decreases until stable states are achieved, indicating the dilation of two-phase interface and wetting area in wetting process. From the evolution of kinetic energy $E_k$, we know that the nonzero velocity field is generated at the beginning according to the initial contact line motion, and attenuates rapidly around about $t=0.4$, after which the system achieves the quasi-static state.
In summary, the interplay among interface evolution, surfactant diffusion, adsorption kinetic, and contact line dynamics become dominant in a long period until the system is near the equilibrium state.
\begin{figure}[t!]
\centering
\begin{subfigure}{0.49\linewidth}
\centering
\includegraphics[scale=0.58]{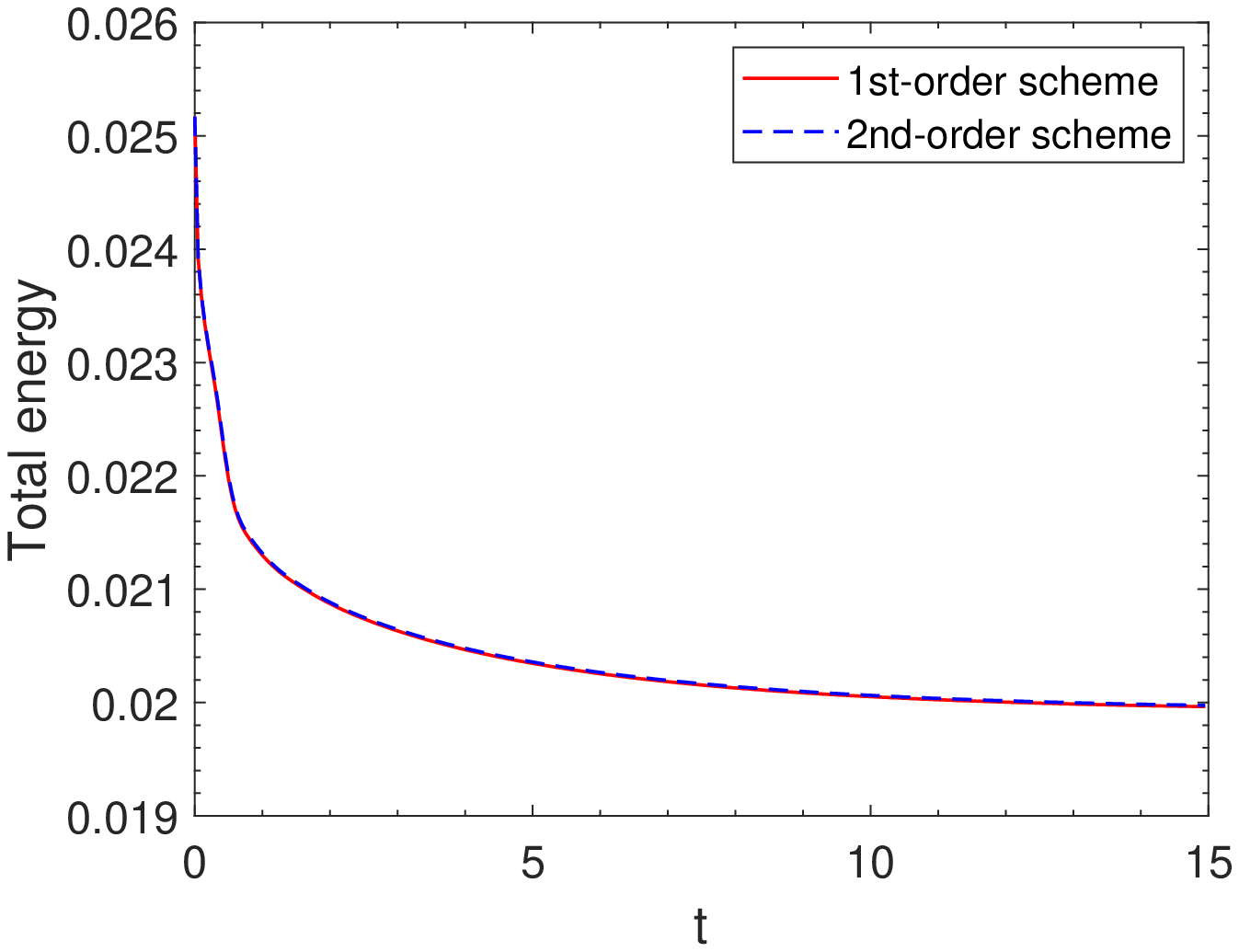}
\caption{}
    \label{Total_Energy}
\end{subfigure}
\begin{subfigure}{0.49\linewidth}
\centering
\includegraphics[scale=0.58]{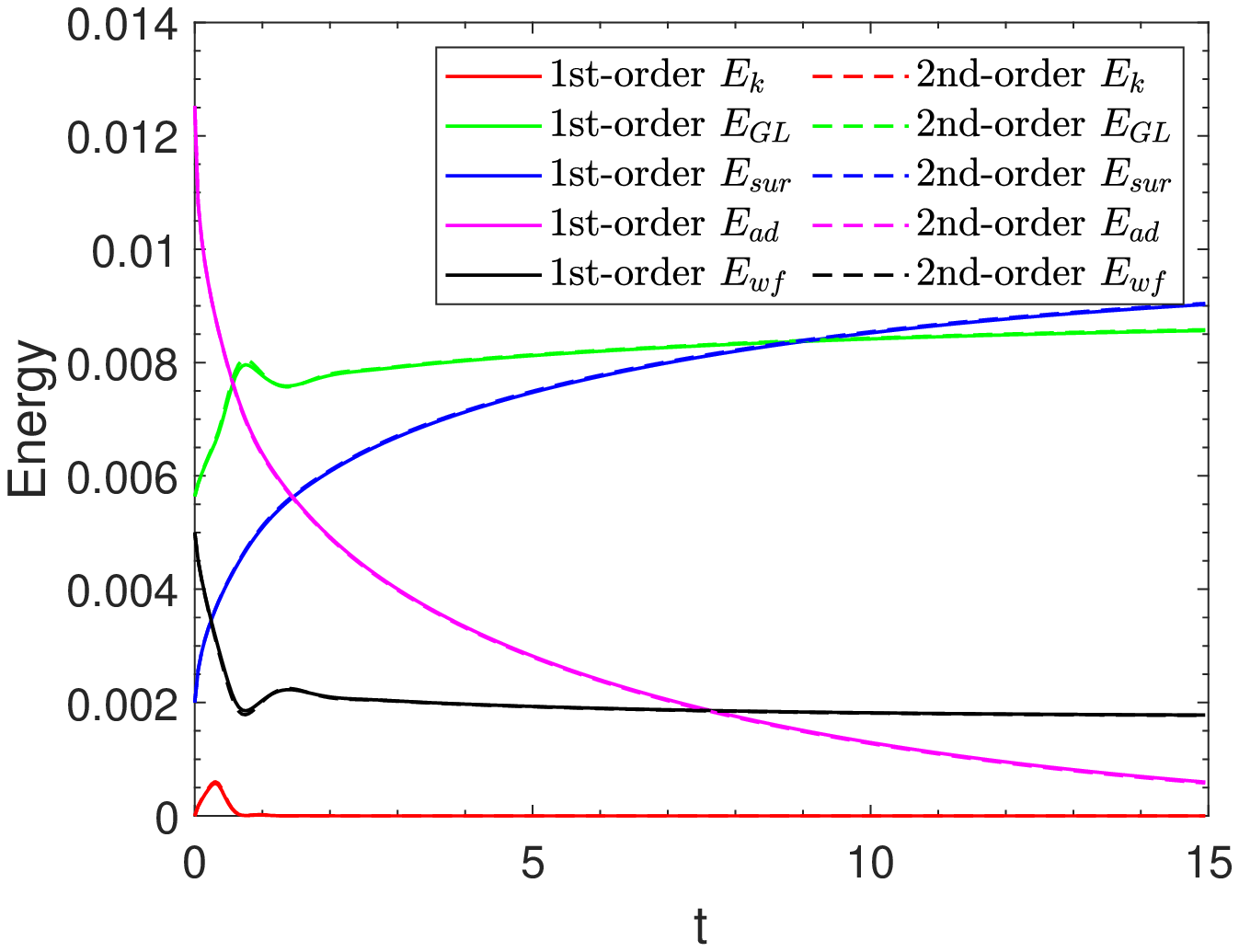}
\caption{}
    \label{Energy}
    \end{subfigure}
    \caption{Evolution of different types of energies obtained from first-order scheme (solid lines) and second-order scheme (dashed lines). (a) The total energy $E_{tot}$; (b) different parts in the total energy.}
\end{figure}
\begin{figure}[t!]
\centering
\begin{subfigure}{0.49\linewidth}
\centering
\includegraphics[scale=0.43]{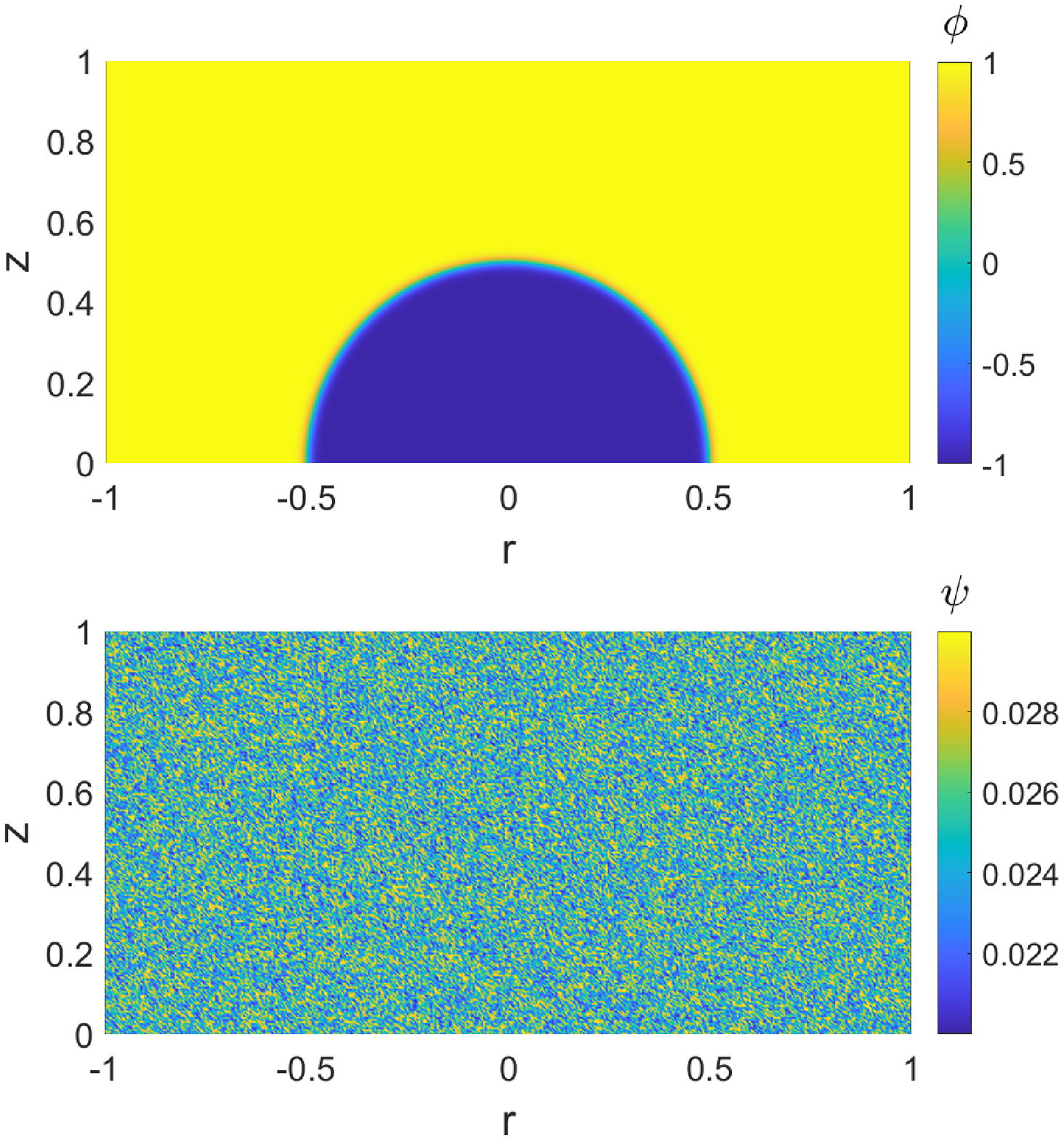}
\caption{}
    \label{2D_initial}
\end{subfigure}
\begin{subfigure}{0.49\linewidth}
\centering
\includegraphics[scale=0.43]{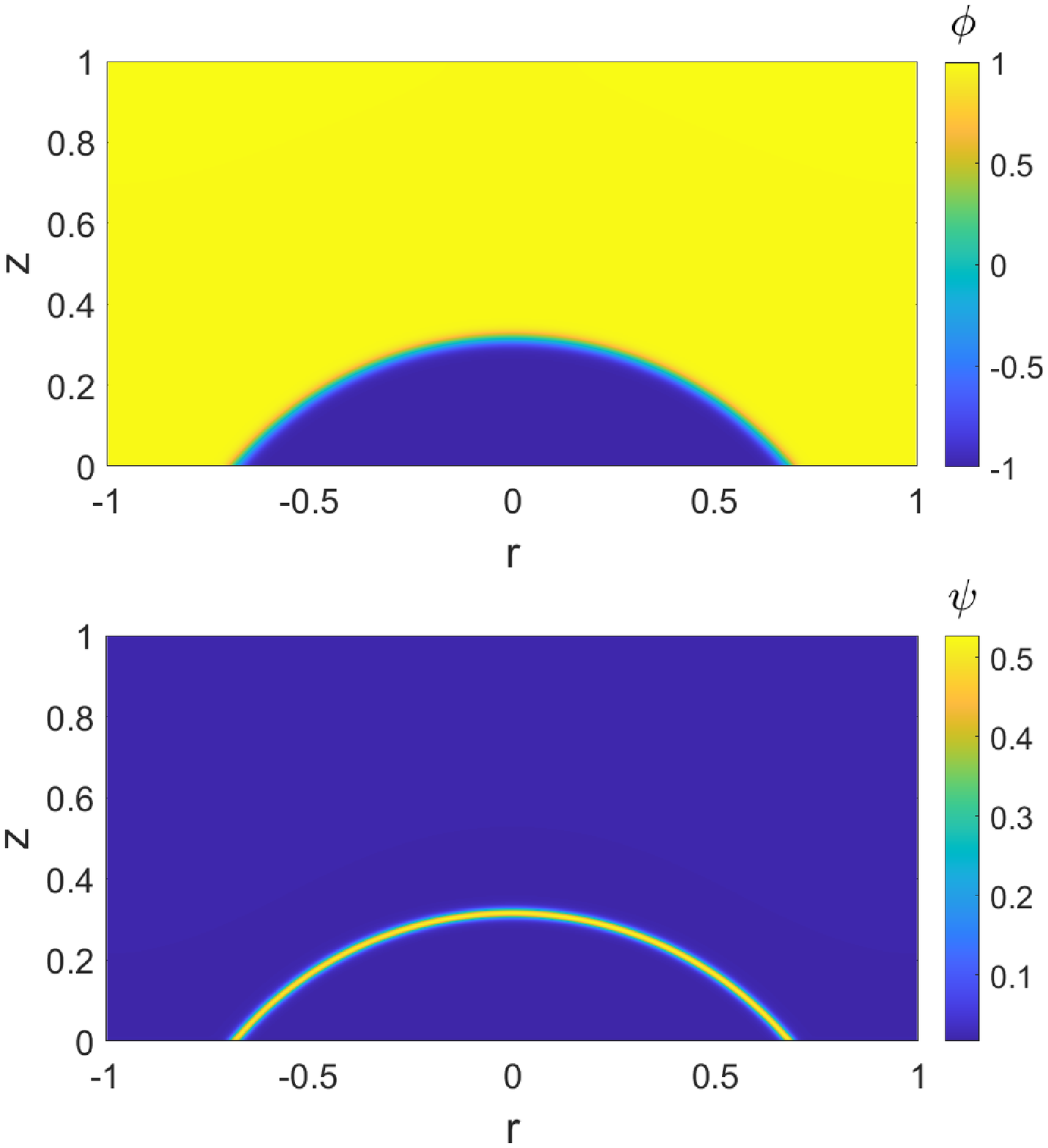}
\caption{}
    \label{2D_final}
    \end{subfigure}
    \caption{Cross-sectional plots of (a) initial droplet profile (upper panel) and surfactant distribution (lower panel); (b) final droplet profile (upper panel) and surfactant distribution (lower panel). Hydrophilic surface ($\theta_s=60^\circ$) is considered. The final state is at $t=15$.}
    \label{2D}
\end{figure}
\begin{figure}[t!]
\hspace{-21mm}
\includegraphics[scale=0.5]{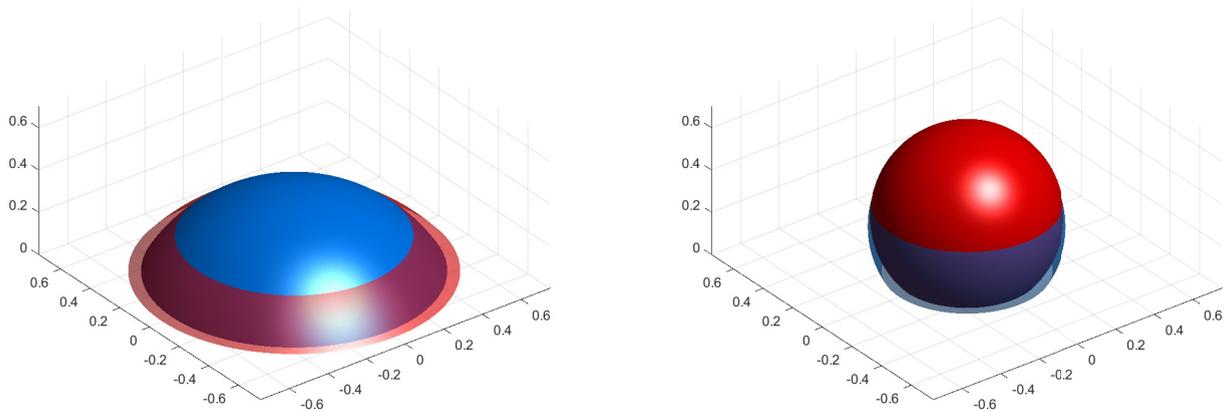}
    \caption{Comparison of equilibrium profiles of clean (blue sphere) and contaminated (red sphere) droplets on different surfaces: (left panel) hydrophilic surface with $\theta_s=60^\circ$; (right panel) hydrophobic surface with $\theta_s=120^\circ$. For contaminated droplets, the initial surfactant concentrations are random in $(0.02,0.03)$. In equilibrium, the contact angles of clean and contaminated droplets are respectively $59.3^\circ$ and $47.1^\circ$ for hydrophilic case, while they are respectively $120^\circ$ and $132.2^\circ$ for hydrophobic case.}
    \label{3D_final}
\end{figure}

For both hydrophilic ($\theta_s=60^\circ$) and hydrophobic ($\theta_s=120^\circ$) surfaces, two droplets (clean and contaminated) are selected for comparison to demonstrate the effect of surfactants on the dynamics of MCL.
To see how the droplets and the surfactant evolve, the initial and steady states of the droplet profiles and surfactant distributions are shown in Fig.~\ref{2D}, which are computed by using the second-order scheme \eqref{scheme_psi2}--\eqref{scheme_pressure2} until the steady state ($t=15$). In Fig.~\ref{2D_final}, the surfactant concentration around the interface is obviously higher than that in other regions in equilibrium.
The clean and contaminated droplets constantly spread or recoil until they reach their equilibrium states (shown in Fig.~\ref{3D_final}). Obviously, the presence of surfactant enhances hydrophilicity of a wetting droplet with its equilibrium contact angle decreasing to $47.1^\circ$, while it also helps droplet dewetting on hydrophobic surface with its equilibrium contact angle increasing to $132.2^\circ$.
In a word, surfactants affect wetting properties (such as equilibrium contact angle, contact line velocity) by altering surface tensions. This leads to many real applications, for example, removing grease from clothing using detergents.

\subsection{Droplet impact on a solid wall}\label{sec_impact}
In this section, we apply our second-order energy stable scheme \eqref{scheme_psi2}--\eqref{scheme_pressure2} to simulate a droplet impacting on a solid surface.
To demonstrate the influence of surfactants on the impact dynamics clearly, we compare the impact behaviors for clean and contaminated (with surfactants) droplets on hydrophilic and hydrophobic surfaces. Motivated by \cite{Zhang2016}, we study the processes by varying Reynolds number, Weber number and wettability of the solid surface. Different choices of these parameters will result in different phenomena, including adherence, partial bouncing, bouncing and splashing. Moreover, we control diffusion and adsorption ability of surfactants by varying $\mathrm{Pe}_\psi$. We show both three-dimensional plots of droplet profiles and their two-dimensional cross-sectional plots in radial direction. To illustrate the dynamics of fluids and surfactants, we also show the velocity fields and the concentration distribution of surfactant in cross-sectional plots. 

In all of the examples, a circular droplet with radius $r=0.25$ initially located at $(0, 0.25)$ is considered in two cases: one clean and the other contaminated with soluble surfactant accumulated on the interface of droplet. Here, this contaminated droplet is generated by independently evolving \eqref{scheme_psi2} from an initial uniform surfactant distribution (with $\psi=0.03$ everywhere) until steady state is reached, where $\phi$ is given in \eqref{initial_impact_phi} indicating a circular shape and $\mathbf{u}=\mathbf{0}$ is fixed. At the beginning, we assign the droplets with a downward initial velocity of absolute value 1. In summary, the initial conditions are mathematically expressed as
\begin{align}
&\phi(r, z, t=0)=\tanh \big(50\sqrt{2}\times(\sqrt{r^{2}+(z-0.25)^{2}}-0.25)\big),\label{initial_impact_phi}\\
&u_{r}(r, z, t=0)=0, \quad \quad u_{z}(r, z, t=0)=-0.5\times\Big(1-\tanh \big(50\sqrt{2}\times(\sqrt{r^{2}+(z-0.25)^{2}}-0.25)\big)\Big),\label{initial_impact_u}
\end{align}
and their profiles are shown in Fig.~\ref{impact_initial}. Initially, the surfactant concentrates on the interface and is uniformly distributed along its perimeter. Moreover, we fix the parameter $\mathrm{L}_{s}=0.0025$ in Examples 1--5, while $\mathrm{L}_{s}$ is chosen accordingly in Examples 6 and 7 in order to fit the experimental phenomena. The values of $\mathrm{Re}$, $\mathrm{We}$, $\theta_s$ and $\mathrm{Pe}_\psi$ will be specified below.
\begin{figure}[t!]
\center
\begin{overpic}[trim=0cm 0cm 0cm 0cm, clip,scale=0.36]{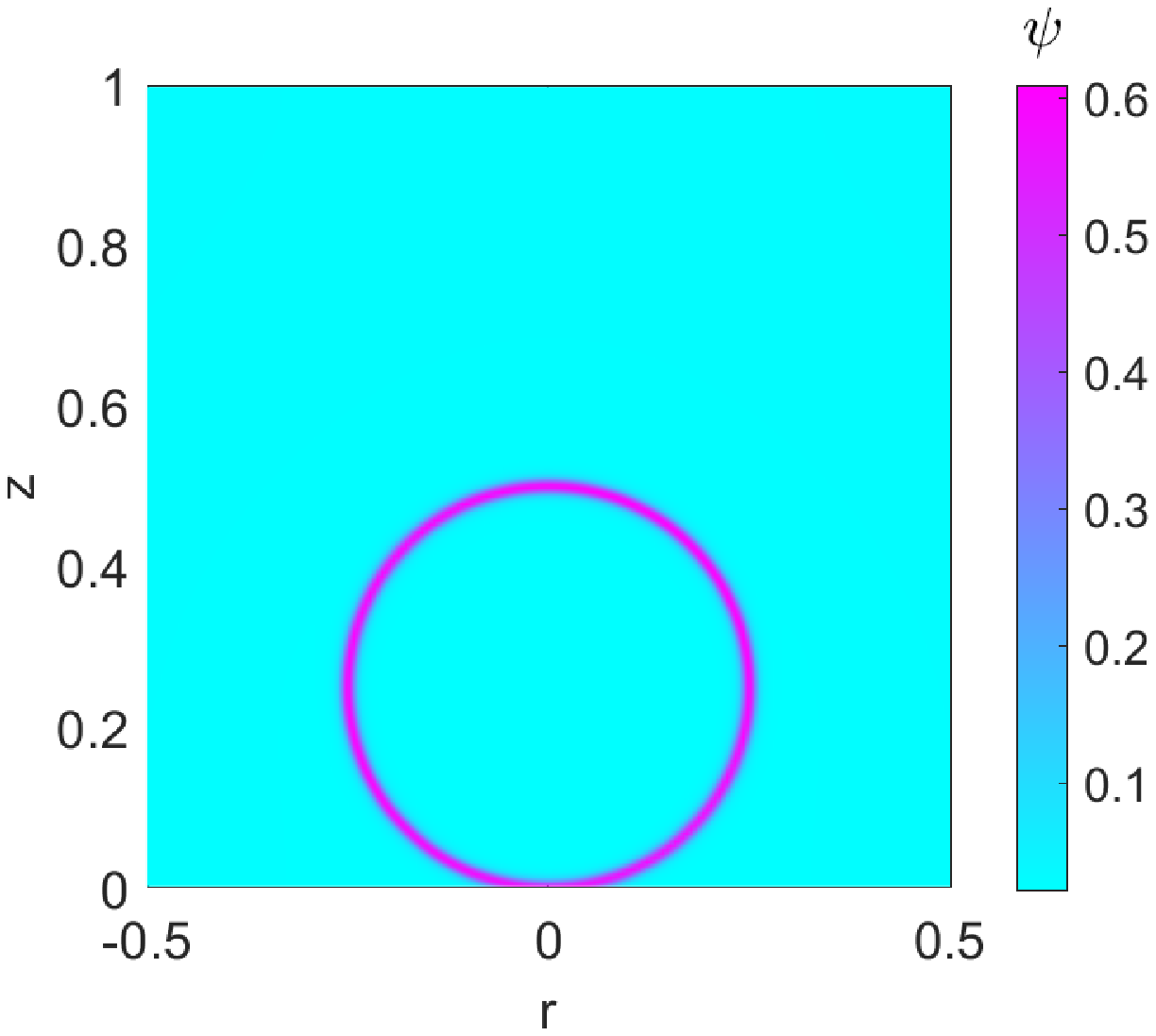}
\end{overpic}
\begin{overpic}[trim=0cm 0cm 0cm 0cm, clip,scale=0.36]{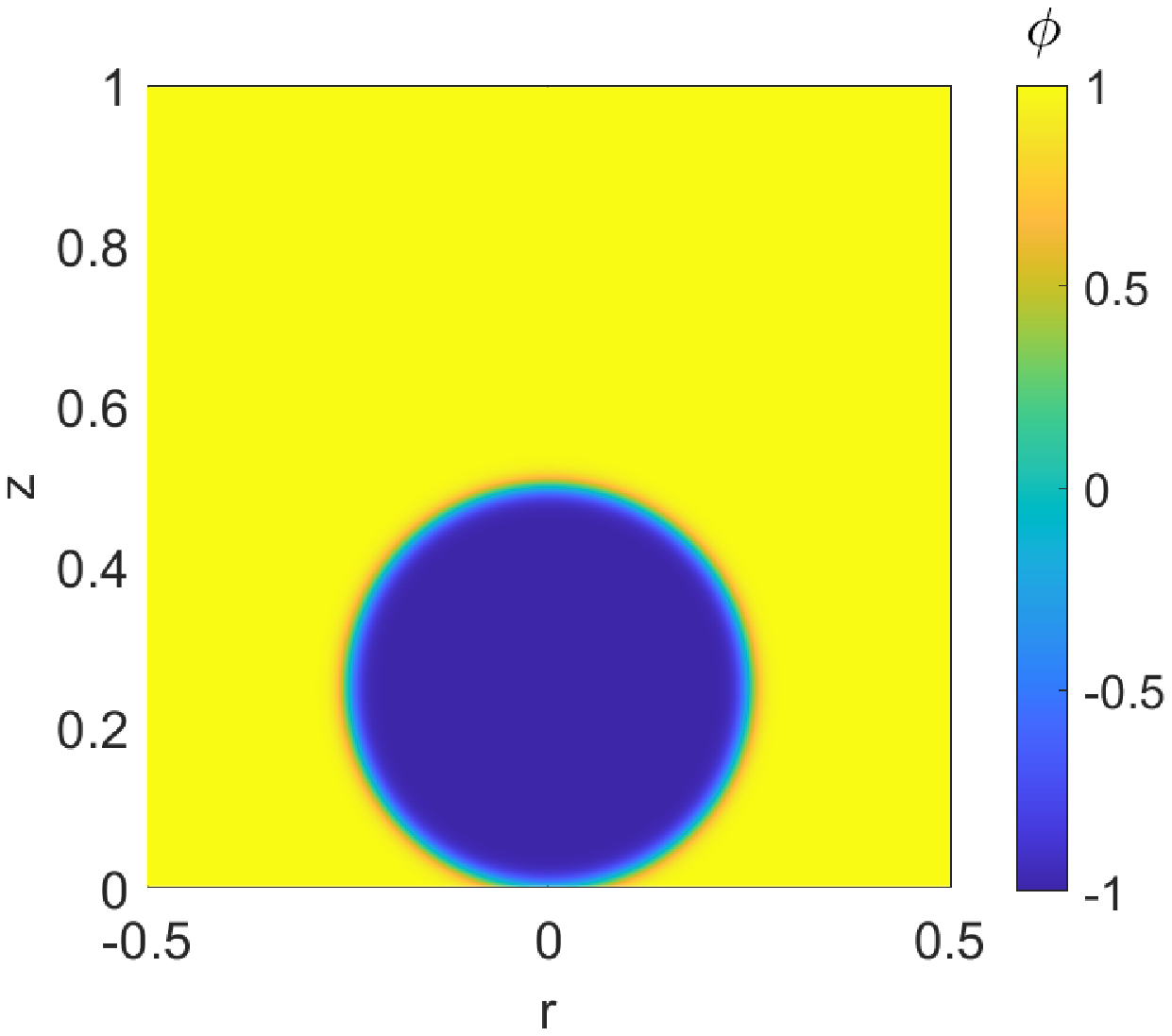}
\end{overpic}
\begin{overpic}[trim=0cm 0cm 0cm 0cm, clip,scale=0.36]{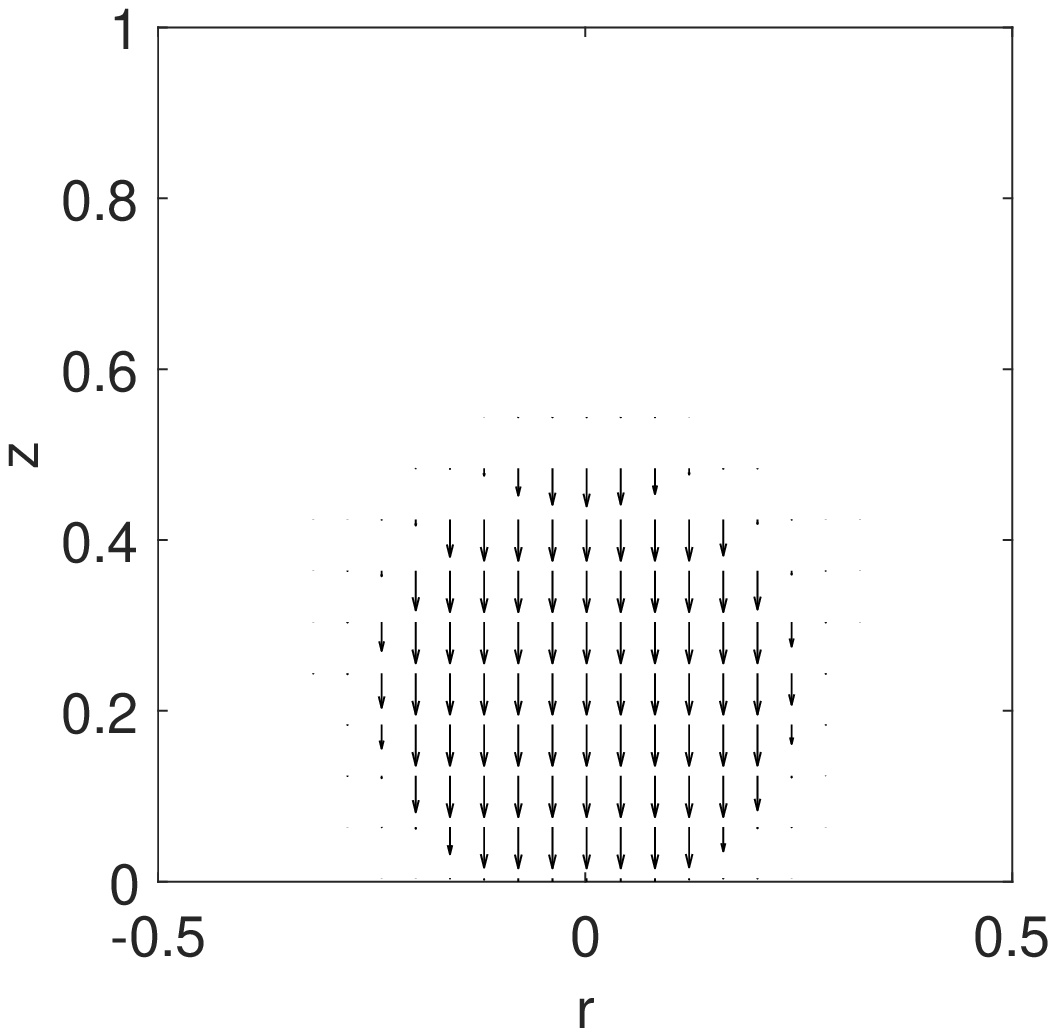}
\end{overpic}
\caption{From left to right, the initial profiles of $\psi$, $\phi$ and $\mathbf{u}$ in Examples 1--7 are shown.}\label{impact_initial}
\end{figure}

\subsubsection{Adherence}
In this section, we consider the first two examples for droplet adherence, which will happen when the initial kinetic energy is not large and the dissipation is large. In second example shown below, when surfactants are present, the outcome of the impact transits from adherence to complete bouncing.

\paragraph{Example 1} In this example, we choose the following parameters for the simulation:
\begin{equation*}
\mathrm{Re}=800, \quad\quad  \mathrm{We}=245.1, \quad\quad \theta_s=90^\circ,\quad\quad \mathrm{Pe}_\psi=100.
\end{equation*}
The evolutional profiles of droplet and surfactant concentration are depicted in Fig.~\ref{Adherence_2}.

As shown in Fig.~\ref{Adherence_2}a-c, both the clean and contaminated droplets behave in similar manners at the beginning ($t=1.6$). Due to contact line dynamics and the flow nearby the interface, the surfactant on the contaminated droplet concentrates at the moving front, leading to a lower surface tension. This lower surface tension makes the moving front of interface deformed more easily, which results in a smaller contact angle than that of the clean droplet. By uncompensated Young stress, smaller contact angle (and thus larger deviation from Young's angle) gives rise to larger contact line speed. Therefore, the contact line dynamics is accelerated and the front moves faster in the contaminated droplet. Moreover, the non-uniform distribution in surfactant concentration induces a gradient in surface tension and generates Marangoni force, which also contributes to the spreading dynamics.

When the droplets attain their maximal radii, they lose most of their kinetic energies while they store a large amount of surface energies. Then they start to release these surface energies through recoiling towards their centers. However, since surface tension is reduced due to the addition of surfactant, the contaminated droplet is more easily deformed and its center part becomes thinner than that of the clean droplet, which gives rise to breakup tendency as shown in Fig.~\ref{Adherence_2}d-f. When the  contaminated droplet breaks up from the center, a new contact line forms and recedes outwards as a result of uncompensated Young stress (Fig.~\ref{Adherence_2}h). The droplet behaves as a doughnut-like torus with shrinking width. In comparison, the clean droplet continues to recoil towards a spherical shape (Fig.~\ref{Adherence_2}g) and the difference between its shape and that of the contaminated one at $t=4$ is shown in Fig.~\ref{Adherence_2}i. The torus shape of the contaminated droplet does not last long due to surface tension, it continues evolving until a spherical shape is achieved in the end.

In this example, both the clean and contaminated droplets adhere on the substrate. The contaminated droplet experiences topological changes while the clean one does not. The differences between the two processes can also be read from the plots of dissipations. As illustrated in Fig.~\ref{Adherence_2_dissi}, both viscous ($R_v$) and slip ($R_s$) dissipations decay after initial peaks during droplet impact. They dominate the other dissipations and gently vary in spreading dynamics. In the spreading stage, the diffusion ($R_d$) and relaxation ($R_r$) dissipations become important in the sense of encoding interfacial evolution and identifying topological changes, although they are not large in magnitude. For instance, when topological change happens around $t=3.08$ and a `contaminated' torus forms, an increasing in $R_d$ (and also in $R_r$) is observed (dashed pink curve in the inset plot of Fig.~\ref{Adherence_2_dissi_a}). Therefore, we can easily distinguish the contaminated case from the clean case by simply looking at these two dissipations. In addition, some particular interface profiles can also be reflected when these two dissipations experience rapid changes (See Fig.~\ref{Adherence_2_dissi_b} and \ref{Adherence_2_dissi_c}).

\begin{figure}[t!]
\center\begin{overpic}[trim=0cm 0cm 0cm 0cm, clip,scale=0.3]{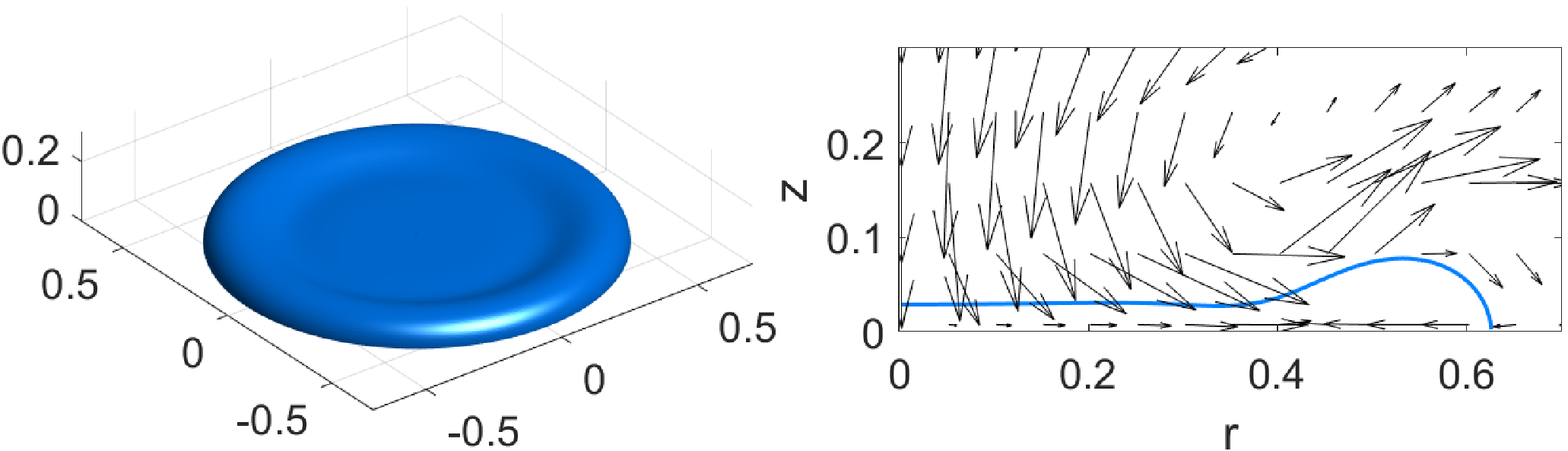}
\put(2,26){\scriptsize{(a)}}
\end{overpic}
\hspace{-0.6cm}
\begin{overpic}[trim=0cm 0cm 0cm 0cm, clip,scale=0.3]{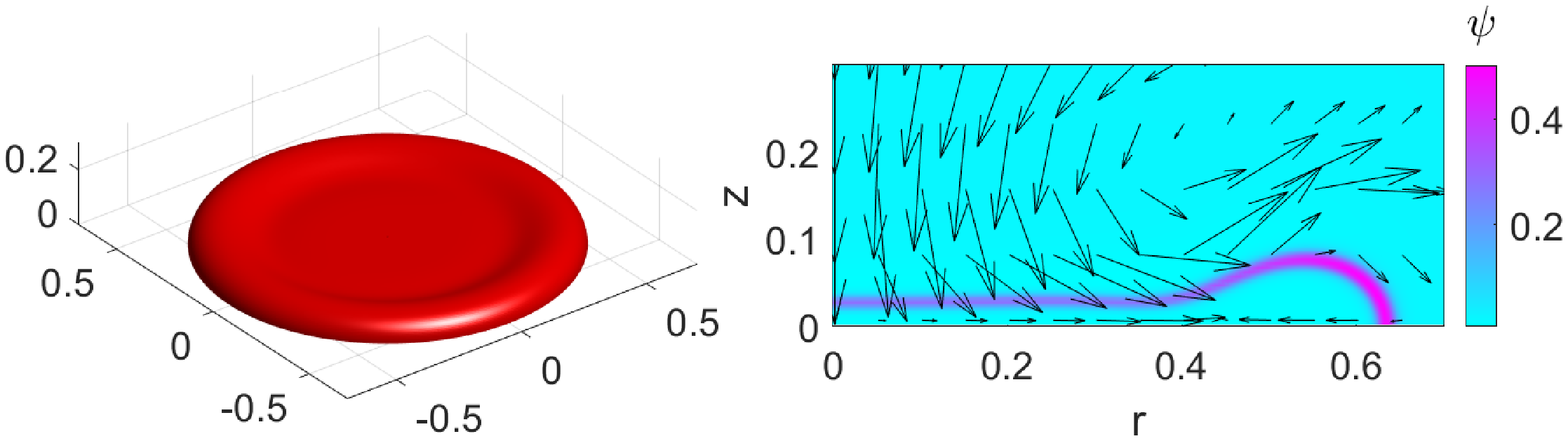}
\put(2,26){\scriptsize{(b)}}
\end{overpic}
\hspace{-0.1cm}
\begin{overpic}[trim=0cm 0cm 0cm 0cm, clip,scale=0.3]{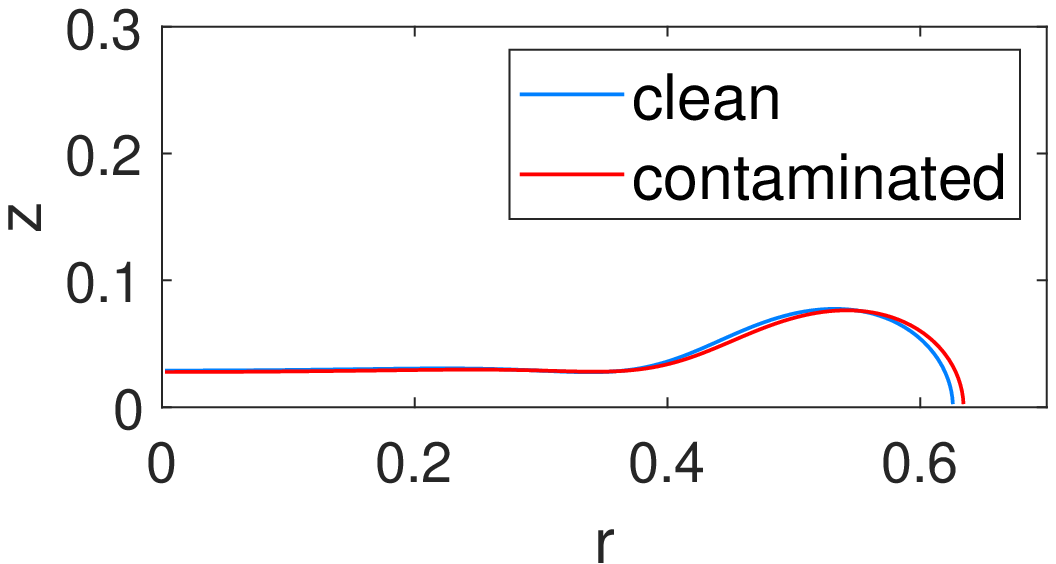}
\put(0,56){\scriptsize{(c)}}
\end{overpic}

\begin{overpic}[trim=0cm 0cm 0cm 0cm, clip,scale=0.3]{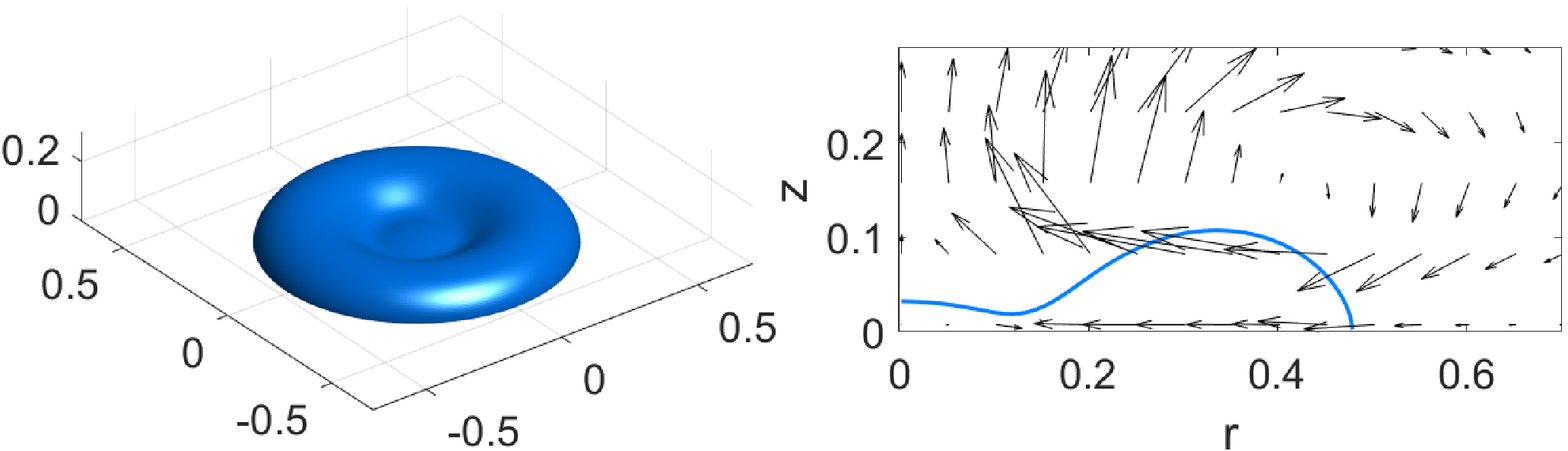}
\put(2,26){\scriptsize{(d)}}
\end{overpic}
\hspace{-0.6cm}
\begin{overpic}[trim=0cm 0cm 0cm 0cm, clip,scale=0.3]{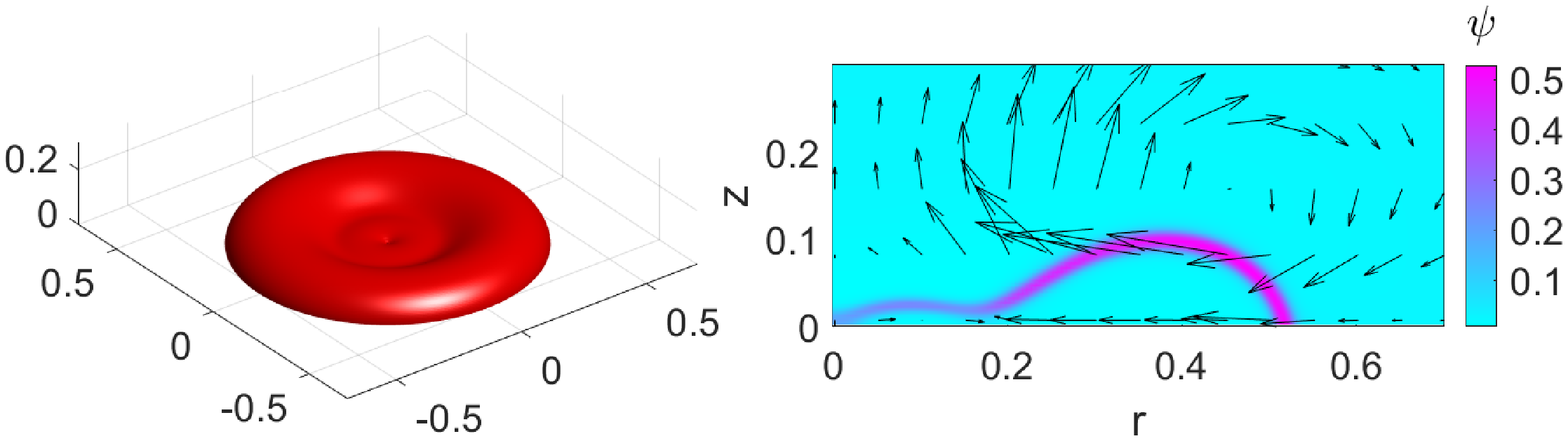}
\put(2,26){\scriptsize{(e)}}
\end{overpic}
\hspace{-0.1cm}
\begin{overpic}[trim=0cm 0cm 0cm 0cm, clip,scale=0.3]{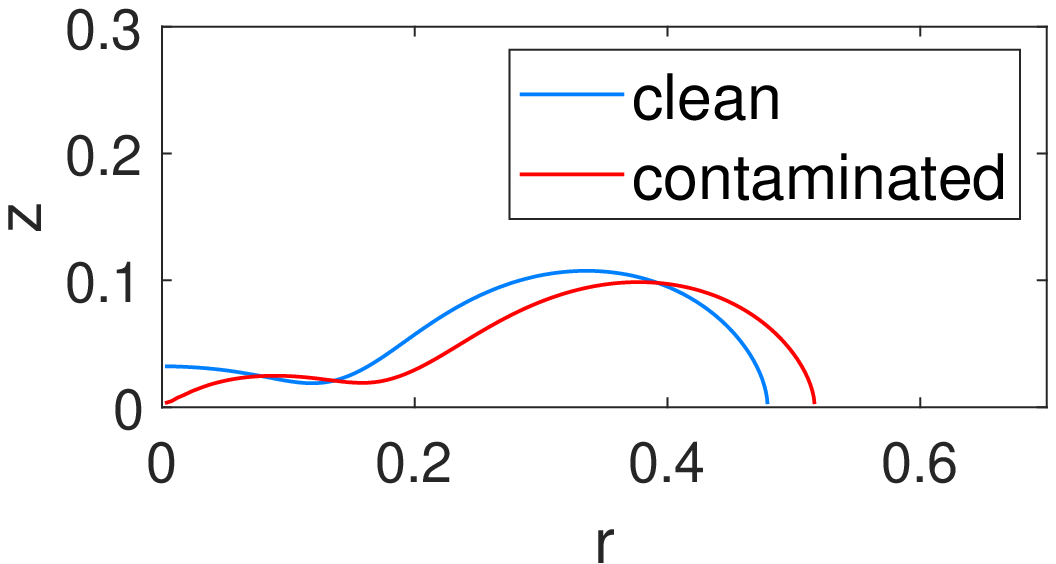}
\put(0,56){\scriptsize{(f)}}
\end{overpic}

\begin{overpic}[trim=0cm 0cm 0cm 0cm, clip,scale=0.3]{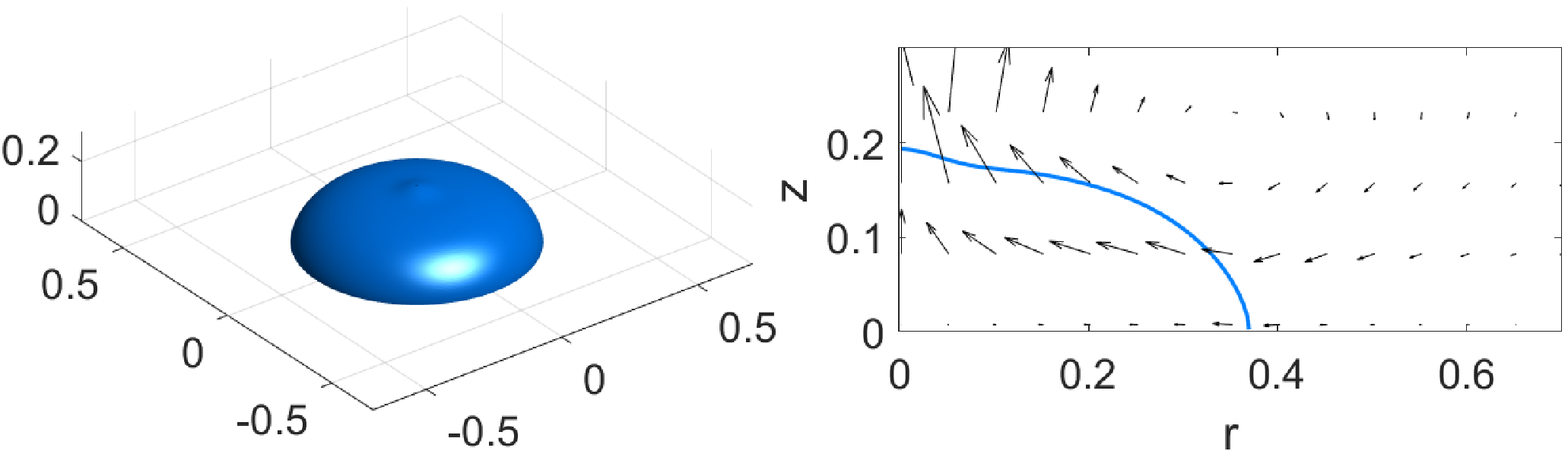}
\put(2,26){\scriptsize{(g)}}
\end{overpic}
\hspace{-0.6cm}
\begin{overpic}[trim=0cm 0cm 0cm 0cm, clip,scale=0.3]{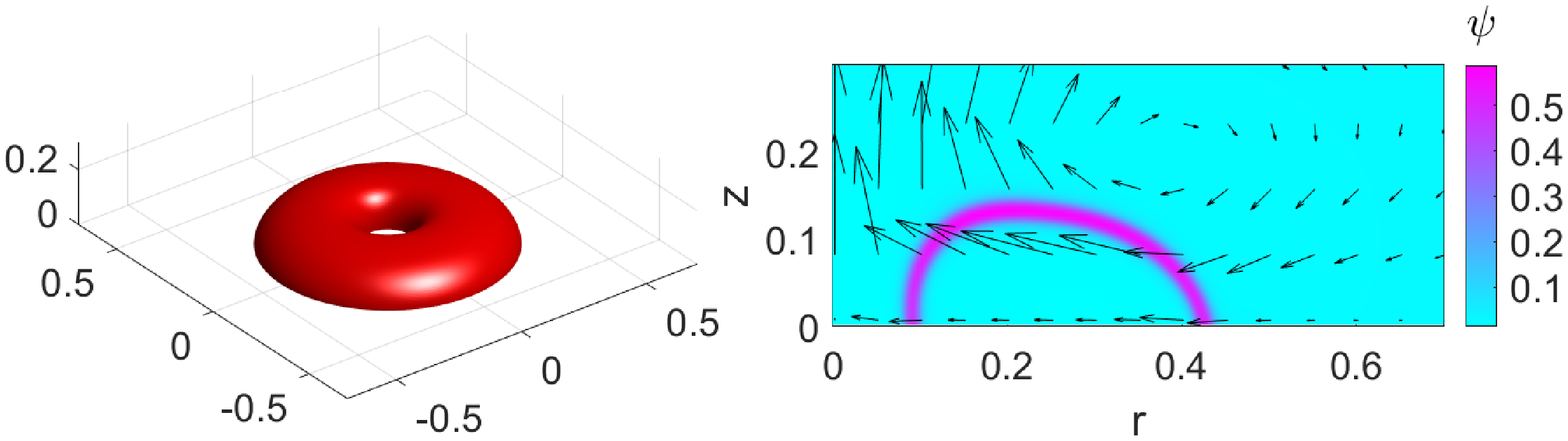}
\put(2,26){\scriptsize{(h)}}
\end{overpic}
\hspace{-0.1cm}
\begin{overpic}[trim=0cm 0cm 0cm 0cm, clip,scale=0.3]{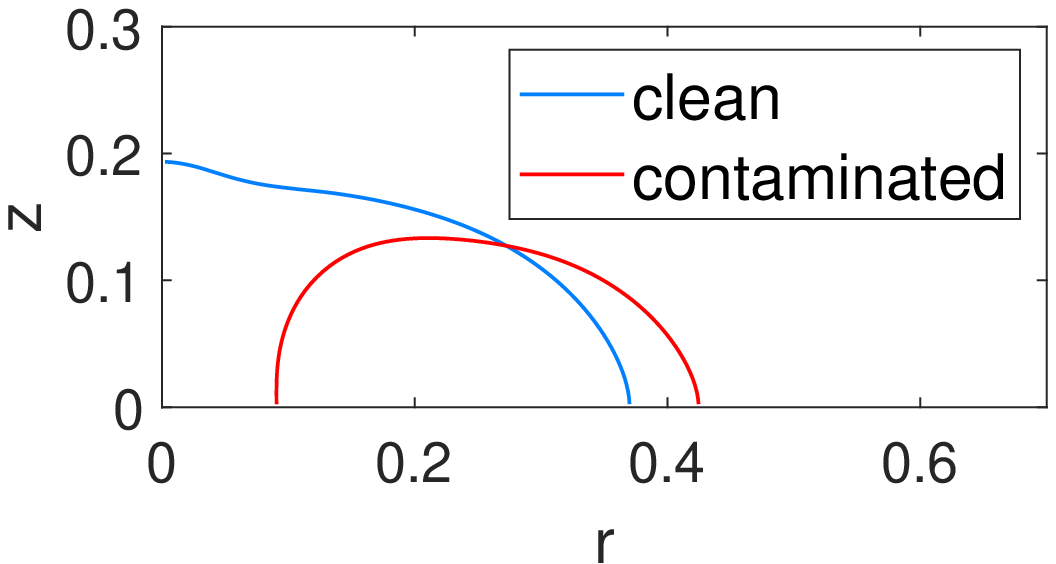}
\put(0,56){\scriptsize{(i)}}
\end{overpic}

\caption{Droplet adherence in Example 1 ($\mathrm{Re}=800$, $\mathrm{We}=245.1$, $\theta_s=90^\circ$, and $\mathrm{Pe}_\psi=100$). Profiles of clean and contaminated droplets are shown in both three-dimensional views (1st and 3rd columns) and two-dimensional radial plots ((2nd and 4th columns). Velocity fields and surfactant concentrations are also shown in the radial plots by quivers and colormaps respectively. Comparisons between the interface shapes of clean and contaminated droplets are given in the 5th column. Snapshots are captured at time $t = 1.6$ (a-c), $t = 3.08$ (d-f), and $t = 4$ (g-i).}
\label{Adherence_2}
\end{figure}
\begin{figure}[ht!]
\centering
\begin{subfigure}{1\linewidth}
\centering
\includegraphics[scale=0.7]{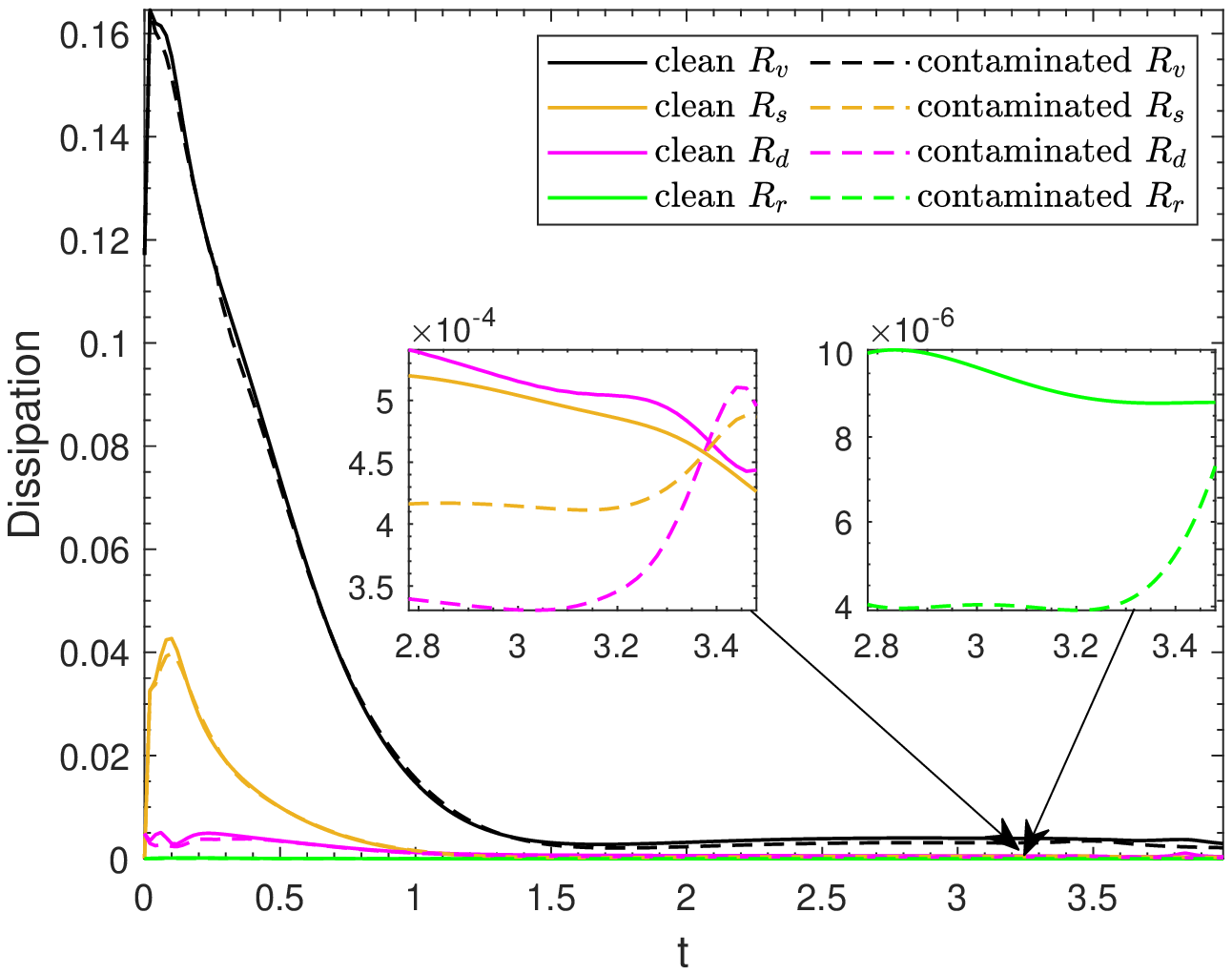}
\caption{}
\label{Adherence_2_dissi_a}
\end{subfigure}

\begin{subfigure}{0.49\linewidth}
\centering
\includegraphics[scale=0.58]{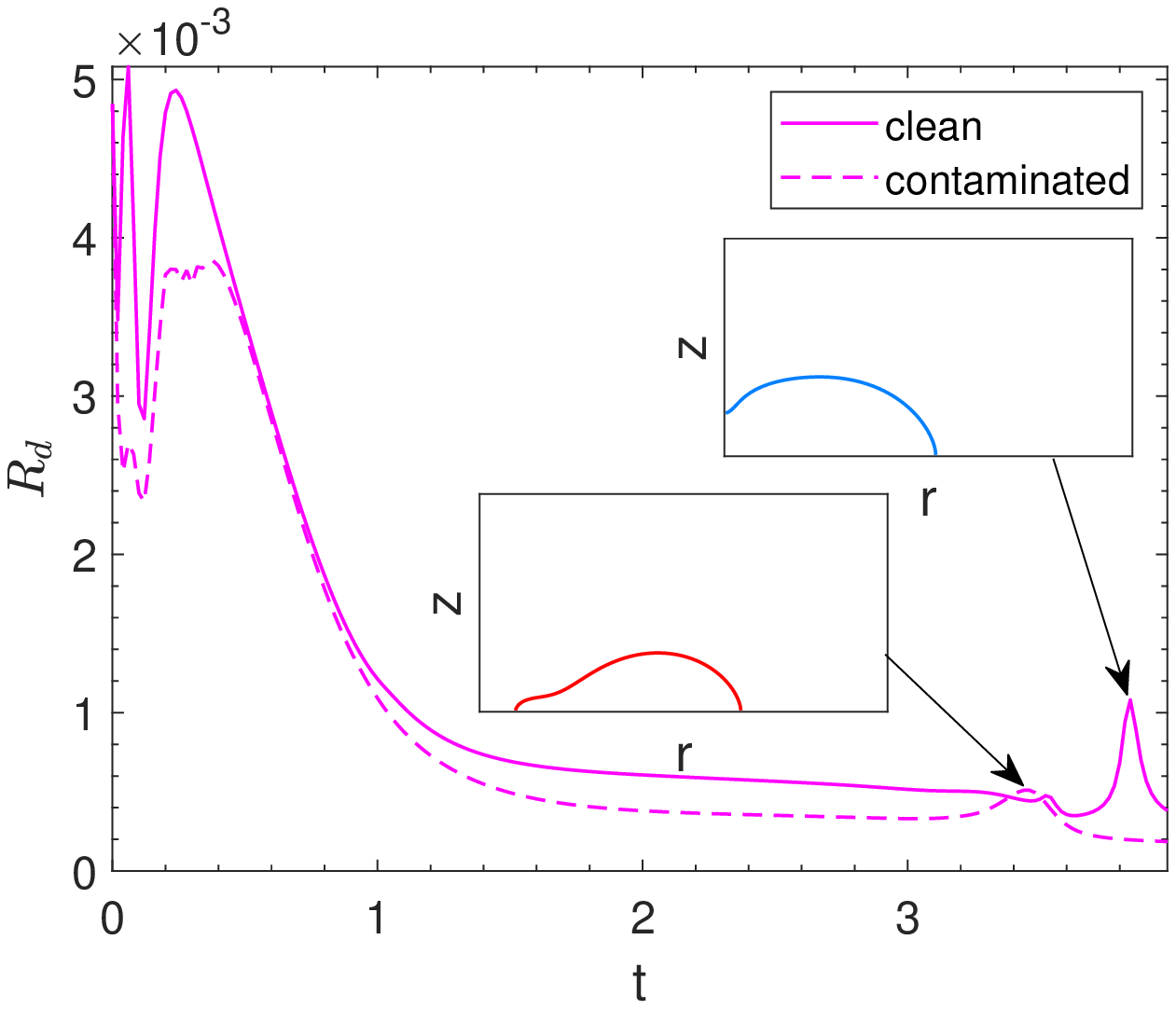}
\caption{}
\label{Adherence_2_dissi_b}
\end{subfigure}
\begin{subfigure}{0.49\linewidth}
\centering
\includegraphics[scale=0.58]{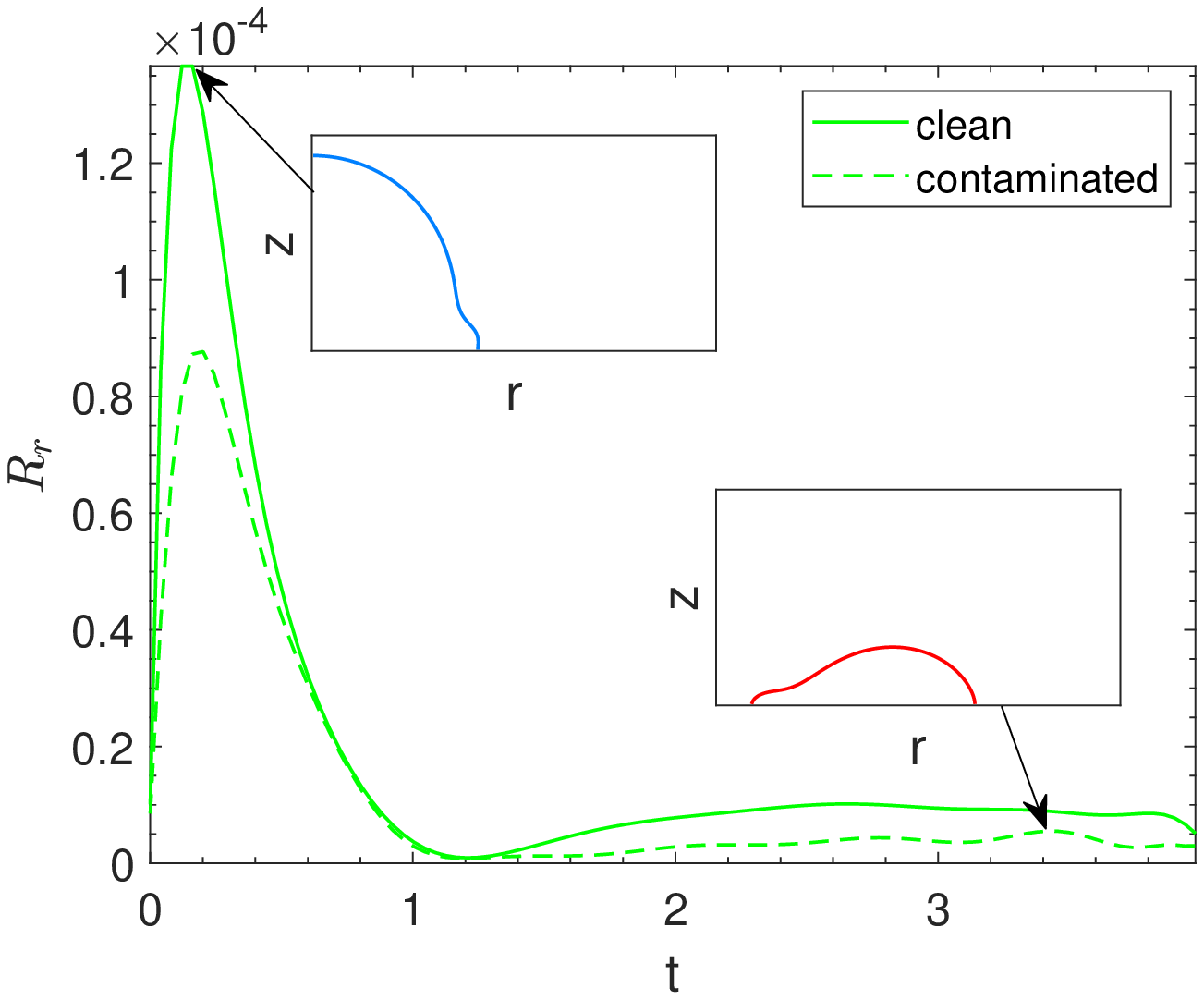}
\caption{}
\label{Adherence_2_dissi_c}
\end{subfigure}
    \caption{Dissipations in Example 1. In (a), four dissipations are shown for both clean (solid curves) and contaminated (dashed curves) cases, including viscous dissipation $R_v$, slip dissipation $R_s$, diffusion dissipation $R_d$, and relaxation dissipation $R_r$. The inset plots illustrate differences in dissipations for the two cases when topological change occurs. In particular, $R_d$ and $R_r$ are shown in (b) and (c), where their rapid changes indicate some particular interface geometries in the inset plots. }
\label{Adherence_2_dissi}
\end{figure}

\paragraph{Example 2} In this example, we strengthen both inertial and capillary effects by increasing Reynold number and decreasing Weber number as follows:
\begin{equation*}
\mathrm{Re}=1600, \quad\quad  \mathrm{We}=18.9, \quad\quad \theta_s=90^\circ,\quad\quad \mathrm{Pe}_\psi=10,
\end{equation*}
where P\'{e}cl\'{e}t number is taken to be smaller for the purpose of improving surfactant effect by enhancing adsorption.

Similar as in Example 1, after the initial impact, both clean and contaminated droplets spread until they achieve their maximal radii. This is followed by droplet recoiling  during which topological changes occur in both cases and a hole generates in the center (Fig.~\ref{Adherence_3}a-c). There is a slight difference that in the contaminated case a new island forms in the center. This may be attributed to the lower surface tension caused by surfactant, which makes the interface more easily deformed and the contaminated droplet `softer'.

Due to large inertial effect, the droplet brings a lot of kinetic energy (more than that in Example 1) and would bounce. However, the large capillary force makes the droplet stiffer and less deformed. As a result, the clean droplet is `dragged' and does not bounce. The balance between capillary and inertial effects leads to an oscillatory
motion with amplitude decreasing in time (Fig.~\ref{Adherence_3}d,g). In contrast, in the contaminated case, the presence of surfactant weakens capillary force and makes the droplet `softer'. As a result of droplet deformation, the upward velocity field generates nearby the interface, which provides a lifting flow driving the droplet off entirely from the solid surface at time $t = 3.3$ (Fig.~\ref{Adherence_3}e,h). A detailed comparison in the interface shapes is also given in Fig.~\ref{Adherence_3}f,i.

Again, rapid changes (and peaks) in dissipations indicate topological changes, such as formation of torus and island, and droplet bouncing away from the substrate (Fig.~\ref{Adherence_3_dissi}). Moreover, it is also observed that each dissipation in the contaminated case is overall smaller than the corresponding one in the clean case. This explains why the contaminated droplet can bounce while the clean one cannot.
\begin{figure}[t!]
\center
\begin{overpic}[trim=0cm 0cm 0cm 0cm, clip,scale=0.36]{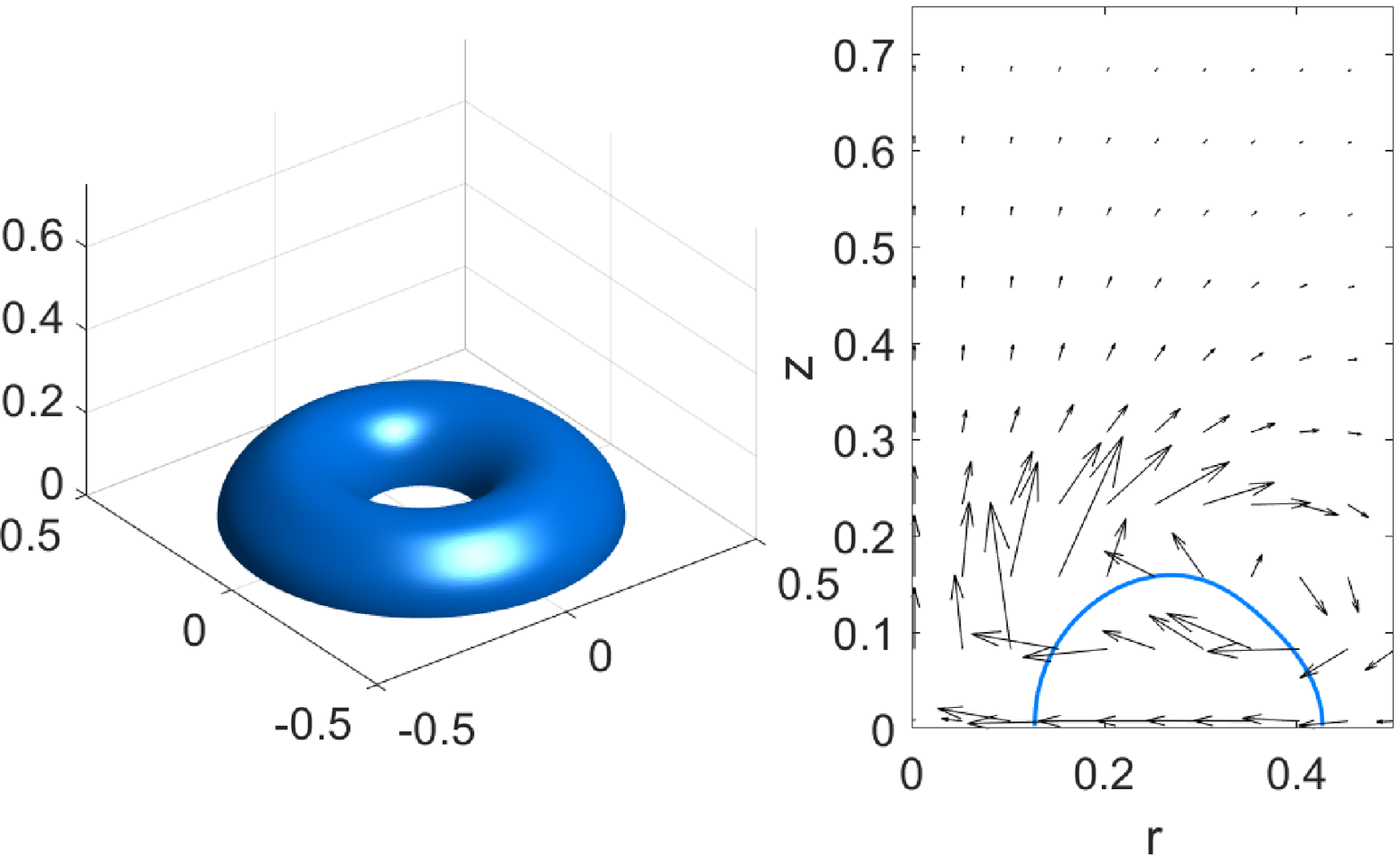}
\put(2,56){\scriptsize{(a)}}
\end{overpic}
\hspace{-0.4cm}
\begin{overpic}[trim=0cm 0cm 0cm 0cm, clip,scale=0.36]{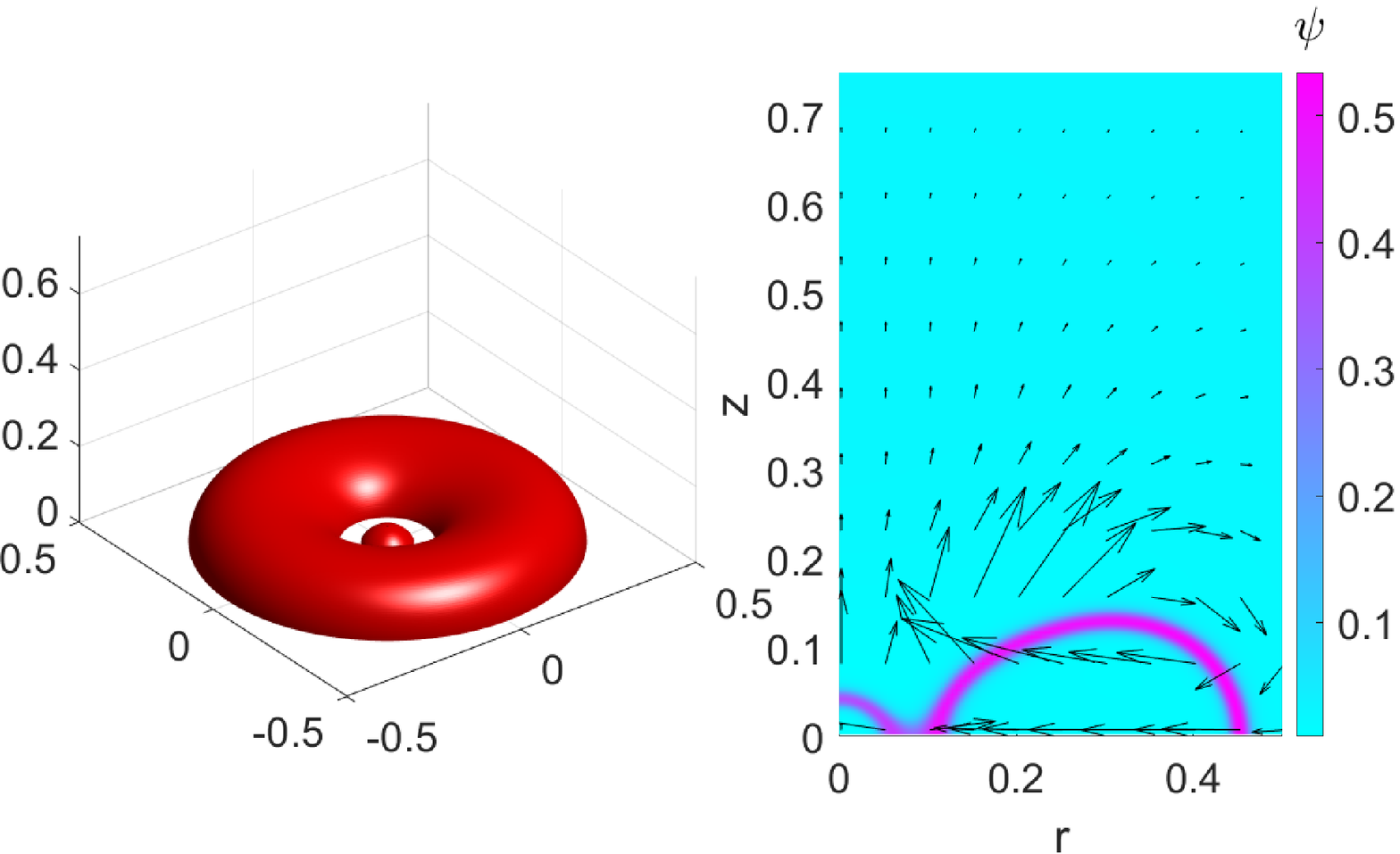}
\put(2,56){\scriptsize{(b)}}
\end{overpic}
\hspace{0.1cm}
\begin{overpic}[trim=0cm 0cm 0cm 0cm, clip,scale=0.36]{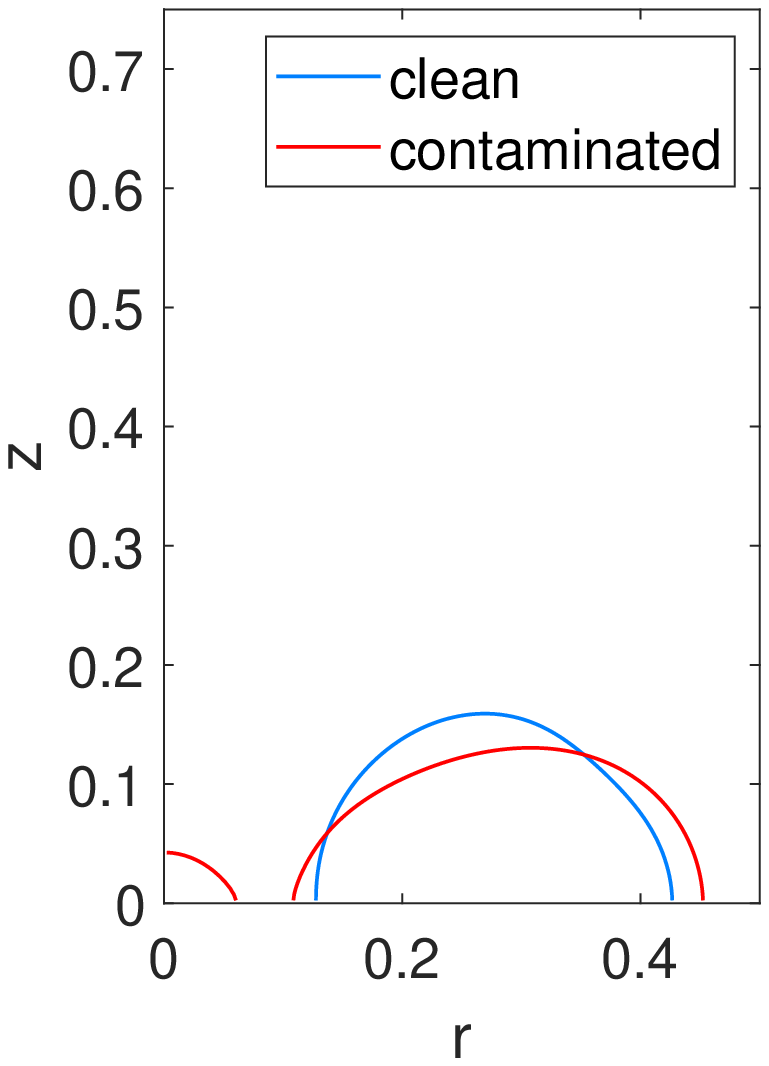}
\put(0,100){\scriptsize{(c)}}
\end{overpic}

\begin{overpic}[trim=0cm 0cm 0cm 0cm, clip,scale=0.36]{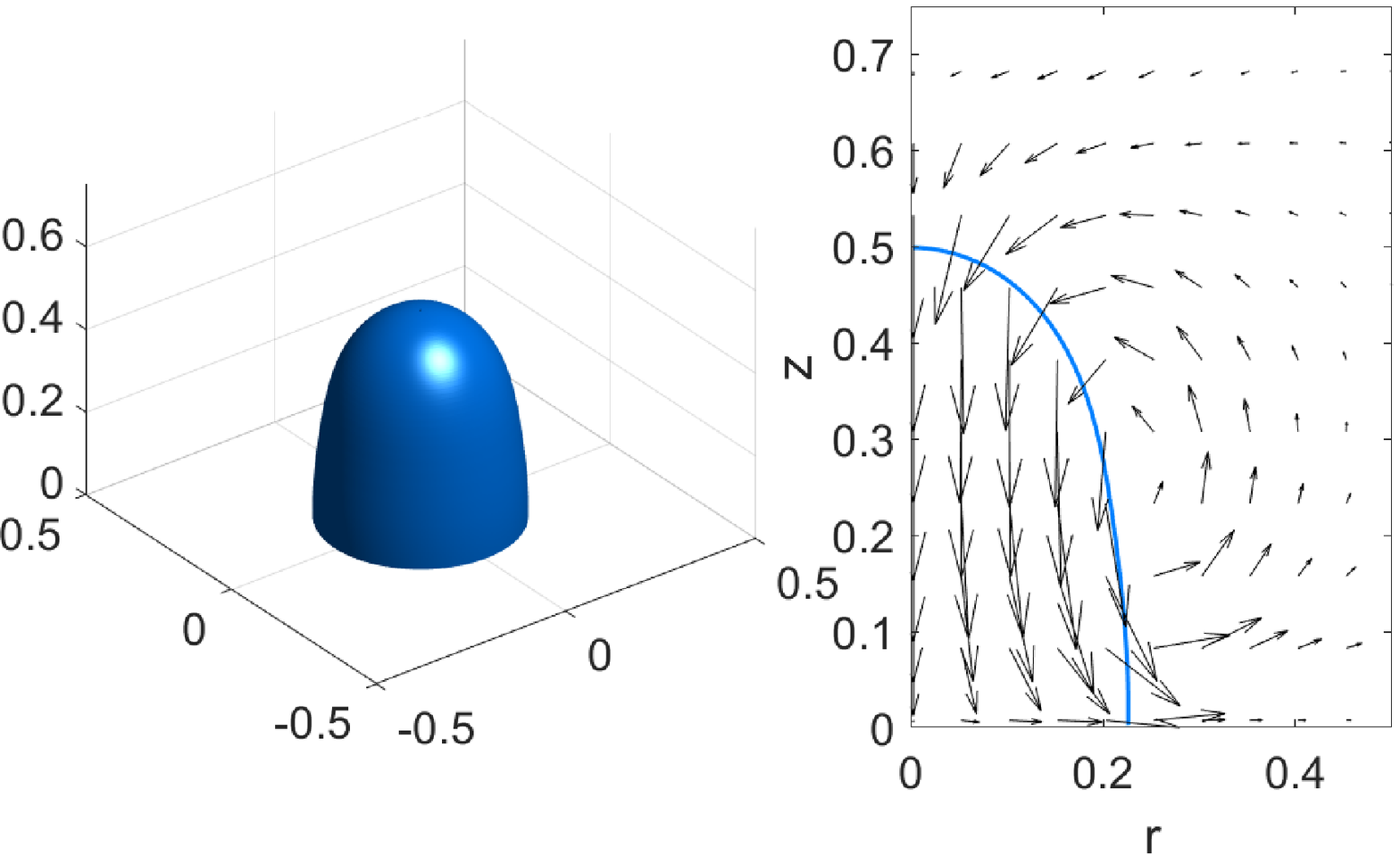}
\put(2,56){\scriptsize{(d)}}
\end{overpic}
\hspace{-0.4cm}
\begin{overpic}[trim=0cm 0cm 0cm 0cm, clip,scale=0.36]{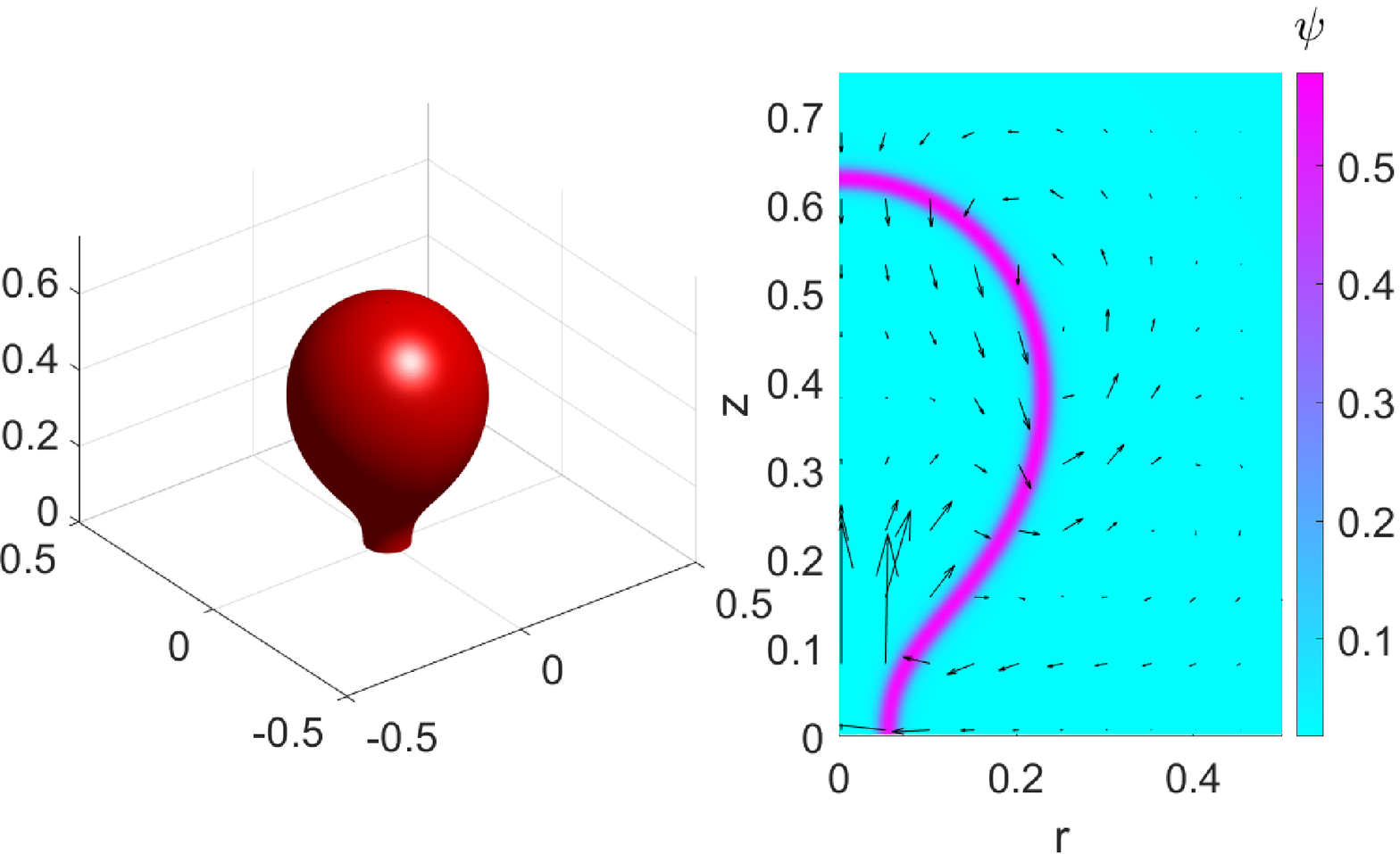}
\put(2,56){\scriptsize{(e)}}
\end{overpic}
\hspace{0.1cm}
\begin{overpic}[trim=0cm 0cm 0cm 0cm, clip,scale=0.36]{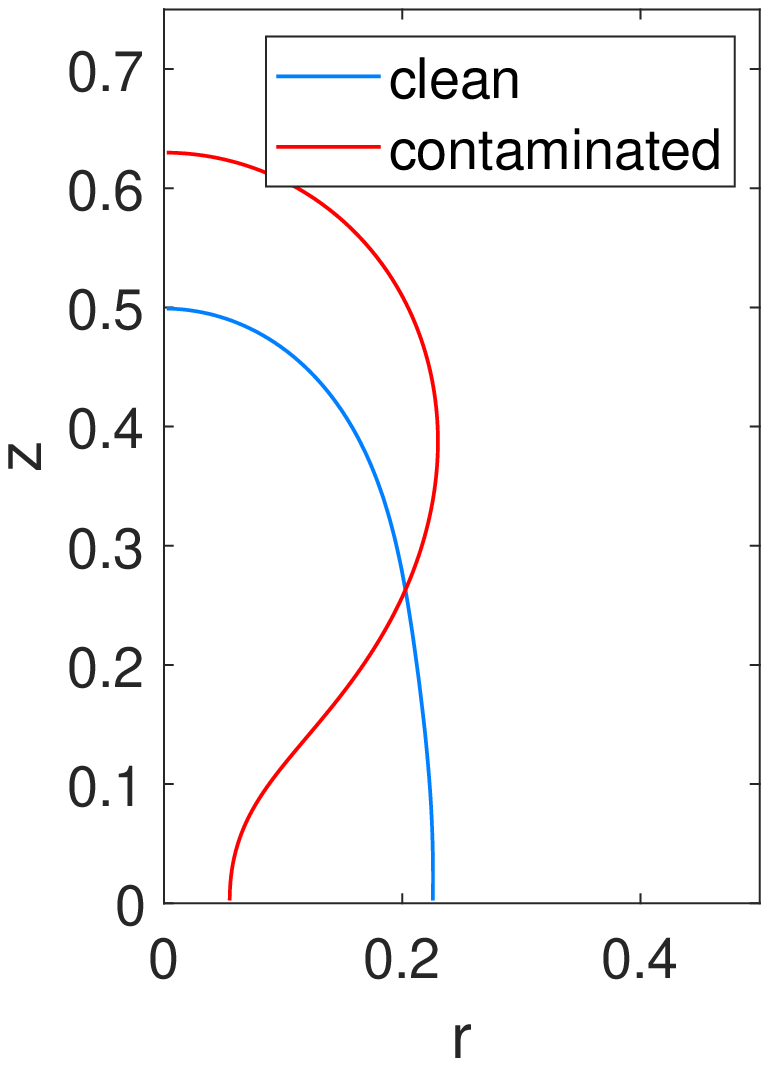}
\put(0,100){\scriptsize{(f)}}
\end{overpic}

\begin{overpic}[trim=0cm 0cm 0cm 0cm, clip,scale=0.36]{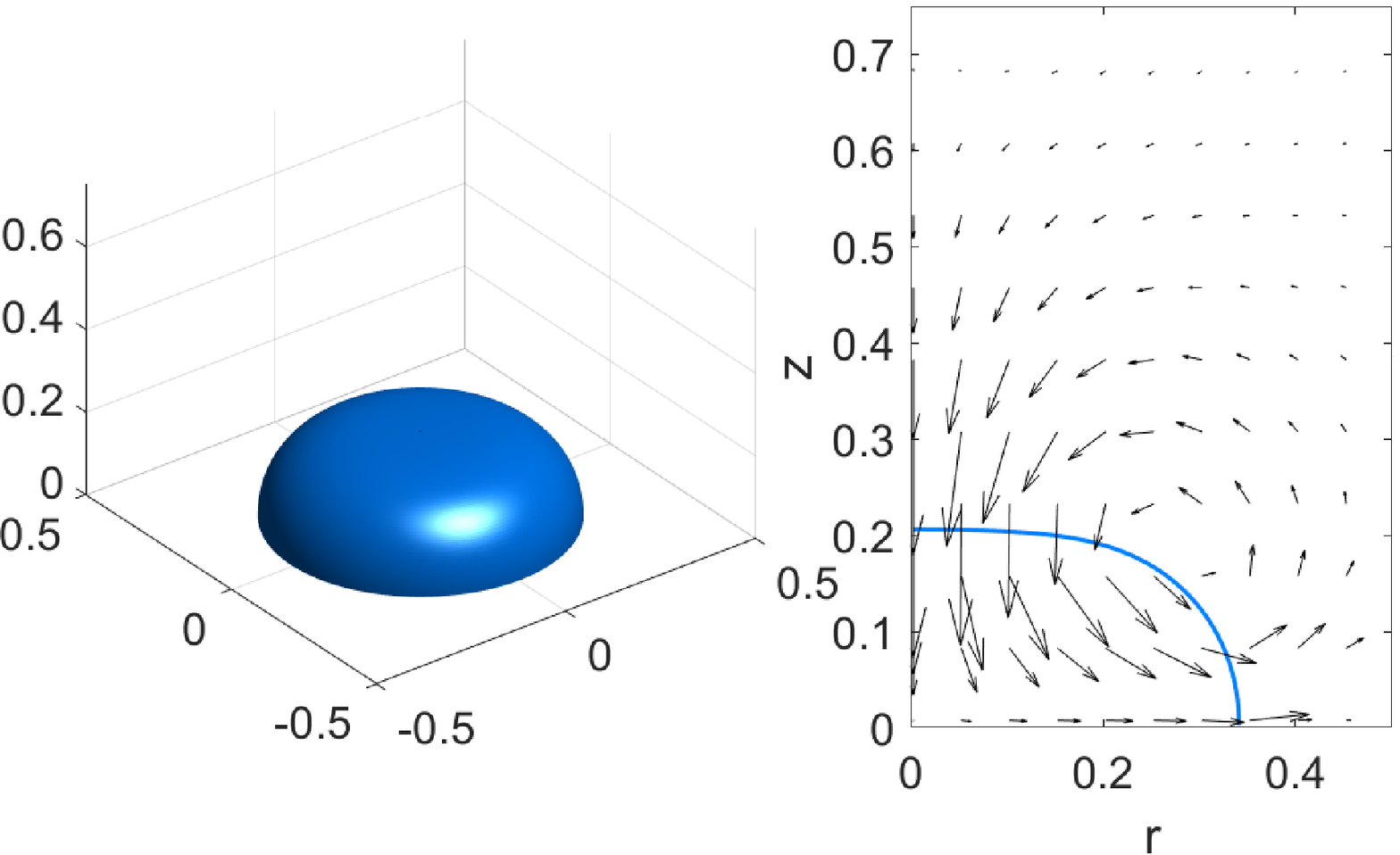}
\put(2,56){\scriptsize{(g)}}
\end{overpic}
\hspace{-0.4cm}
\begin{overpic}[trim=0cm 0cm 0cm 0cm, clip,scale=0.36]{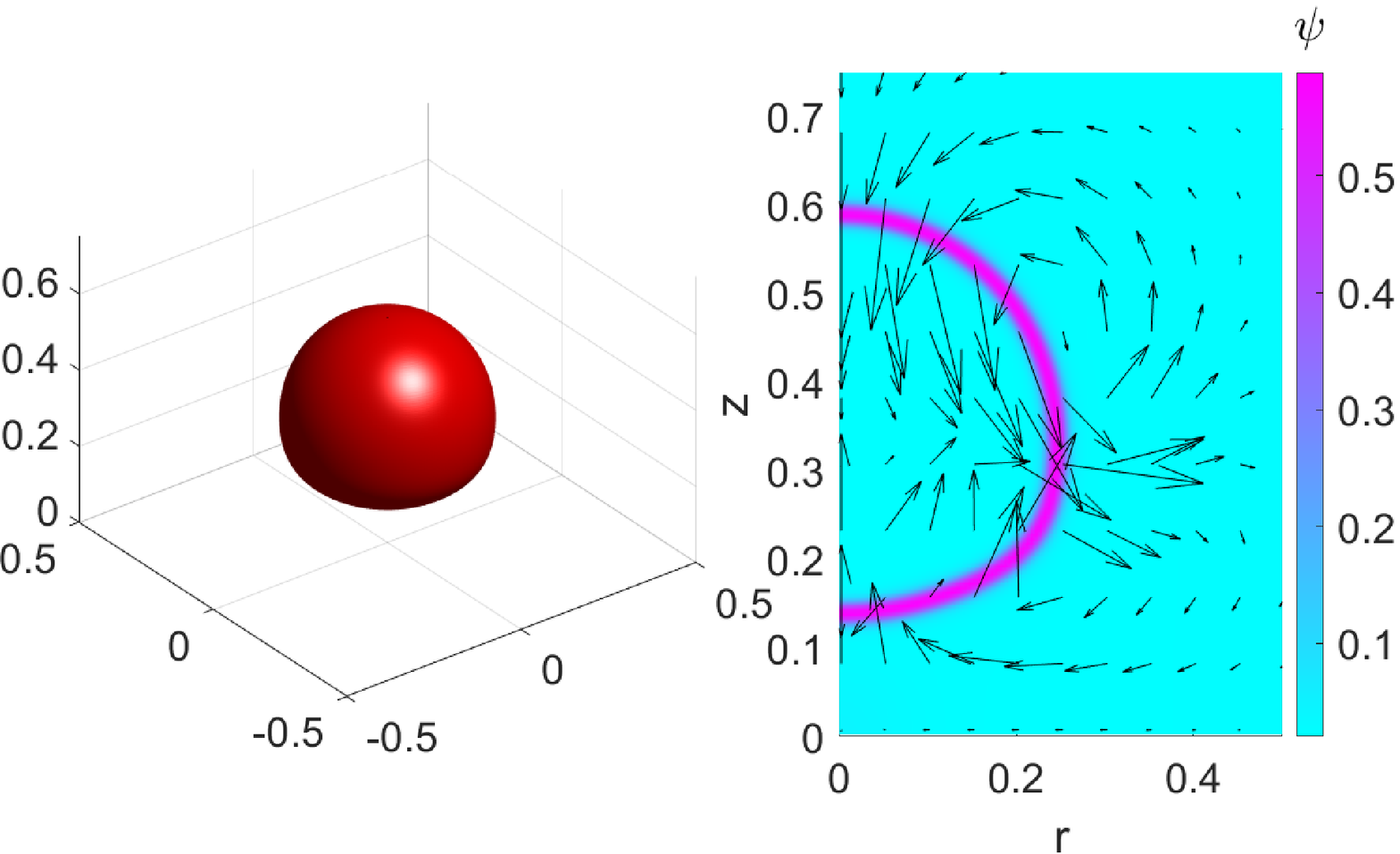}
\put(2,56){\scriptsize{(h)}}
\end{overpic}
\hspace{0.1cm}
\begin{overpic}[trim=0cm 0cm 0cm 0cm, clip,scale=0.36]{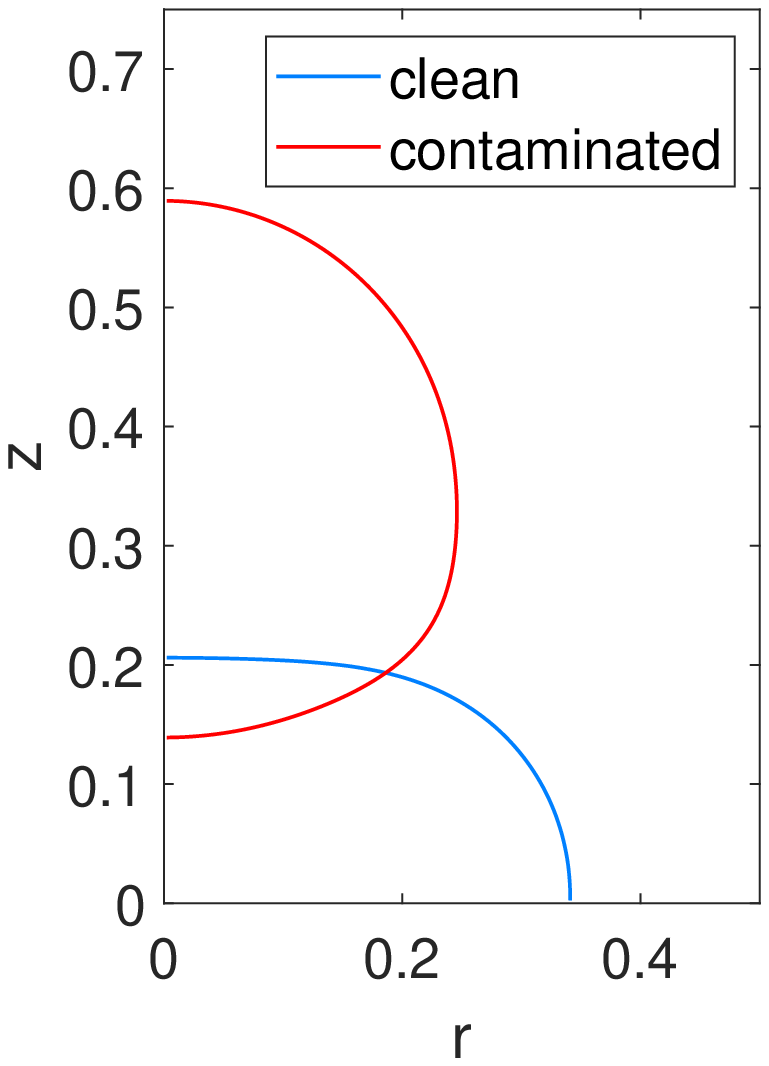}
\put(0,100){\scriptsize{(i)}}
\end{overpic}

\caption{Droplet adherence and bouncing in Example 2 ($\mathrm{Re}=1600$, $\mathrm{We}=18.9$, $\theta_s=90^\circ$, and $\mathrm{Pe}_\psi=10$). Profiles of clean and contaminated droplets are shown in both three-dimensional views (1st and 3rd columns) and two-dimensional radial plots (2nd and 4th columns). Velocity fields and surfactant concentrations are also shown in the radial plots by quivers and colormaps respectively. Comparisons between the interface shapes of clean and contaminated droplets are given in the 5th column. Snapshots are captured at time $t = 0.9$ (a-c), $t = 2.88$ (d-f), and $t = 3.3$ (g-i). }
\label{Adherence_3}
\end{figure}
\begin{figure}[ht!]
\centering
\begin{subfigure}{1\linewidth}
\centering
\includegraphics[scale=0.7]{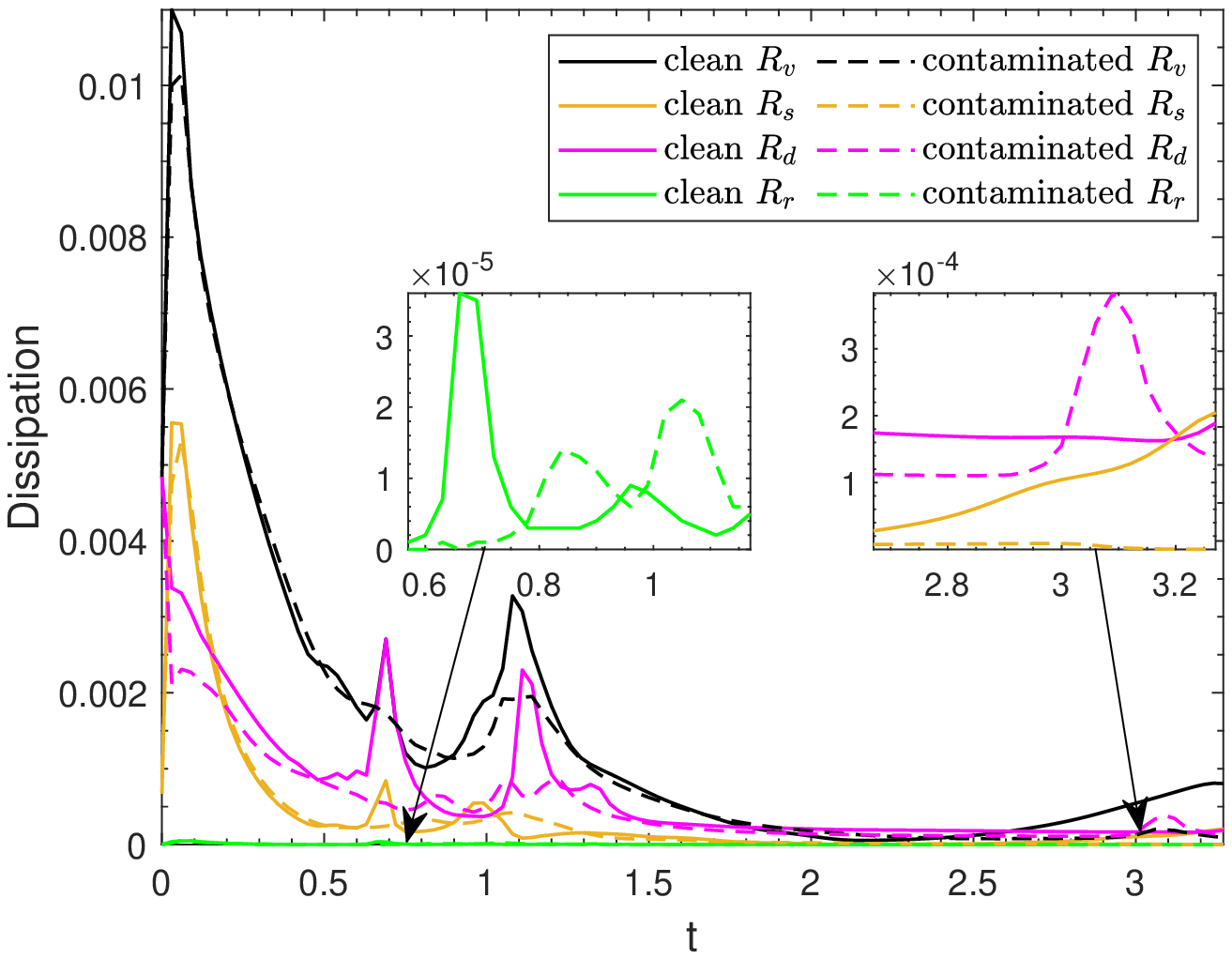}
\caption{}
\end{subfigure}

\begin{subfigure}{0.49\linewidth}
\centering
\includegraphics[scale=0.58]{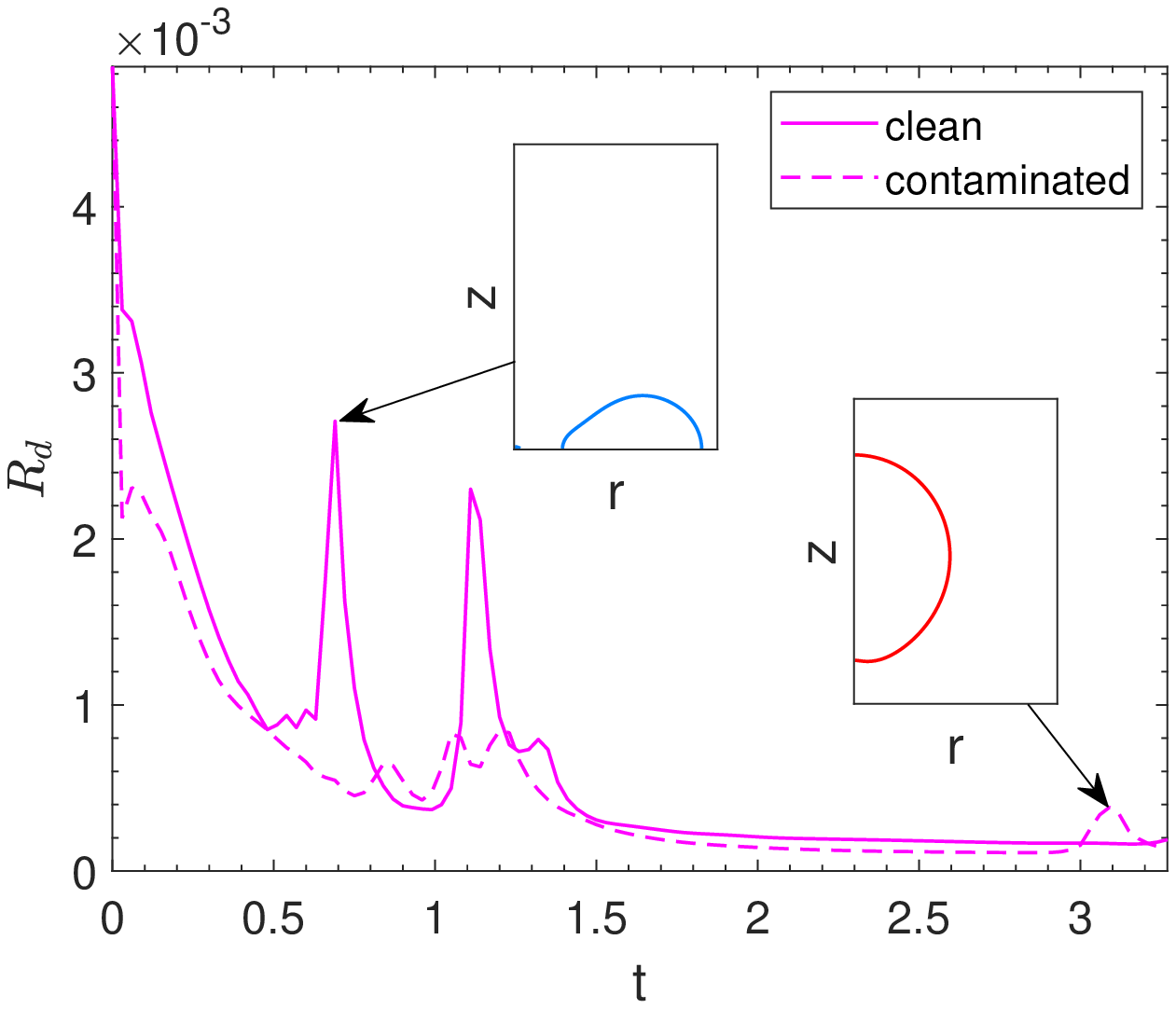}
\caption{}
\end{subfigure}
\begin{subfigure}{0.49\linewidth}
\centering
\includegraphics[scale=0.58]{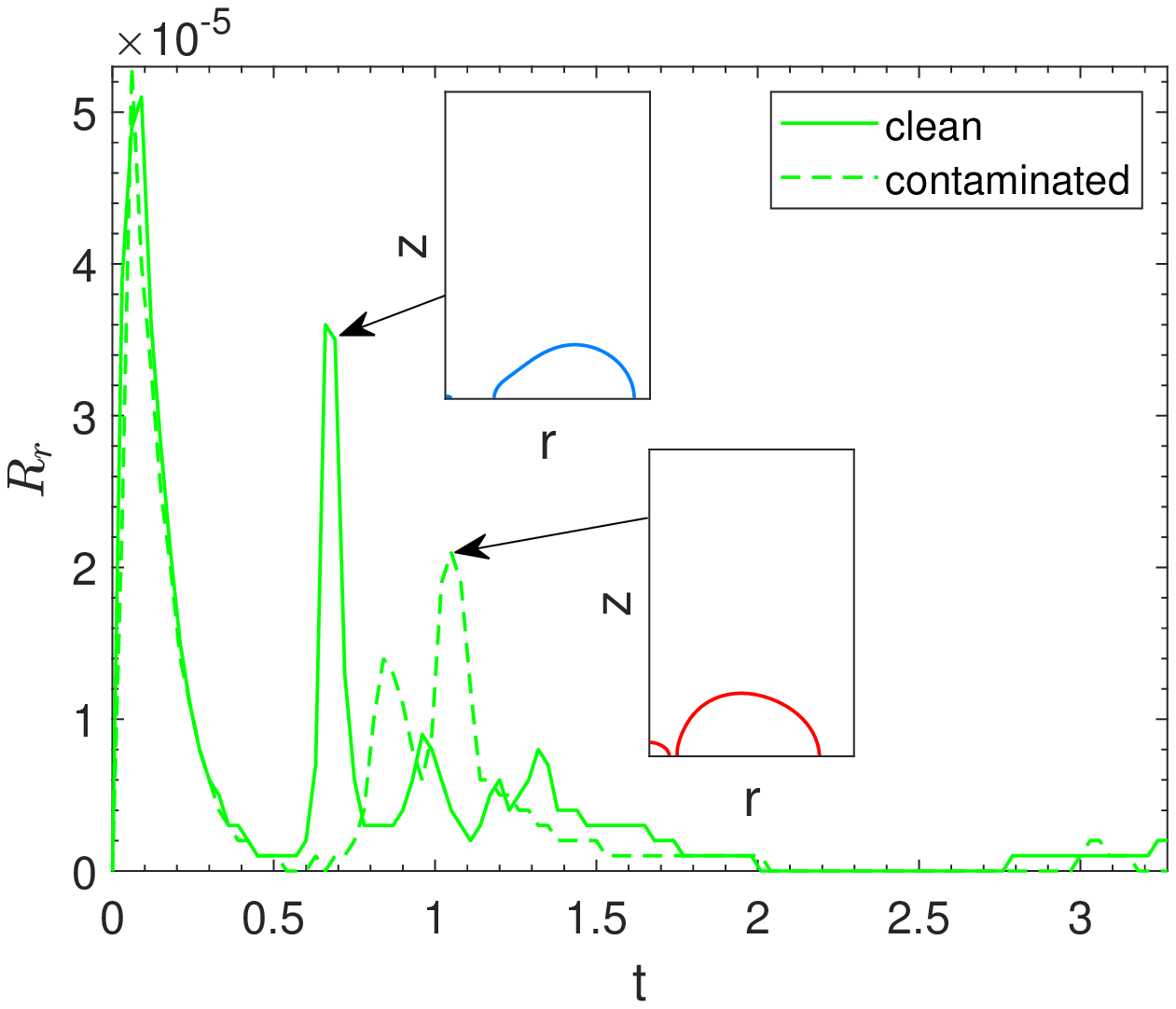}
\caption{}
    \end{subfigure}
    \caption{Dissipations in Example 2. In (a), four dissipations are shown for both clean (solid curves) and contaminated (dashed curves) cases. The inset plots illustrate differences in dissipations for the two cases nearby topological changes (formation of torus and island, and droplet bouncing). In particular, $R_d$ and $R_r$ are shown in (b) and (c), where their peaks reflect some particular interface geometries in the inset plots.}
\label{Adherence_3_dissi}
\end{figure}

\subsubsection{Bouncing}
In this section, we numerically investigate bouncing phenomena.

\paragraph{Example 3} In comparison to Example 2, we enlarge the Weber number so that the capillary effect is weakened. The following parameters are taken:
\begin{equation*}
\mathrm{Re}=1600, \quad\quad \mathrm{We}=37.7, \quad\quad \theta_s=90^\circ, \quad\quad \mathrm{Pe_\psi}=10.
\end{equation*}
As shown in Fig.~\ref{Bouncing_1}a-c, the spreading dynamics in both cases are similar as in Example 2. However, due to the weakened capillary effect, both the clean and contaminated droplets are able to bounce (Fig.~\ref{Bouncing_1}d-i). The contaminated droplet is `softer' so that in bouncing process it breaks up into two parts: one continues to rise, while the other stays on the substrate. This leads to partial bouncing (Fig.~\ref{Bouncing_1}h). In contrast, the clean droplet experiences complete bouncing (Fig.~\ref{Bouncing_1}g).

In Fig.~\ref{Bouncing_1_dissi}, more peaks in dissipations are observed, indicating complex topological changes happen. In particular, breakup in the bouncing contaminated droplet occurs nearby the peak in $R_d$ around $t=4.71$, meanwhile, the clean droplet has completely left the substrate since the peak in $R_d$ emerges around $t=3.87$.
\begin{figure}[t!]
\center
\begin{overpic}[trim=0cm 0cm 0cm 0cm, clip,scale=0.36]{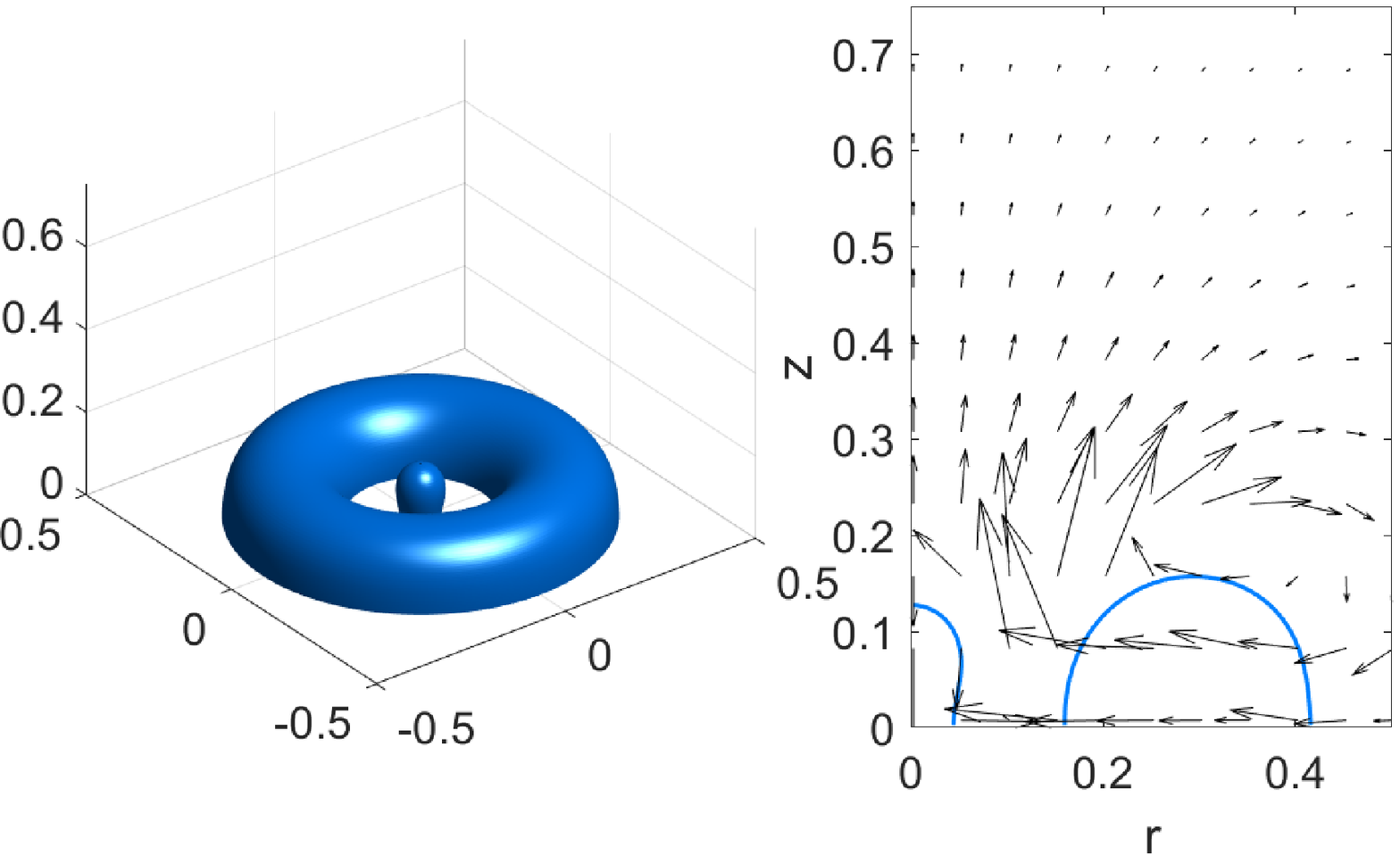}
\put(2,56){\scriptsize{(a)}}
\end{overpic}
\hspace{-0.4cm}
\begin{overpic}[trim=0cm 0cm 0cm 0cm, clip,scale=0.36]{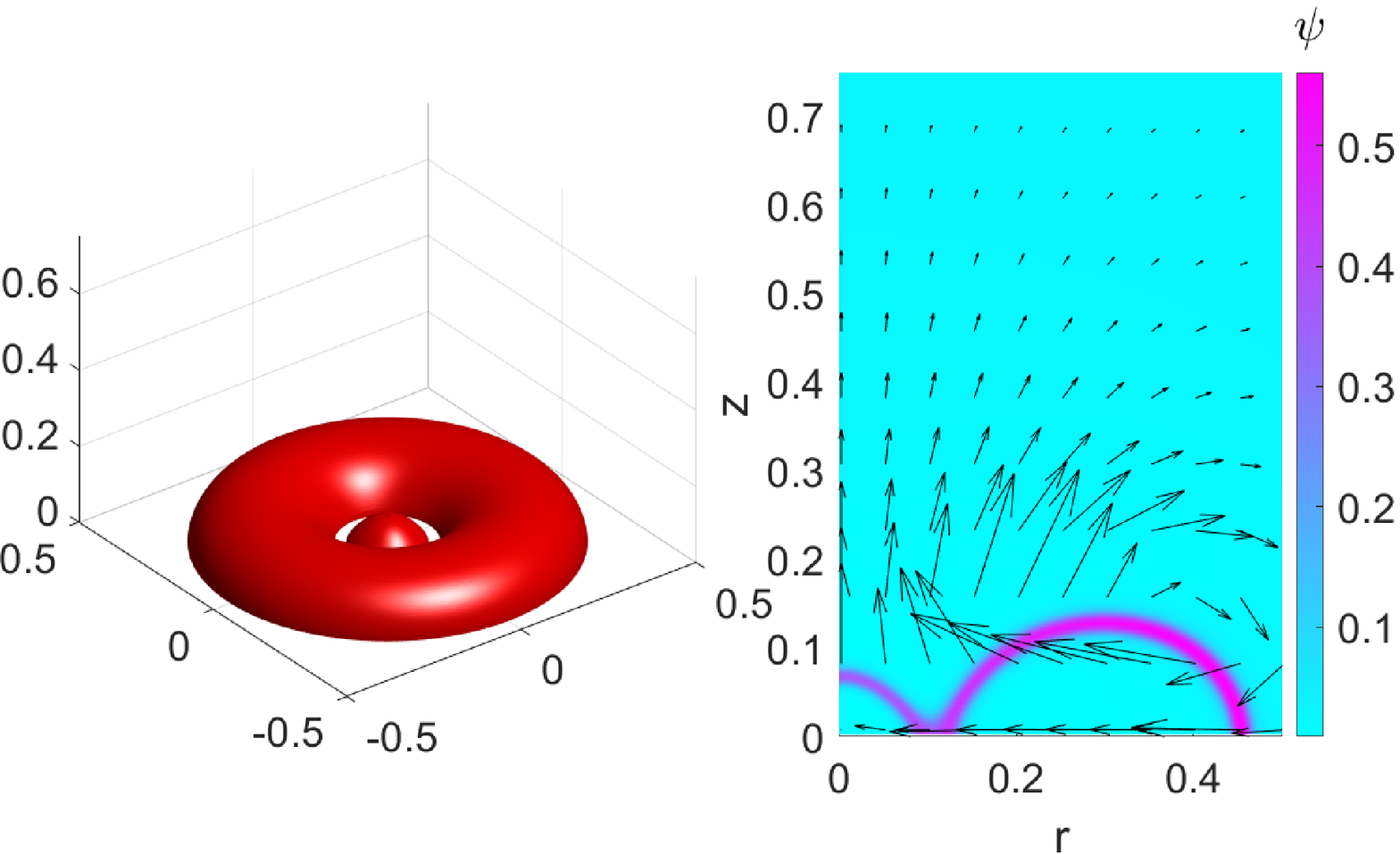}
\put(2,56){\scriptsize{(b)}}
\end{overpic}
\hspace{0.1cm}
\begin{overpic}[trim=0cm 0cm 0cm 0cm, clip,scale=0.36]{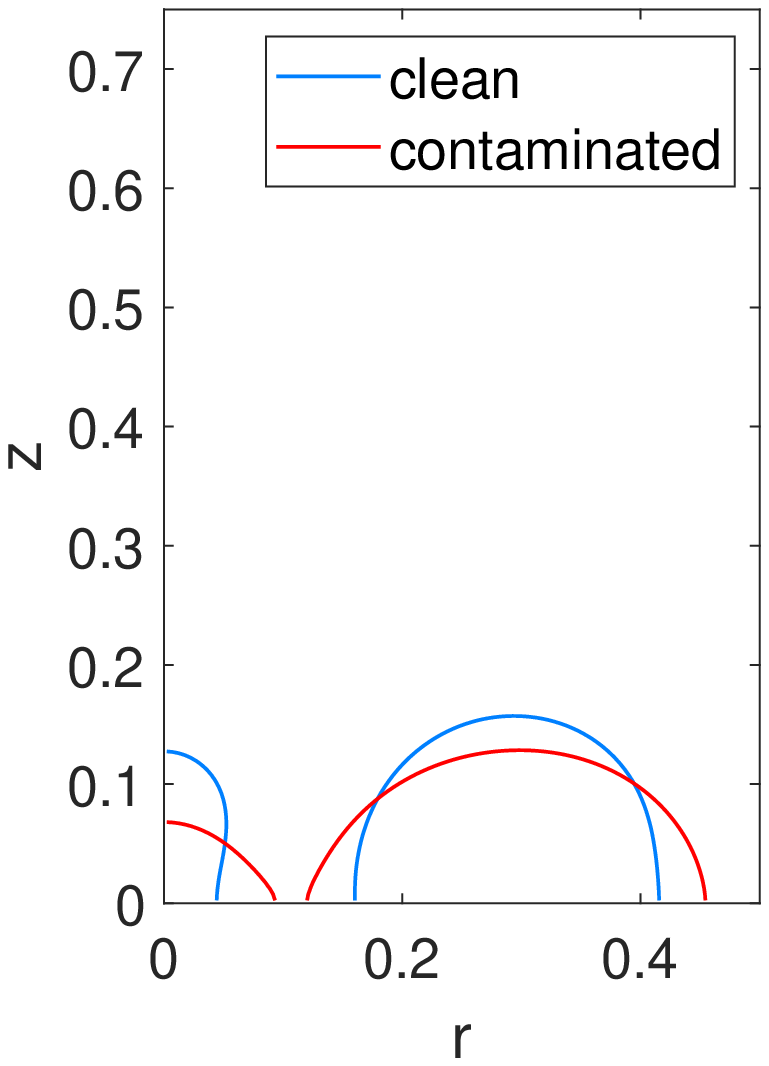}
\put(0,100){\scriptsize{(c)}}
\end{overpic}

\begin{overpic}[trim=0cm 0cm 0cm 0cm, clip,scale=0.36]{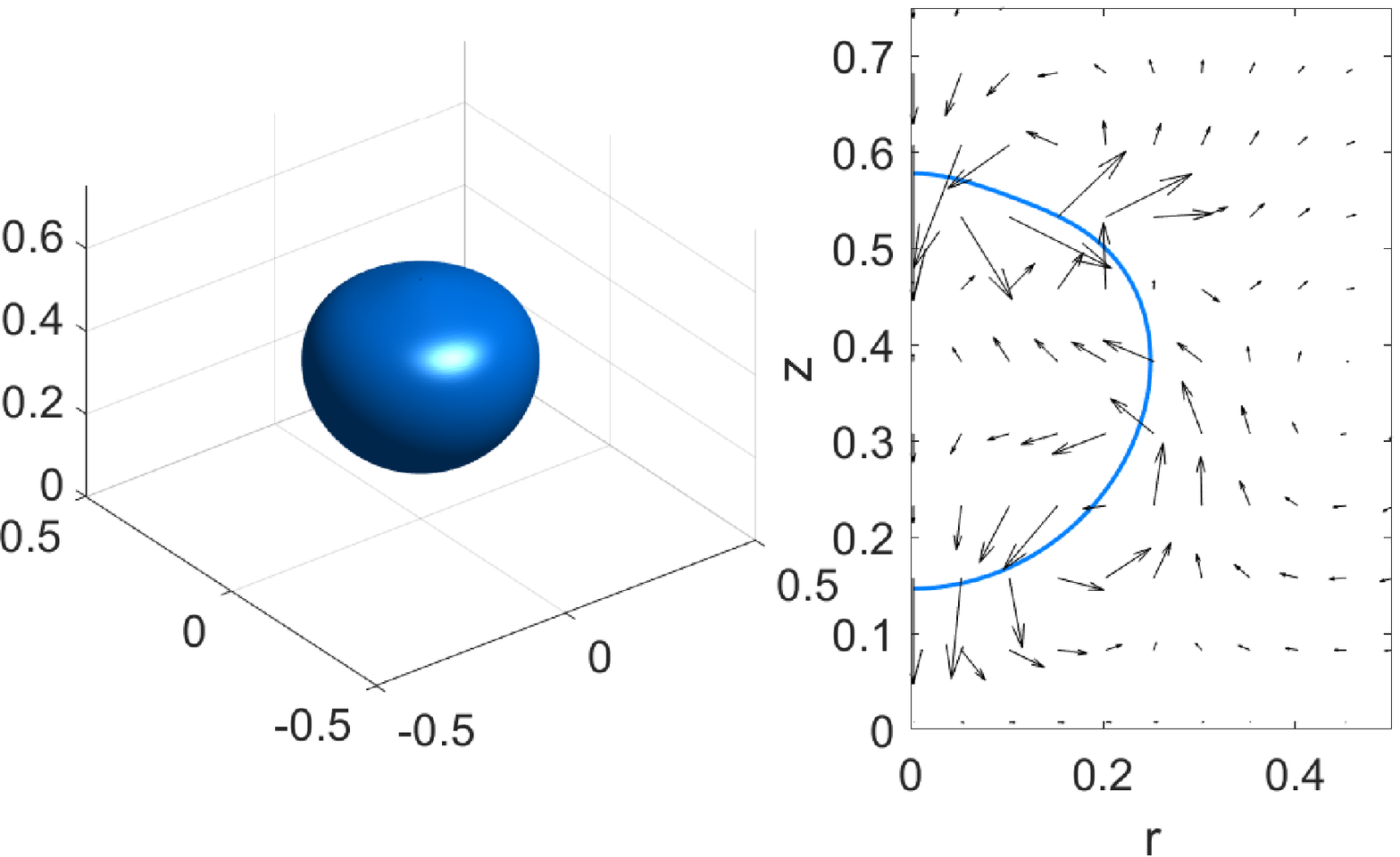}
\put(2,56){\scriptsize{(d)}}
\end{overpic}
\hspace{-0.4cm}
\begin{overpic}[trim=0cm 0cm 0cm 0cm, clip,scale=0.36]{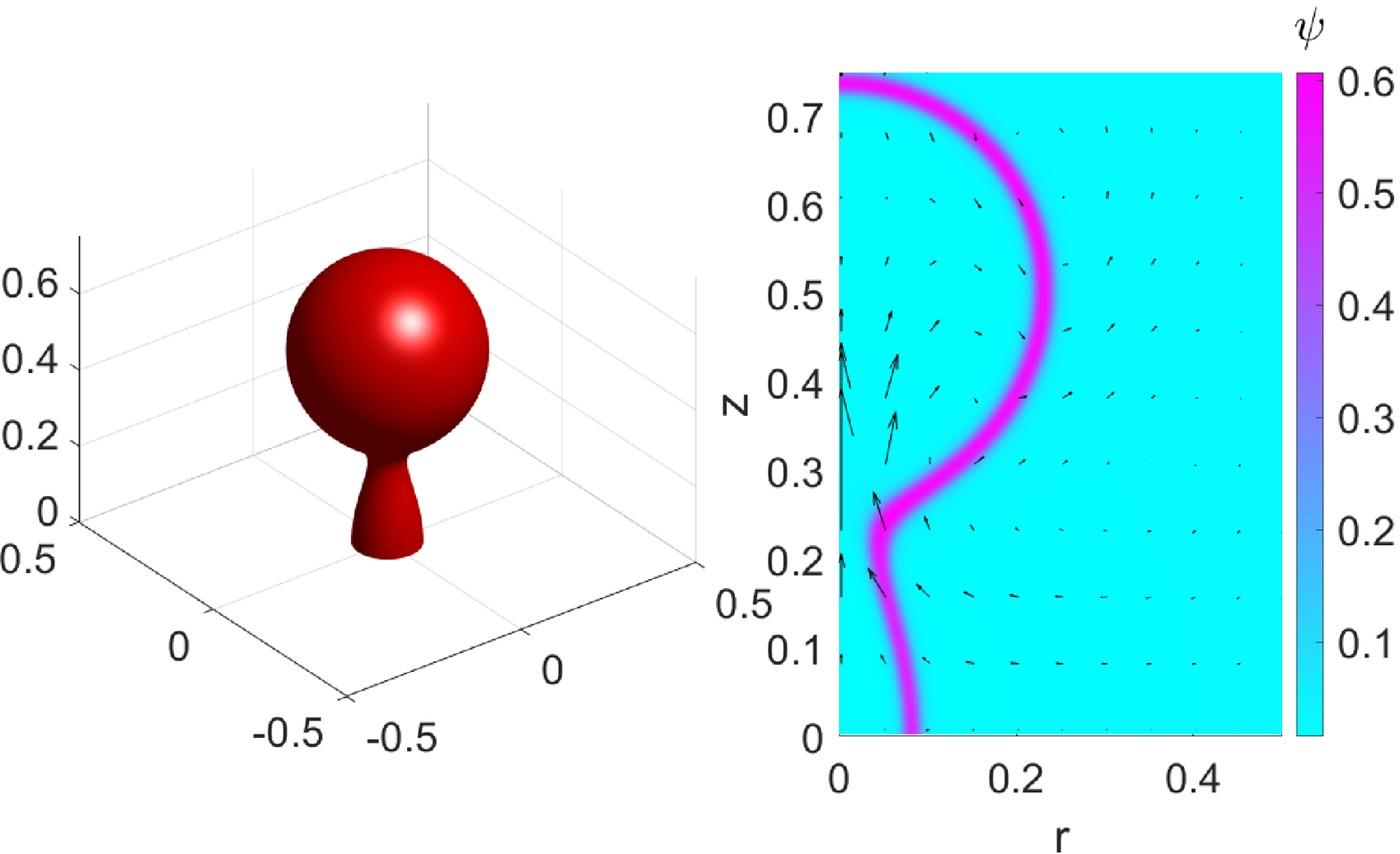}
\put(2,56){\scriptsize{(e)}}
\end{overpic}
\hspace{0.1cm}
\begin{overpic}[trim=0cm 0cm 0cm 0cm, clip,scale=0.36]{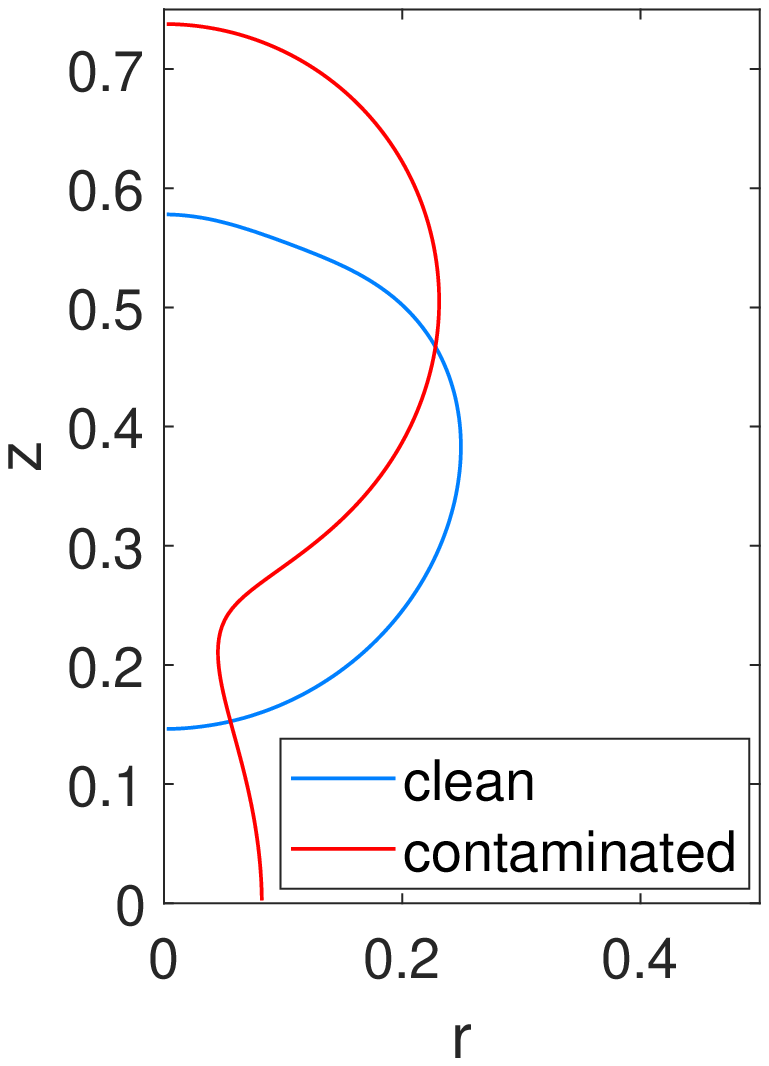}
\put(0,100){\scriptsize{(f)}}
\end{overpic}

\begin{overpic}[trim=0cm 0cm 0cm 0cm, clip,scale=0.36]{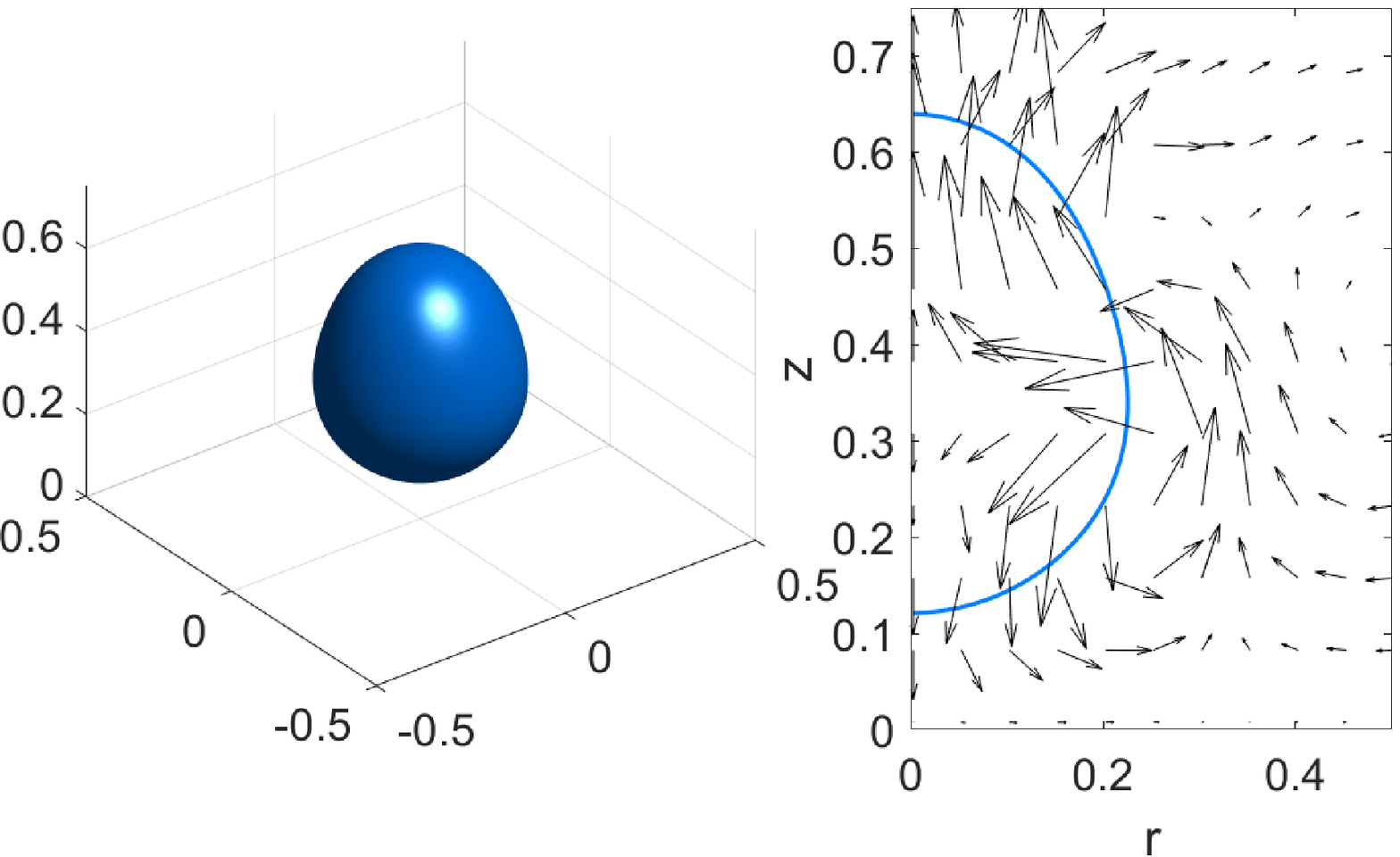}
\put(2,56){\scriptsize{(g)}}
\end{overpic}
\hspace{-0.4cm}
\begin{overpic}[trim=0cm 0cm 0cm 0cm, clip,scale=0.36]{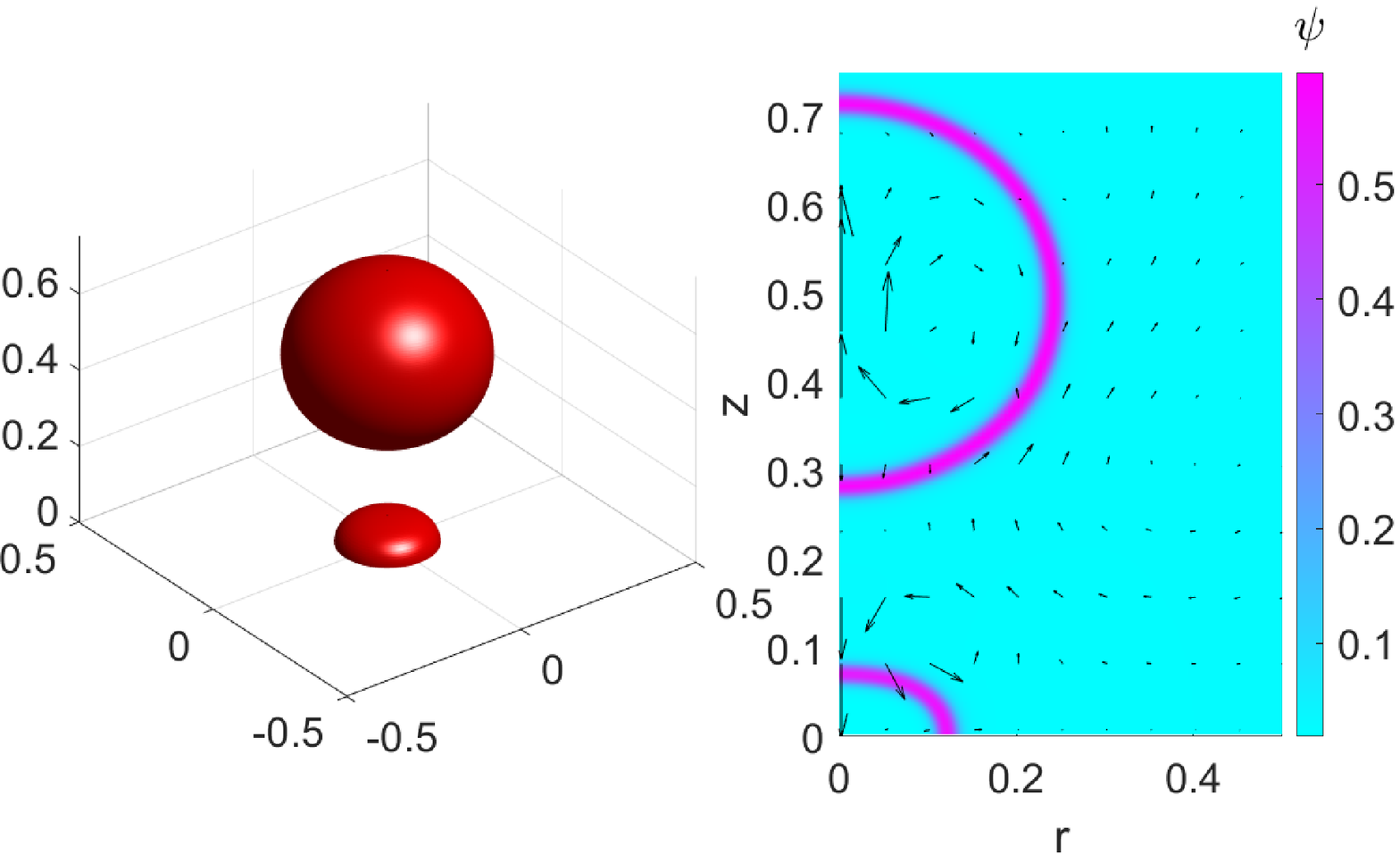}
\put(2,56){\scriptsize{(h)}}
\end{overpic}
\hspace{0.1cm}
\begin{overpic}[trim=0cm 0cm 0cm 0cm, clip,scale=0.36]{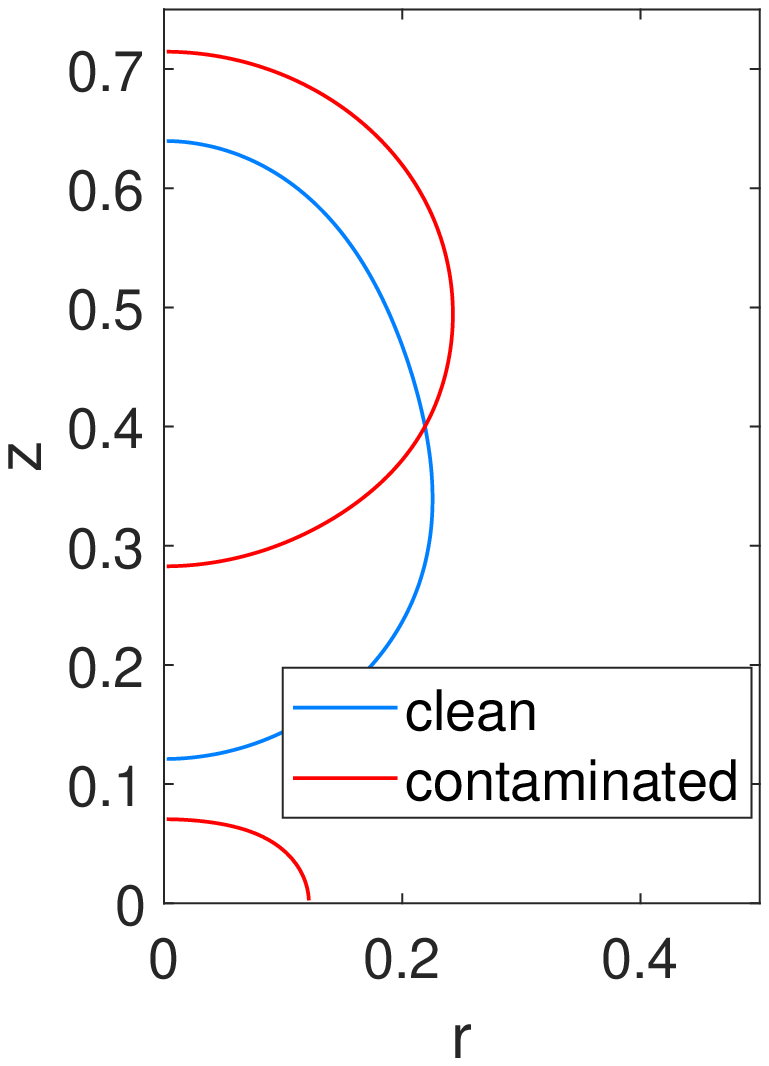}
\put(0,100){\scriptsize{(i)}}
\end{overpic}

\caption{Droplet bouncing and partial bouncing in Example 3 ($\mathrm{Re}=1600$, $\mathrm{We}=37.7$, $\theta_s=90^\circ$, and $\mathrm{Pe}_\psi=10$). Profiles of clean and contaminated droplets are shown in both three-dimensional views ((1st and 3rd columns) and two-dimensional radial plots ((2nd and 4th columns). Velocity fields and surfactant concentrations are also shown in the radial plots by quivers and colormaps respectively. Comparisons between the interface shapes of clean and contaminated droplets are given in the 5th column. Snapshots are captured at time $t = 1.38$ (a-c), $t = 4.5$ (d-f), and $t = 5.1$ (g-i).}
\label{Bouncing_1}
\end{figure}
\begin{figure}[t!]
\centering
\includegraphics[scale=0.7]{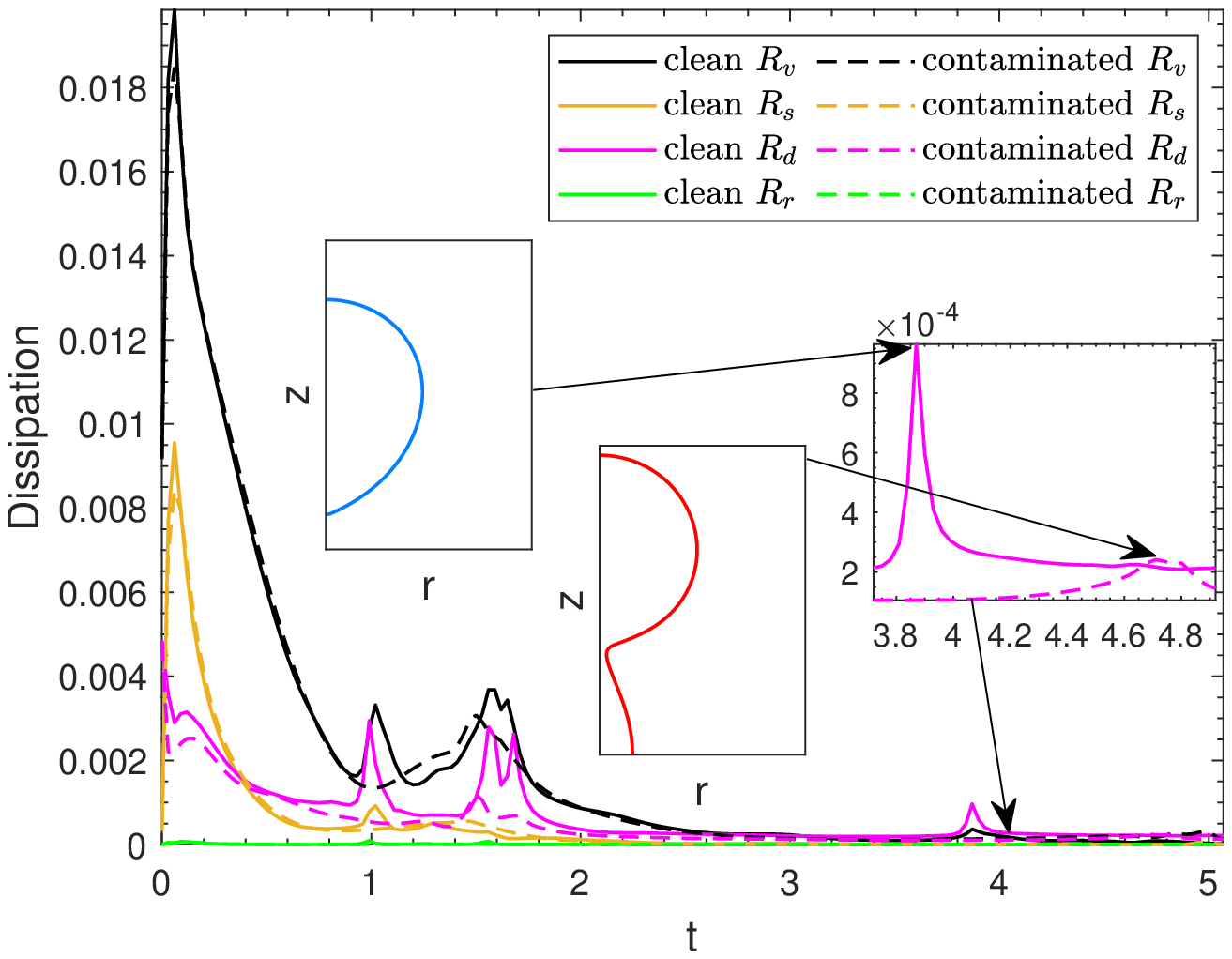}
    \caption{Dissipations in Example 3. In (a), four dissipations are shown for both clean (solid curves) and contaminated (dashed curves) cases. The right inset plot illustrates differences in $R_d$ for the two cases when topological change occurs, and some particular interface geometries corresponding to peaks of $R_d$ are shown in other inset plots.}
\label{Bouncing_1_dissi}
\end{figure}

\paragraph{Example 4} Droplet bouncing is more easily to happen if we further enlarge the Weber number while making the substrate hydrophobic. In particular, we choose the following parameters:
\begin{equation*}
\mathrm{Re}=2000, \quad\quad  \mathrm{We}=282.8, \quad\quad \theta_s=100^\circ,\quad\quad \mathrm{Pe}_\psi=100.
\end{equation*}
In this case, both clean and contaminated droplets undergo complete bouncing. As shown in Fig.~\ref{Bouncing_2}a-c, breakups are more easily to happen in the spreading dynamics of the contaminated droplet. In recoiling process, the central component of droplet first bounces, while it takes some time for the peripheric component (torus part) to agglomerate and then bounce. It can be observed in Fig.~\ref{Bouncing_2}d-i that in both clean and contaminated cases, two small drops rise with the lower one chasing the upper one until they merge. Due the presence of surfactant, the two contaminated drops are more likely to be deformed and keep in contact without merging. It takes longer time for them to merge compared to the clean case.

In the plots of dissipations (Fig.~\ref{Bouncing_2_dissi}), one can infer more topological changes by examining peaks in $R_d$ and $R_r$. In addition, merging of two drops can also be inferred from the rapid change of $R_d$ around $t=8$ and $t=10.8$.
\begin{figure}[t]
\center
\begin{overpic}[trim=0cm 0cm 0cm 0cm, clip,scale=0.33]{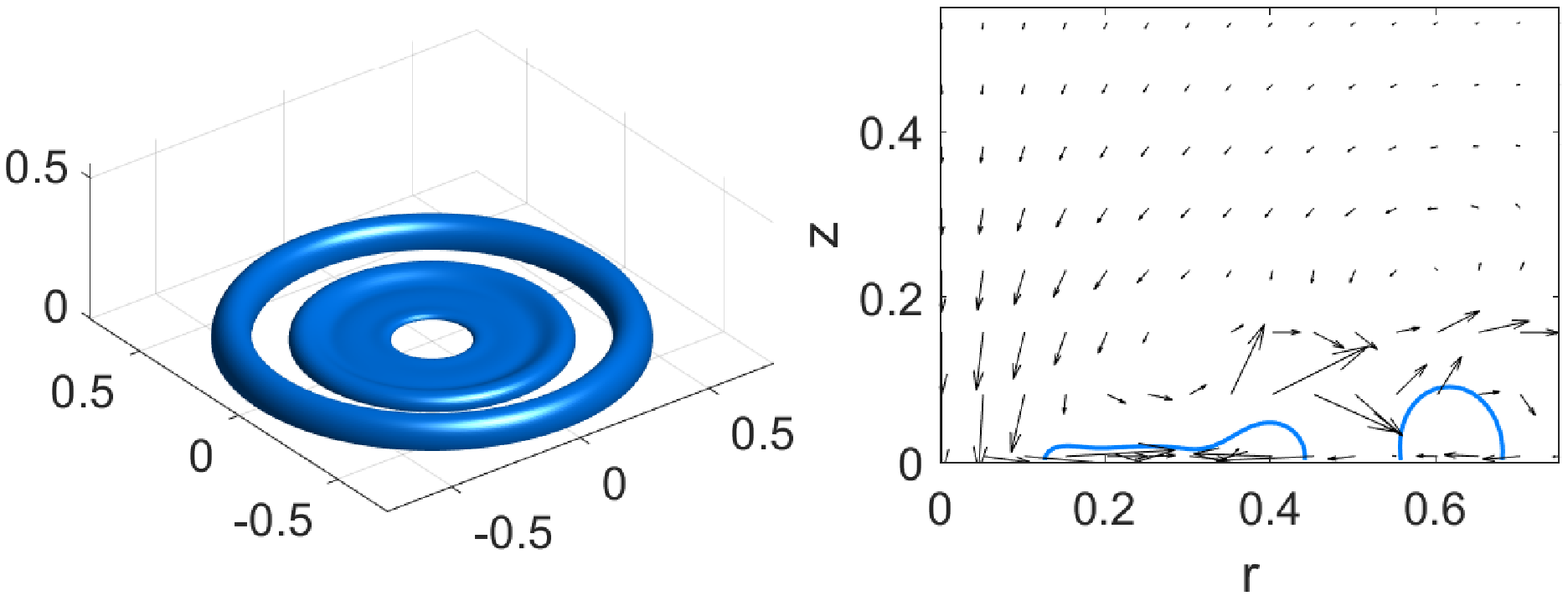}
\put(2,35){\scriptsize{(a)}}
\end{overpic}
\hspace{-0.5cm}
\begin{overpic}[trim=0cm 0cm 0cm 0cm, clip,scale=0.33]{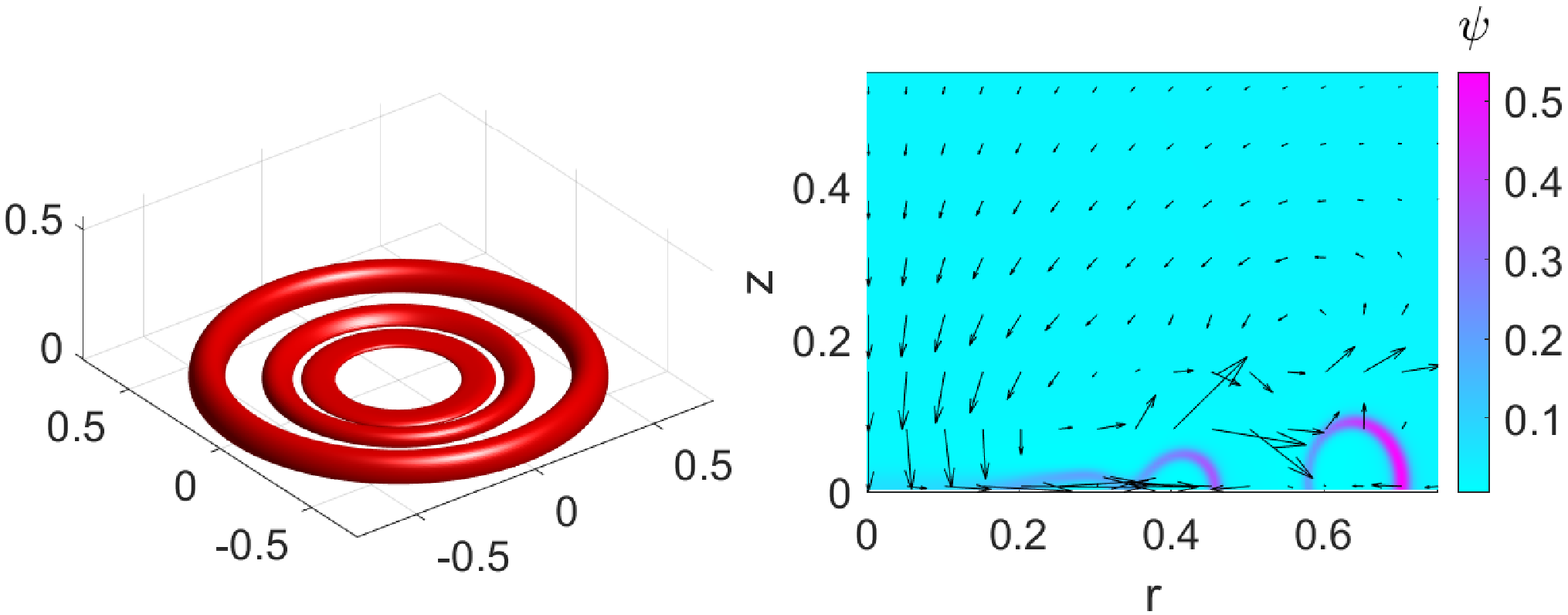}
\put(2,35){\scriptsize{(b)}}
\end{overpic}
\begin{overpic}[trim=0cm 0cm 0cm 0cm, clip,scale=0.33]{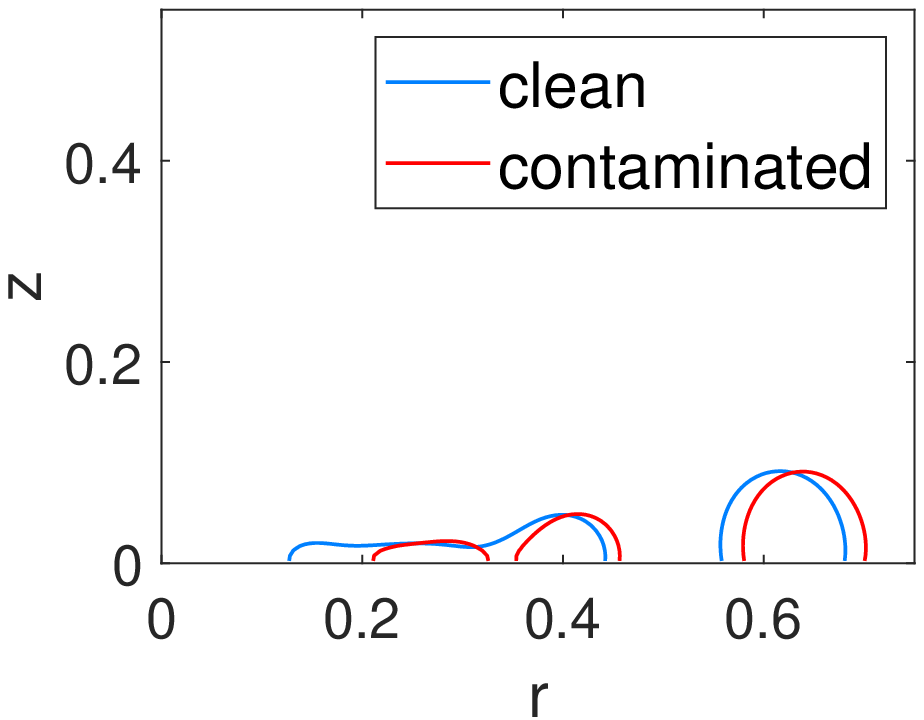}
\put(3,78){\scriptsize{(c)}}
\end{overpic}

\begin{overpic}[trim=0cm 0cm 0cm 0cm, clip,scale=0.33]{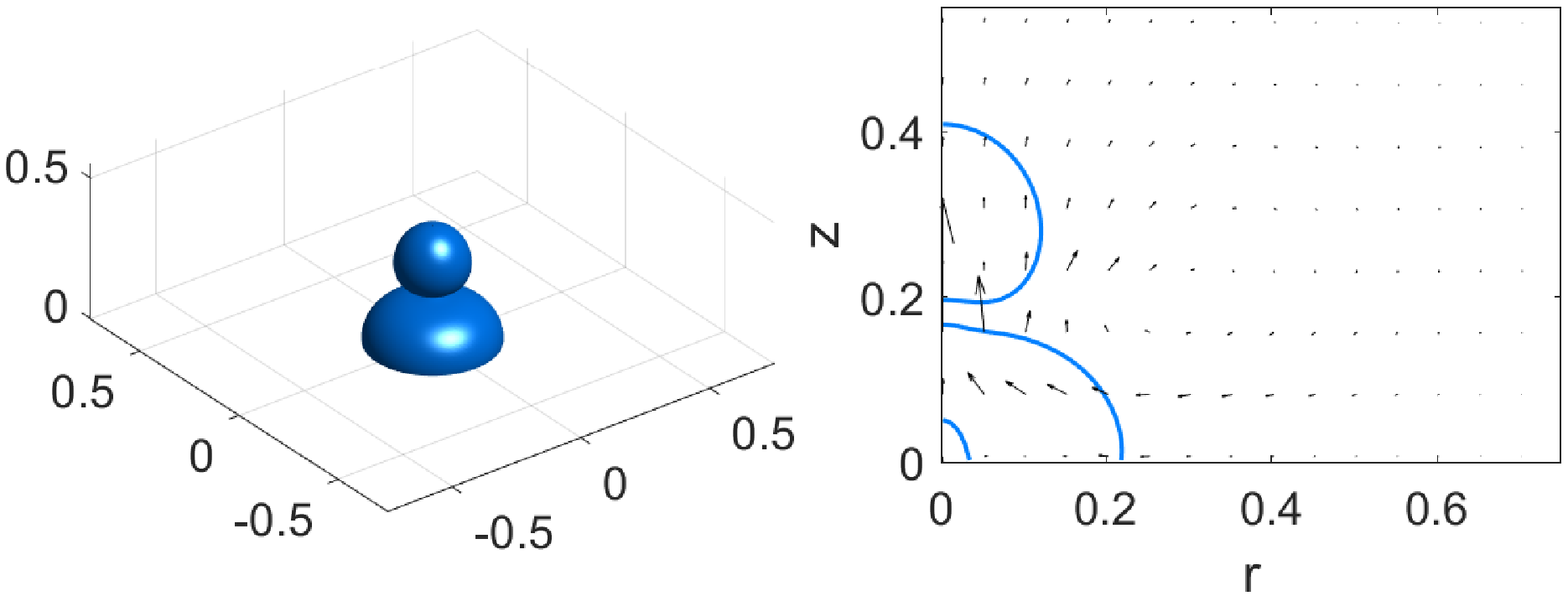}
\put(2,35){\scriptsize{(d)}}
\end{overpic}
\hspace{-0.5cm}
\begin{overpic}[trim=0cm 0cm 0cm 0cm, clip,scale=0.33]{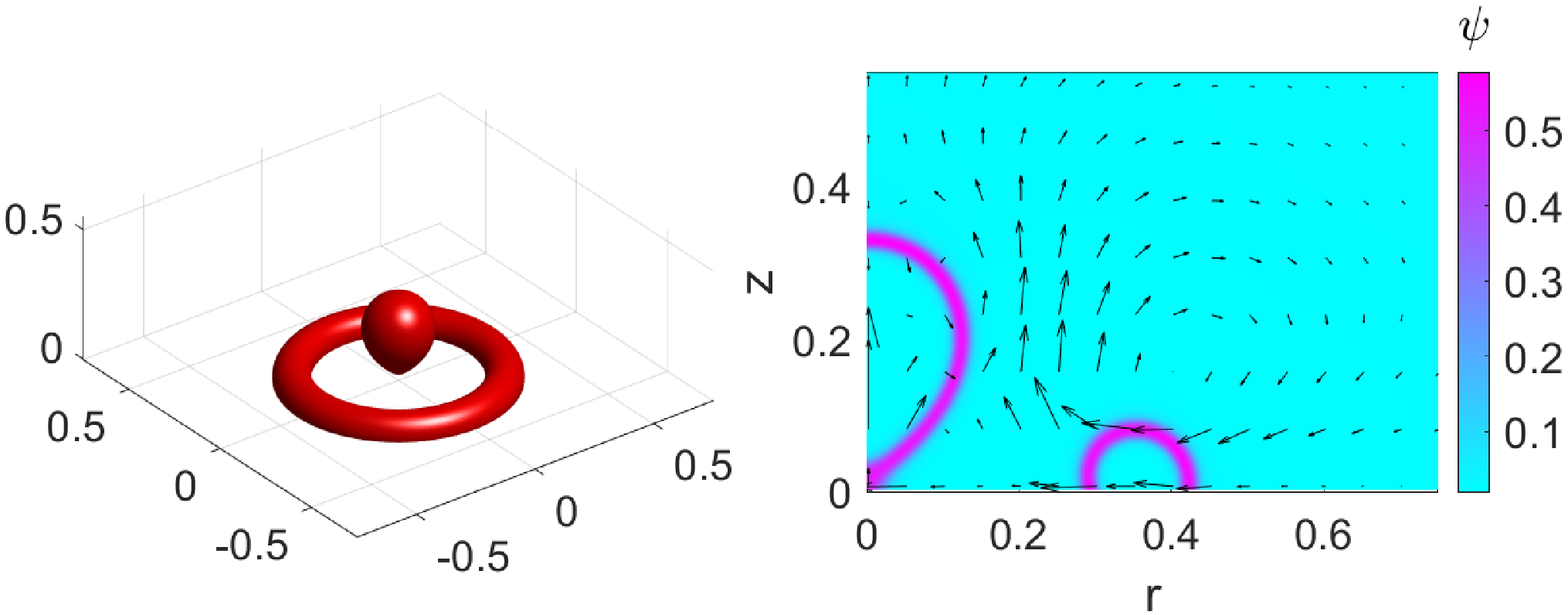}
\put(2,35){\scriptsize{(e)}}
\end{overpic}
\begin{overpic}[trim=0cm 0cm 0cm 0cm, clip,scale=0.33]{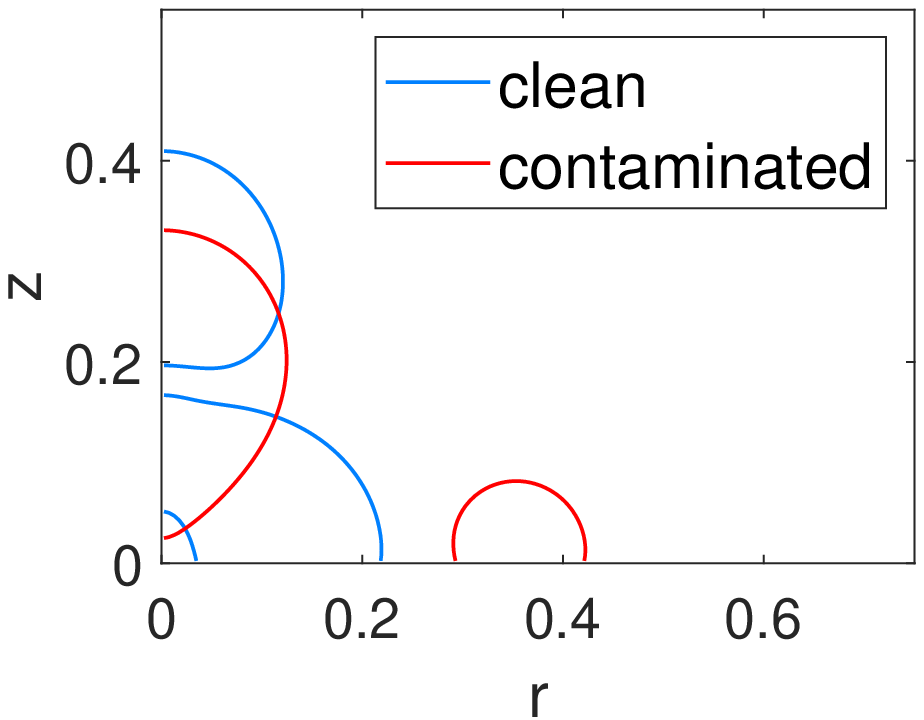}
\put(3,78){\scriptsize{(f)}}
\end{overpic}

\begin{overpic}[trim=0cm 0cm 0cm 0cm, clip,scale=0.33]{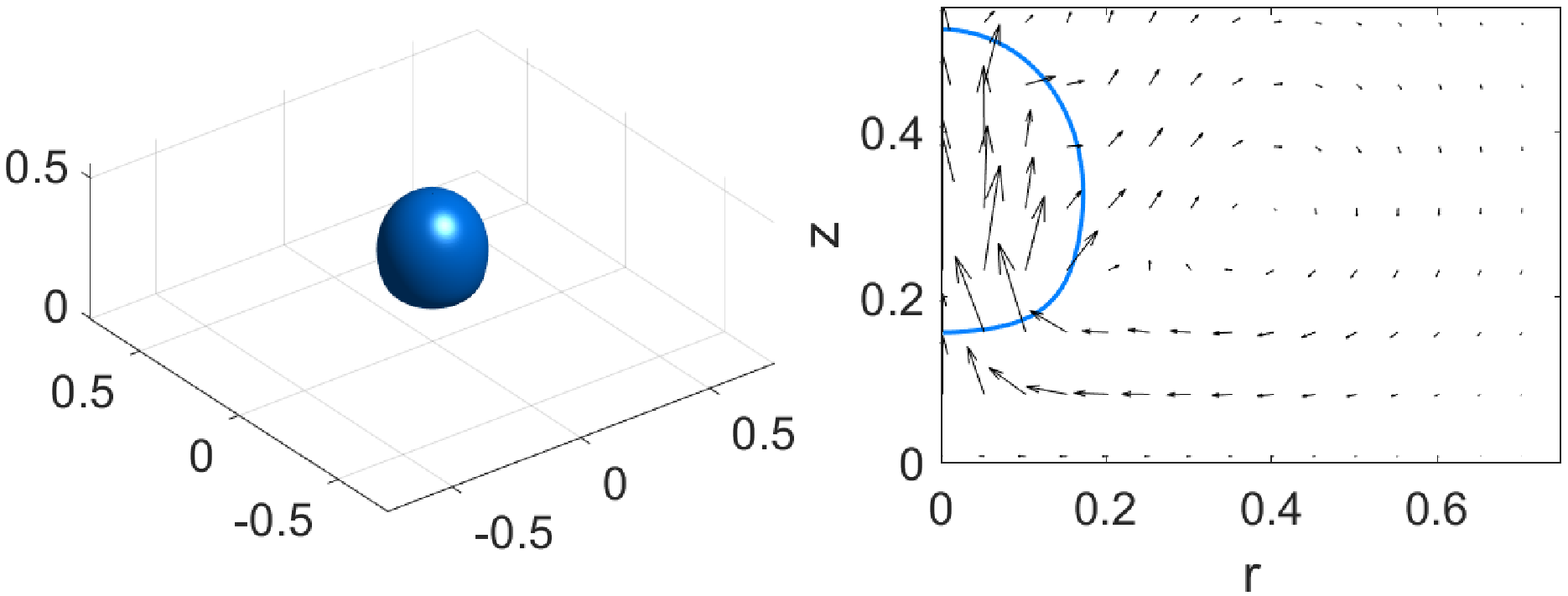}
\put(2,35){\scriptsize{(g)}}
\end{overpic}
\hspace{-0.5cm}
\begin{overpic}[trim=0cm 0cm 0cm 0cm, clip,scale=0.33]{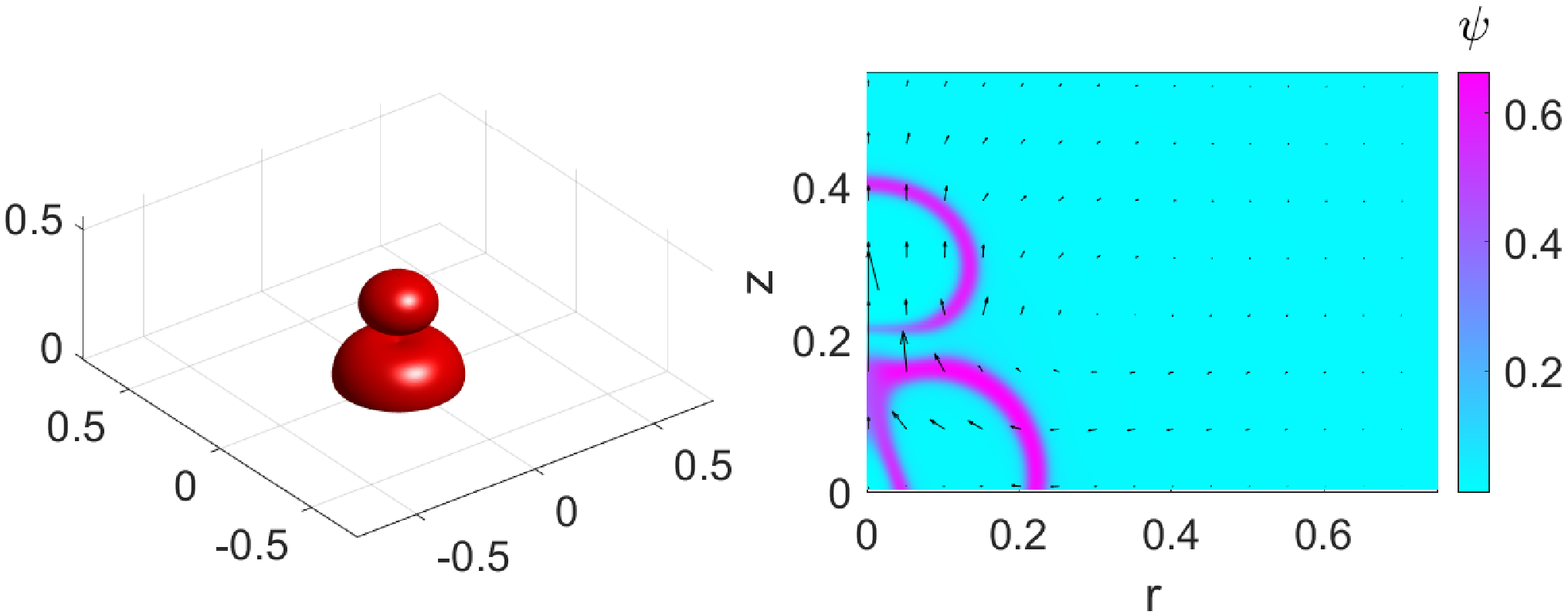}
\put(2,35){\scriptsize{(h)}}
\end{overpic}
\begin{overpic}[trim=0cm 0cm 0cm 0cm, clip,scale=0.33]{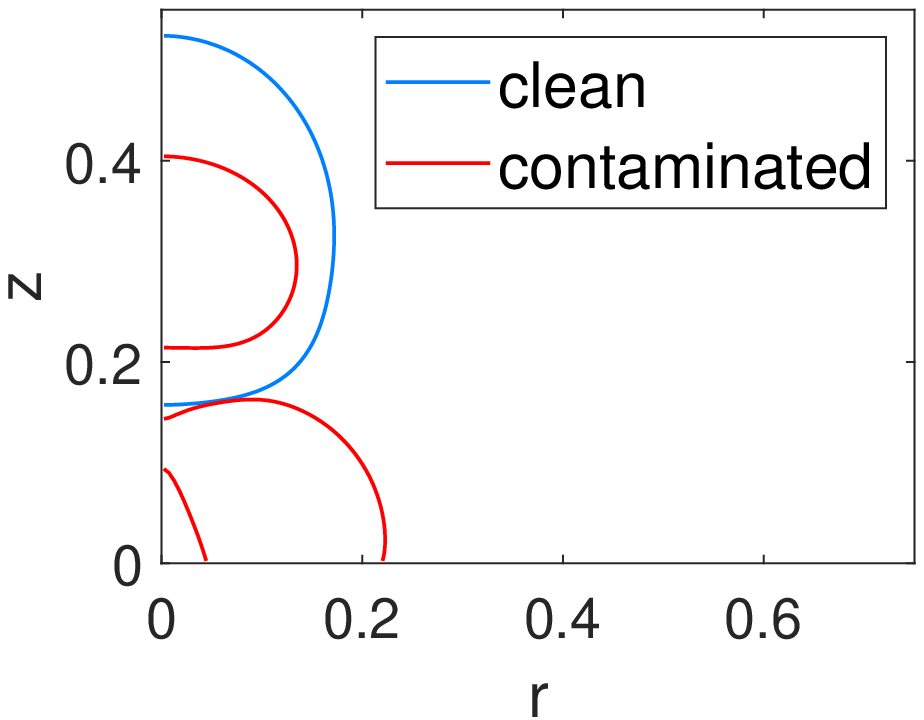}
\put(3,78){\scriptsize{(i)}}
\end{overpic}

\caption{Droplet bouncing in Example 4 ($\mathrm{Re}=2000$, $\mathrm{We}=282.8$, $\theta_s=100^\circ$, and $\mathrm{Pe}_\psi=100$). Profiles of clean and contaminated droplets are shown in both three-dimensional views (1st and 3rd columns) and two-dimensional radial plots ((2nd and 4th columns). Velocity fields and surfactant concentrations are also shown in the radial plots by quivers and colormaps respectively. Comparisons between the interface shapes of clean and contaminated droplets are given in the 5th column. Snapshots are captured at time $t = 2.04$ (a-c), $t = 7.92$ (d-f), and $t = 10.64$ (g-i).}
\label{Bouncing_2}
\end{figure}
\begin{figure}[t!]
\centering
\includegraphics[scale=0.7]{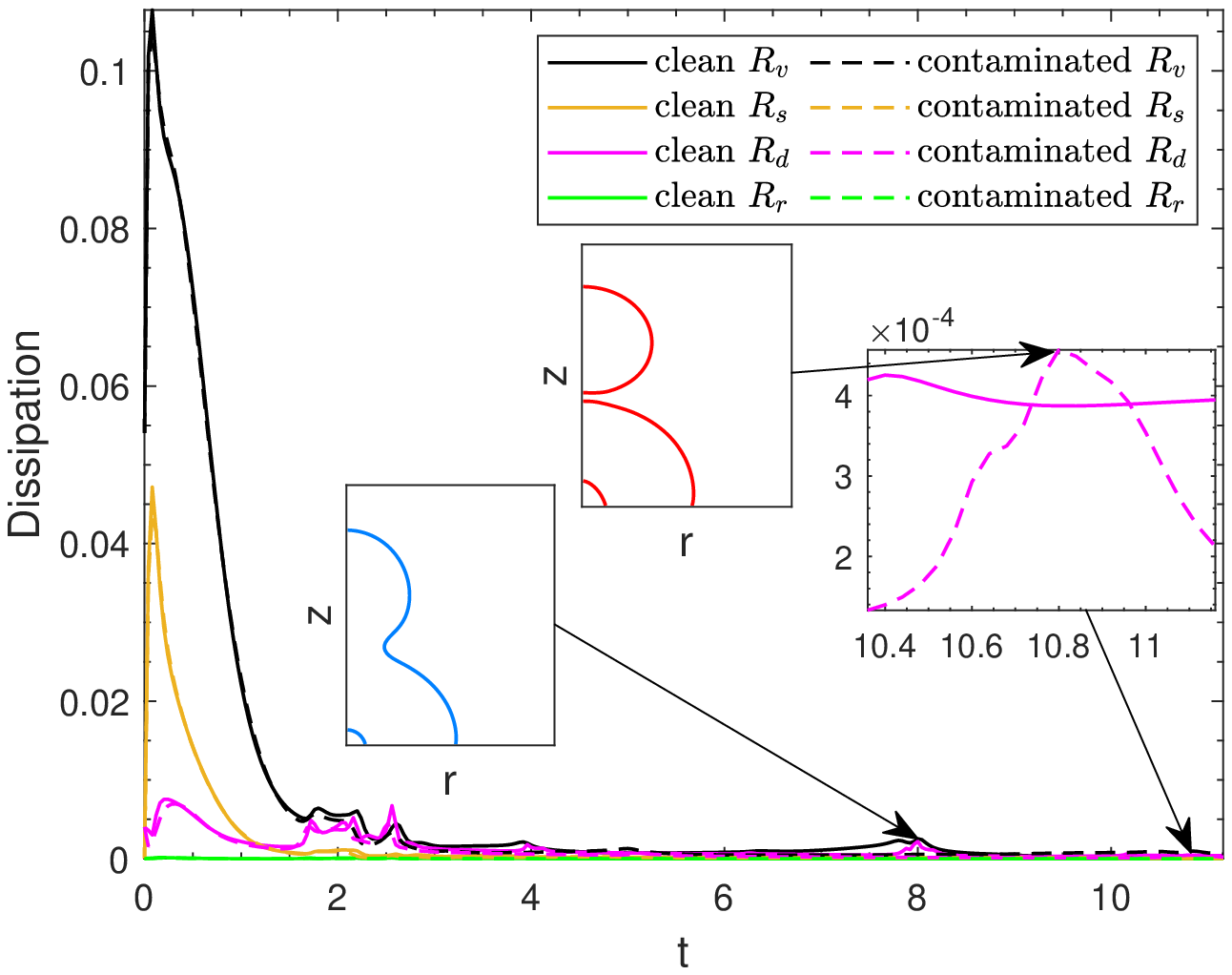}
    \caption{Dissipations in Example 4. In (a), four dissipations are shown for both clean (solid curves) and contaminated (dashed curves) cases. The right inset plot illustrates differences in $R_d$ for the two cases when topological change occurs, and some particular interface geometries corresponding to some of peaks are shown in other inset plots.}
\label{Bouncing_2_dissi}
\end{figure}

\subsubsection{Splashing}\label{sec_Splashing}

When the substrate is more hydrophobic so that it becomes less stick, droplet splashing will happen.

\paragraph{Example 5} We investigate splashing using the following parameters:
\begin{equation*}
\mathrm{Re}=1600, \quad\quad  \mathrm{We}=282.8, \quad\quad \theta_s=160^\circ,\quad\quad \mathrm{Pe}_\psi=100.
\end{equation*}
After droplet impact, the spreading dynamics follows. However, since the substrate is less sticky, the droplet does not recoil in a whole. Instead, the moving front of droplet upwarps its head, like a spindrift (Fig.~\ref{Splashing}a-c). As inertial effect is more important than capillary effect \citep{Richard2002} at the moving front, breakup occurs and the head of the moving front takes off and leaves the substrate (Fig.~\ref{Splashing}d-f). In other words, splashing happens. The presence of surfactant makes the interface softer and break up into more flying drops (Fig.~\ref{Splashing}g-i). This is consistent with our understanding on surfactants, which make the droplet more “active” to the forcing movement and thus help dewetting.

As in previous examples, the dissipations in the contaminated case are in general smaller than that in the clean case (see Fig.~\ref{Splashing_dissi}), which implies that more free energy is carried by the contaminated droplet. This additional free energy is either transformed into kinetic energy or stored on complex interfaces with larger total area. As a consequence, breakups of droplet and bouncing (or splashing) are more likely to happen.


\begin{figure}[t!]
\center
\begin{overpic}[trim=0cm 0cm 0cm 0cm, clip,scale=0.3]{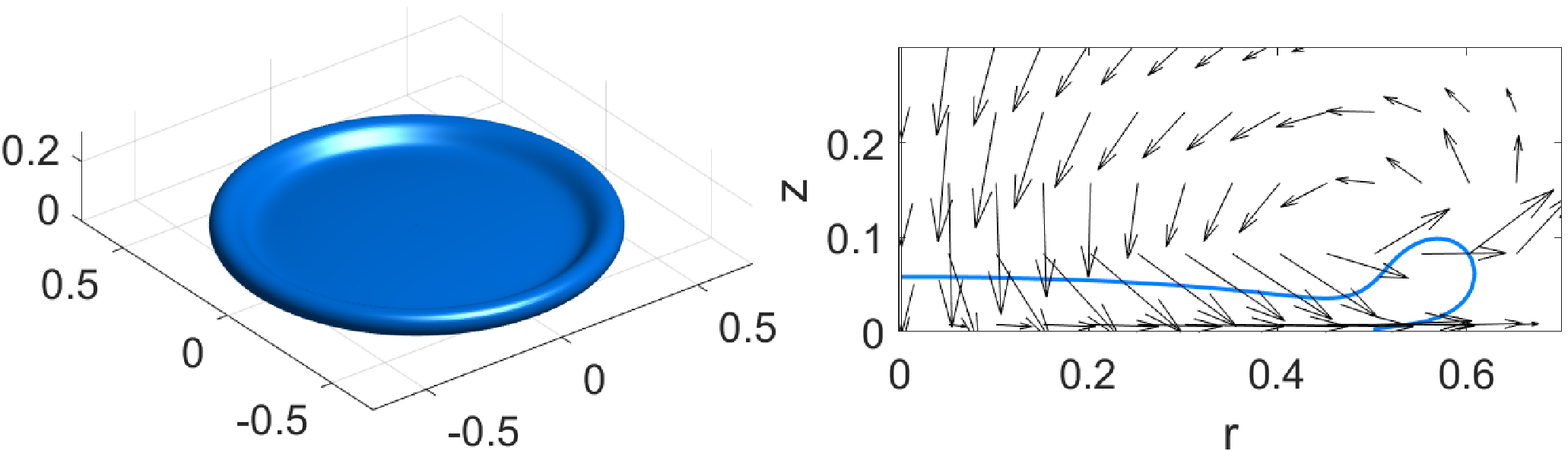}
\put(2,26){\scriptsize{(a)}}
\end{overpic}
\hspace{-0.6cm}
\begin{overpic}[trim=0cm 0cm 0cm 0cm, clip,scale=0.3]{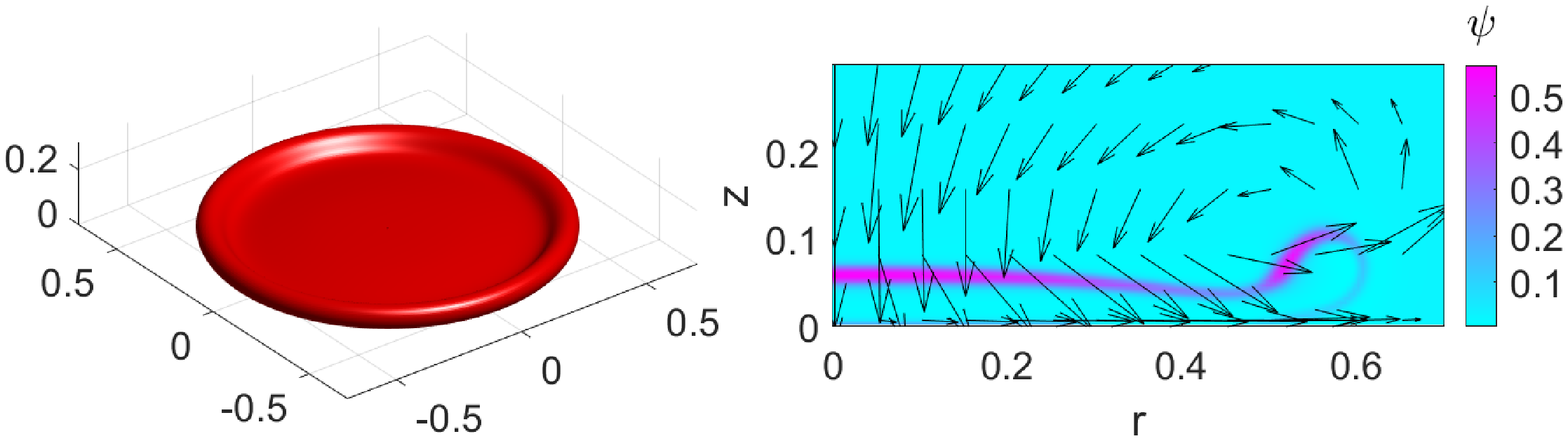}
\put(2,26){\scriptsize{(b)}}
\end{overpic}
\hspace{-0.1cm}
\begin{overpic}[trim=0cm 0cm 0cm 0cm, clip,scale=0.3]{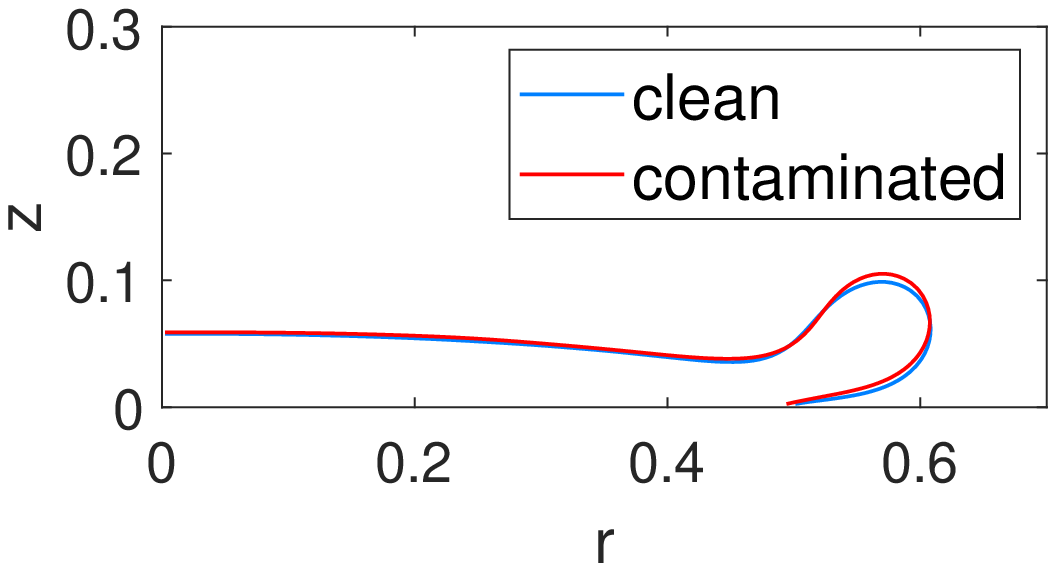}
\put(0,56){\scriptsize{(c)}}
\end{overpic}

\begin{overpic}[trim=0cm 0cm 0cm 0cm, clip,scale=0.3]{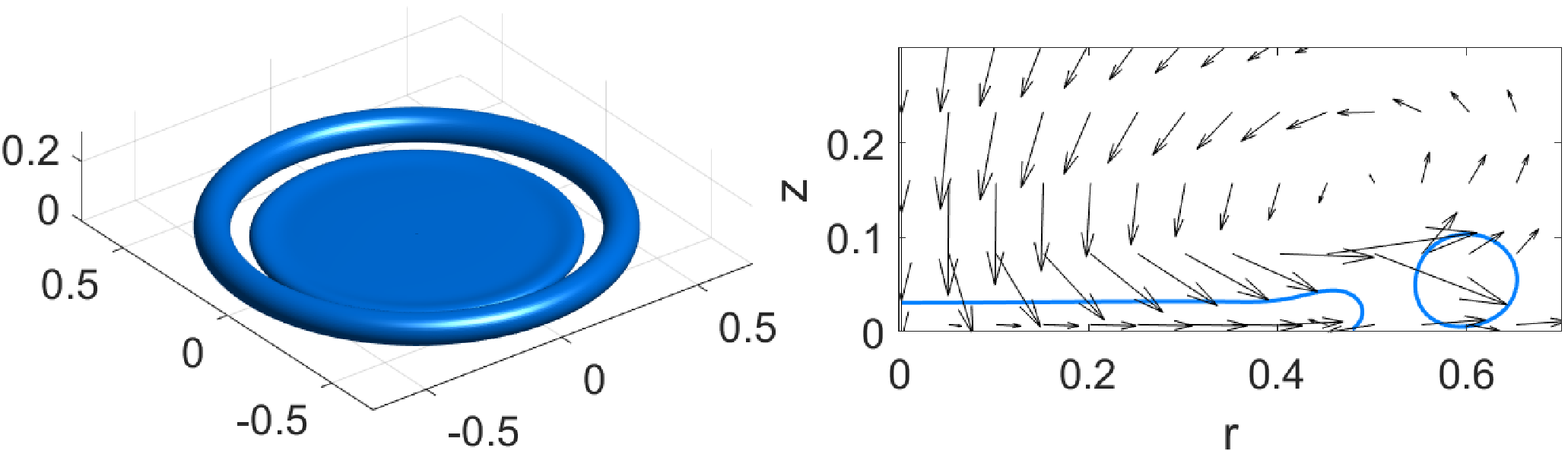}
\put(2,26){\scriptsize{(d)}}
\end{overpic}
\hspace{-0.6cm}
\begin{overpic}[trim=0cm 0cm 0cm 0cm, clip,scale=0.3]{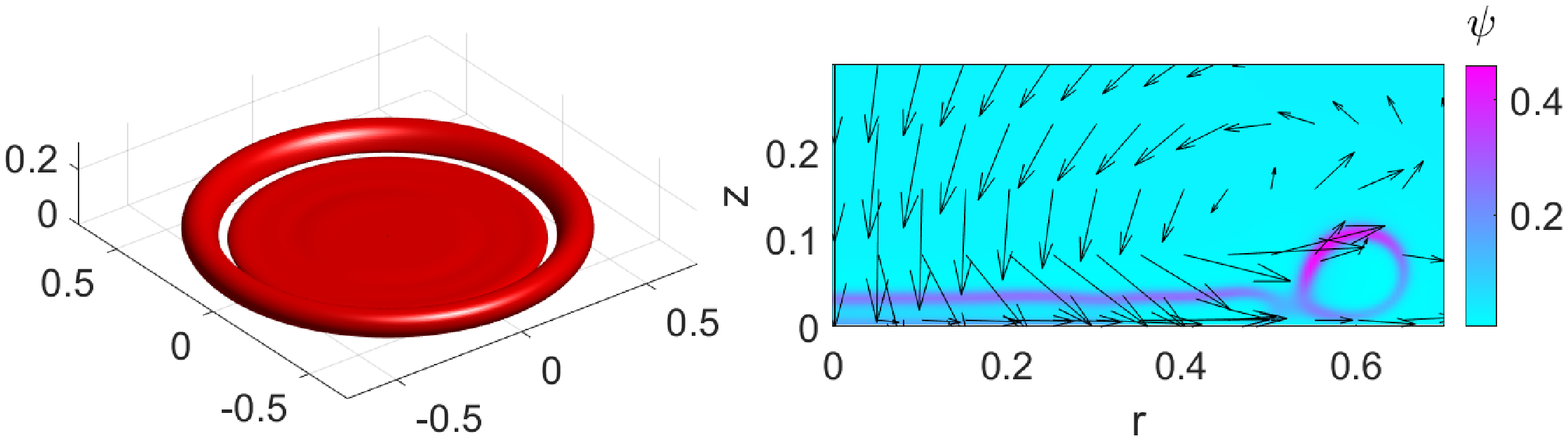}
\put(2,26){\scriptsize{(e)}}
\end{overpic}
\hspace{-0.1cm}
\begin{overpic}[trim=0cm 0cm 0cm 0cm, clip,scale=0.3]{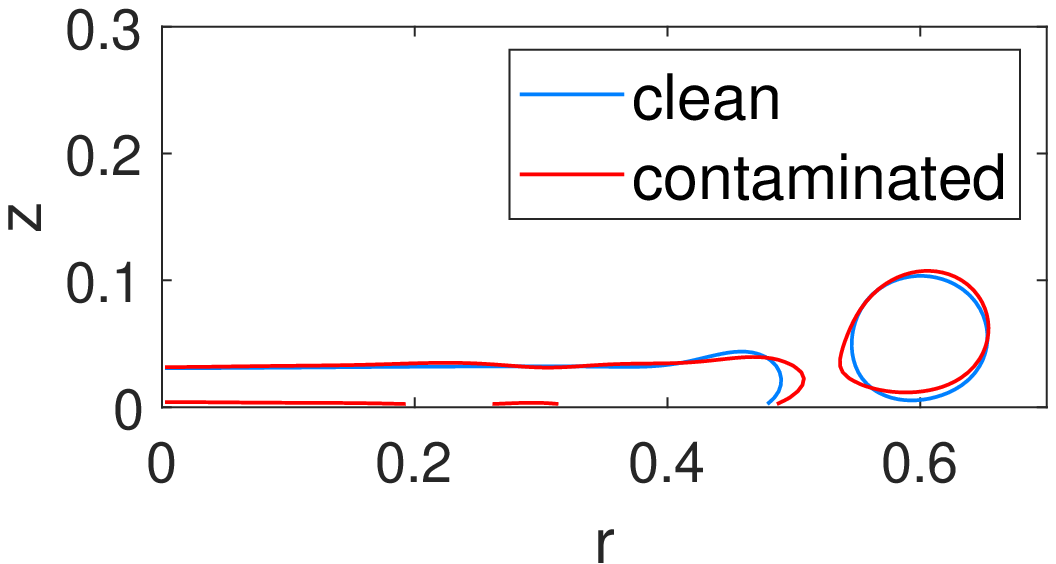}
\put(0,56){\scriptsize{(f)}}
\end{overpic}

\begin{overpic}[trim=0cm 0cm 0cm 0cm, clip,scale=0.3]{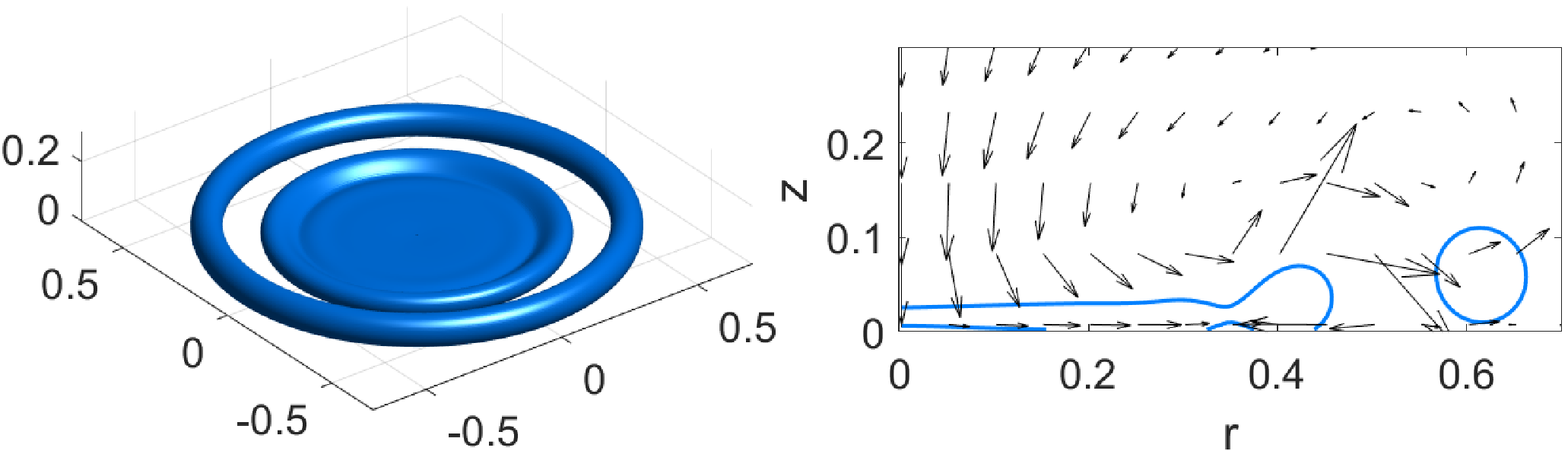}
\put(2,26){\scriptsize{(g)}}
\end{overpic}
\hspace{-0.6cm}
\begin{overpic}[trim=0cm 0cm 0cm 0cm, clip,scale=0.3]{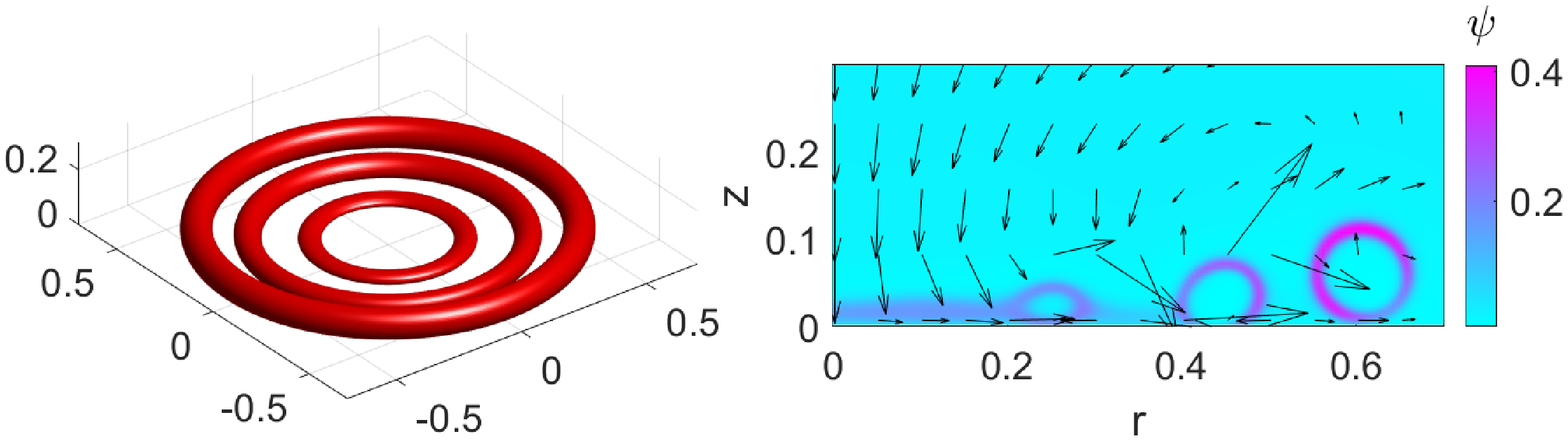}
\put(2,26){\scriptsize{(h)}}
\end{overpic}
\hspace{-0.1cm}
\begin{overpic}[trim=0cm 0cm 0cm 0cm, clip,scale=0.3]{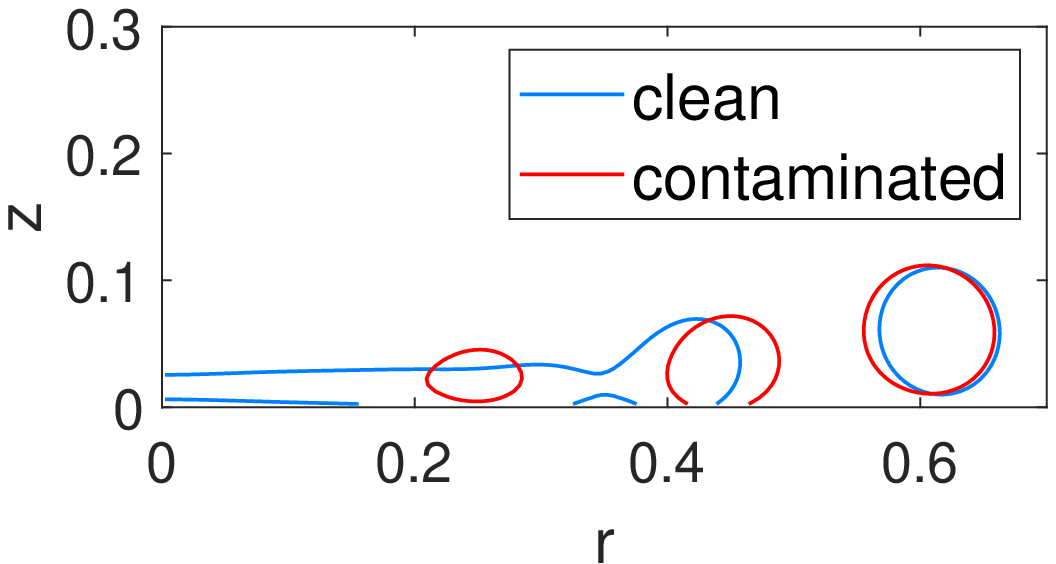}
\put(0,56){\scriptsize{(i)}}
\end{overpic}

\caption{Droplet splashing in Example 5 ($\mathrm{Re}=1600$, $\mathrm{We}=282.8$, $\theta_s=160^\circ$, and $\mathrm{Pe}_\psi=100$). Profiles of clean and contaminated droplets are shown in both three-dimensional views (1st and 3rd columns) and two-dimensional radial plots ((2nd and 4th columns). Velocity fields and surfactant concentrations are also shown in the radial plots by quivers and colormaps respectively. Comparisons between the interface shapes of clean and contaminated droplets are given in the 5th column. Snapshots are captured at time $t = 0.8$ (a-c), $t = 1.22$ (d-f), and $t = 1.48$ (g-i).}
\label{Splashing}
\end{figure}
\begin{figure}[t!]
\centering
\includegraphics[scale=0.7]{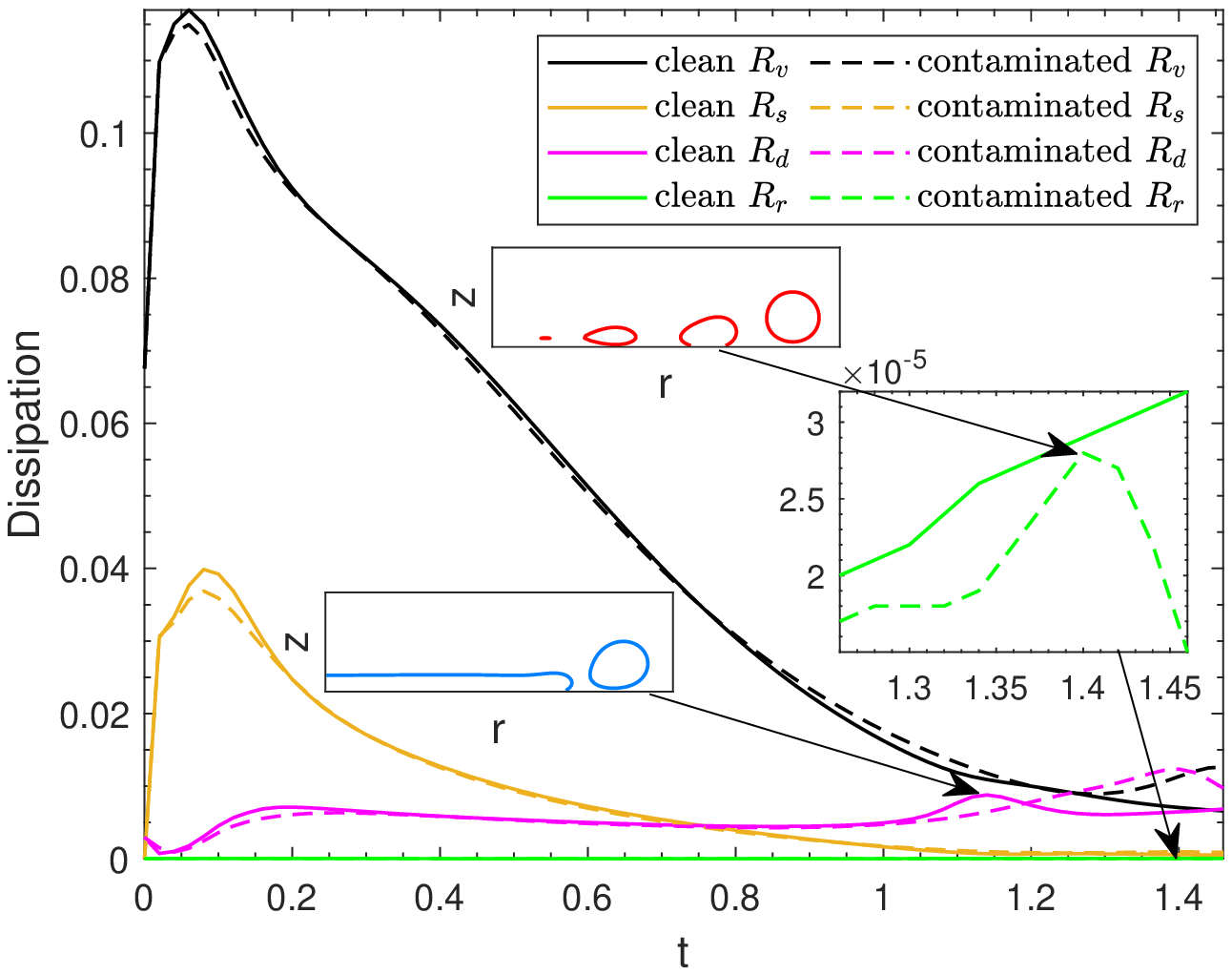}
    \caption{Dissipations in Example 5. In (a), four dissipations are shown for both clean (solid curves) and contaminated (dashed curves) cases. The right inset plot illustrates differences in $R_r$ for the two cases when topological change occurs, and some particular interface geometries corresponding to some of peaks are shown in other inset plots.}
\label{Splashing_dissi}
\end{figure}

\subsubsection{Comparison with experimental results}\label{sec_Reality}
Experiments on droplet impact have been carried out in many literatures. For instance,  \cite{Rioboo2002} systematically studied adherence, bouncing, and partial bouncing of clean drops in their experiments. In this section, we apply our model and numerical methods to reproduce some of their experimental results. For comparison, we also make numerical predictions of impact behaviors for contaminated droplets under the same physical parameters.

\paragraph{Example 6} We study bouncing dynamics in this example. The same parameters as in the experiment are used, which after scaling are equivalent to
\begin{equation*}
\mathrm{Re}=6490,\quad\quad \mathrm{We}=98,\quad\quad \mathrm{L}_{s}=0.0025,\quad\quad \theta_s = 95^\circ, \quad\quad \lambda_\rho=\frac{1}{830},\quad\quad\lambda_\eta=\frac{1}{66.2}.
\end{equation*}
In addition, we choose $\mathrm{Pe}_\psi=100$ for contaminated droplet in order to see the surfactant transport during impact dynamics. The computational domain is fixed to be $\Omega =[0, 1.5] \times[0, 2]$.

Experimental and numerical results for clean droplets are compared in Fig.~\ref{Reality_1}. We see almost quantitative agreement between these two sets of results in spreading, recoiling and bouncing processes. In the recoiling stage (Fig.~\ref{Reality_1}h,i), capillary wave instability results in ruptures of the thin film, creating a
ring structure with dried out regions. The emergence of a ring structure with a dry-out region
for large enough $\mathrm{Re}$ and $\mathrm{We}$ was also reported experimentally and numerically in \cite{Renardy2003}. In Fig.~\ref{Reality_1}n,q, the rising droplet breaks up into two parts, with one part staying on the surface and the other flying in the air. This leads to the result of partial bouncing for the clean droplet.

When surfactant is present, it gradually migrates from the center part of interface towards moving front (Fig.~\ref{Reality_1}c,f) as a result of convection and diffusion (Marangoni effect also helps). When recoiling, the contaminated droplet breaks up into three tori in comparison to the clean case (Fig.~\ref{Reality_1}i). The hole in the center continues to exist even in the bouncing process (Fig.~\ref{Reality_1}l,o). Breakup also occurs in the rising droplet, but with both child drops flying in the air. This phenomenon can also be attributed to the lower dissipations in impact dynamics of contaminated droplet, as we explained in Example 5.
\begin{figure}[htbp!]
\begin{overpic}[trim=0cm -4cm 0cm 0.6cm, clip,scale=0.13]{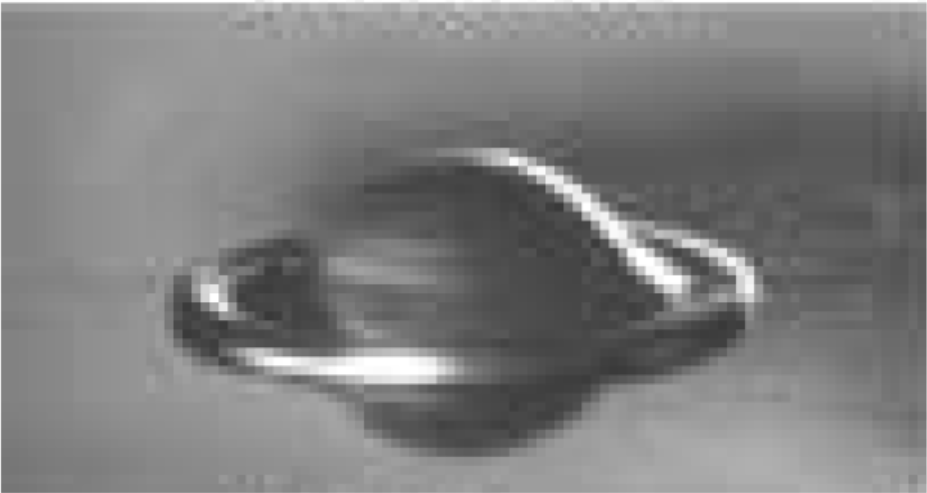}
\put(0,84){\scriptsize{(a)}}
\end{overpic}
\hspace{0.3cm}
\begin{overpic}[trim=0cm 0cm 0cm 0.6cm, clip,scale=0.36]{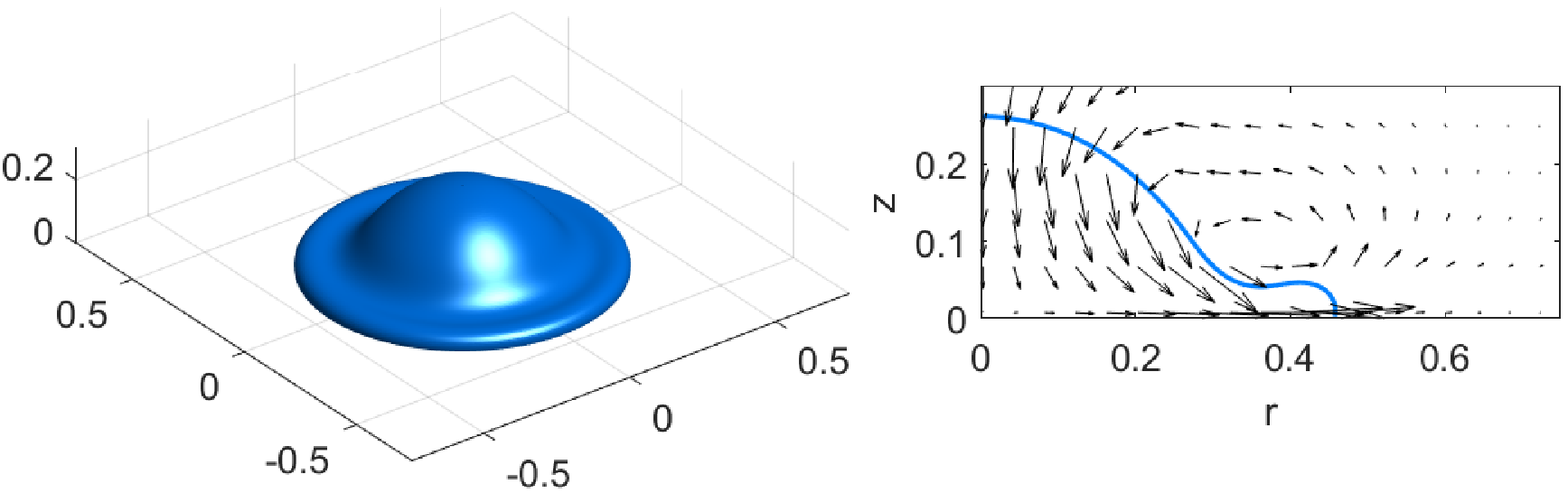}
\put(1,26){\scriptsize{(b)}}
\end{overpic}
\hspace{0.1cm}
\begin{overpic}[trim=0cm 0cm 0cm 0.6cm, clip,scale=0.36]{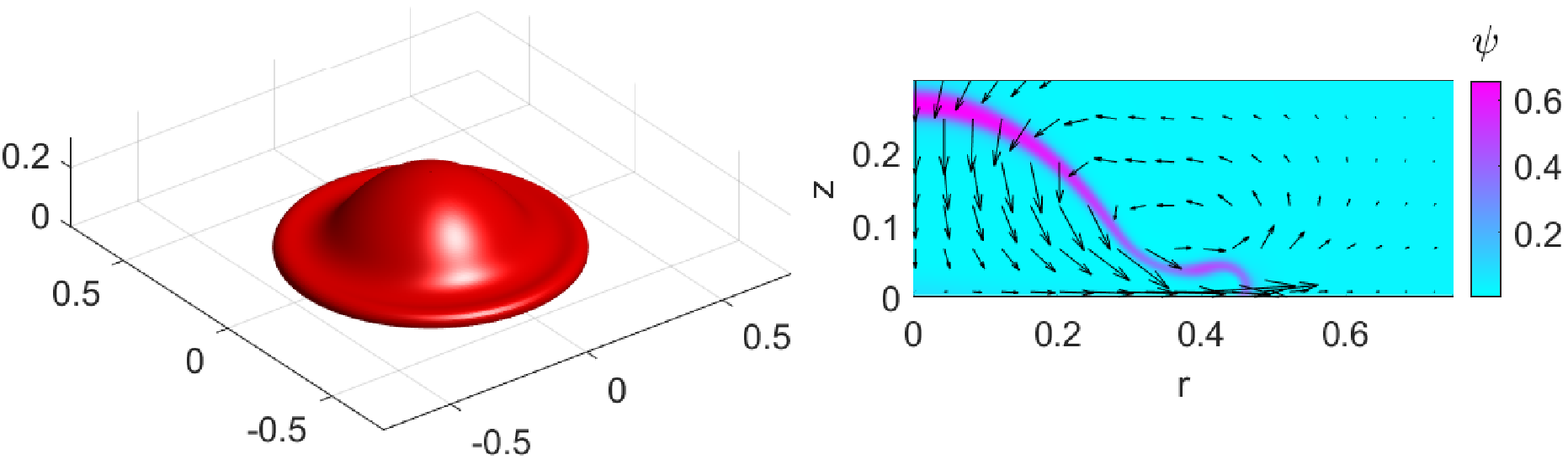}
\put(1,26){\scriptsize{(c)}}
\end{overpic}

\begin{overpic}[trim=0cm -4cm 0cm 0.6cm, clip,scale=0.13]{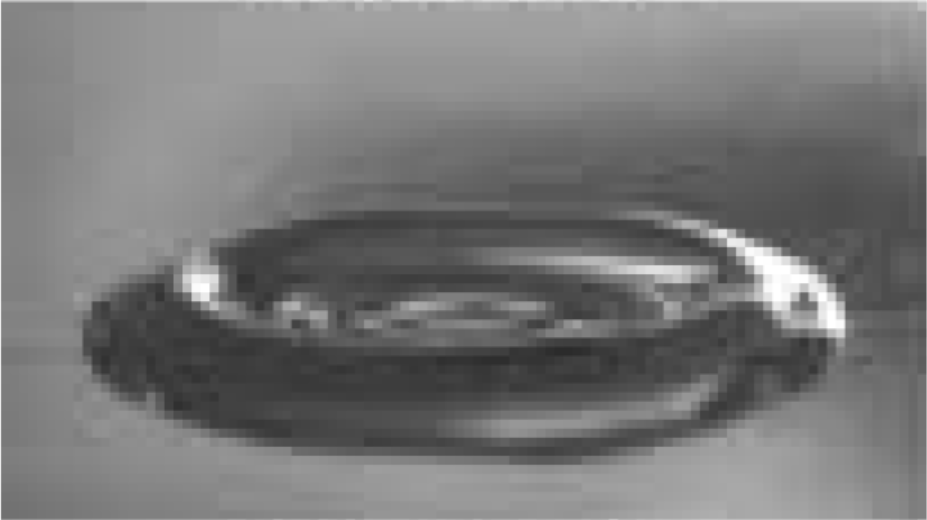}
\put(0,84){\scriptsize{(d)}}
\end{overpic}
\hspace{0.3cm}
\begin{overpic}[trim=0cm 0cm 0cm 0.6cm, clip,scale=0.36]{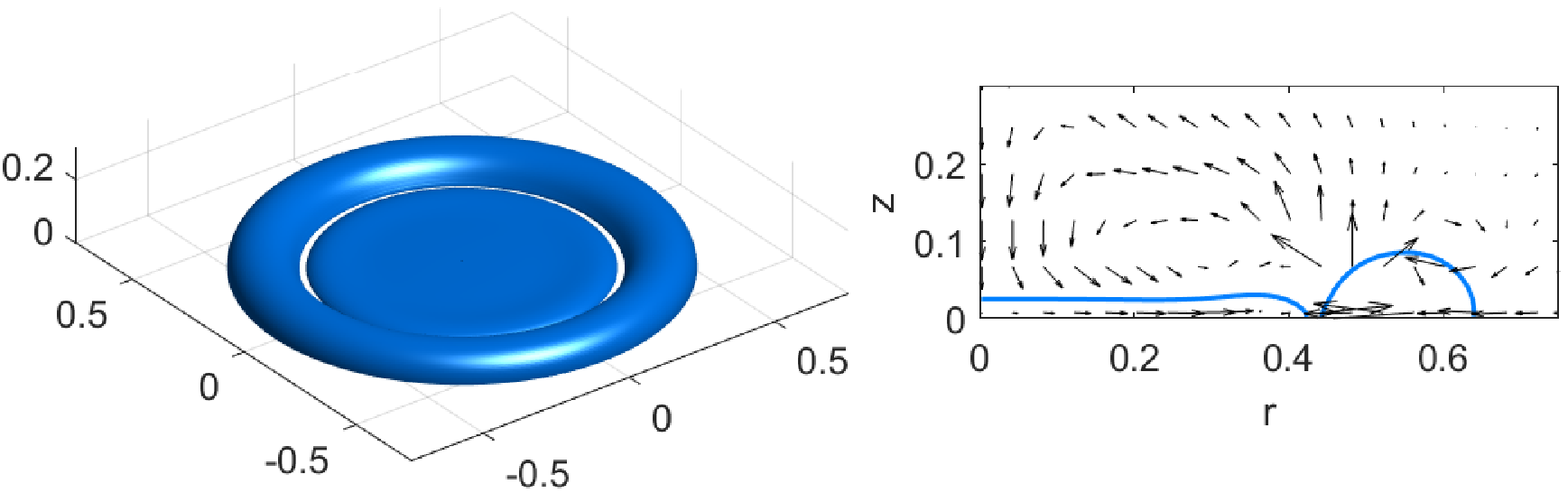}
\put(1,26){\scriptsize{(e)}}
\end{overpic}
\hspace{0.1cm}
\begin{overpic}[trim=0cm 0cm 0cm 0.6cm, clip,scale=0.36]{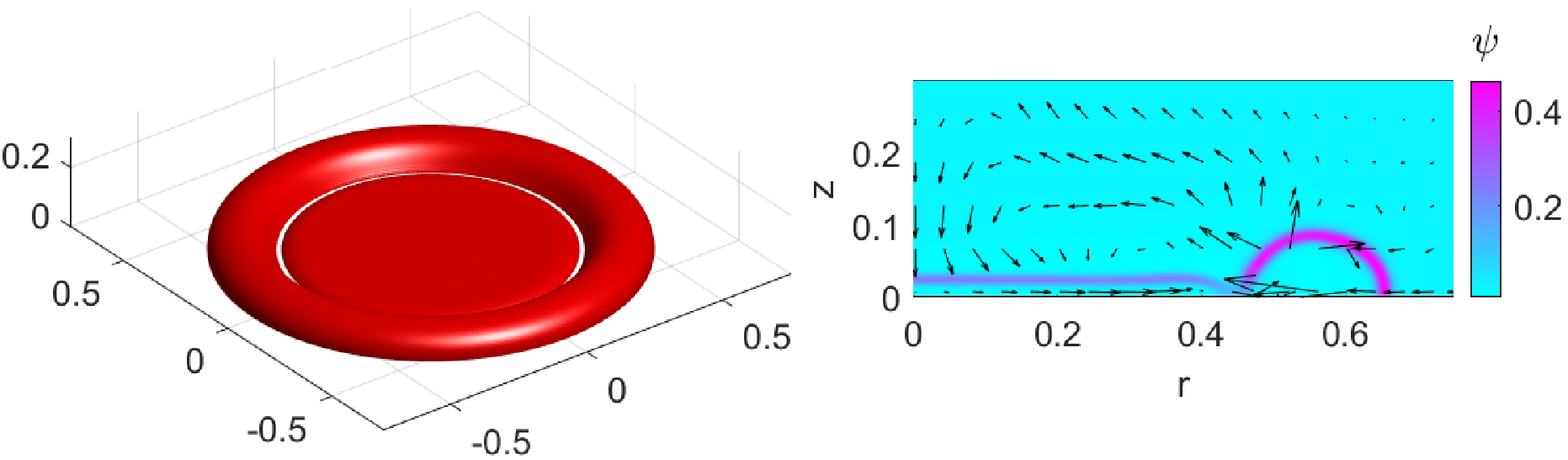}
\put(1,26){\scriptsize{(f)}}
\end{overpic}

\begin{overpic}[trim=0cm -4cm 0cm 0.6cm, clip,scale=0.13]{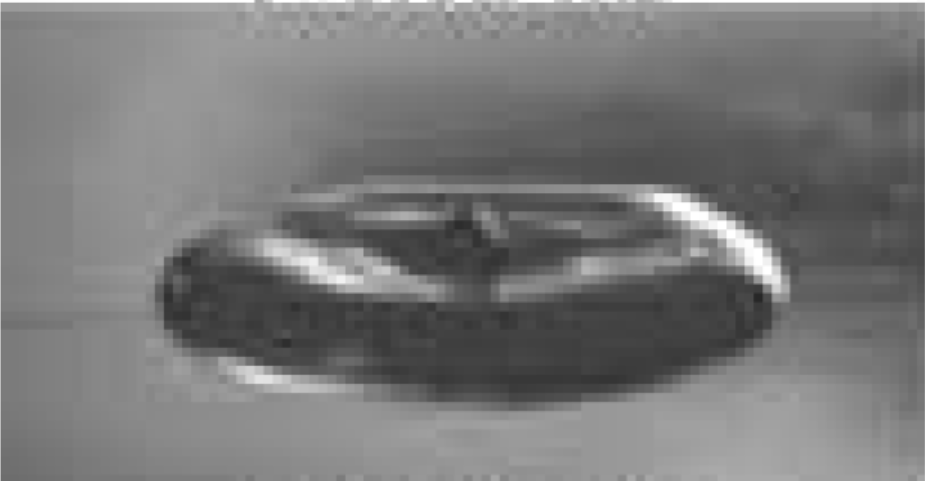}
\put(0,84){\scriptsize{(g)}}
\end{overpic}
\hspace{0.3cm}
\begin{overpic}[trim=0cm 0cm 0cm 0.6cm, clip,scale=0.36]{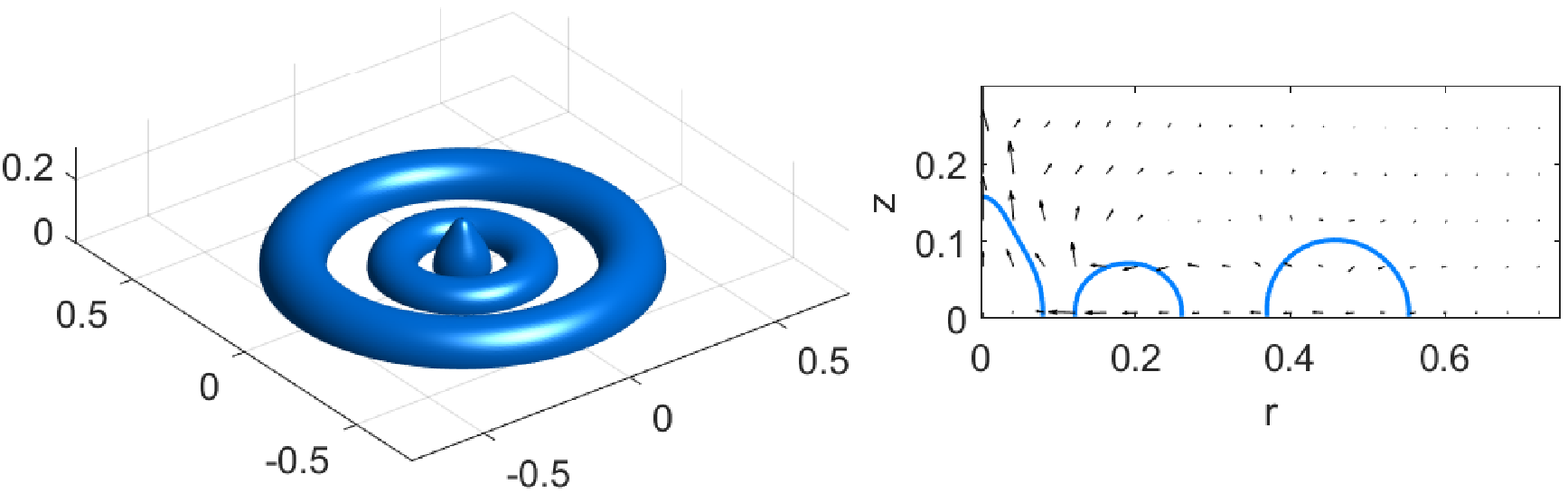}
\put(1,26){\scriptsize{(h)}}
\end{overpic}
\hspace{0.1cm}
\begin{overpic}[trim=0cm 0cm 0cm 0.6cm, clip,scale=0.36]{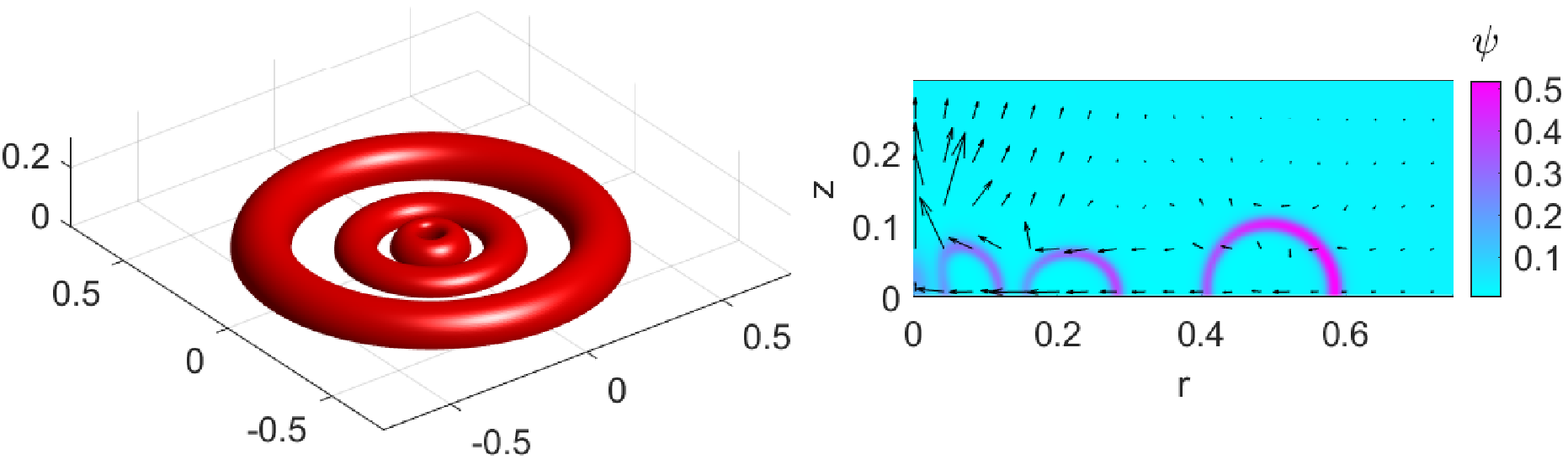}
\put(1,26){\scriptsize{(i)}}
\end{overpic}

\begin{overpic}[trim=0cm -8cm 0cm 0cm, clip,scale=0.13]{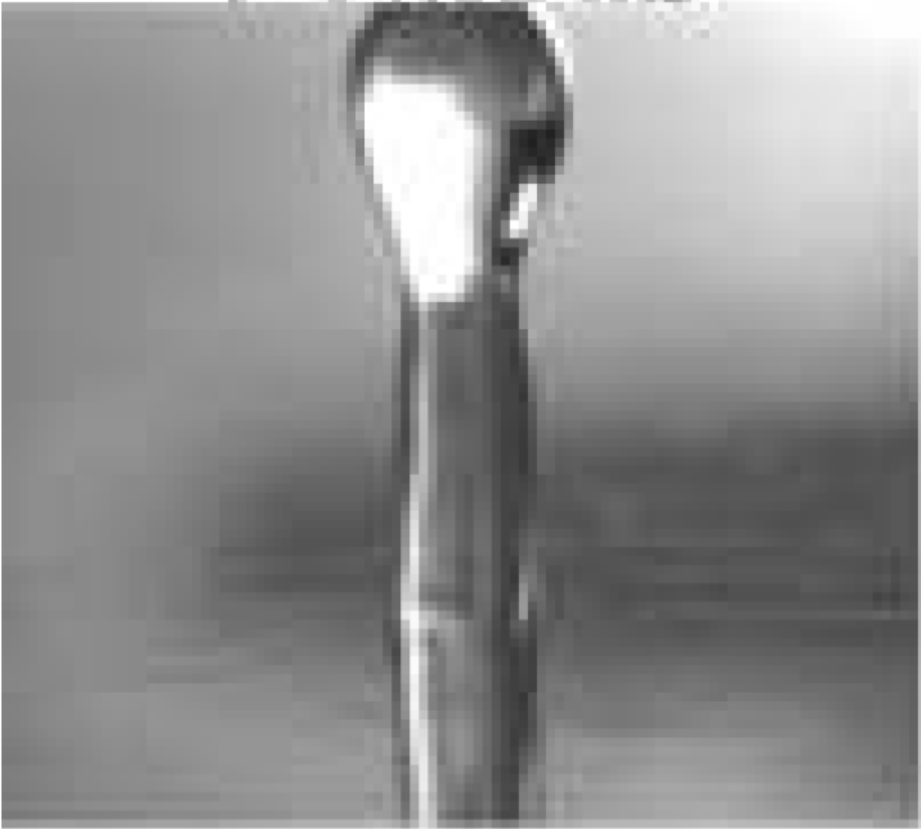}
\put(0,113){\scriptsize{(j)}}
\end{overpic}
\hspace{0.3cm}
\begin{overpic}[trim=0cm 0cm 0cm 0cm, clip,scale=0.36]{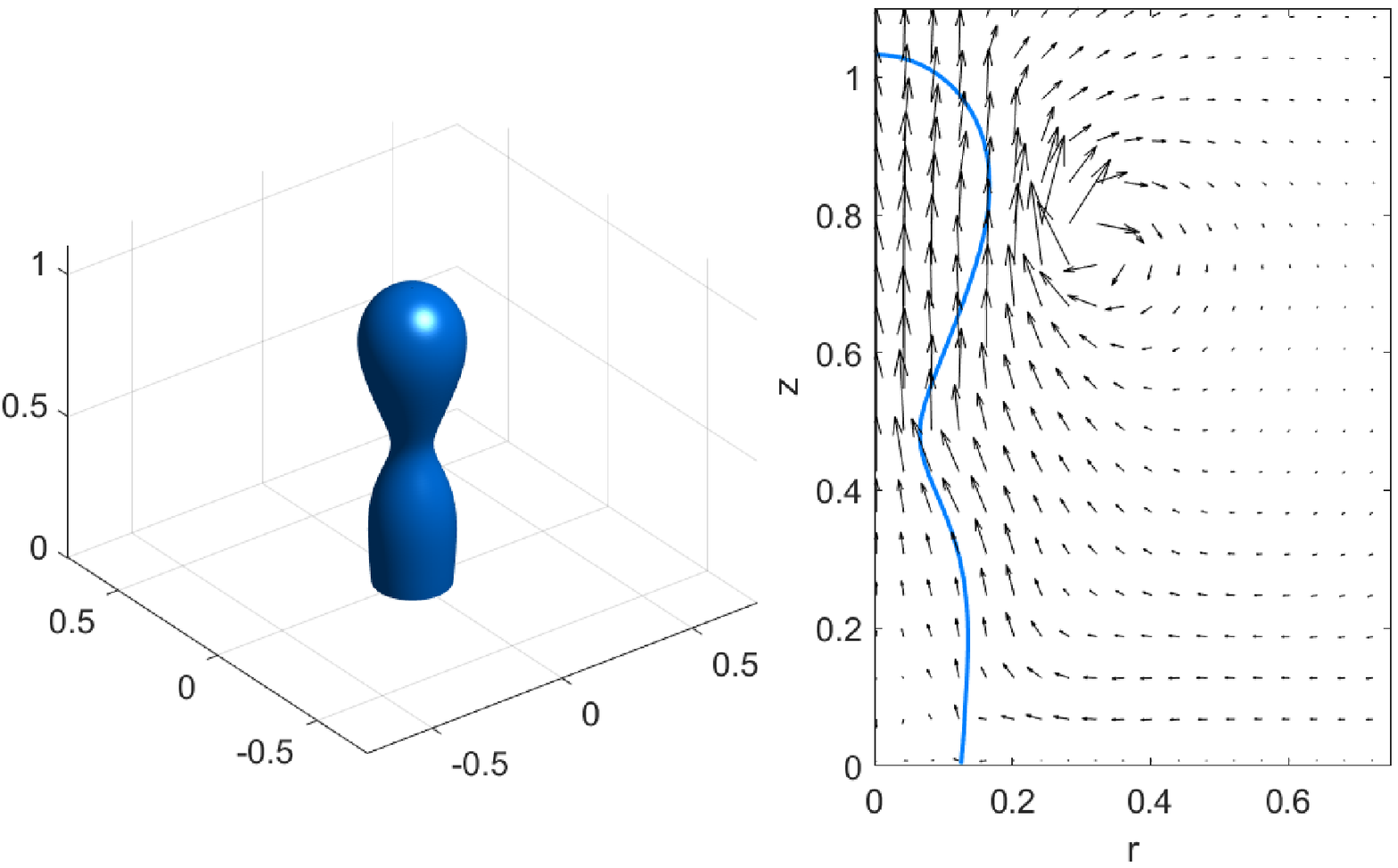}
\put(1,48){\scriptsize{(k)}}
\end{overpic}
\hspace{0.1cm}
\begin{overpic}[trim=0cm 0cm 0cm 0cm, clip,scale=0.36]{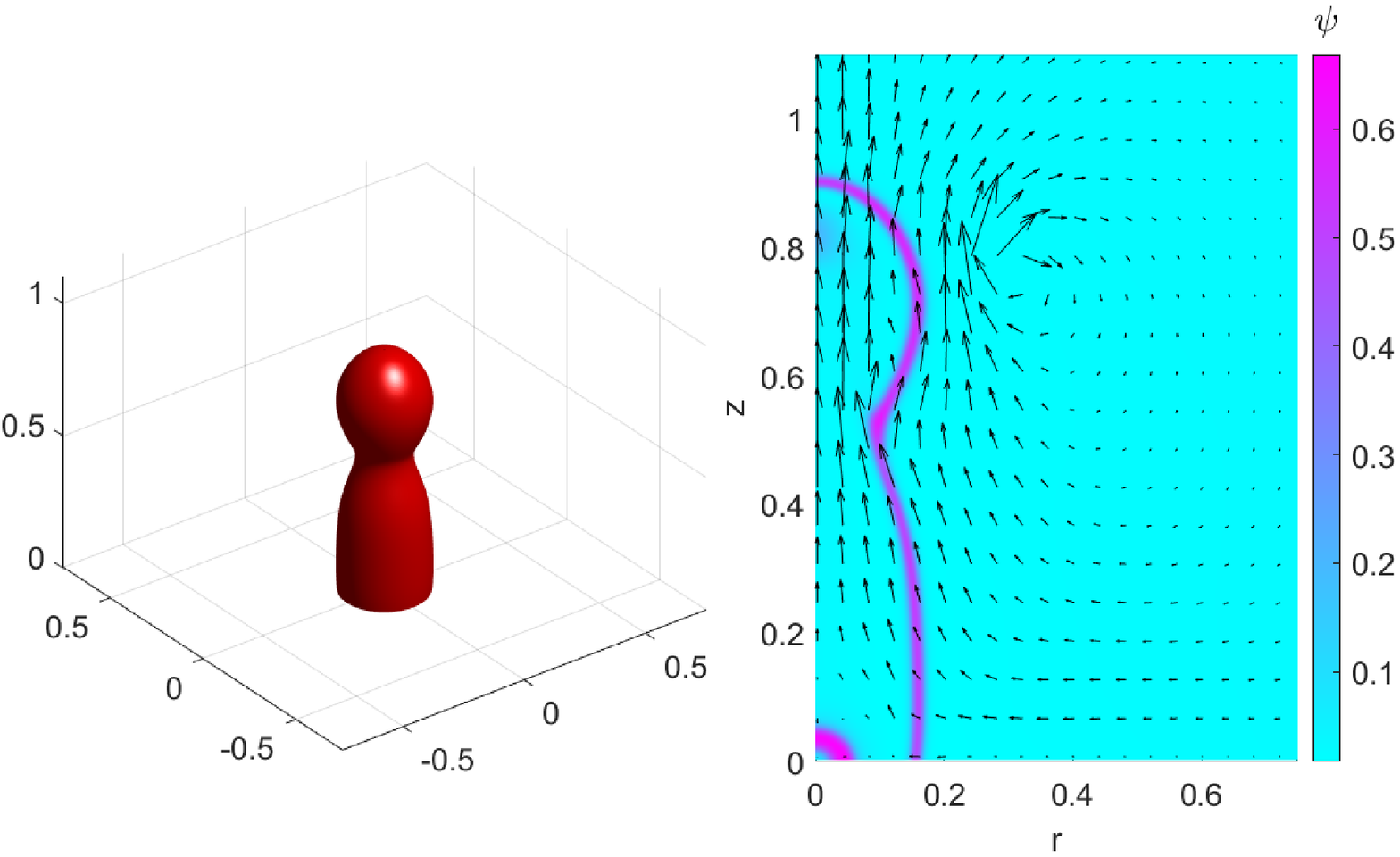}
\put(1,48){\scriptsize{(l)}}
\end{overpic}

\begin{overpic}[trim=0cm -10cm 0cm 0.1cm, clip,scale=0.13]{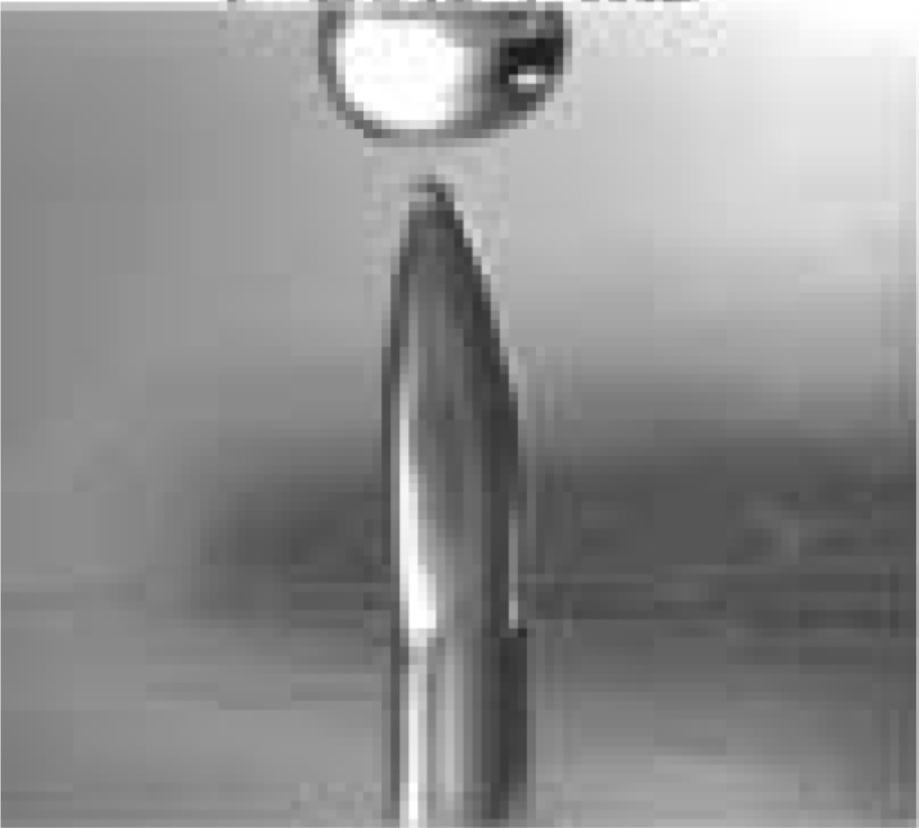}
\put(0,112){\scriptsize{(m)}}
\end{overpic}
\hspace{0.3cm}
\begin{overpic}[trim=0cm 0cm 0cm 0.1cm, clip,scale=0.36]{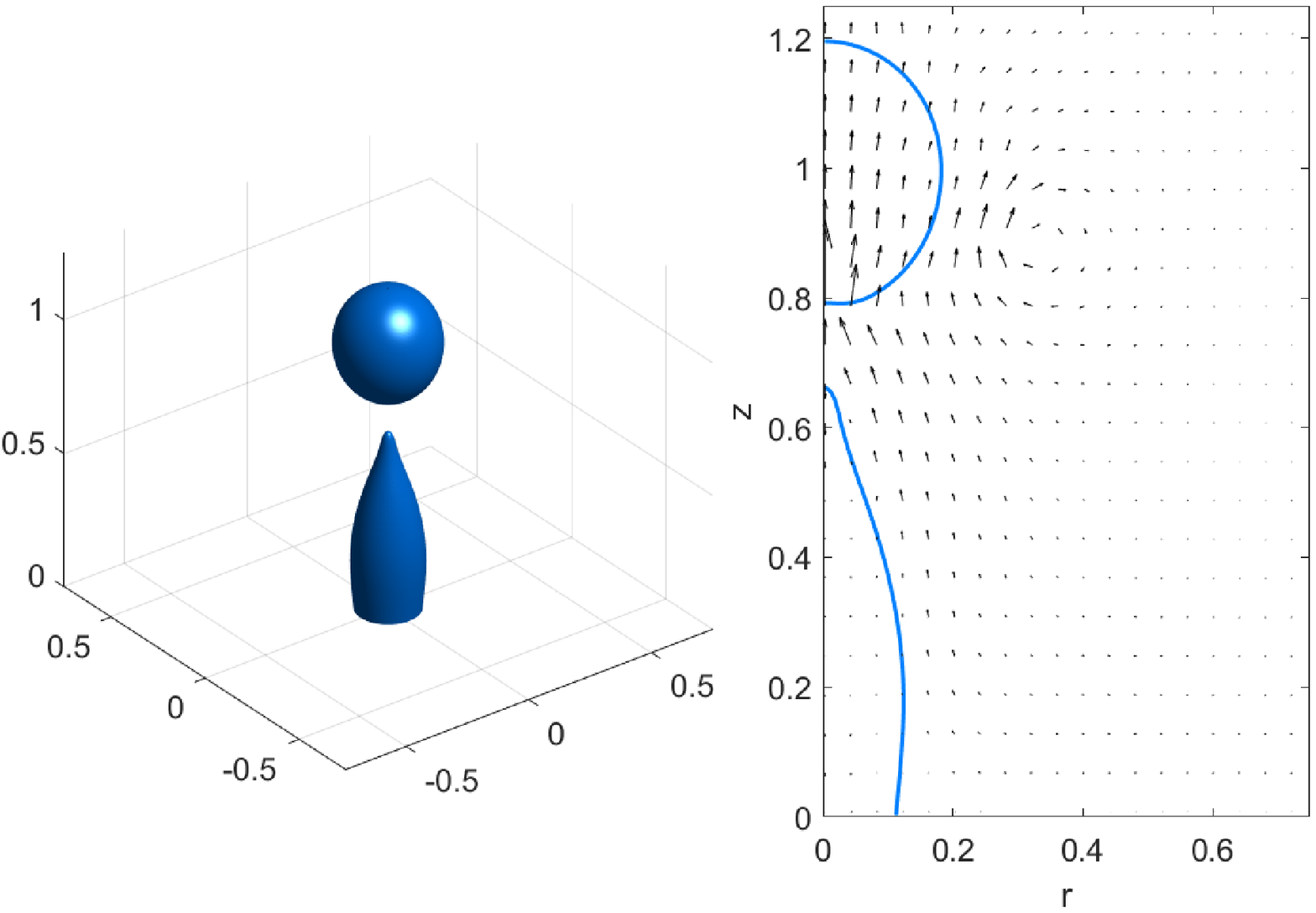}
\put(1,52){\scriptsize{(n)}}
\end{overpic}
\hspace{0.1cm}
\begin{overpic}[trim=0cm 0cm 0cm 0.1cm, clip,scale=0.36]{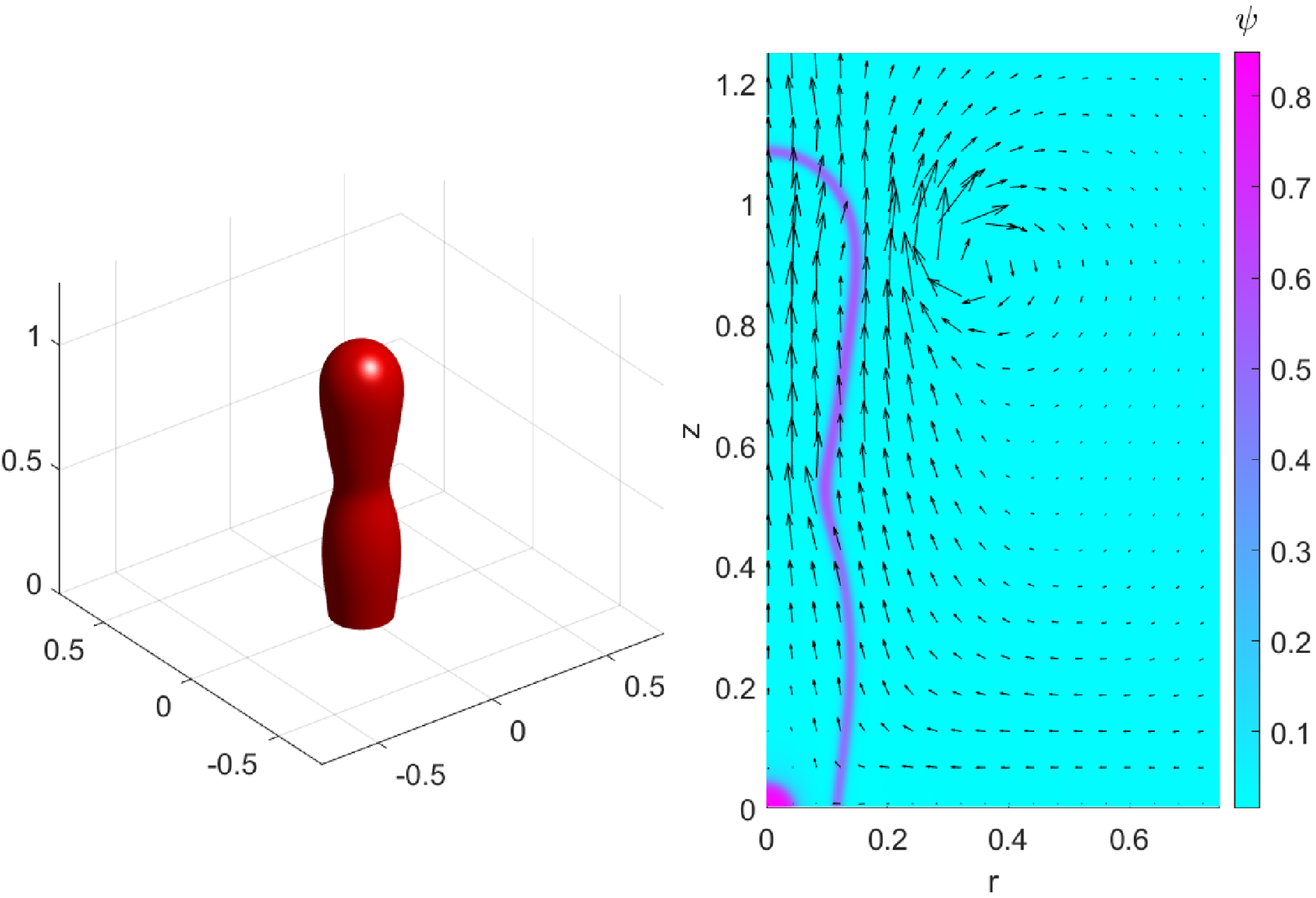}
\put(1,52){\scriptsize{(o)}}
\end{overpic}

\begin{overpic}[trim=0cm -14cm 0cm 0.2cm, clip,scale=0.13]{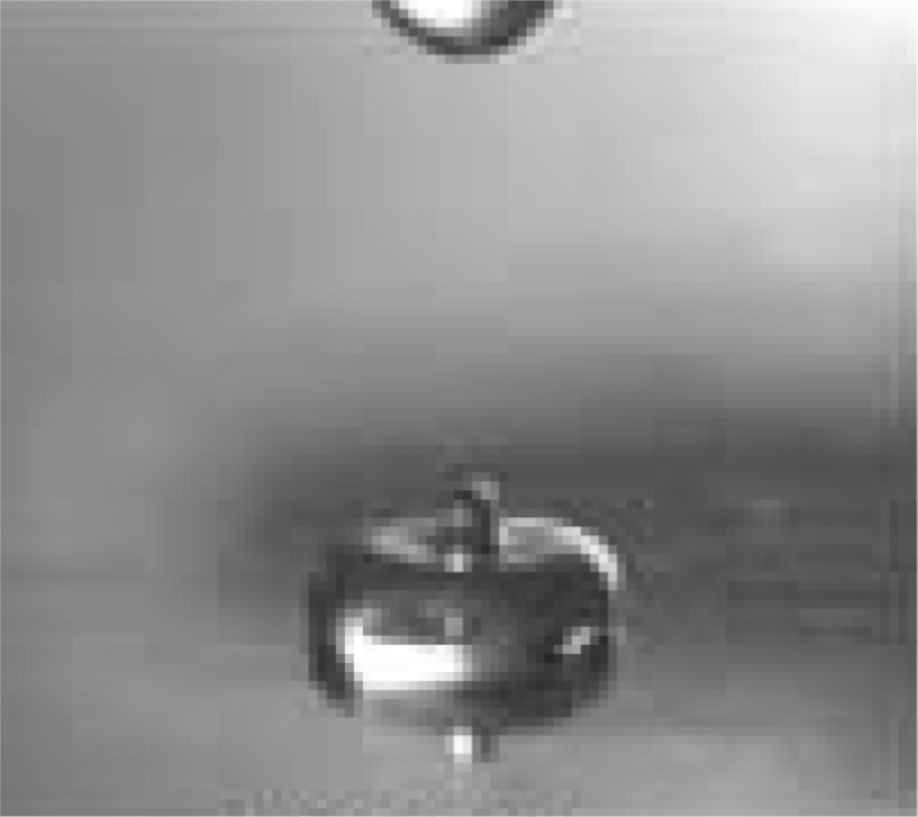}
\put(0,128){\scriptsize{(p)}}
\end{overpic}
\hspace{0.3cm}
\begin{overpic}[trim=0cm 0cm 0cm 0.2cm, clip,scale=0.36]{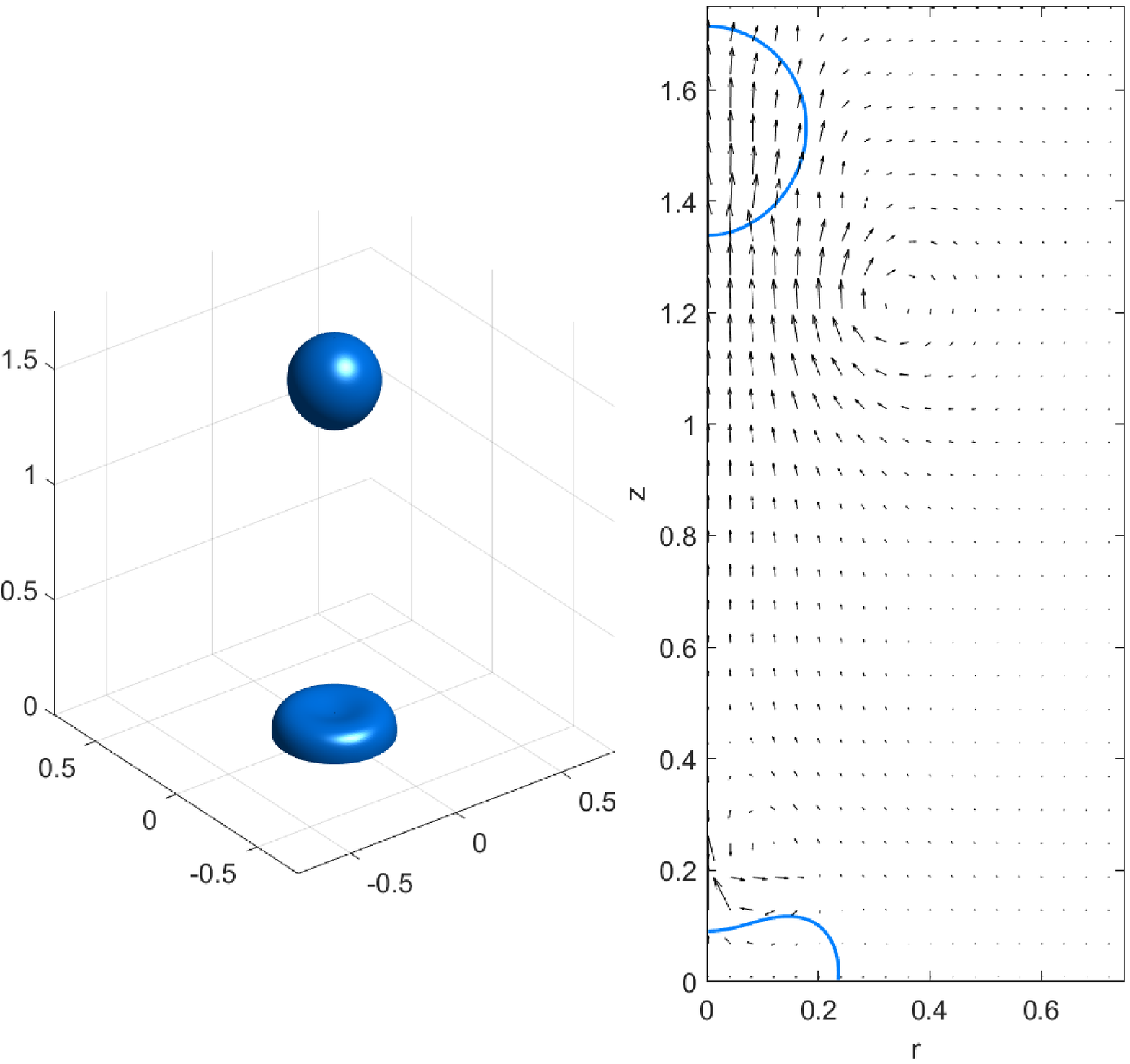}
\put(1,69){\scriptsize{(q)}}
\end{overpic}
\hspace{0.1cm}
\begin{overpic}[trim=0cm 0cm 0cm 0.2cm, clip,scale=0.36]{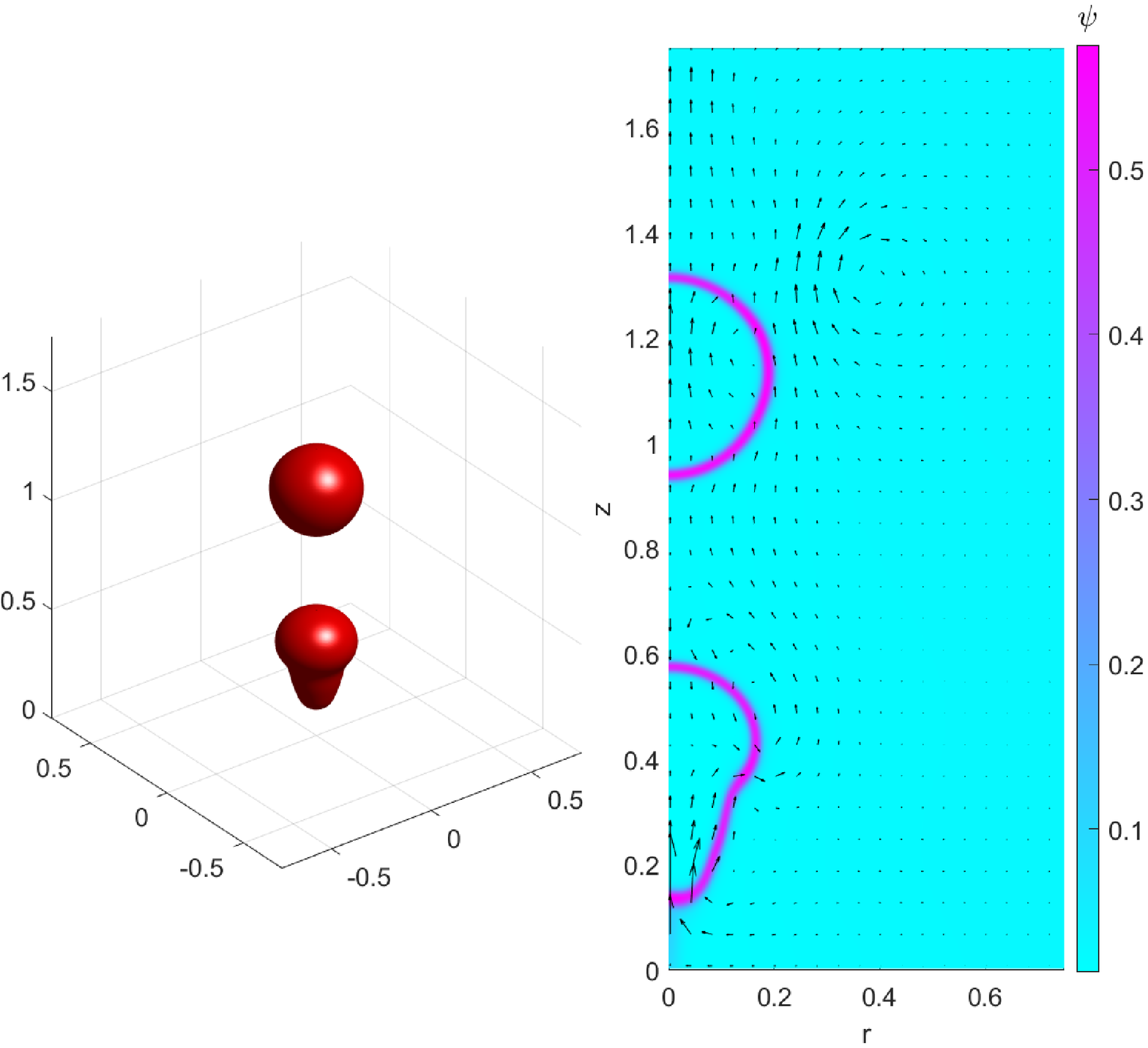}
\put(1,69){\scriptsize{(r)}}
\end{overpic}
\caption{Comparison with experimental results in Example 6: experimental and numerical results for dynamic
process at time $t=1.31$ms (a), $t=1.31$ms (b, c); $t=6.02$ms (d), $t=6.05$ms (e, f); $t=8.21$ms (g), $t=8.38$ms (h, i); $t=20.54$ms (j), $t=20.51$ms (k, l); $t=25.58$ms (m), $t=23.96$ms (n, o); $t=35.17$ms (p), $t=33.42$ms (q, r). Experiment figures courtesy of \cite{Rioboo2002}.}
\label{Reality_1}
\end{figure}

\paragraph{Example 7} In this example, we investigate adherence phenomena for droplet impacting on a smooth glass plate (hydrophilic substrate). We also use the same parameters as in the experiment, which are given as
\begin{equation*}
\mathrm{Re}=7174.4,\quad\quad \mathrm{We}=109.4, \quad\quad \mathrm{L}_{s}=10^{-6}, \quad\quad \theta_s = 10^\circ, \quad\quad  \lambda_\rho=\frac{1}{830},\quad\quad\lambda_\eta=\frac{1}{66.2}.
\end{equation*}
Besides, we take $\mathrm{Pe}_\psi=60$, $\mathrm{Pe}_s=0.5$ and $\Omega =[0, 2] \times[0, 1]$.

The comparison between experimental and numerical results for clean droplet are shown in Fig.~\ref{Reality_2}. Again, almost quantitative agreement between the two sets of results is observed. Different from all previous examples, the droplet keeps on spreading after its impact on the substrate until equilibrium profile is achieved. The major reason for this dynamics is that the substrate is very `sticky' due to the very small Young's angle $\theta_s=10^\circ$. The contact line keeps on advancing until the contact angle achieves Young's angle. The initial kinetic energy is transformed to the surface energy of the droplet interface and the substrate.

The presence of surfactant does not change the impact and spreading dynamics too much. Surfactant transport also takes place from the center of interface towards the moving edge (Fig.~\ref{Reality_2}c-l). The accumulation of surfactant at the moving edge reduces the interfacial tension near the contact line, giving rise to larger uncompensated Young stress and thus faster contact line dynamics (Fig.~\ref{Reality_2}o-u). Eventually, a smaller equilibrium contact angle (than $\theta_s$) is achieved. This phenomenon is also consistent with the simulation result on spreading dynamics (Fig.~\ref{3D_final}) in Sect.~\ref{sec_energy}.
\begin{figure}[t!]
\begin{overpic}[trim=0cm -5cm 0cm 0cm, clip,scale=0.1]{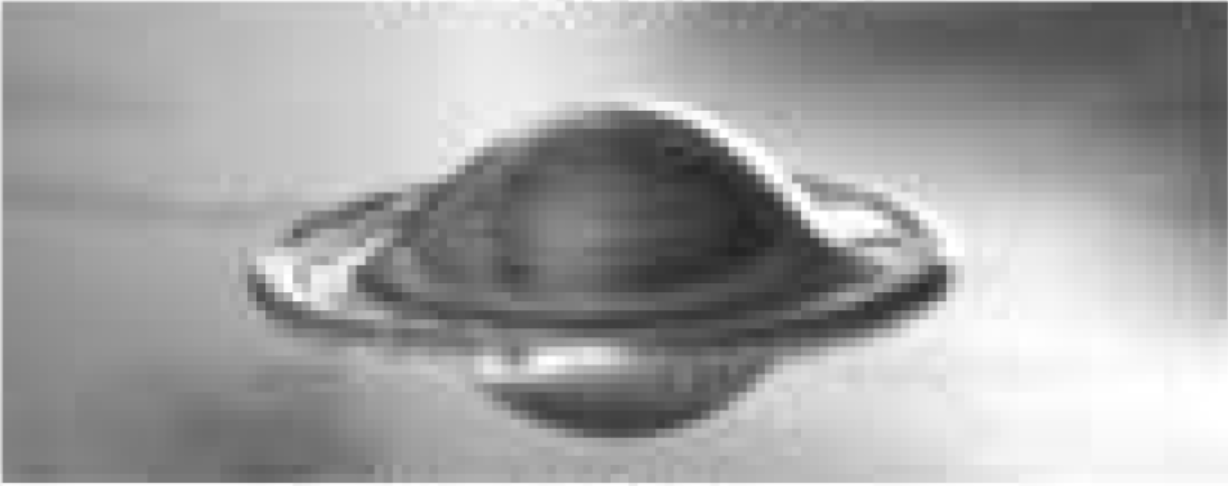}
\put(0,72){\scriptsize{(a)}}
\end{overpic}
\begin{overpic}[trim=0cm 0cm 0cm 0cm, clip,scale=0.3]{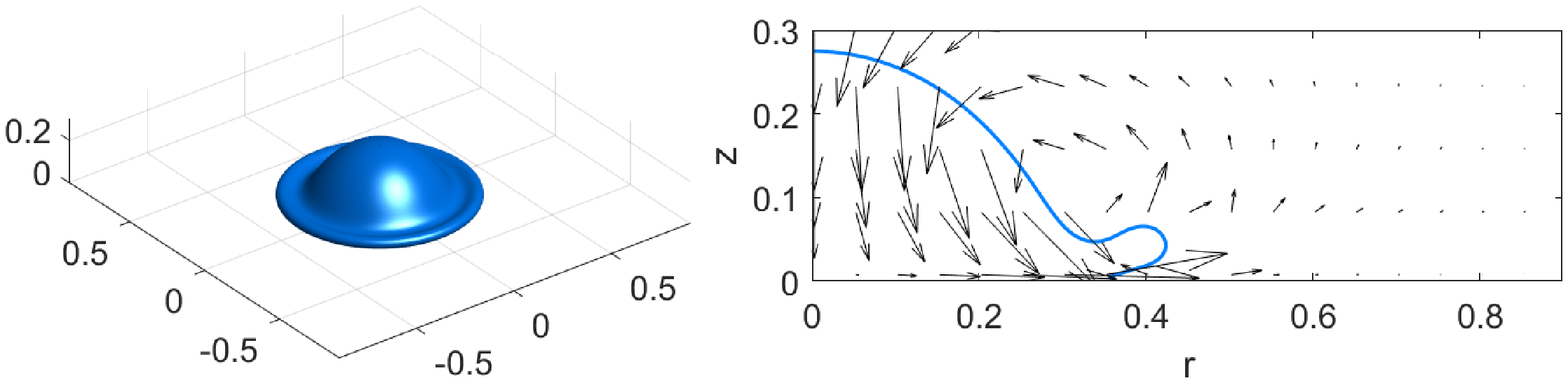}
\put(1,21){\scriptsize{(b)}}
\end{overpic}
\hspace{-0.6cm}
\begin{overpic}[trim=0cm 0cm 0cm 0cm, clip,scale=0.3]{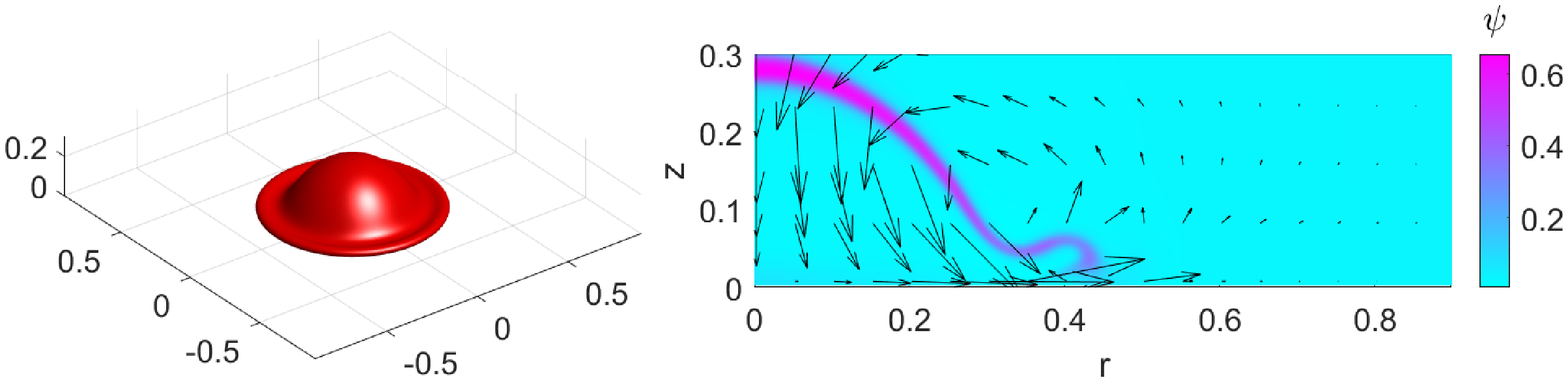}
\put(1,21){\scriptsize{(c)}}
\end{overpic}

\begin{overpic}[trim=0cm -5cm 0cm 0cm, clip,scale=0.1]{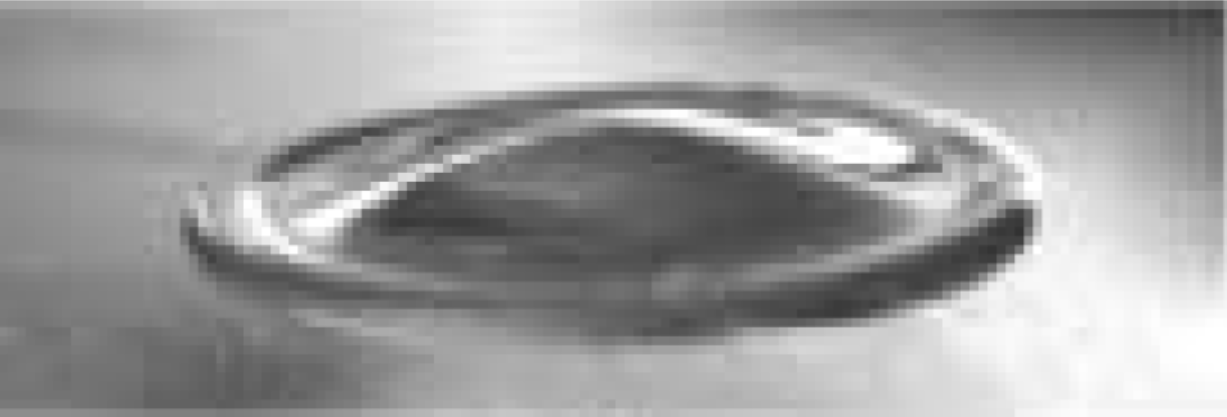}
\put(0,72){\scriptsize{(d)}}
\end{overpic}
\begin{overpic}[trim=0cm 0cm 0cm 0cm, clip,scale=0.3]{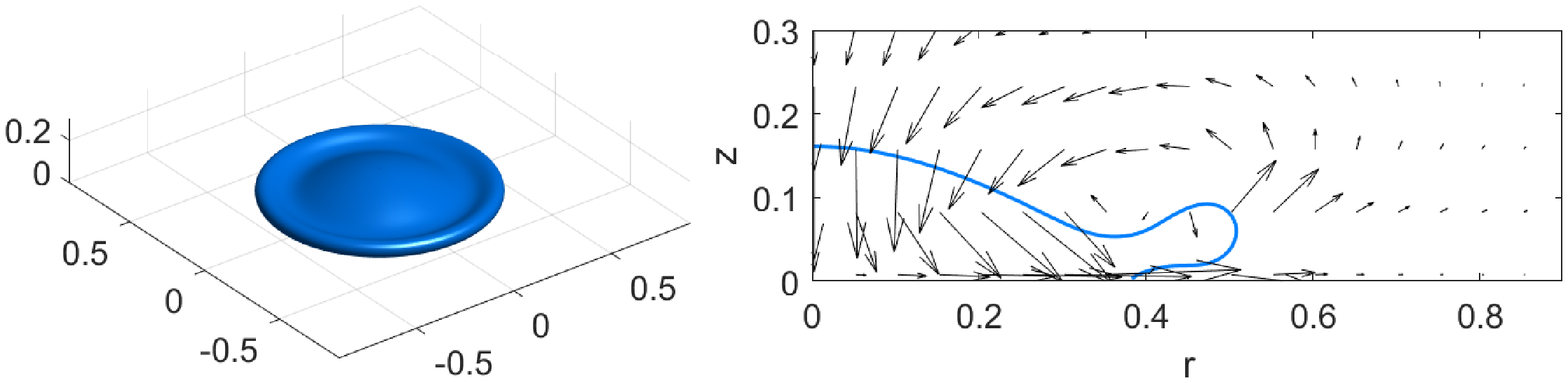}
\put(1,21){\scriptsize{(e)}}
\end{overpic}
\hspace{-0.6cm}
\begin{overpic}[trim=0cm 0cm 0cm 0cm, clip,scale=0.3]{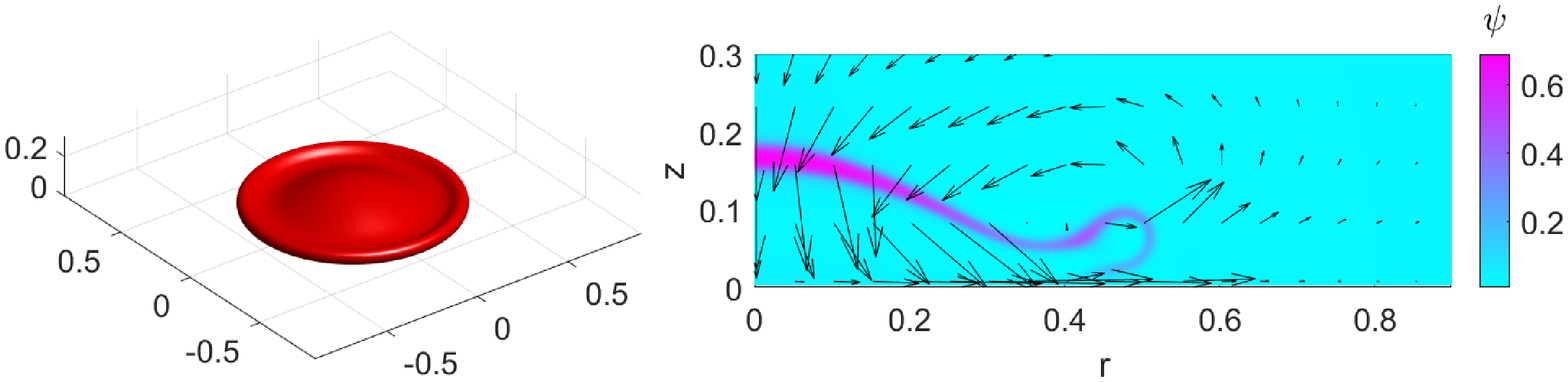}
\put(1,21){\scriptsize{(f)}}
\end{overpic}

\begin{overpic}[trim=0cm -5cm 0cm 0cm, clip,scale=0.1]{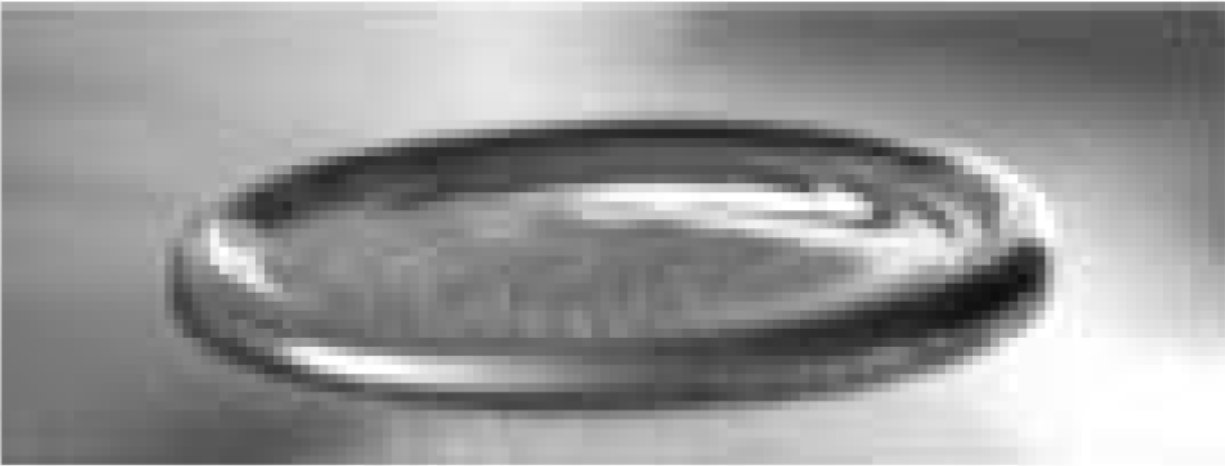}
\put(0,72){\scriptsize{(g)}}
\end{overpic}
\begin{overpic}[trim=0cm 0cm 0cm 0cm, clip,scale=0.3]{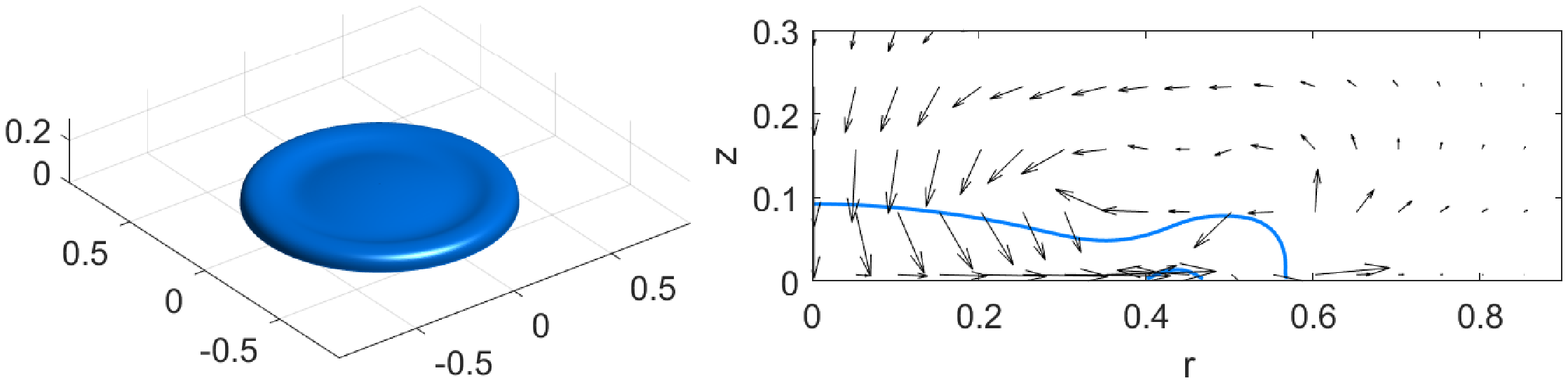}
\put(1,21){\scriptsize{(h)}}
\end{overpic}
\hspace{-0.6cm}
\begin{overpic}[trim=0cm 0cm 0cm 0cm, clip,scale=0.3]{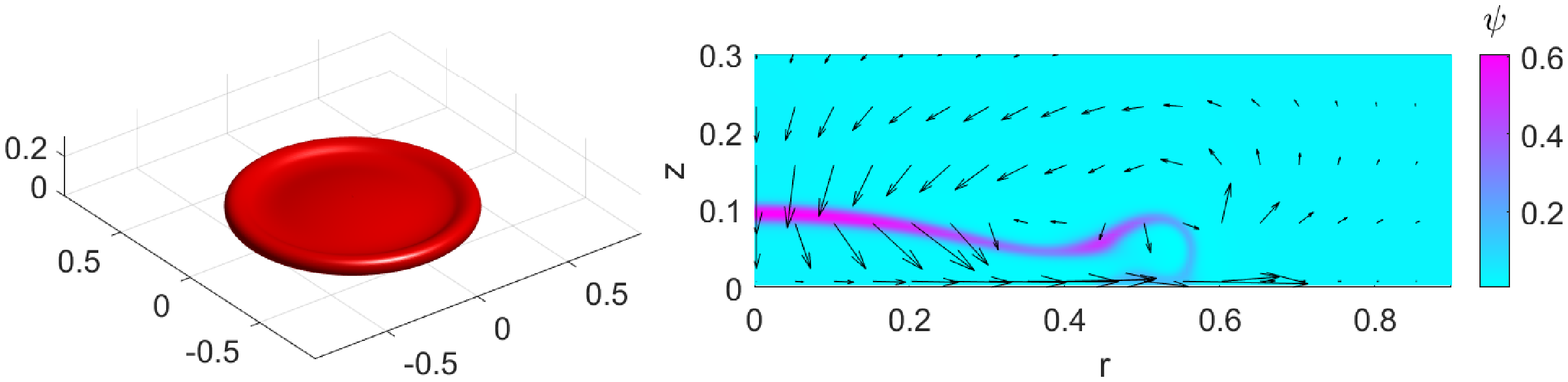}
\put(1,21){\scriptsize{(i)}}
\end{overpic}

\begin{overpic}[trim=0cm -5cm 0cm 0cm, clip,scale=0.1]{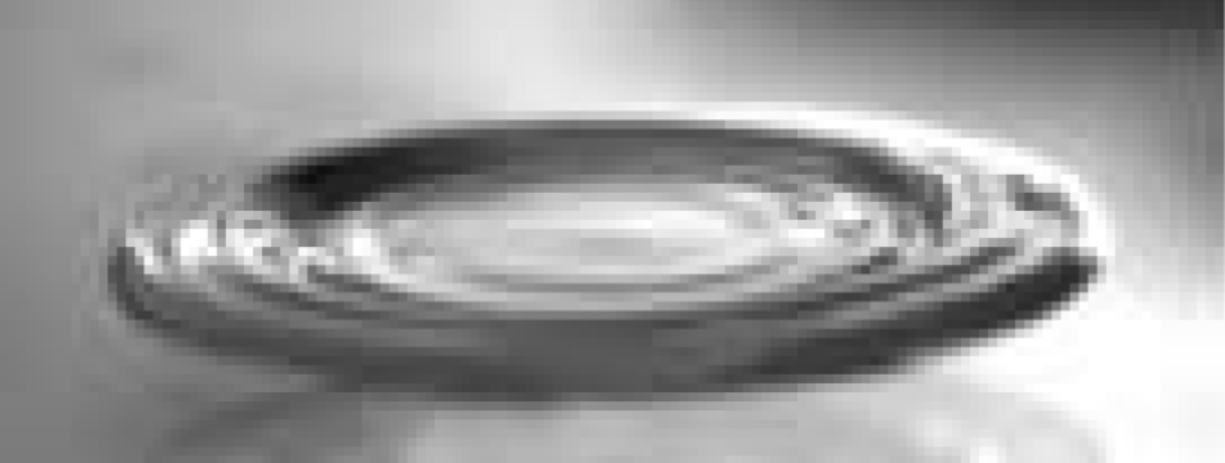}
\put(0,72){\scriptsize{(j)}}
\end{overpic}
\begin{overpic}[trim=0cm 0cm 0cm 0cm, clip,scale=0.3]{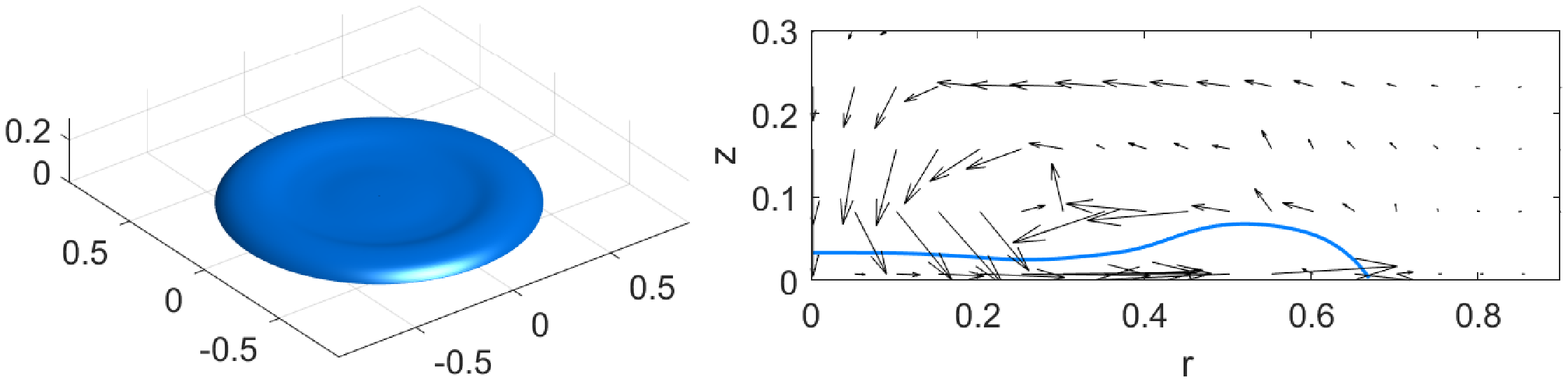}
\put(1,21){\scriptsize{(k)}}
\end{overpic}
\hspace{-0.6cm}
\begin{overpic}[trim=0cm 0cm 0cm 0cm, clip,scale=0.3]{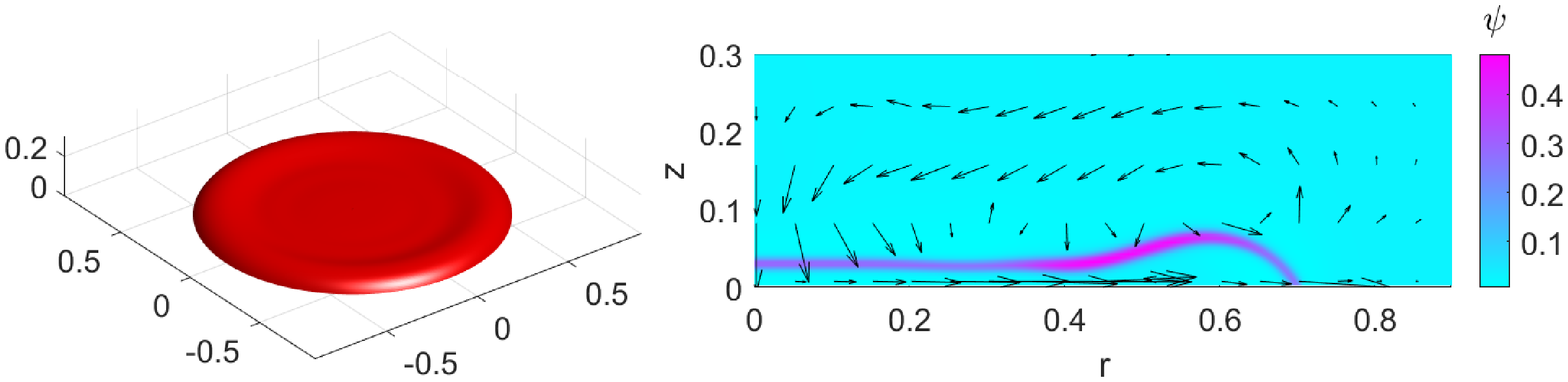}
\put(1,21){\scriptsize{(l)}}
\end{overpic}

\begin{overpic}[trim=0cm -5cm 0cm 0cm, clip,scale=0.1]{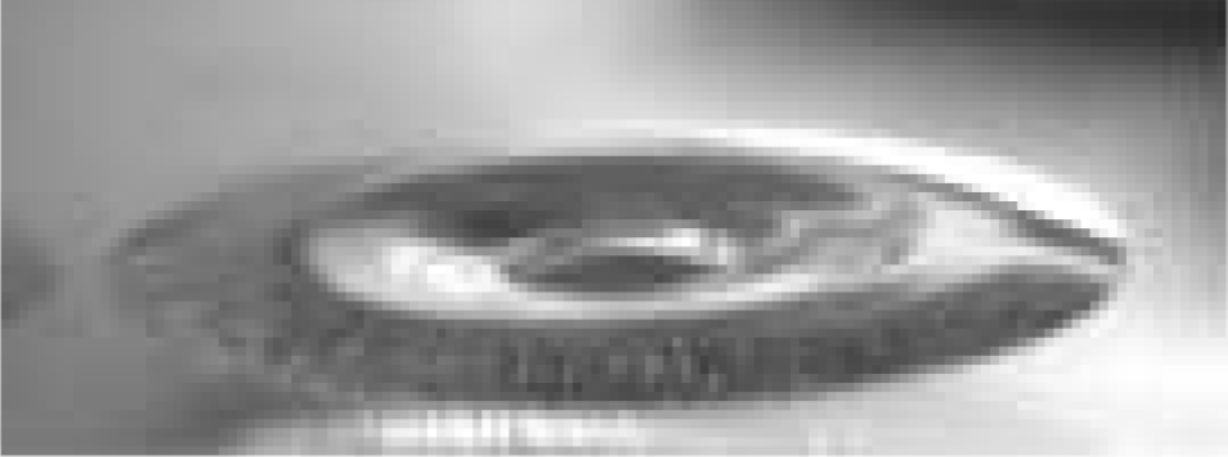}
\put(0,72){\scriptsize{(m)}}
\end{overpic}
\begin{overpic}[trim=0cm 0cm 0cm 0cm, clip,scale=0.3]{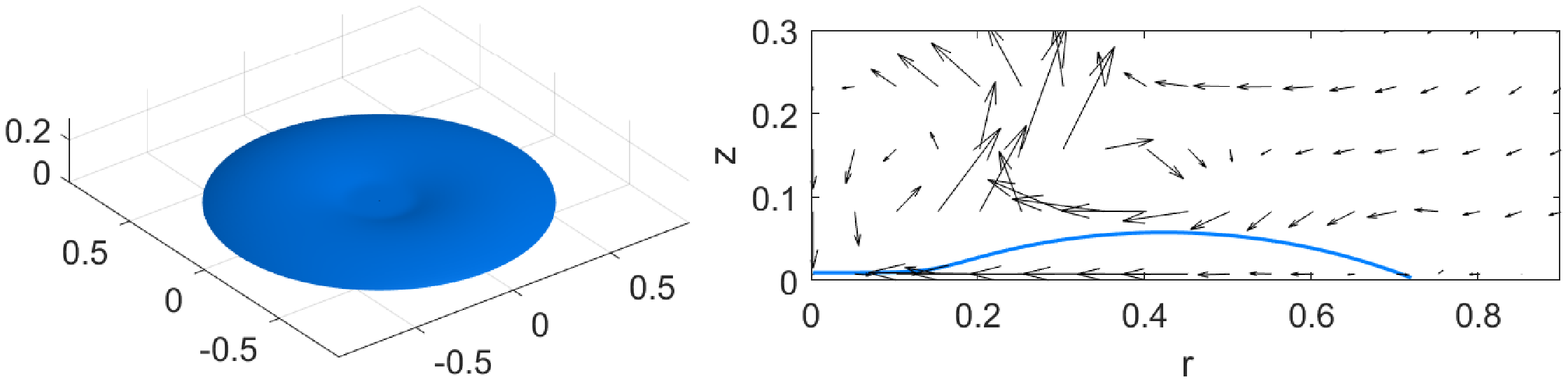}
\put(1,21){\scriptsize{(n)}}
\end{overpic}
\hspace{-0.6cm}
\begin{overpic}[trim=0cm 0cm 0cm 0cm, clip,scale=0.3]{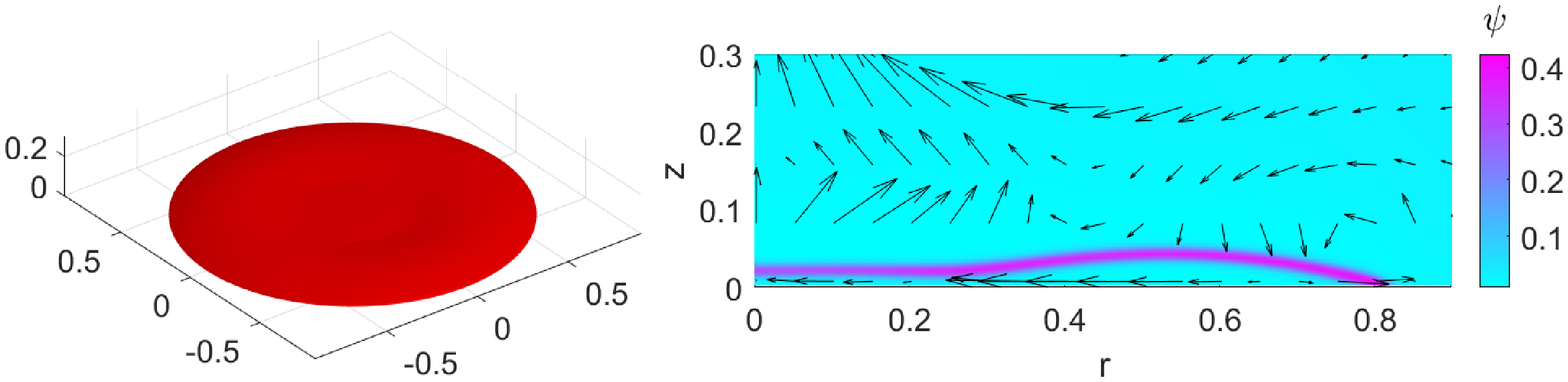}
\put(1,21){\scriptsize{(o)}}
\end{overpic}

\begin{overpic}[trim=0cm -5cm 0cm 0cm, clip,scale=0.1]{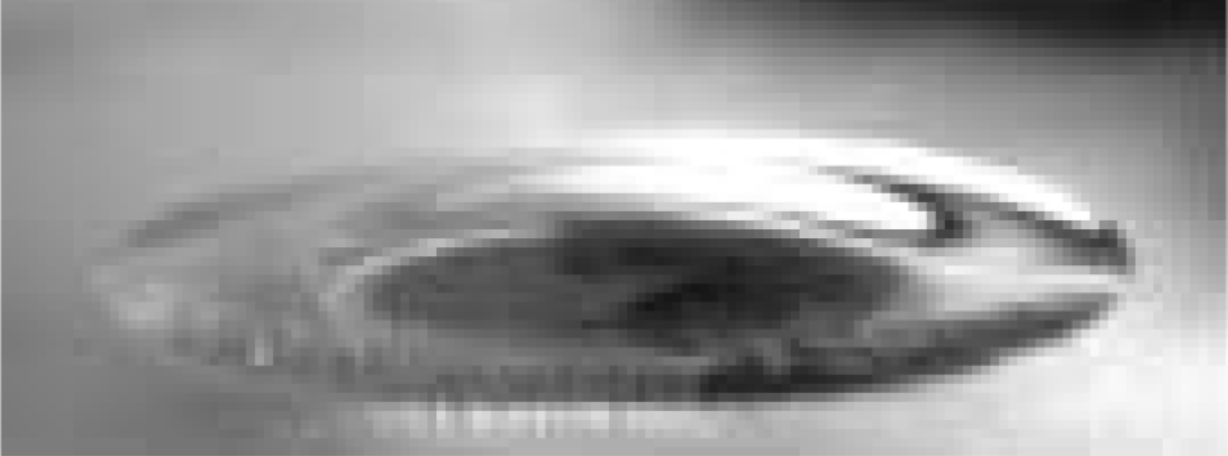}
\put(0,72){\scriptsize{(p)}}
\end{overpic}
\begin{overpic}[trim=0cm 0cm 0cm 0cm, clip,scale=0.3]{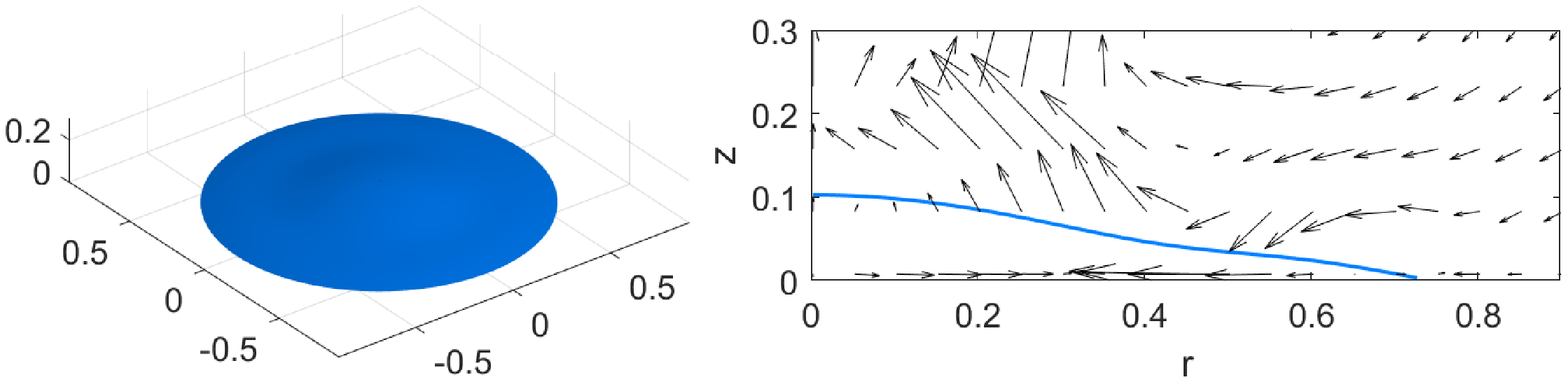}
\put(1,21){\scriptsize{(q)}}
\end{overpic}
\hspace{-0.6cm}
\begin{overpic}[trim=0cm 0cm 0cm 0cm, clip,scale=0.3]{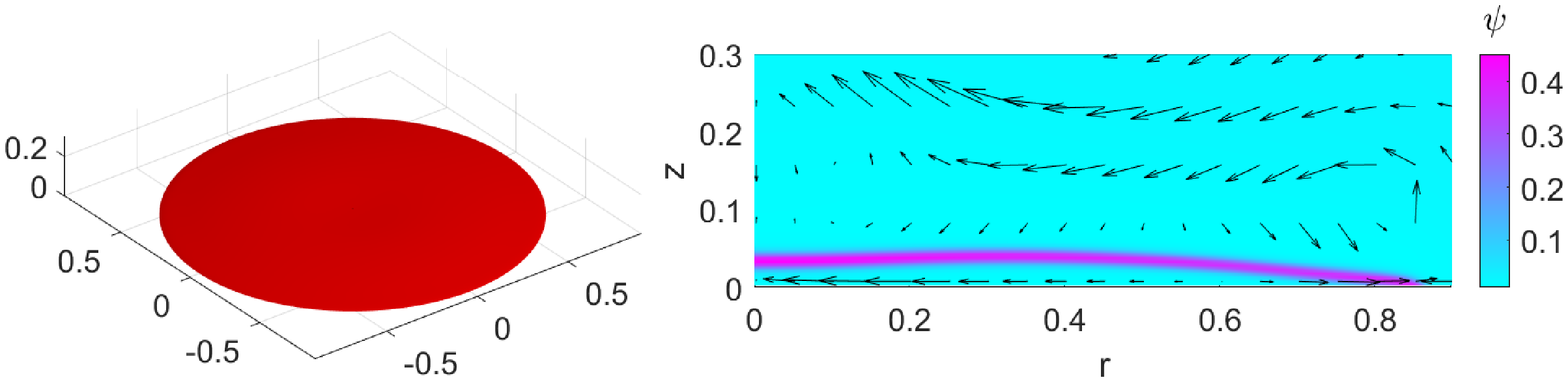}
\put(1,21){\scriptsize{(r)}}
\end{overpic}

\begin{overpic}[trim=0cm -5cm 0cm 0cm, clip,scale=0.1]{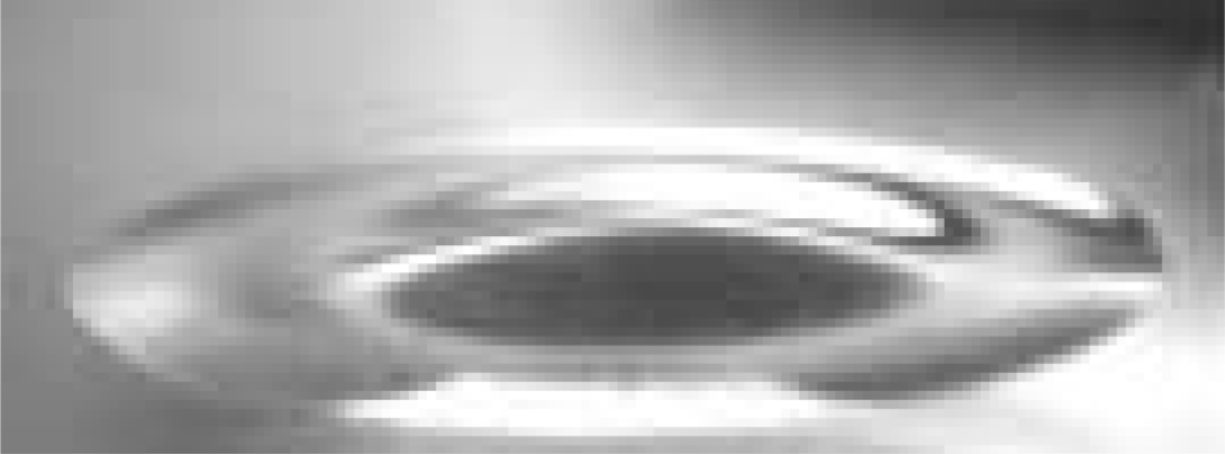}
\put(0,72){\scriptsize{(s)}}
\end{overpic}
\begin{overpic}[trim=0cm 0cm 0cm 0cm, clip,scale=0.3]{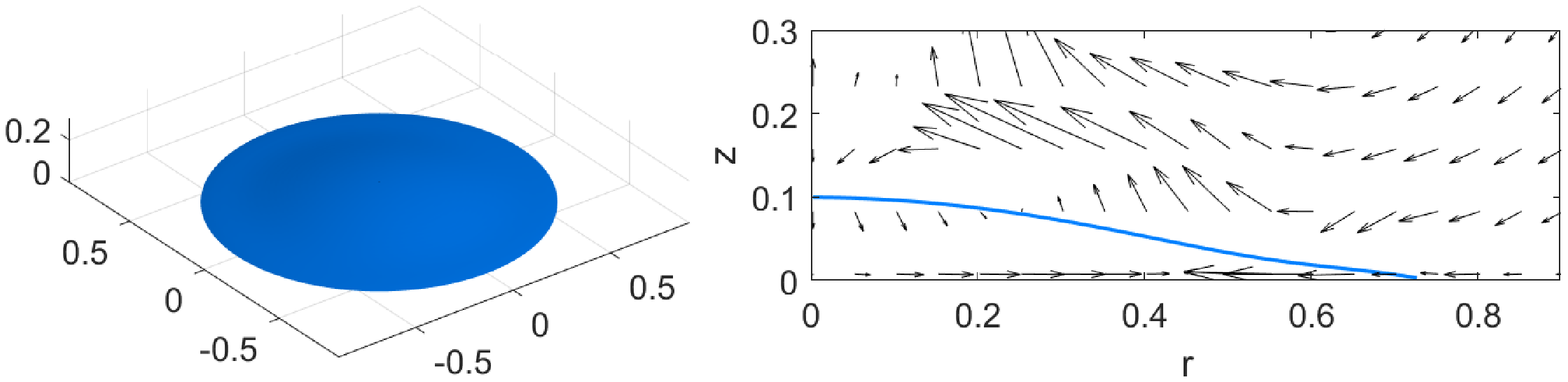}
\put(1,21){\scriptsize{(t)}}
\end{overpic}
\hspace{-0.6cm}
\begin{overpic}[trim=0cm 0cm 0cm 0cm, clip,scale=0.3]{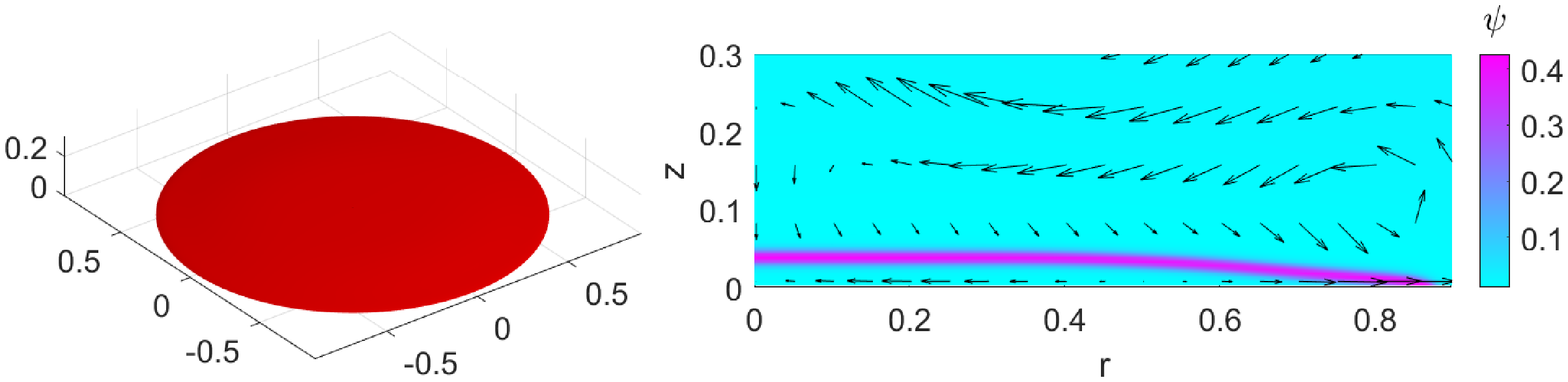}
\put(1,21){\scriptsize{(u)}}
\end{overpic}

\caption{Comparison with experimental results in Example 7: experimental and numerical results for dynamic
process at time $t=1.31$ms (a), $t=1.33$ms (b, c); $t=2.27$ms (d), $t=2.27$ms (e, f); $t=3.16$ms (g), $t=3.10$ms (h, i); $t=6.04$ms (j), $t=6.08$ms (k, l); $t=14.03$ms (m), $t=14.02$ms (n, o); $t=25.58$ms (p), $t=25.56$ms (q, r); $t=29.26$ms (s), $t=29.26$ms (t, u). Experiment figures courtesy of \cite{Rioboo2002}.}
\label{Reality_2}
\end{figure}

\section{Conclusion}\label{sec_Conclusion}
We considered two-phase flow problems with moving contact lines in presence of surfactants, which were modelled by the NS-PFS-MCL system with variable density and viscosity. Based on convex splitting and pressure stabilization techniques, energy stable methods were proposed for this model. In particular, we rigorously proved that the proposed first-order scheme is unconditionally energy stable without modifying the original energy. Moreover, we generalized the methods to be second-order accurate in the form of BDF scheme. The resulting schemes were implemented using linearization techniques in cylindrical coordinates with axisymmetry.


Numerical results were presented to validate accuracy and energy stability of the proposed schemes. We also systematically studied how surfactants affect contact line dynamics, especially in droplet impact.
It was observed that the presence of surfactants makes the droplet more hydrophilic\slash hydrophobic on hydrophilic\slash hydrophobic surfaces. As a result, adding surfactants into an impact droplet can strengthen the adherence effect on hydrophilic surfaces (Example 7) while making splashing more severe on hydrophobic surfaces (Example 5). In general, the dissipation in the impact dynamics of a contaminated drop is smaller than that in the clean case, and topological changes are more likely to occur for contaminated drops. Moreover, adding surfactants may induce some qualitative changes in the droplet impact dynamics: A clean adherent droplet could experience complete bouncing when surfactants are added (Example 2); a clean droplet which completely rebounds becomes partially bounced in the presence of surfactants (Example 3). Lastly, we obtained quantitative agreements with experimental results for impact dynamics in the case without surfactants, and simulated impact processes for contaminated drops as comparisons.

Building linear and fully decoupled schemes satisfying unconditional energy stability with respect to original energy is still open due to the difficulty arising from potential singularity, nonlinear couplings, variable densities and the existence of the MCL. In addition, establishing bound-preserving property for both surfactant concentration $\psi$ and phase-field variable $\phi$ is another challenging issue in numerical analysis and should be taken into account in consideration of real problems. Besides, for the purpose of maintaining divergence-free condition in two-phase flow with variable density, the continuity equation is modified in the proposed model in this work. This drawback can be remedied by considering the quasi-incompressible CHNS model \citep{Lowengrub1998,Guo2017}, which can be potentially generalized to the study of the droplet impact with surfactants.
Dependence of impact phenomena (adherence, bouncing or splashing) on dimensionless parameters, such as $\mathrm{Re}$, $\mathrm{We}$, $\theta_s$ and $\mathrm{Pe_\psi}$, is also of great interest in realistic applications.
This together with real three-dimensional simulations will be our future concern.

\appendix
\section{Appendix}\label{appx}
In this section, the spatial discretization is discussed for the proposed schemes. We first divide $\Omega$ in \eqref{domain} into $n_r \times n_z$ cells, where $n_r$, $n_z$ are the number of cells in the $r$ and $z$ directions. Each cell is characterized by $[(i-1)\Delta r, i\Delta r]\times[(j-1)\Delta z, j\Delta z]$ with the grid sizes $\Delta r=R / n_r$ and $\Delta z=L / n_z$. The cell center, and the centers of its top and right edges are indexed by $(i, j)$, $(i, j+1 / 2)$, and $(i+1 / 2, j)$, respectively. Staggered grids are used: the axial velocity $u_{z}$ and radial velocity $u_{r}$ are evaluated at the edge center, while the order parameters $\phi$ and $\psi$, the chemical potentials $\mu_\phi$ and $\mu_\psi$, and the pressure $p$ are solved at the cell center.

The grid values of variables (denoted by $w$ below) at staggered grids are approximated by (second-order) linear interpolations. For instance, if $w$ is solved at $(i, j+1/2)$, then
\begin{equation*}
w_{i, j}=\frac{w_{i, j-\frac{1}{2}}+w_{i, j+\frac{1}{2}}}{2}, \quad\quad w_{i+\frac{1}{2}, j+\frac{1}{2}}=\frac{w_{i, j+\frac{1}{2}}+w_{i+1, j+\frac{1}{2}}}{2}.
\end{equation*}
Similarly, for variable solved at $(i+1/2, j)$,
\begin{equation*}
w_{i, j}=\frac{w_{i-\frac{1}{2}, j}+w_{i+\frac{1}{2}, j}}{2}, \quad\quad w_{i+\frac{1}{2}, j+\frac{1}{2}}=\frac{w_{i+\frac{1}{2}, j}+w_{i+\frac{1}{2}, j+1}}{2}.
\end{equation*}
For variable solved at cell center $(i, j)$, we can approximate the values at edge centers and corner point by
\begin{equation*}
w_{i, j+\frac{1}{2}}=\frac{w_{i, j+1}+w_{i, j}}{2}, \quad\quad w_{i+\frac{1}{2}, j}=\frac{w_{i+1, j}+w_{i, j}}{2}, \quad\quad   w_{i+\frac{1}{2}, j+\frac{1}{2}}=\frac{w_{i+1, j+1}+w_{i+1, j}+w_{i, j+1}+w_{i, j}}{4}.
\end{equation*}
The differential operators are discretized as
\begin{equation*}
\begin{aligned}
&\Delta w_{i, j}=\frac{r_{i+\frac{1}{2}}(w_{i+1, j}-w_{i, j})-r_{i-\frac{1}{2}}(w_{i, j}-w_{i-1, j})}{r_{i}^{2} \Delta r^{2}}+\frac{w_{i, j+1}-2 w_{i, j}+w_{i, j-1}}{\Delta z^{2}}, \\
&\Delta w_{i+\frac{1}{2}, j}=\frac{r_{i+1}(w_{i+\frac{3}{2}, j}-w_{i+\frac{1}{2}, j})-r_{i}(w_{i+\frac{1}{2}, j}-w_{i-\frac{1}{2}, j})}{r_{i+\frac{1}{2}}^{2} \Delta r^{2}}+\frac{w_{i+\frac{1}{2}, j+1}-2 w_{i+\frac{1}{2}, j}+w_{i+\frac{1}{2}, j-1}}{\Delta z^{2}}, \\
&\Delta w_{i, j+\frac{1}{2}}=\frac{r_{i+\frac{1}{2}}(w_{i+1, j+\frac{1}{2}}-w_{i, j+\frac{1}{2}})-r_{i-\frac{1}{2}}(w_{i, j+\frac{1}{2}}-w_{i-1, j+\frac{1}{2}})}{r_{i}^{2} \Delta r^{2}}+\frac{w_{i, j+\frac{3}{2}}-2 w_{i, j+\frac{1}{2}}+w_{i, j-\frac{1}{2}}}{\Delta z^{2}}.
\end{aligned}
\end{equation*}
In addition, the relaxation boundary condition is treated as follows. Using \eqref{phi} and \eqref{DBC}, we have the following relation
\begin{equation*}
\mathrm{Pe}_s \Delta \mu_\phi=-\mathrm{P e}_{\phi}L(\phi), \quad\quad \text { on }~z=0 .
\end{equation*}
This equation is used in replacement of \eqref{DBC} in the implementation. For $\phi$ and $\mu_\phi$ at $(i,-1/2)$ with $i=0, \ldots, n_r-1$, we have
\begin{equation*}
\mathrm{Pe}_s  [\Delta \mu_\phi^{n+1}]_{i,-\frac{1}{2}}=-\mathrm{P e}_{\phi}L(\phi^{n+1}_{i,-\frac{1}{2}}),
\end{equation*}
where $\Delta \mu_\phi$ is evaluated on $(i,-1 / 2)$ using the boundary condition ${\mu_\phi}_{i,-1}={\mu_\phi}_{i, 0}$ and second-order extrapolation:
\begin{equation*}
[\Delta {\mu_\phi}]_{i,-\frac{1}{2}}
=\frac{r_{i+\frac{1}{2}}({\mu_\phi}_{i+1,0}-{\mu_\phi}_{i, 0})-r_{i-\frac{1}{2}}({\mu_\phi}_{i, 0}-{\mu_\phi}_{i-1,0})}{r_{i} \Delta r^{2}}-\frac{({\mu_\phi}_{i, 2}-5 {\mu_\phi}_{i, 1}+4 {\mu_\phi}_{i, 0})}{2 \Delta z^{2}}.
\end{equation*}

\section*{Acknowledgments}
The authors gratefully acknowledge many helpful discussions with Qian Zhang and Guangpu Zhu during the preparation of the paper. The work of Zhen Zhang was partially supported by the NSFC grant (NO. 12071207), NSFC Tianyuan-Pazhou grant (No. 12126602), the Natural Science Foundation of Guangdong Province (2021A1515010359), and the Guangdong Provincial Key Laboratory of Computational Science and Material Design (No. 2019B030301001). The work of Ming-Chih Lai was supported by NSTC of Taiwan  by grant number 110-2115-M-A49-011-MY3.

\section*{Declarations}
The authors declare that they have no known competing financial interests or personal relationships that could have appeared to influence the work reported in this paper.


\bibliographystyle{model1-num-names}
\bibliography{reference}

\end{document}